\theoremstyle{definition}
\newtheorem{theorem}{Theorem}[section] 
\newtheorem{assumption}{Assumption}[section] 
\newtheorem{remark}{Remark}[section] 
\newtheorem{definition}{Definition}[section]
\newtheorem{corollary}{Corollary}[section]
\newtheorem{proposition}{Proposition}[section]
\newtheorem{example}{Example}[section] 
\newtheorem{lemma}{Lemma}[section]
\numberwithin{equation}{section}
\newcommand{\pardiv}[2]{\frac{\partial #1}{\partial #2 }}
\newcommand{\length}[1]{\left\lvert #1\right\rvert}
\newcommand{\abs}[1]{\bigl\lvert #1\bigr\rvert}
\newcommand{\inv}[1]{#1^{-1}}
\newcommand{\set}[1]{\{#1\}}
\newcommand{\one}{\mathbbm{1}}
\newcommand{\real}{\mathbb{R}}
\newcommand{\nn}{\mathbb{N}}
\newcommand{\sd}{\,|\,} 
\newcommand{\goesto}{\rightarrow} 
\newcommand{\norm}[1]{\left\lVert #1\right\rVert}
\newcommand{\wo}{\backslash} 
\newcommand{\struct}[1]{\mathcal{#1}} 
\DeclareMathOperator{\Exp}{E} 
\DeclareMathOperator{\Var}{Var} 
\title{
  (Failure of the) 
  Wisdom of the crowds
  in an endogenous opinion dynamics
  model with 
  multiply biased 
  agents
} 
\author{Steffen Eger\thanks{My thanks go to Alexey Cherepnev for
    writing the Matlab programs 
  that are underlying this work's simulations and for some useful
  discussions. }}
\date{\today}
\begin{document}
\maketitle

\begin{abstract}
We study an \emph{endogenous} opinion (or, belief) dynamics model
where we 
endogenize the social network that 
models 
the link (`trust')
weights between agents. Our network adjustment mechanism is simple: an
agent increases her weight for another agent if that agent has been
close to truth (whence, our adjustment criterion is `past
performance'). Moreover, we consider \emph{multiply biased} agents
that do not learn in a fully rational manner but are subject to
persuasion bias --- 
they learn in a DeGroot manner, via a simple `rule of thumb' 
--- and that have
biased initial 
beliefs. In addition, we also study this 
setup under
\emph{conformity}, \emph{opposition}, and \emph{homophily} --- which
are recently suggested variants of DeGroot learning in social networks
--- thereby
taking into account further biases agents are susceptible to.  Our main
focus is on \emph{crowd wisdom}, that is, on the question whether the
so biased agents can adequately aggregate dispersed information and,
consequently, learn the true states of the topics they communicate
about. In particular, we present several conditions under which wisdom
fails.  
\end{abstract}

\section{Introduction}\label{sec:introduction}
Crowds can be amazingly wise, even wiser than the most accurate
individuals among them. An early formalization of this insight has
been Concordet's Jury theorem from 1785 \cite{Concordet1785}, which
states that a 
simple majority vote of the opinions of independent and fallible 
lay-people may provide near-perfect accuracy if the number of voters
is sufficiently large.\footnote{Which also requires 
that 
each individual in the group
  of voters is more likely correct than not.} Over a hundred years
later, 
in 1906, 
Francis Galton \cite{Galton1907} found strong 
empirical support of Concordet's theoretical finding at an
agricultural fair in Plymouth. At a 
weight-judging contest, participants were asked to privately estimate
the weight of 
a chosen live ox after it had been slaughtered and dressed (meaning
that the head and other parts were removed). The winner
was the one whose estimate was closest to the true weight of the
ox. When analyzing the results in a \emph{Nature} article the
following year, Galton found that the simple average of the entire
crowd was even more accurate than the winner and that the median of
the $787$ valid guesses, $1197$ pounds, was extremely close to the
true weight, $1198$ pounds (cf.\ Bahrami, Olsen, Bang, Roepstorff,
Rees, and Frith (2012) \cite{Bahrami2012}, Acemoglu and Ozdaglar
(2011) \cite{Acemoglu2011}). This finding was obtained
even though most participants were no `experts' in this contest, with
little specialized knowledge in butchery; yet, their estimates could
obviously contribute to the crowds' overall success. Galton took this
result 
as evidence that democratic political systems may work.

Yet, contradicting this optimistic viewpoint concerning the wisdom of
crowds, it has also been observed that groups of individuals may be
quite fallible, and possibly even more fallible than most or all of
their members. One result of this kind is already hidden in Concordet's
Jury theorem: namely, if each lay-person is just slightly `too
uniformed' (or slightly `too much mistaken'), then the majority vote
may be much less accurate than each individual's
estimate. Drawing upon empirical observations, a 
comprehensive illustration of `crowd madness' has been brought forward
in Scottish journalist Charles Mackay's (1841) \cite{Mackay1841}
work \emph{The 
  extraordinary and popular delusions and the madness of crowds},
where the author chronicles `humankind's collective follies', including
financial bubbles,
in the economics context, and other popular
`delusions' such 
as witch-hunts and fortune-telling (cf.\ Bahrami, Olsen, Bang, Roepstorff,
Rees, and Frith (2012) \cite{Bahrami2012}), thus challenging the claim
that ``two heads are better than one''. 

Today --- while,
according to scholars' opinions, the question of wisdom of crowds
continues to be one of the most important issues facing social
sciences in the twenty-first century ---\footnote{See the recent survey
at \url{http://bit.ly/hR3hcS}.} more is known on group wisdom and collective failure. On the
one hand, the 
mean of the opinions of several individuals may become increasingly
accurate, for large groups, merely as a consequence of the law of
large numbers. This holds under 
restrictive assumptions --- in particular, that 
the beliefs of individuals are independent and
probabilistically centered around truth such that, on an aggregate
level, individual errors cancel out. On the other hand, much
empirical literature, foremostly in psychology, has documented that,
frequently, 
``groups outperform individuals [...], although groups typically fall
short of the performance of their highest-ability members'' (Kerr,
MacCoun, and Kramer (1996) \cite{Kerr1996}, p.691).\footnote{Groups
  can also blatantly fail, as, e.g., in \emph{groupthink} (Janis
  (1972) \cite{Janis1972}), \emph{hidden profiles} (Wittenbaum and
  Stasser (1996) \cite{Wittenbaum1996}), etc. See the
  overview in Kerr and Tindale (2004) \cite{Kerr2004}.}
In fact, a very recent experiment by Lorenz, Rauhut, Schweitzer, and
Helbig (2011) \cite{Lorenz2011} finds that `social influence', in a
broad meaning, in a group causes individuals' beliefs to become more
similar over time, without improvements in accuracy, however. 
Hence, much depends on how groups aggregate or process individual
opinions and also on these initial predispositions of agents. 
Kerr, MacCoun, and
Kramer's (1996) \cite{Kerr1996} insight is that whether groups perform
better than individuals may depend, among other things, on the
following aspects: (1) the way that groups aggregate the
opinions of individuals (that is, the {group decision, or
belief integrating, process), (2) the \emph{bias} of individuals, and
(3) the type of bias. Concerning issue (1), the way agents in groups
process their peers' beliefs, we assume a specific structural form
below, which has empirically proved plausible for learning in the
domain we consider (social networks). 

Issue (2), individuals' bias,
will be another central notion in our work. The classical work of
Tversky and Kahnemann (1974) \cite{Tversky1974} documents several biases human
judgment is susceptible to. In particular, \emph{anchoring biases}
describe the psychological condition of humans to pay undue attention
to initial values --- e.g., typically, individuals estimate the
product $9\times 8\times 7\times 6\times 5\times 4\times 3\times
2\times 1$ to be higher than the product of factors in reverse order,
which is attributed to subjects' performing an initial approximate computation
based on the first few terms, which entails a biasing anchor (the same
effects may happen if the anchor is exogenously specified, e.g., by
providing the subjects with random numbers as anchors and then
querying them for their 
own judgement). \emph{Biases of availabilty} refer to the phenomenon
of assessing (and, consequently, possibly, misjudging) the probability
of an event by the `ease with which 
instances or occurrences can be brought to mind', and, finally,
\emph{biases of representativeness} lead subjects to assess the
probability that an object is of a particular class (e.g., that a
person has a certain profession) by the degree to which the object is
representative of the class, which may lead to judgement errors
because such reasoning ignores, e.g., base-rate frequencies. 
In another typological classification of bias, Kerr, MacCoun
and Kramer (1996) \cite{Kerr1996} distinguish between
\emph{judgmental sins of imprecision} 
(systematically deviating from prescribed and precise use of
information, such as ignoring Bayes' theorem when forming beliefs or
being affected by framing; see Kahnemann and Tversky, 1984
\cite{Kahnemann1984}),  
\emph{judgmental sins of commission} (using
irrelevant information to arrive at a decision, such as the
attractiveness of an accused) and \emph{judgmental sins of
  omission} (ignoring relevant information, such as base-rate
information). 

We now describe the setup investigated in the current work, relating
to the issues discussed above subsequently. 
We 
consider a \emph{(social) network} of individuals, or agents, that
form opinions, or beliefs, about an underlying state or a discussion
topic.\footnote{Opinions or beliefs are important, from an economics
  perspective, because they crucially shape economic behavior:
  consumers' opinions about a product determine the demand for that
  product and majority opinions set the political course, etc. See
  Buechel, Hellmann, and Kl\"{o}{\ss}ner (2013) \cite{Buchel2013}.}
  We assume that agents start with some 
initial beliefs, at time 
zero, and then, as time progresses,  
learn from each other through \emph{communication}. Communication between any
two individuals takes place if there is a link between them in the
network. In the current work, we assume a specific form of learning
paradigm, \emph{DeGroot learning}, that posits that agents update
their beliefs by taking weighted arithmetic averages of their peers'
past beliefs, whereby the weights are given by the (social) ties between
the agents in the network. Much has been said on the adequacy (or
inadequacy) of 
DeGroot learning --- a `boundedly rational' 
learning
paradigm that posits that agents are susceptible to \emph{persuasion
  bias}, not properly adjusting for the repetition of information they
hear --- which, as experiments claim 
(e.g., \cite{Chandra2012,Corazzini2012}), appears as a more plausible
standard 
of human social learning than, e.g., fully rational Bayesian
learning and 
we refer the reader to, e.g., DeMarzo, Vayanos, and Zwiebel
(2003) \cite{Demarzo2003}, Golub and Jackson (2010)
\cite{Golub2010} or Acemoglu and Ozdaglar (2011) \cite{Acemoglu2011}
for extensive discussions. 
While the
DeGroot model of opinion formation is quite old, dating back to Morris
H.\ DeGroot (1974)'s \cite{DeGroot1974} seminal work, the framework has only
more recently received increasing attention from the economics
community. 

In this context, one matter that has been put forth as a
central guiding 
question in DeGroot learning models, and which connects to our initial
discussion, is whether the `na\"{i}ve' DeGroot
learners, who commit the `sin of imprecision' of not (properly) applying 
Bayes' theorem, can, in fact, become `wise' \cite{Golub2010}. Here,
a society (set of agents) is called \emph{wise}, roughly, if it reaches a
consensus --- in the limit, as time (discussion
periods) goes to
infinity --- that corresponds to truth. In Golub and Jackson (2010)
\cite{Golub2010}, the question relating to wisdom has been answered in
the affirmative 
--- (even) na\"{i}ve (DeGroot) learners do become wise under rather
mild conditions; namely, all that is required is that no na\"{i}ve
learner is \emph{excessively influential}, whereby an agent is
excessively influential if his social influence (how limiting beliefs
depend on this agent's initial beliefs) does not converge to
zero as society grows. In undirected networks (social ties are mutual)
with uniform weights, an obstacle to wisdom would then, e.g., be that
each agent newly entering society assigns, e.g., 
a constant fraction 
of his links to
a particular agent, who would then be excessively influential. Hence,
as long as links are somewhat `democratically' balanced, na\"{i}ve
DeGroot learners would apparently become wise. 
While we hold the analysis of Golub 
and Jackson (2010) \cite{Golub2010} to be an 
important `benchmark' for DeGroot learning, we think that it is overly
optimistic in 
at least one of its critical two assumptions, namely,
the (1) \emph{unbiasedness} of agents'
initial beliefs.\footnote{The other critical assumption is biasedness
  of the belief formation process (DeGroot learning --- agents are
  prone to persuasion bias). Finally, of crucial relevance is also
  independence 
  of agents' initial beliefs, which we do not challenge here, however.} 

In the current work, we drop, in particular, the largely implausible, as we
find, assumption (1). Our central notion will be as illustrated in
Figure \ref{fig:intro}, which we adapt from Einhorn, Hogarth, and
Klempner (1977) \cite{Einhorn1977}. 
\begin{figure*}[!ht]
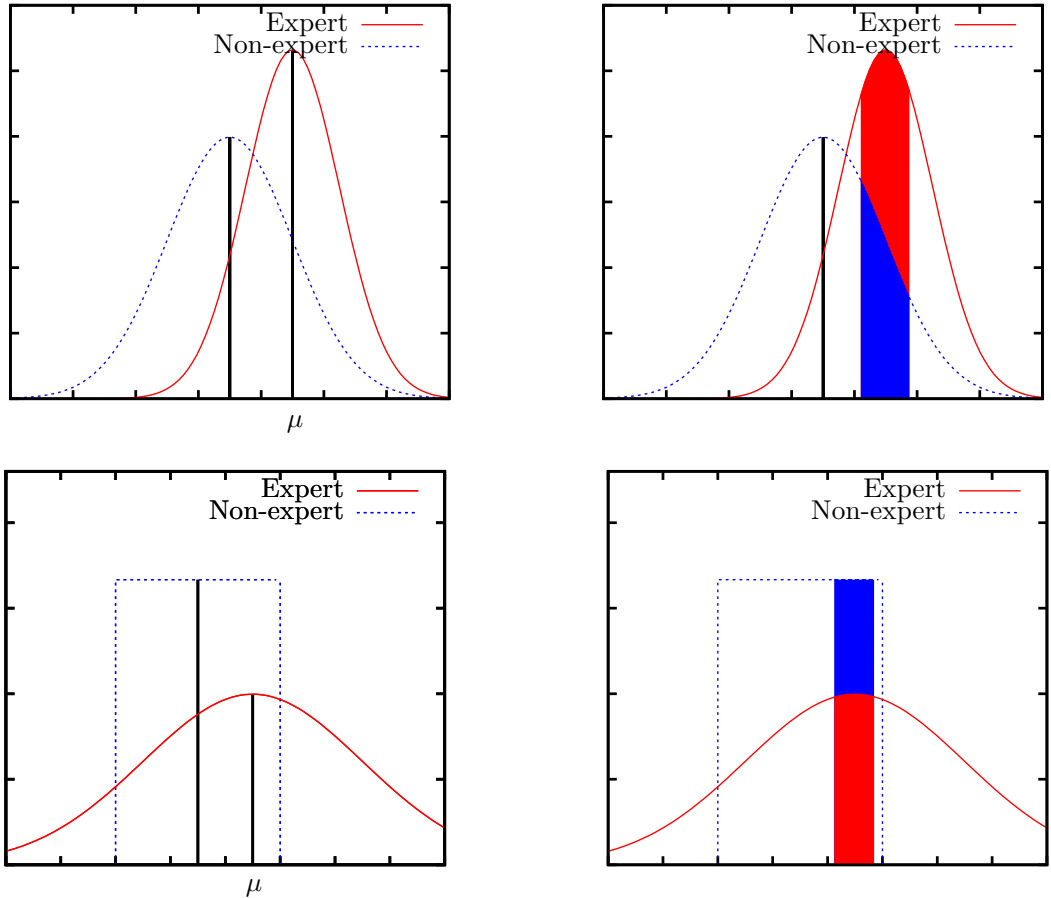

    \centering
    \begin{subfigure}{0.49\textwidth}
        \centering  
        \resizebox{1.0\textwidth}{!}{
        \input{plots/expert1.tex}}
    \end{subfigure}%
    \begin{subfigure}{0.49\textwidth}
        \centering 
        \resizebox{1.0\textwidth}{!}{
        \input{plots/expert2.tex}}
    \end{subfigure}\\%
    \begin{subfigure}{0.49\textwidth}
        \centering  
        \resizebox{1.0\textwidth}{!}{
        \input{plots/expert3-1.tex}}
    \end{subfigure}
    \begin{subfigure}{0.49\textwidth}
        \centering  
        \resizebox{1.0\textwidth}{!}{
        \input{plots/expert4-1.tex}}
    \end{subfigure}
    \caption
{ 
  Schematic illustration of experts' and non-experts' distribution of
    initial beliefs. Right figures show probability masses of falling
    within an (arbitrary) small interval around truth, for both
    experts and non-experts. 
}
    \label{fig:intro}
\end{figure*}
In words, we assume that some agents' initial beliefs are
\emph{biased}, with an expected value that is different from truth
$\mu$, and that other agents' initial beliefs are unbiased, with an
expected value that equals truth $\mu$ ---
we remark here that we abstract away from the precise type of bias
some agents' initial beliefs are subject to, that is, we are agnostic
about whether, e.g., agents commit sins of imprecision, commission, or
omission in forming their initial beliefs, simply assuming \emph{that}
at least some agents' initial beliefs are biased. 
We might, if we wish, label
the first kind 
of agents `non-experts' and the second `experts', although this might
be slightly misleading, as even experts can be biased, of
course; nonetheless, for convenience, we keep this terminology in the
following. 
A situation as sketched may be quite challenging to assess, for
individuals. Ignoring non-experts may be suboptimal, in some
circumstances, because their verdicts may still not be totally irrelevant 
in that their opinions
may have (relatively) large probability of being close to truth.
Consider, in
particular, the bottom part of Figure \ref{fig:intro} where the `expert' is
unbiased but has high variance. In this case, for each `closeness
interval' around truth, the non-expert's initial belief has higher
probability of falling within this interval than the expert's
beliefs. Thus, if there is exactly one expert and one non-expert, it
would be optimal, for an outside observer, to disregard the expert's
opinion and, in the absence 
of further information, adopt the non-expert's opinion. However, if
there are \emph{many} experts with identical and independent distributions
and also \emph{many} 
non-experts with identical and independent distributions, then an
optimal aggregation 
of information would ignore the non-experts' and average the experts'
opinions. 

We study this setup in an \emph{endogenous} DeGroot learning
model, where we endogenize the (social) network. In particular, we
assume that the `trust' links between agents in the network are based on `past
performance', which has been outlined as a relevant reputation
building criterion in the psychology 
literature (cf.\ Yaniv and Kleinberger (2000) \cite{Yaniv2000}, Yaniv
(2004) \cite{Yaniv2004}). 
We think that the endogenous model is the `right' setup for our
investigation of wisdom in DeGroot learning under biased initial
beliefs because if the network structure is assumed exogenous, then
one relatively uninteresting solution to the wisdom problem would,
e.g., be to ignore the biased agents --- in contrast, in the
endogenous model, 
the question arises what weighting scheme individuals \emph{actually}
learn for the biased (and unbiased) agents, under 
given 
assumptions concerning 
the agents' behavior. Since we learn the network structure
endogenously, by 
looking at agents' past performance (how often have they been close to
truth previously?), we necessarily study learning in a \emph{repeated
  setting}, where agents are involved in repeated communications over
\emph{multitudes} of topics, whereby past beliefs and their external
validation may inform today's beliefs and the network structure. Our
network learning rule is quite simple: we increment the weight that
one agent places upon another by some $\delta>0$ if the latter agent
has been 
in a predefined `$\eta$-radius' around truth for the current topic. We
show that this is a `utility maximizing' rule\footnote{Or at least a
  good approximation to a solution of a maximization problem.}
provided that agents 
\emph{expect}, subjectively, that all other agents' beliefs are
unbiased, which we call the \emph{bona fides} (or, `good faith')
assumption. Intruigingly, the 
bona fides assumption
concerning the unbiasedness of other agents' initial beliefs
may be consistent
with the `egocentric bias' hypothesis which suggests that
``interacting human agents operate under the assumption that their
collaborators' decisions and opinions share the same level of
reliability'' as their own; the upholding of this bias, even despite
potential 
collective failure, might then be due to the social obligation to treat
others as equal to oneself, despite their conspicuous inadequacy, or
due to the urge to contribute to the group 
(Bahrami, Olsen, Bang, Roepstorff, Rees, and Frith (2012)
\cite{Bahrami2012}).  

To summarize our model, in our setup, agents hold beliefs and learn,
via communication, 
about a multitude of 
topics $X_1,X_2,X_3,\ldots$, each with an associated `truth'
$\mu_1,\mu_2,\mu_3,\ldots$. 
Within each topic $X_k$, `discussion rounds' are indexed by discrete
time 
steps $t=0,1,2,\ldots$ and in each time step $t\ge 1$, agents \emph{update}
their beliefs on $X_k$ by integrating their peers' beliefs, starting
with some exogenously specified initial beliefs on $X_k$. After a 
topic has been communicated about (for an 
infinite amount of time), truth
$\mu_k$ is \emph{revealed}, whereupon agents adjust the `trust'
weights they 
assign to other agents (they `learn', or `grow', the network topology)
based on 
agents' past performance: if an agent has been close to truth for
topic $X_k$, agents
increase their trust for this agent by increasing the respective
weight by $\delta$.  
Our agents are multiply biased (or `\emph{na\"{i}ve}'):
\begin{itemize}
  \item[(i)] At least some agents' initial beliefs are \emph{systematically
    biased} in that the expected values of their initial beliefs are
    different from truth $\mu_k$, for all $k=1,2,3,\ldots$. For
    initial beliefs, we abstract away from the particular kind of bias
    agents are subject to, simply assuming \emph{that} some kind of
    bias plays a role.
  \item[(ii)] Agents are subject to \emph{persuasion bias} in updating their
    beliefs on $X_k$ in that they apply the DeGroot learning paradigm
    rather than a fully rational Bayesian belief updating framework. 
  \item[(iii)] In adjusting weights for other agents, agents are
    \emph{egocentrically biased}: they assume that their own judgments
    are relevant (more precisely, their initial beliefs are unbiased)
    and they assume that their peers' beliefs share the same level of
    reliability as their own (more precisely, that their peers'
    initial beliefs are also unbiased). This bias justifies the weight
    adjustment rule --- adding $\delta$ --- that we have sketched (see
    Section \ref{sec:modelECCS}). 
\end{itemize}
Besides this basic setup, we consider refinements of standard DeGroot
learning recently suggested --- DeGroot learning under opposition,
conformity, and homophily --- in each case incorporating our endogenized
network 
structure and, in addition, the three kinds of biases discussed
above. We show that, in these more refined versions of DeGroot
learning, which are supposed to  
endow the 
DeGroot learning paradigm with a more `realistic' 
structure, 
wisdom is even more difficult to arrive at, as we discuss below. 

Our main contributions over existing work are as follows. 
\begin{itemize}
  \item We more thoroughly investigate the concept of \emph{bias} in
    social (network) learning --- or more specificially, DeGroot
    learning --- than previous literature. In particular, as
    mentioned, we allow agents' initial beliefs to be biased and
    consider further biases, as discussed. 
  \item We endogenize the network structure in DeGroot learning and we
    do so by referring to the notion of `past performance'. Of course,
    in the vast literature on networks, (`endogenous') network formation
    processes are not novel; often, however, the network is adapted,
    in the literature more or less relevant to our setup, by adding or
    deleting (costly) links as in Jackson and Watts
    (2002) \cite{Jackson2002}, Goyal (2004) \cite{Goyal2004}, etc., rather
    than by increasing 
    link weight based on agents' past performance. In DeGroot
    learning, self-evolving networks are discussed, e.g., in the work
    on DeGroot learning and homophily (e.g., Pan (2010) \cite{Pan2010} and
    the Hegselmann and Krause models), but weight adjustments
    based on truth, as we model, must be considered distinct from
    these mechanisms.  
  \item As mentioned, we consider \emph{multitudes} of topics, rather
    than a single topic, in DeGroot learning, and we crucially allow
    truth to be \emph{revealed} at some stage. This differs from all
    the previous work, where agents have been in the unfortunate
    situation of eternally communicating about a given topic, without
    ever knowing its true state. 
  \item We incoporate other DeGroot variants in our setup. In
    particular, we provide an alternative to the homophily model
    designed by Hegselmann and Krause \cite{Hegselmann2002}, see
    Section \ref{sec:homophily}.
  \item We derive a microeconomic foundation for weight adjustments as
    we implement by defining an individual agent's optimization
    problem --- in particular, we assume that agents have negative
    utility from not knowing truth --- and by computing a closed-form
    solution to this problem.\footnote{In spirit, our approach is
    similar to that of DeMarzo, Vayanos, and Zwiebel (2003)
    \cite{Demarzo2003}.} We then show how our heuristic weight
    adjustment rule --- 
    adding $\delta$ --- corresponds to the solution of the
    optimization problem. 
\end{itemize}
Our main findings are as follows. 
\begin{itemize}
  \item For the standard model, we first show that agents reach a
    \emph{consensus} for almost all topics $X_k$, under weak
    conditions, in our endogenized DeGroot learning paradigm
    (Proposition \ref{prop:consensus} and Remark \ref{rem:distR}). This
    confirms the commonly held belief (cf.\ Acemoglu and Ozdaglar
    (2011) \cite{Acemoglu2011}) that the standard 
    DeGroot model leads agents to consensus (so easily) but also shows that,
    in our endogenized model, conditons that prevent consensus are, in
    fact, not satisfied.
  \item Next, we illustrate that if all agents' initial beliefs are
    \emph{unbiased}, then agents in fact reach a consensus that is even
    correct, for `large' topics $X_k$ and as agent group size $n$
    becomes large. This holds both when agents adjust weights based on
    limiting beliefs and on initial beliefs (Propositions 
    \ref{prop:centered1} and \ref{prop:centered2}, respectively);
    we define the 
    notions of relevant weight adjustment time points below. When there
    are \emph{biased} agents, then agents' limiting beliefs are generally a
    convex combination of the unbiased agents' initial beliefs and the
    biased agents' initial beliefs. We demonstrate the truthfulness of
    this claim under various parametrizations (Propositions \ref{prop:inf},
    \ref{prop:general1}, \ref{prop:general2}). We also give sufficient
    conditions on when agents may converge to truth, for large topics, even
    under the presence of biased agents (Propositions \ref{prop:inf}
    and \ref{prop:0}), 
    but these conditions are `low 
    probability events' (or require a sufficiently high valuation of
    truth) and they hold only under the particular 
    parametrization that agents stop learning the network topology in
    case `everything is 
    fine', as we define below. 

    That limiting consensus beliefs are convex combinations of biased
    and unbiased beliefs may imply that limiting beliefs are
    `arbitrarily' far off from truth, provided that the number of biased
    agents is sufficiently large (Corollary \ref{cor:noepswisdom}),
    thus demonstrating 
    that agents do not optimally aggregate information in our
    endogenized DeGroot learning model, at least under certain conditions. 
  \item Next, for opinion dynamics `under opposition', a recently
    suggested DeGroot learning variant where agents are 
    motivated by 
    `ingroup'/`outgroup' relationships \cite{Eger2013}, we show that
    even if all 
    agents' initial beliefs are unbiased and, more particularly,
    agents receive arbitrarily 
    accurate initial signals about topics, some agents may be
    arbitarily far off from truth. In other words, we show that if
    agents have additional incentives besides truth, namely, to
    disassociate from unliked others --- such agents must be thought
    of as additionally biased; namely, they must be thought of 
    as, e.g., committing the sin of omission to ignore the unliked
    others' relevant information and the sin of commission to
    incorporate irrelevant information, namely, the `opposite' of 
    unliked others' beliefs ---\footnote{They may generally be thought
    of as biased toward ingroup members, cf.\ Brewer
    (1979) \cite{Brewer1979}.} then wisdom is even more difficult 
    to attain. This, in particular, concerns several important fields
    of everyday life, such as the political arena. 
  \item Then, for DeGroot learning `under conformity' --- that is,
    when agents want to conform to a reference opinion (again, which
    may be thought of as a sin of commission) --- another recent
    variant of DeGroot learning \cite{Buchel2012}, we show that even
    if the unbiased agents have never been truthful in the past, they
    may become arbitrarily influential, something that is not possible
    in the standard model, and which, again, shows that additional biases
    may worsen the case for wisdom. 
  \item Finally, in case homophily also plays a role --- that is, when
    agents have the tendency to adjust the social network topology
    based on agents with similar beliefs --- then, again, wisdom is
    more difficult to arrive at. We show
    this (only) by simulation since this 
    process is (much) more difficult to analyze
    analytically as it deals with learning matrices
    that are changing over time (and not only across topics). In our 
    context, homophily can also be seen as a 
    \emph{search bias} 
    in which subjects overrate beliefs that are close to their own
    (cf.\ Kunda (1990) \cite{Kunda1990}). 
\end{itemize}
The structure of this work is as follows. In Section
\ref{sec:related}, we present related work, beyond what we have already
referred to. In Section \ref{sec:modelECCS}, we give a formal outline of
our model, and, in Section \ref{sec:justification}, a `justification'
of our network 
learning rule. In Section \ref{sec:notation}, we 
introduce relevant notation. Then, in Sections \ref{sec:standard},
\ref{sec:opposition}, \ref{sec:conformity}, and \ref{sec:homophily}, we
derive our results, as outlined above, on the standard model, and the
DeGroot variants under opposition, conformity, and homophily, 
respectively. In Section \ref{sec:conclusion}, we conclude. We list
several proofs in the appendices; there, we also report on a
`small-scale' experiment on the (un)biasedness (and the distribution)
of individuals' (initial) 
beliefs concerning several `common knowledge questions'. 

Before actually listing related work in Section \ref{sec:related},
we now briefly discuss this experiment and the lessons that we learn
from it. 


\bigskip
\noindent \textbf{A small-scale experiment concerning the
  (un)biasedness of individuals' beliefs}. 
As we have mentioned, some research papers have assumed that
individuals' initial beliefs on topics are unbiased, that is, centered
around truth. Certainly, this assumption may sometimes be plausible,
e.g., depending on the topic, but, as we have indicated, we do not think
that the condition holds across a large spectrum of circumstances. We
conducted an experiment where we asked individuals on \texttt{Amazon
Mechanical Turk}\footnote{Available
at \url{https://www.mturk.com/mturk/}.} $16$ `common knowledge
questions'. The questions 
ranged from, to our opinion, rather easy problems such as `What do you think is
the year the first world war started?' or `What do you think is
$17-4\times 2$?' to rather difficult problems, such as `What do you think is
the number of people per square mile in China's capital Beijing?' or
`What do you think is the diameter of the sun in miles?'. We list all
$16$ questions in Appendix \ref{sec:appendixb}. 

On all questions, more than $n=100$ subjects answered (between $n=110$
to $n=119$). Analyzing the answers (see Figures \ref{fig:a-2}
and \ref{fig:a-1}), we find that, typically, neither 
the mean of the answers nor the median are very close to the true
value. In fact, on only $8$ out of $16$ questions is the median (which
tends to be more reliable since it 
is not so much affected by outliers) within a $10\%$ interval around
truth, and on only $6$ out of $16$ questions does this hold for the
mean. Looking at $1\%$ intervals, these numbers drop to $6$ and
$2$, respectively (for the mean, these questions are about the start
of the first world war and the average height of an adult male US
American). Such low numbers were truly surprising if in fact the
assumptions of unbiasedness and independence of (initial) beliefs were
true, given the validity of the law of large numbers. A slightly more
detailed analysis is given in Appendix \ref{sec:appendixb}.

\section{Related Work}\label{sec:related}
Early and frequently cited predecessors of DeGrootian opinion dynamics 
are French (1956) \cite{French1956} and Harary (1959)
\cite{Harary1959}, although the now famous `averaging' model of opinion and
consensus formation 
has only been popularized through the seminal 
work
of DeGroot (1974) \cite{DeGroot1974}. 
At about the same time, Lehrer and Wagner
\cite{Wagner1978,Lehrer1981,Lehrer1983} have developed a model of rational
consensus formation in society that, in both its implications and its
mathematical structure, is very similar to the DeGroot model. 
In the sociology literature, Friedkin and Johnson
(1990) \cite{Friedkin1990} and Friedkin and Johnson
(1999) \cite{Friedkin1999} develop models of social influence that
generalize the DeGroot model. 
In more recent years, a renewed economic 
interest in the DeGroot model of opinion dynamics has
emerged, 
leading to a number of 
further extensions proposed. 
For example, DeMarzo, Vayanos, and Zwiebel
(2002) \cite{Demarzo2003}, 
besides sketching psychological justifications for DeGroot learning
relating to persuasion bias as discussed above, 
discuss time-varying weights on own beliefs that 
capture, e.g., the idea of a `hardening of positions': over time,
individuals may be more inclined to rely on their own beliefs rather
than on those of their peers. Further extensions of the
classical DeGroot model include Golub and Jackson (2010)
\cite{Golub2010}, whose 
contribution is to analyze weight structures such that DeGroot
learners whose 
initial beliefs are \emph{stochastically centered around truth} converge to a
consensus that is correct, and the works of Daron Acemoglu and
colleagues. For example, Acemoglu, Ozdaglar and ParandehGheibi (2010)
\cite{Acemoglu2010a} distinguish between regular and forceful
agents, such as, in an economic interpretation, monopolistic media
(forceful agents influence others 
disproportionately), and Acemoglu, 
Como, Fagnani, and Ozdaglar 
\cite{AcemogluForthcoming} distinguish between regular and stubborn
agents (the latter never update), to account for the phenomenon of
disagreement in societies; in Yildiz, Acemoglu, Ozdaglar,
Saberi, and Scaglione \cite{AcemogluSubmitted}, a discrete 
version of the DeGroot model with stubborn agents is analyzed in
which regular agents randomly adopt one of their neighbors' binary
opinions. 
Another interesting DeGroot variant is discussed in Buechel, Hellmann
and Kl\"{o}{\ss}ner (2012) \cite{Buchel2012} where agents' \emph{stated
opinions} may differ from their \emph{true} (or private) opinions and
where it is assumed that agents generally wish to state an opinion
that is close to that of their peer group even if their true opinions
may be very different (which is the `conformity' aspect of their
model); we review this work in more depth in
Section \ref{sec:conformity}. A similar approach is given in
Buechel, Hellmann and 
Pichler (2012) \cite{Buchel2012b}, where DeGroot learning is applied
to an overlapping generations model in which parents transmit traits to
their children. 
Receivers who deviate from the opinion signals sent by senders
--- \emph{rebels} --- are discussed in Cao, Yang, Qu, and Yang
(2011) \cite{Cao2011}; 
see also the modeling in Zhang,
Cao, Qin, and Yang (2013) \cite{Zhang2013} where such behavior is
interpreted in a `fashion' context. A model with more general
`ingroup/outgroup' relationships and opposition toward outgroup
members is described in Eger
(2013) \cite{Eger2013}, which we discuss in more detail in
Section \ref{sec:opposition}. 
Multi-dimensional
real opinion spaces have been considered in Lorenz (2006) \cite{Lorenz2006} and
a survey of generalizations of 
DeGroot models
developed within physicist communities (e.g., density-based approaches
in place of agent-based systems) is provided by Lorenz (2007)
\cite{Lorenz2007}. 
Groeber, Lorenz, and Schweitzer (2013) \cite{Groeber2013} provide
`dissonance minimization' 
as a general microfoundation of a variety of heterogenous DeGroot-like
opinion dynamics models. 

Concerning DeGroot models with \emph{endogenous} weight formation, one
pattern of endogenous weight formation that has been studied in the
literature is weight formation based on a \emph{homophily principle}, 
in which 
agents 
assign positive weights to those individuals whose current opinions are `similar'
with their own.
In Hegselmann and Krause (2002) \cite{Hegselmann2002} --- an approach
with many extensions such
as \cite{Hegselmann2005,Hegselmann2006,Douven2009,Douven2009b,Douven2010} --- 
this
leads to very interesting patterns of opinion formation in which, most
prominently, the paradigms of plurality, polarization and consensus
are observed, depending on specific parametrizations --- most
importantly, the definition of similarity, i.e., whether individuals are
tolerant or not toward other opinions, affects which opinion pattern
emerges. The model of Deffuant, Neau, Amblard, and
Weisbuch (2000) \cite{Deffuant2000} is identical in setup to the Hegselmann
and Krause model, except that two randomly determined agents, rather
than all agents, update beliefs in each time step. 
Pan (2010) \cite{Pan2010} discusses a homophily variant in which
agents assign trust weights to other agents \emph{in proportion} to agents'
current opinion distance --- rather than by thresholding, as done in the
Hegselmann and Krause models and in Deffuant, Neau, Amblard, and
Weisbuch (2000) \cite{Deffuant2000} --- which
typically entails a consensus, in the 
limit. 
Homophily and DeGroot learning is also investigated in Golub and
Jackson (2012) \cite{Golub2012}, where the relationship between
the speed of DeGrootian learning and homophily is discussed; in this
model, homophily is --- exogenously, however --- modeled by 
random networks where the link probability between different groups is
non-uniform, and is, in fact, higher between individuals of the same
group. 
Endogenous weight formation typically
implies time-varying weight matrices as belief updating operators and
mathematical results
on corresponding processes are, for instance, given in Lorenz
(2005) \cite{Lorenz2005}. 

Recent empirical and experimental evidence on the validity of the
DeGroot 
heuristic for learning in social networks has been provided in, e.g.,
Chandrasekhar, Larreguy, and Xandri
(2012) \cite{Chandra2012} and Corazzini, Pavesi, Petrovich, and
Stanca (2012) \cite{Corazzini2012}. Interesting in our context is also
the  
experiment by Lorenz, Rauhut, Schweitzer and Helbing
(2011) \cite{Lorenz2011}, where individuals are placed in a situation
consistent with our setup: individuals observe their peers' past
beliefs (on social/geopolitical issues) and may update their current
opinions accordingly. In addition, truth on each of the discussed topics
becomes revealed, by the experimenter, after a certain fixed amount of
time.

Social learning is also discussed in various other strands of
literature besides those discussed, such as in herding models
(cf. Banerjee (1992) \cite{Banerjee1992}, Gale and Kaiv
(2003) \cite{Gale2003}, Banerjee and 
Fudenberg (2004) \cite{Banerjee2004}), where agents usually converge
to holding the same 
belief as to an optimal action. This conclusion generally applies to the
observational learning setting (cf. Rosenberg, Solan and Vieille
(2006) \cite{Rosenberg2006}, Acemoglu, Dahleh, Lobel, and Ozdaglar
(2008) \cite{Acemoglu2008}), where agents are observing 
choices and/or payoffs of other agents over time and are updating
accordingly. See also the references and the discussion in Golub and
Jackson (2010) \cite{Golub2010}. General overviews over social learning,
whether Bayesian or non-Bayesian, whether based on communication or
observation, are, in the economics context, for
example given in Lobel (2000) \cite{Lobel2000} and Acemoglu and
Ozdaglar (2011) \cite{Acemoglu2011}. In Acemoglu
and Ozdaglar (2011) \cite{Acemoglu2011}, 
an extensive discussion of the `pros and cons'
of fully
rational learning models versus boundedly rational (most importantly,
DeGroot-like) heuristics is provided. 


As discussed in the introduction, group opinion and belief formation
and decision making 
also has a long history 
in psychology. A crucial difference between such models
and models of social learning is that, in the psychology studies and
models, it 
is usually assumed, and even explicity demanded, for the group of individuals to
reach a consensus in the course of the discussion process. 
A general
overview over group decision making is given in Kerr and Tindale
(2004) \cite{Kerr2004} and other 
relevant literature, besides that sketched in the introduction, is,
for example, Mannes (2009) \cite{Mannes2009} and Budescu, Rantilla, Yu,
and Karelitz (2003) \cite{Budescu2003}.

\section{Model}\label{sec:modelECCS}
A finite set $[n]=\set{1,2,\ldots,n}$ of $n$ agents discusses
a sequence $X_1,X_2,X_3,\ldots$ of topics. Each agent $i=1,\ldots,n$
holds \emph{initial beliefs} $b_{i}^k(0)\in S$ on issue $X_k$, where
$k=1,2,3,\ldots$ and where $S$ is a convex set that we may innocuously 
assume to be the whole of $\real$. Moreover, 
each topic has a corresponding \emph{truth} $\mu_k\in S$ which denotes the
`true evaluation' of topic $X_k$. Agents
\emph{update} their beliefs on 
$X_k$ by taking a \emph{weighted average} of all other agents'
beliefs, starting from initial beliefs:
\begin{align}\label{eq:degrootupdate}
  b_{i}^k(t+1) = \sum_{j=1}^n W_{ij}^{(k)}b_{j}^k(t),
\end{align}
where $t=0,1,2,3,\ldots$ and where $W_{ij}^{(k)}$ denotes the
\emph{weight} (`trust') that
agent $i$ assigns agent $j$ for topic $X_k$; in
Section \ref{sec:homophily}, we let $W_{ij}^{(k)}$ also depend on time
$t$, i.e., $W_{ij}^{(k)}=W_{ij}^{(k)}(t)$.  
We let the limiting beliefs of agent $i$ for issue $X_k$ be denoted by
$b_i^k(\infty)$. Moreover,
we
assume that 
weight matrix $\mathbf{W}^{(k)}$ --- which we
also interpret as a `learning matrix', or, as a \emph{(social)
network} --- is \emph{row-stochastic} for every
topic $k$, that is,  
\begin{align*}
  \forall i,j:\: 0\le W_{ij}^{(k)}\le 1,\quad\text{and}\quad \forall i:\:
  \sum_{j=1}^n W_{ij}^{(k)}=1, 
\end{align*}
which means that the weights that agents assign each other are
normalized to unity; we furthermore assume that weights carry over from one
topic to another, as we explicate below. 
Crucially, we consider an \emph{endogenous}
weight formation process where agents \emph{adjust} the weights they
attribute to other agents based on the foundational principle of truth.
\begin{itemize}
  \item If agent $j$ has known truth $\mu_k$ for issue
    $X_k$ (or, was `close enough'), then it seems natural for agent
    $i$ to increase his trust in $j$. Formally, we let
    \begin{align}\label{eq:adjusttruth}
      W_{ij}^{(k+1)} = 
      \begin{cases}
        {W}_{ij}^{(k)}+\delta\cdot
    T(\abs{N({\mathbf{b}^k(\infty)},\mu_k)}) & \text{if } 
        \norm{b_{j}^k(\tau)-\mu_k}< \eta,\\
        {W}_{ij}^{(k)}& \text{otherwise},
      \end{cases}
    \end{align}
    for all $k\ge 1$; by $\norm{\cdot}$, we denote the absolute
    distance and by $\length{A}$ the cardinality of set $A$. Here,  
    $N({\mathbf{b}^k(\tau)},\mu_k)\subseteq\set{1,\ldots,n}$ is the set of all 
    agents $i$ whose 
    belief $b_i^k(\tau)$ for $X_k$ at time $\tau$ is
    within an $\eta$-radius of $\mu_k$ and
    $T:\set{1,\ldots,n}\goesto [0,\infty]$ is a function for which we
    specify the following: 
    $m_1\le m_2 \:\implies\: T(m_1)\ge T(m_2)$ ($T$ is non-increasing in its
    argument; `knowing truth pays a weakly larger trust increment the
    less people know it'; see our discussion below). The 
    variable $\tau$ models the 
    relevant adjustment time point; we consider $\tau=0$ (`adjusting
    based on initial beliefs') and $\tau=\infty$ (`adjusting based on
    limiting beliefs'). We take the variables $\eta$, with $\eta\ge
    0$, and $\delta$, with $\delta>0$, 
    as exogenous variables. We also refer to the variable $\eta$ as
    the agents' \emph{tolerance} since it describes the interval
    within which agents are tolerant against deviations from truth. 
    
    Updating in case $b_j^k(\tau)$ is close to truth
    rather than exactly truth may be interpreted as a boundedly
    rational heuristic for agents who cannot assess truth with
    infinite precision. Note that after adjusting weights, we
    renormalize weight matrices in order for them to satisfy the
    row-stochasticity condition.  
\end{itemize}

\subsection*{Discussion}

Our endogenous DeGroot model appears quite simple and natural --- we
let agents adjust 
the network $\mathbf{W}$ in a way that incorporates `past
performance': whenever an agent has been close enough to truth, agents
increase their trust for this agent by $\delta$ --- except, possibly, 
for the weight adjustment time points and the factor
$T(\cdot)$. Concerning weight adjustment time points, the question
is \emph{what is the relevant time point that an agent's belief should be (or
is) 
compared to truth $\mu_k$} 
for some issue $X_k$. Note that, for any
issue $X_k$, 
there are infinitely many possible such time points ---
$t=0,1,2,3,\ldots$ --- so this question admits no straightforward
answer. We consider 
two relevant time points, namely, the beliefs that agents hold
\emph{initially}, at the beginning of communication, and the beliefs that
agents hold in the \emph{limit}, as time goes to infinity; these beliefs are
agents' limiting beliefs, after communication on topic $X_k$ has
terminated. Both time points have some intuitive appeal, as we
think. Initial 
beliefs say something on an agent's `innate ability', \emph{before}
learning from others, and limiting
beliefs may possibly be a more realistic reference point for weight
adjustments if agents are
perceived of as having `limited memory' (limiting
beliefs are the `most recent' observations).\footnote{Of course, it
could be argued that an agent's `average belief', somehow weighted
over time, might also be a quantity that could be compared with truth
$\mu_k$.} Concerning the function
$T(\cdot)$, our intuition is as follows. The 
larger the group of 
agents who know the 
correct answer (or, as we consider as equivalent throughout, are `close
enough' to truth) for a given topic --- that is, the larger is the set
of 
agents whose limiting beliefs are correct --- the smaller should be
the trust weight increment that agents assign each other. Intuitively,
the number of agents who are correct for a topic 
may be indicative of the 
topic's `difficulty' or `hardness'.\footnote{To make a crude example,
`everyone' may know what $3+3$ is --- so that correct knowledge of
this answer may 
not justify increased trust --- but it took an 
Euclid to first discover the infinitude of the set of prime numbers.} 
If $T(x)=0$, for some $x$, then this
means that the network is not adjusted if at least $x$ agents know the
truth on any one topic.\footnote{The condition $T(x)=0$
may also be paraphrased as meaning that `if at least $x$ agents know truth
on any one topic, then the network need not be changed' --- that is,
`everything is fine' if at least $x$ (e.g., $x=n$) agents know truth.}
Consider Figure \ref{fig:T} for three typical exemplars of
$T(\cdot)$, as we have in mind.
\begin{figure}[!htb]
\centering
\input{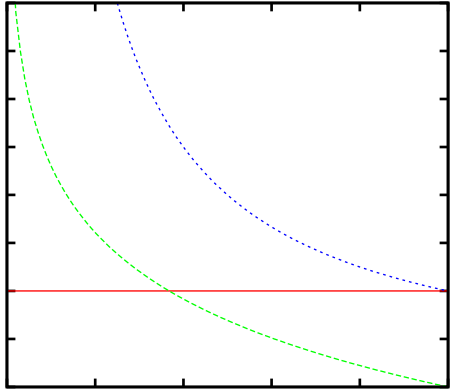}
\caption{Three possible specifications of the function $T$
in \eqref{eq:adjusttruth}. Note that $T$ is always non-increasing, as
we have defined. For example, the red function is $T(\cdot)=1$ and the
dashed green 
function has $T(n)=0$ and $T(x)>0$ for $x<n$.}
\label{fig:T} 
\end{figure}

We also point out that, in our model, we logically differentiate between
what is `innate knowledge' (or, simply, `\emph{ability}') and what
is \emph{socially learned} from others in that we think of initial beliefs as
capturing ability and updating beliefs based on others' beliefs as the social
learning 
process. Finally, we remark that we generally think of topics $X_k$ as
`of the same kind' --- that is, all of them are on sports or
mathematics or natural science or politics or the stock market,
etc. --- in order to 
justify why network weights should carry over from one topic to another; see
also our dicussion below. 

Almost all throughout the work, we assume that agents
are \emph{homogenous} with respect to the tolerances $\eta$, the
weight increments $\delta$, and adjustment time points $\tau$.

\section{A justification of our weight adjustment
  procedure}\label{sec:justification}
\subsection*{The choice of a rational agent}
We now derive a (micro-founded) justification of our weight adjustment
rule \eqref{eq:adjusttruth}. We first assume that agents
$i=1,\ldots,n$ have disutilities from not knowing truth for topic
$X_k$, for $k=1,2,\ldots$. 
More precisely, we 
assume that agent $i$ has utility function $U_i$
from weight structure $\mathbf{W}^{(k)}$ for issue $X_k$ as 
\begin{align}\label{eq:prime}
  U_i(\mathbf{W}^{(k)}) =
  U_i(\mathbf{W}^{(k)};\mu_k,b_1^k(0),\ldots,b_n^k(0)) =
  -F\bigl(\: d({b_i^k(\infty),\mu_k})\:\bigr), 
\end{align}
where 
$F:\real_{\ge 0}\goesto\real_{\ge 0}$ is monotonically
increasing and $d$ is 
a \emph{metric} --- that is, in particular, $d(a,b)=0$ if and only
if $a=b$ ---  
and
where we assume that $\mu_k$ and initial beliefs $b_1^k(0),\ldots,b_n^k(0)$
are exogenous; also note how $b_i^k(\infty)$ depends on
$\mathbf{W}^{(k)}$ (and $b_1^k(0),\ldots,b_n^k(0)$) via
process \eqref{eq:degrootupdate}. In  
other words, according to utility function \eqref{eq:prime}, a larger distance between $i$'s limiting belief 
$b_i^k(\infty)$ and truth $\mu_k$ does not lead to larger utility of
agent $i$ and when $b_i^k(\infty)=\mu_k$, then agent $i$ attains
maximum possible 
utility. For technical ease, we assume 
that $d$ is the Euclidean distance and $F$ has
the simple quadratic form $F(z)=z^2$ such that 
\begin{align*}
  U_i(\mathbf{W}^{(k)}) = -\norm{b_i^k(\infty)-\mu_k}^2 =
  -\bigl(b_i^k(\infty)-\mu_k\bigr)^2. 
\end{align*}
Now, we assume that $[\mathbf{W}^{(k)}]_i$, by which we denote the
$i$-th row of $\mathbf{W}^{(k)}$, are the endogenous variables
of agent $i$ she wants to set in such a way as to maximize her 
utility $U_i$.
Since agent $i$ cannot affect the weight structure choices of agents
$i'$, with $i\neq i'$, we write $U_i$ 
as a function of 
$[\mathbf{W}^{(k)}]_i$, rather than 
$\mathbf{W}^{(k)}$. Hence, we
write 
\begin{align*}
        U_i\bigl([\mathbf{W}^{(k)}]_i\bigr) =
        -\bigl(b_i^k(\infty)-\mu_k\bigr)^2. 
\end{align*}
 Assume next that agents $i=1,\ldots,n$ have `limited
foresight' or  
`finite horizon' in that they cannot foresee the dynamics 
of belief updating process \eqref{eq:degrootupdate} (which
would also require knowledge of the other agents' weight
choices) but that they take $b_i^k(1)$ as a reference variable, rather
than $b_i^k(\infty)$. 
\begin{assumption}\label{ass:1}
  Agents $i=1,\ldots,n$ have \emph{limited foresight} or \emph{finite
  horizon}. They choose weights $[\mathbf{W}^k(0)]_i$ to maximize 
  \begin{align*}
  U_i\bigl([\mathbf{W}^{(k)}]_i\bigr) =
        -\bigl(b_i^k(1)-\mu_k\bigr)^2 = -\Bigl(\sum_{j=1}^n
  W_{ij}^{(k)}b_j^k(0)-\mu_k\Bigr)^2.  
  \end{align*}
\end{assumption}
Our next assumption is that initial beliefs $b_1^k(0),\ldots,b_n^k(0)$
are \emph{random variables}. 
\begin{assumption}\label{ass:2}
  Initial beliefs $b_1^k(0),\ldots,b_n^k(0)$ are \emph{random
  variables}. 
\end{assumption}
From Assumption \ref{ass:2}, it follows that agents become 
\emph{expected utility
  maximizers}: they choose weights $[\mathbf{W}^{(k)}]_i$ to maximize 
  \begin{align*}
  \Exp_i\left[U_i\bigl([\mathbf{W}^{(k)}]_i\bigr)\right].
  \end{align*}
Our final assumption says that 
agents
expect their own and other agents' initial beliefs to be correct, which we
call the \emph{bona fides} (``good faith'') assumption.
\begin{assumption}\label{ass:3}
  Agents $i=1,\ldots,n$
  are \emph{bona fide}, that is,
  \begin{align*}
  \Exp_i[b_j^k(0)] = \mu_k, \quad\text{for all } j=1,\ldots,n, \text{
    and all } k=1,2,3,\ldots
  \end{align*}
\end{assumption}
Now, we derive agents' maximization problem under
Assumptions \ref{ass:1} to \ref{ass:3}. To this end, let $X$ denote the
random variable
\begin{align}\label{eq:X}
        X=\sum_{j=1}^n
  W_{ij}^{(k)}b_j^k(0).
\end{align}
With this notation, agents' utility maximization problems become,
under our named assumptions, for each agent $i=1,\ldots,n$:
\begin{equation}\label{eq:optim}
  \begin{split}
  \max_{[\mathbf{W}^{(k)}]_i}&
  \:\: \Exp_i\left[U_i\bigl([\mathbf{W}^{(k)}]_i\bigr)\right] =  \Exp_i\left[-\Bigl(\sum_{j=1}^n
  W_{ij}^{(k)}b_j^k(0)-\mu_k\Bigr)^2\right] =
  -\Exp_i\bigl[(X-\Exp_i[X])^2\bigr] \\
  \text{s.t. } & W_{i1}^{(k)}+\ldots+W_{in}^{(k)}=1,
  \end{split}
\end{equation}
since $\Exp_i[X] = \sum_{j=1}^nW_{ij}^{(k)}\Exp_i[b_j^k(0)]
= \mu_k \sum_{j=1}^nW_{ij}^{(k)} = \mu_k$ and where we assume
row-stochasticity of $\mathbf{W}^{(k)}$. Now,
$\Exp_i\bigl[(X-\Exp_i[X])^2\bigr]=\Var_i[X]$ and hence, agents'
utility maximization problems may be rewritten as
\begin{equation}\label{eq:optim2}
  \begin{split}
  \max_{[\mathbf{W}^{(k)}]_i}&
  -\Var_i[X] = \min_{[\mathbf{W}^{(k)}]_i}
  \Var_i[X]\\
  \text{s.t. } & W_{i1}^{(k)}+\ldots+W_{in}^{(k)}=1,
  \end{split} 
\end{equation}
that is, agents strive to set weights $W_{i1}^{(k)},\ldots,W_{in}^{(k)}$
such that $\Var_i[X]$ is minimized subject to the row-stochasticity
condition on $\mathbf{W}^{(k)}$. To simplify the solution to
problem \eqref{eq:optim2}, we additionally assume 
independence of $b_1^k(0),\ldots,b_n^k(0)$.
\begin{assumption}\label{ass:independence}
  The variables $b_1^k(0),\ldots,b_n^k(0)$ are \emph{independent} random
  variables. 
\end{assumption}
Finally, we assume that agents expect the variables
$b_j^1(0),b_j^2(0),b_j^3(0),\ldots$ to be independent with \emph{identical}
variances. If this were not the case, agents' reliability across topics
would vary so that 
statistical regularities --- inference from past
performance to current performance --- could not be exploited. 
\begin{assumption}\label{ass:regularity}
   Each agent $i\in[n]$ expects the random variables
   $b_j^1(0),b_j^2(0),b_j^3(0),\ldots$ to be independent 
  random variables with identical variances, 
  that is,
  \begin{align*}
  \Var_i[b_j^1(0)] = \Var_i[b_j^2(0)] = \Var_i[b_j^3(0)] = \cdots
  \end{align*}
  for all $j=1,\ldots,n$. 
\end{assumption}

Under Assumptions \ref{ass:independence} and \ref{ass:regularity},
$\Var_i[X]$ may be written as 
\begin{align*}
  \Var_i[X] = \sum_{j=1}^n (W_{ij}^{(k)})^2\Var_i[b_j^k(0)]
  =  \sum_{j=1}^n \alpha_{ij}^2\sigma_{ij}^2,
\end{align*}
where we let, for short, $\alpha_{ij}=W_{ij}^{(k)}$ (here, we may omit
the dependence on $k$ since $W_{ij}^{(k)}$ are optimization variables that
do not, intrinsically, depend on topic $X_k$) and 
$\sigma_{ij}^2=\Var_i[b_j^k(0)]$ (here, we may omit the dependence on $k$ due to Assumption \ref{ass:regularity}). Thus, to solve problem \eqref{eq:optim2}
under Assumptions \ref{ass:independence} and \ref{ass:regularity},
each agent $i=1,\ldots,n$ 
minimizes the `Lagrange' function
\begin{align}
  \mathcal{L}_i(\alpha_{i1},\ldots,\alpha_{in})
  =  \sum_{j=1}^n \alpha_{ij}^2\sigma_{ij}^2-\lambda(\sum_{j=1}^n\alpha_{ij}-1)
\end{align}
for some `Lagrange multiplier' $\lambda$. Via the first-order
conditions, this leads to
\begin{align*}
  \alpha_{ij} = \frac{\lambda}{2\sigma_{ij}^2},
\end{align*}
and from $\sum_{j=1}^n\alpha_{ij}=1$, we find that 
\begin{align*}
  \sum_{j=1}^n \frac{\lambda}{2\sigma_{ij}^2} =
  1 \quad\iff\quad \lambda = \frac{2}{\sum_{j=1}^n \sigma_{ij}^{-2}}. 
\end{align*}
Thus, under Assumptions \ref{ass:1} to \ref{ass:regularity}, a 
rational agent chooses weights $W_{ij}^{(k)}$ that satisfy
\begin{align}\label{eq:optimalweights}
  W_{ij}^{(k)} = \alpha_{ij}
  = \frac{\frac{1}{\Var_i[b_j^k(0)]}}{\sum_{j'=1}^n \bigl(\Var_i[b_{j'}^k(0)]\bigr)^{-1}}
  \propto \frac{1}{\Var_i[b_j^k(0)]}, 
\end{align}
which is quite an intuitive result: the larger the variance of agent
$j$'s estimate $b_j^k(0)$ --- or, more precisely, what agent $i$ thinks
of this variance to be --- the lower should the weight be that agent 
$i$ assigns $j$, since $j$'s initial belief tends to be `away from
truth' more frequently --- or, more precisely, $i$ expects $j$'s initial
belief to be so.  
\subsection*{A comparison with the heuristic
weight adjustment rule \eqref{eq:adjusttruth}} 
To compare the `optimal' weight adjustment rule under
Assumptions \ref{ass:1} to \ref{ass:regularity} with the heuristic
rule \eqref{eq:adjusttruth}, note first that
weight adjustment rule \eqref{eq:adjusttruth} amounts to (weighted)
`counting' of how often a particular agent $j$ has been in an
$\eta$ interval around truth $\mu_k$, since, each time $j$ has
been within this interval, the weight of $i$ for $j$ is increased by
the term $\delta\cdot T(\cdot)$. Hence, denoting the weights defined
via rule \eqref{eq:adjusttruth} by $\tilde{W}_{ij}^{(k)}$ for the
moment and the remainder of this section, we have
\begin{align*}
        \tilde{W}_{ij}^{(k)}\propto R_j^k(\eta),
\end{align*}
where 
$R_j^k(\eta)$ is the number of times
agent $j$ has been in an $\eta$-interval around truth within the
first $k$ discussion topics,
\begin{align*}
R_j^k(\eta)=\abs{\set{h\in\set{1,\ldots,k}\sd \norm{b_j^h(\tau)-\mu_k}< \eta}}. 
\end{align*}
Now, 
if Assumptions \ref{ass:2}, \ref{ass:3}, \ref{ass:independence}, and
\ref{ass:regularity} 
hold and if 
$\tau=0$, then clearly, $R_j^k(\eta)$ is inversely related to
$\Var_i[b_j^k(0)]$, for all $j=1,\ldots,n$, since if $R_j^k(\eta)$
is low, then $i$ thinks that $j$ has high variance (around $j$'s
presumed expected value of $\Exp_i[b_j^k(0)]=\mu_k$) and analogously
if $R_j^k(\eta)$ is high.  Hence, under these 
assumptions, 
weight adjustment rule \eqref{eq:adjusttruth} entails
weights $\tilde{W}_{ij}^{(k)}$ which satisfy
\begin{align*}
  \tilde{W}_{ij}^{(k)}\propto \frac{1}{\Var_i[b_j^k(0)]}. 
\end{align*}
Thus, to summarize, if
\begin{itemize}
        \item Assumptions \ref{ass:1} to \ref{ass:regularity} hold
        and if,
        \item $\tau=0$ (adjusting based on initial beliefs), 
\end{itemize}
then, heuristic weight adjustment rule \eqref{eq:adjusttruth}
corresponds, 
by analogy, 
to an adjustment rule that a
rational agent would implement, under the named assumptions. 

\subsection*{Discussion}
Some of the assumptions we have made require further
discussion. Assumption \ref{ass:1}, which says that agents have
limited foresight and want to minimize the distance between $b_i^k(1)$
and $\mu_k$, rather than between $b_i^k(\infty)$ and $\mu_k$, may not
only be perceived as the choice of a boundedly rational agent. In contrast, if
agent $i$ knows, or at least assumes, that all agents are similarly
rational as her (and share the same information structure, etc., that
is, are perfectly homogeneous) ---
whence all agents are faced with the same 
optimization problems to which they derive identical solutions
$[\mathbf{W}^{(k)}]_i$ --- then, in fact,
$\mathbf{b}^{k}(1)=\mathbf{b}^{k}(\infty)$ since $\mathbf{W}^{(k)}$,
for each $k$, is
identical in each row and is row-stochastic (see Lemma
\ref{lemma:identicalRows} in Appendix 
\ref{sec:appendix}). So, under this prerequisite, agents could also be thought of as
having `perfect foresight', knowing that $b_i^k(1)$ will equal
$b_i^k(\infty)$ anyways. Assumption \ref{ass:2} is innocuous, while
Assumption \ref{ass:3} is the bona fides assumption discussed in the
introduction, which we thought of as being based on egocentric
biases. Next, Assumption \ref{ass:independence}, that agents' initial
beliefs are independent, is highly implausible, of course: individuals
go to the same or similar schools, are influenced by the same or similar
media, etc., all of which may induce correlation in individuals
beliefs (possibly even if we think of these beliefs as prior to social
communication); we make this assumption for technical ease, as
otherwise deriving closed-form solutions to the optimization problems
in question may be quite challenging. Finally, Assumption
\ref{ass:regularity} demands that topics are of the same general `area',
as we have indicated in Section \ref{sec:modelECCS}, whence one may expect
individuals' reliability (for this field of human expertise) to be
predictable across a multitude of topics. Forfeiting the assumption
would mean to present agents with a problem where nothing can be
learned, in terms of adjusting the network structure $\mathbf{W}$,
across various topics.  

We also mention that our above analysis has assumed that
$\delta\cdot T$ is strictly positive (for all or at least infinitely
many topics $X_k$), for, e.g., otherwise $\tilde{W}_{ij}^{(k)}$ would
not be proportional to $R_j^k(\eta)$. In Section \ref{sec:standard},
we also consider the case when $\delta\cdot T$ is zero for all but
finitely many topics. We treat this case, which allows us to derive
wisdom results in certain circumstances even under the presence of
biased agents, as a special (or, `extreme')
case of our model that differs, however, from the choice a rational
agent would pursue, as we have sketched. 

\subsection*{Illustration}
To illustrate the relationship between $W_{ij}^{(k)}$ as set by a
(boundedly) rational agent and as set via (heuristic) weight
adjustment rule 
\eqref{eq:adjusttruth}, consider the following exemplary
situation. Let there be $n$ agents, all of whose initial beliefs
$b_i^k(0)$ are \emph{normally} distributed around $\mu_k$, for all
topics $X_k$. Assume that there are two types of agents, $L$ and $H$, with
variances $\sigma_L^2$ and $\sigma_H^2$, respectively, such that
$\sigma_L^2<\sigma_H^2$. Let there be $n_L$ agents of type $L$ and
$n_H$ agents of type $H$ such that $n_L+n_H=n$. In other words, for
each $L$-type agent $i_L$, we 
have, for all $k=1,2,3,\ldots$, 
\begin{align*}
  b_{i_L}^k(0) \sim \mathtt{N}(\mu_k,\sigma_L^2),
\end{align*}
and, accordingly, for each $H$-type agent $i_H$, we have
\begin{align*}
  b_{i_H}^k(0) \sim \mathtt{N}(\mu_k,\sigma_H^2). 
\end{align*}
Thus, under Assumptions \ref{ass:1} to \ref{ass:regularity}, a
rational agent $i$ would assign weights,
\begin{align}\label{eq:opt}
  W_{ij}^{(k)} = \frac{1}{\sigma_T^2}\cdot \frac{1}{C},
\end{align}
where $T\in\set{L,H}$, depending on whether $j$ is of type $L$ or
$H$, and where $C$ is the constant
$C=\frac{n_L}{\sigma_L^2}+\frac{n_H}{\sigma_H^2}$. In contrast, an
agent who sets weights according to the rule \eqref{eq:adjusttruth},
would set
\begin{align}\label{eq:approx}
  \tilde{W}_{ij}^{(k)} \propto \text{Pr}[\norm{b_j^k(0)-\mu_k}<\eta] =
  \int_{-\eta}^{\eta}
  \frac{1}{\sqrt{2\pi\sigma_T^2}}\exp(-\frac{x^2}{2\sigma_T^2})\,dx
  = 2\int_{0}^{\eta}
  \frac{1}{\sqrt{2\pi\sigma_T^2}}\exp(-\frac{x^2}{2\sigma_T^2})\,dx,
\end{align}
depending on whether $j$ is of type $T=L$ or $T=H$. In Figure
\ref{fig:illustration}, we plot the behavior of \eqref{eq:opt}
vs.\ \eqref{eq:approx} for specific values of $\sigma_L^2$ and
$\sigma_H^2$, namely $\sigma_L^2=1$ and $\sigma_H^2=2$. 
For the
values of $\sigma_L^2$ and $\sigma_H^2$ discussed, the optimal rule
under Assumptions \ref{ass:1} to 
\ref{ass:regularity} would accord total weight mass for $L$-types of
$n_L\cdot W_{ij}^{(k)}=\frac{3}{4}$ (for $j$ of type $L$), and total
weight mass for $H$-types of $n_H W_{ij}^{(k)}=\frac{1}{4}$ (for $j$ of
type $H$).      
\begin{figure}[!htb]
\centering
\begin{subfigure}{.48\textwidth}
  \centering
  \includegraphics[width=0.98\linewidth]{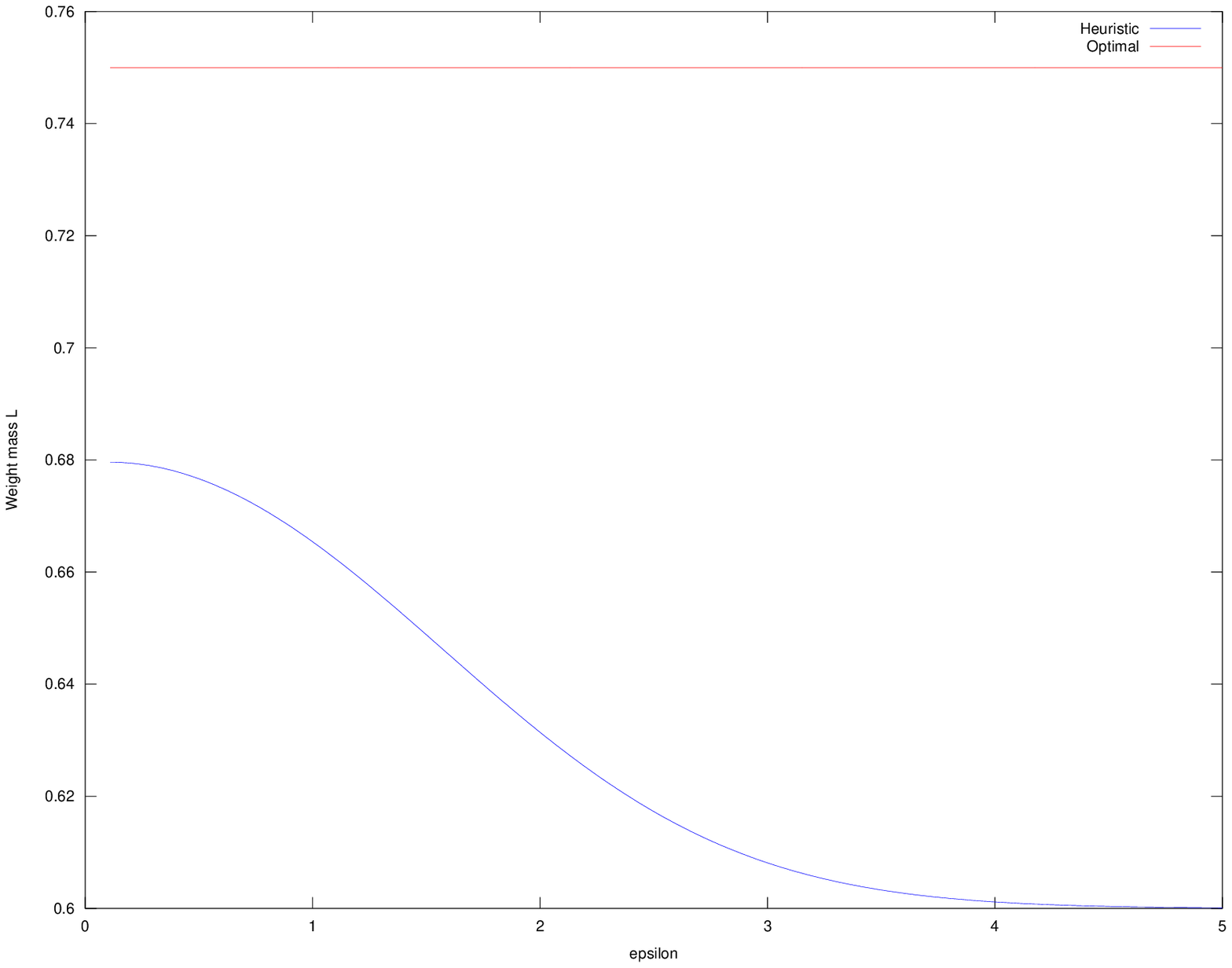}
  \caption{{\small Weight mass for $L$ types; optimal vs.\ heuristic,
    $\tilde{W}_{ij}^{(k)}$, as a 
    function of $\eta$; $\sigma_L^2=1$ and $\sigma_H^2=2$ fixed.}} 
  \label{fig:UpdateInitLow}
\end{subfigure}%
\begin{subfigure}{.48\textwidth}
  \centering
  \includegraphics[width=0.98\linewidth]{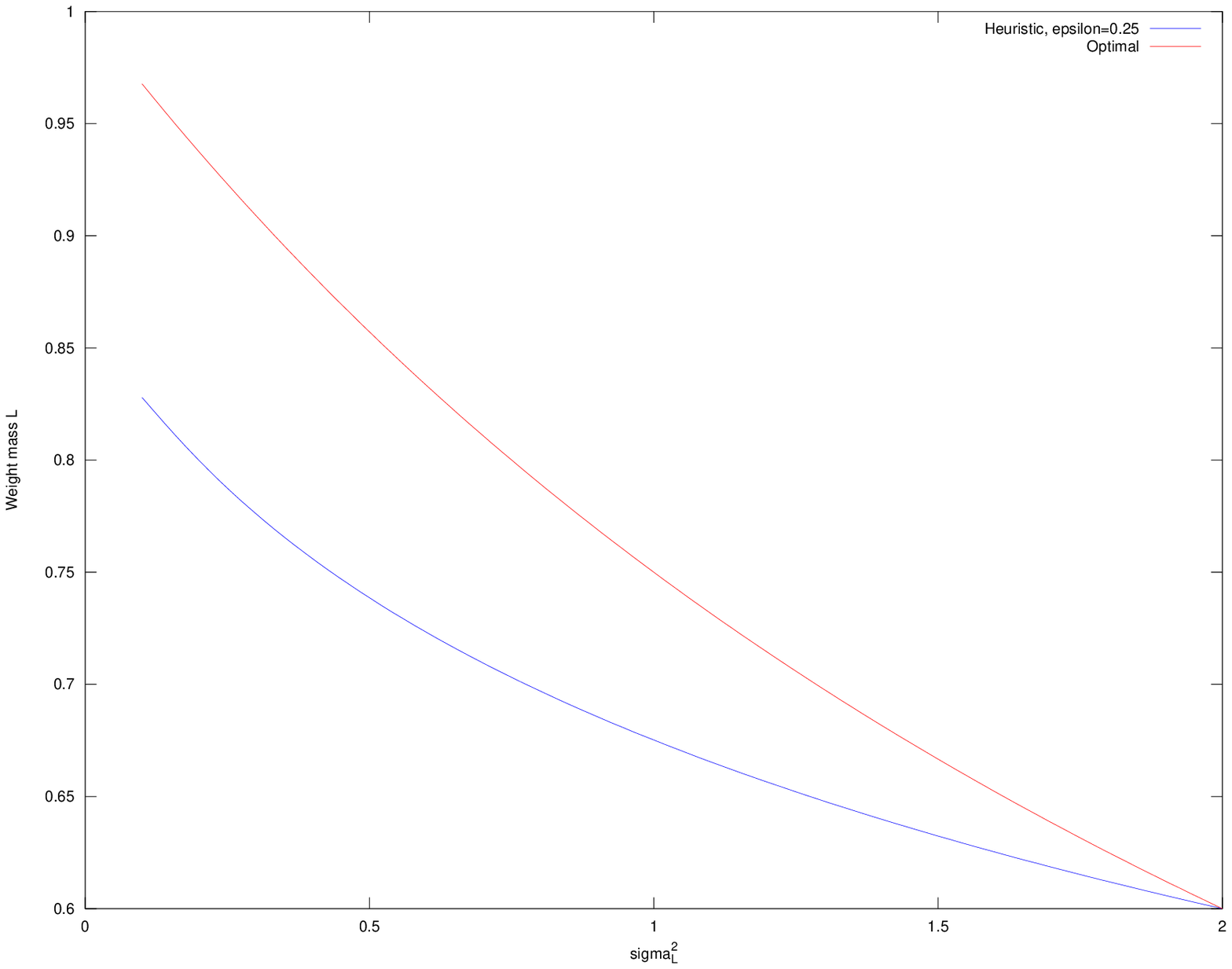}
  \caption{{\small Weight mass for $L$ types; optimal vs.\ heuristic,
    $\tilde{W}_{ij}^{(k)}$, as a
    function of $\sigma_L^2$; $\eta=0.25$ fixed.}}
  \label{fig:UpdateInitHigh}
\end{subfigure}\\
\begin{subfigure}{.48\textwidth}
  \centering
  \includegraphics[width=0.98\linewidth]{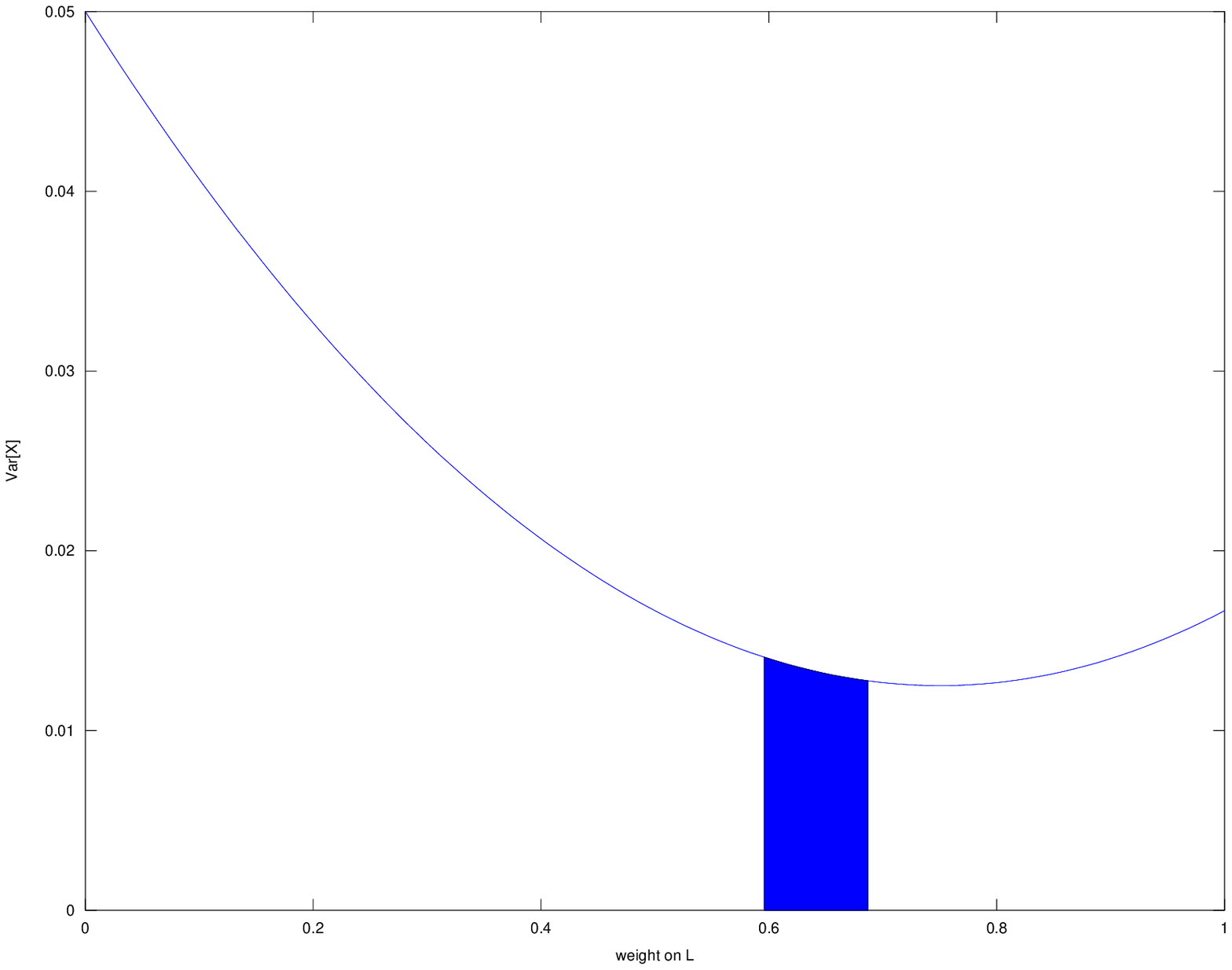}
  \caption{{\small Variance of $X$ as defined in \eqref{eq:X} as a function of
  weight mass assigned to $L$ types. The colored area gives the range
  of $\tilde{W}_{ij}^{(k)}$ as illustrated in (a).}}
  \label{fig:sub3}
\end{subfigure}
\begin{subfigure}{.48\textwidth}
  \centering
  \includegraphics[width=0.98\linewidth]{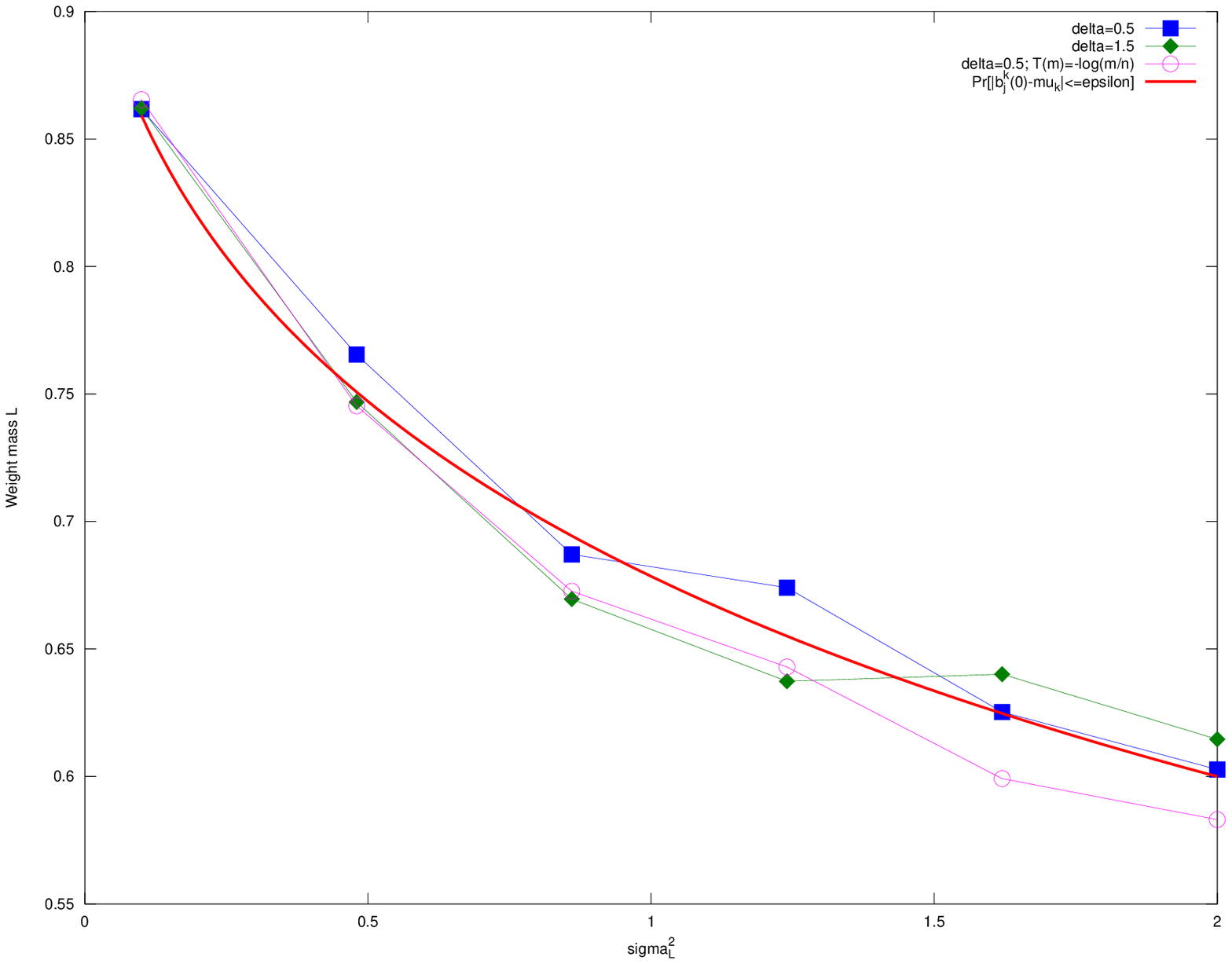}
  \caption{
    {\small $\tilde{W}_{ij}^{(k)}$ as theoretically computed according to
      \eqref{eq:approx} (as in (b)) as a function of $\sigma_L^2$,
      and, as a comparison, 
      $\tilde{W}_{ij}^{(k)}$ as given by sample runs, for two different values
    of $\delta=0.5,1.5$, and functions $T$ ($T\equiv 1$ and
    $T(m)=-\log(m/n)$); averaged over $10$ runs; 
    $\eta=0.25$ fixed.}
  }
  \label{fig:sub4}
\end{subfigure}
\caption{Throughout $\sigma_H^2=2$ and $n=100=60+40=n_L+n_H$.}
\label{fig:illustration}
\end{figure}
In contrast, as the graphs show, if weights are set according to
\eqref{eq:approx}, 
then, $\tilde{W}_{ij}^{(k)}$ is, for the $L$ types, always lower than
$\frac{3}{4}$, as the optimal rule would prescribe. Depending on
$\eta$, $\tilde{W}_{ij}^{(k)}$ ranges from $0.60$, if $\eta$ is
large, to about $0.68$, for small $\eta$. The value for $\eta$
large is obvious since if $\eta$ is sufficiently large in size,
then each agent will receive identical weight $\tilde{W}_{ij}^{(k)}$,
$\frac{1}{n}$, and, hence, total weight mass for $T$-types is
$\frac{n_T}{n}$, for $T\in\set{L,H}$. The figure also shows the
inverse relationship between $\tilde{W}_{ij}^{(k)}$ and $\sigma_T^2$ (for
$T=L$, in this case; Figure \ref{fig:illustration} (b)), the closeness
of $\tilde{W}_{ij}^{(k)}$ to 
`optimality' (Figure \ref{fig:illustration} (c)), and a comparison
between the theoretic value $\tilde{W}_{ij}^{(k)}$ is proportional to,
$\text{Pr}[\norm{b_j^k(0)-\mu_k}< \eta]$, and actual realizations of
  $\tilde{W}_{ij}^{(k)}$ as a function of $\delta$ and $T$ (Figure
\ref{fig:illustration} (d); cf.\ Equation
  \eqref{eq:adjusttruth}).  

\section{Notation and definitions}\label{sec:notation}
We introduce the following helpful notation and definitions.


\begin{definition}
  Let any $\epsilon\ge 0$ be fixed. 
  We call an agent $i$ \emph{$\epsilon$-intelligent for (topic) $X_k$} if $i$'s
  initial belief on $X_k$ is ($\epsilon$) `close to truth', i.e., 
  $\norm{b_i^k(0)-\mu_k}< \epsilon$. We call $i$
  \emph{$\epsilon$-intelligent}, if $i$ is {intelligent} for all topics $X_k$. 
\end{definition}

This definition captures the idea that an agent's initial beliefs,
which we think of as
 not influenced by peers (or their beliefs), 
express something 
\emph{innate} to agent $i$, his \emph{hidden ability} or, simply,
\emph{intelligence}. However, we say nothing here on how $i$ has
arrived at his initial beliefs, e.g., whether it was through hidden
ability in a proper sense or, for instance, `merely' through 
guessing. 
We also remark that the concept of $\epsilon$-intelligence (or
$\epsilon$-wisdom, as we define below) is clearly
related to our 
weight adjustment 
rule; in particular, for given
tolerance $\eta$, agents increase their weight for an agent $j$ if
this agent is $\epsilon$-intelligent (or $\epsilon$-wise) for a topic
$X_k$ and for all
$\epsilon\le \eta$. 

When $i$ is `close to truth' in the limit of the DeGroot
learning process, we call $i$ \emph{wise}.

\begin{definition}
  We call an agent $i$ \emph{$\epsilon$-wise for (topic) $X_k$} if $i$'s
  limit belief on $X_k$ is `close to truth', i.e., 
  $\norm{b_i^k(\infty)-\mu_k}< \epsilon$. We call $i$
  \emph{$\epsilon$-wise}, if $i$ is {wise} for all topics $X_k$. 
\end{definition}

We also introduce stochastic analogues of the above definitions. If an
agent has initial beliefs 
stochastically centered
around truth for a topic, we call the agent \emph{stochastically
  intelligent} 
\emph{for this
  topic}. 

\begin{definition}
  We call an agent $i$ \emph{stochastically intelligent for (topic)
    $X_k$} if $i$'s 
  initial belief on $X_k$ is `stochastically centered around truth', i.e., 
  $b_i^k(0)=\mu_k+\sigma_{ik}$, 
  where $\sigma_{ik}$ is some
  individual and topic-specific white-noise variable. We call $i$
  \emph{stochastically intelligent}, if $i$ is {stochastically
    intelligent} for all topics $X_k$.  
\end{definition}
We omit the corresponding definition for wisdom since we rarely 
make
use of a concept of `stochastic wisdom' in the remainder of this work. 

Next, fix a level of intelligence or wisdom $\epsilon\ge 0$. 
For convenience, let us denote the open $\epsilon$-interval around
truth, within with agents 
are considered $\epsilon$-intelligent (or $\epsilon$-wise), by
$B_{k,\epsilon}$ and its complement by $B_{k,\epsilon}^c$. Formally,
we have: 
\begin{definition}
  \begin{align*}
    B_{k,\epsilon} &:= (-\mu_k-\epsilon,\mu_k+\epsilon),\\
    B_{k,\epsilon}^c &= S\wo B_{k,\epsilon}.
  \end{align*}
\end{definition}

Below, in the main sections of our work, our principal modeling 
perspective --- although we may occasionally deviate from or slightly
generalize this perspective --- is the 
notion of two groups of agents, $\struct{N}_1$ and $\struct{N}_2$ with
$\struct{N}_1\cup\struct{N}_2=[n]$ and
$\struct{N}_1\cap\struct{N}_2=\emptyset$, one of whose initial beliefs
are 
\emph{unbiased} --- group $\struct{N}_1$'s --- and the 
other's initial beliefs are \emph{biased}, whereby we define bias as
\begin{align*}
  \beta_{i,k} = \norm{\Exp[b_i^k(0)]-\mu_k}.
\end{align*}
Hence, for members $i$ of $\struct{N}_1$, we assume that $\beta_{i,k}=0$
and 
for members $i$ of $\struct{N}_2$, we assume that $\beta_{i,k}>0$ for all
topics $X_k$. 
In addition, we think of the two groups of agents as having
independent and identical distributions of initial beliefs, with
distribution functions 
$F_{\struct{N}_l,k}(A)=\text{Pr}[b_i^k(0)\in A]$, for $l=1,2$ and
$A\subseteq S$, where, of
course, identical distribution refers to within group and independence
refers to both within and across group relations. 
Finally, for fixed level of tolerance $\eta\ge 0$, we assume that
$F_{\struct{N}_l,k}(B_{k,\eta})$ does not depend upon $k$, that is,
$F_{\struct{N}_l,k}(B_{k,\eta})=F_{\struct{N}_l,k'}(B_{k',\eta})$, for
all $k,k'$. This means that agents' probability of being
within an $\eta$-interval around truth --- for initial beliefs --- is
the same across topics. This assumption is very similar, in spirit, to
Assumption \ref{ass:regularity} and captures predictability of
agents.
We also think of this invariant probability as
denoting 
an agent's ability or reliability. 

To conclude this section, we introduce notation regarding convergence
(and consensus) of our endogenous opinion dynamics paradigm. 
\begin{definition}
  Let $k\ge 1$ be arbitrary. 
  We say that $\mathbf{W}^{(k)}$ is
  \emph{convergent for opinion vector $\mathbf{b}(0)\in S^n$} if 
  $\lim_{t\goesto\infty}(\mathbf{W}^{(k)})^t\mathbf{b}(0)$
  exists.
  Moreover,
  we say that  
  $\mathbf{W}^{(k)}$ \emph{induces a consensus for opinion
  vector $\mathbf{b}(0)$} if
  $\mathbf{W}^{(k)}$ is {convergent for $\mathbf{b}(0)$} and
  $\lim_{t\goesto\infty}(\mathbf{W}^{(k)})^t\mathbf{b}(0)$ is a
  \emph{consensus}, that is, a vector $\mathbf{c}\in S^n$ with all
  entries identical.
\end{definition}
Rather than saying that $\mathbf{W}^{(k)}$ converges, we may
occasionally also say that beliefs converge (under $\mathbf{W}^{(k)}$)
or that our DeGroot learning / opinion dynamics paradigm converges. We
also mention that we typically assume matrix $\mathbf{W}^{(1)}$ to be
the $n\times n$ identity matrix (in the absence of further
information, agents 
follow their own signals), which sometimes facilitates analytical
derivations, but we also consider more general forms of the matrix
$\mathbf{W}^{(1)}$, where we find that such a generalization is
worthwhile mentioning. 

Throughout our work, we assume that weight matrices $\mathbf{W}^{(k)}$ are
\emph{row-stochastic}, that is, 
\begin{align*}
  \sum_{j=1}^n [\mathbf{W}^{(k)}]_{ij} = 1,
\end{align*}
for all $i\in[n]$. We denote the entries of an arbitrary matrix
$\mathbf{A}$ by $A_{ij}$ or $[\mathbf{A}]_{ij}$. We denote by
$\mathbf{I}_n$ the $n\times n$ identity matrix and by $\one_n$ the
vector of $n$ $1$'s, i.e., $\one_n=(1,\ldots,1)^\intercal$. We may
omit the dimensionality if it is clear from the context. 

\section{The standard DeGroot model}\label{sec:standard}
In the subsequent sections, we derive a few results regarding the
standard DeGroot learning model under our endogenous weight formation
paradigm. First, we show that, in our setup, agents almost always
reach a consensus (Proposition \ref{prop:consensus} and the subsequent
remark), that is, for
almost all topics $X_k$, under very mild conditions. Then, in Section
\ref{sec:unbiased}, we show that if agents are unbiased and receive
initial belief signals that are centered around truth, then agents'
beliefs converge to truth for topics $X_k$, as $n,k\goesto\infty$,
irrespective of whether agents adjust weights based on limiting or on
initial 
beliefs. Next, in Section \ref{sec:biased}, we illustrate that agents may be
arbitrarily far off from truth as the number of biased agents involved
in the opinion dynamics process becomes large, thus demonstrating that
crowd wisdom may fail under these circumstances. For the situation 
when $T(n)=0$, we also give sufficient conditions on when crowd wisdom
does not fail, even under the presence of biased agents. In Section
\ref{sec:demarzo}, we discuss weights on own beliefs as a (simple)
extension of the classical DeGroot learning paradigm and as discussed
by DeMarzo, Vayanos, and Zwiebel (2003) \cite{Demarzo2003}. 

We start our discussion with a theorem given in the original DeGroot
paper \cite{DeGroot1974}, which helps us determining when our
endogenous opinion 
dynamics process leads agents to a consensus. 
\begin{theorem}\label{theorem:degroot}
  If there exists a positive integer $t$ such that every element in at
  least one column of the matrix $\mathbf{W}^t$ is positive, then 
  $\mathbf{W}$ induces a consensus for any vector $\mathbf{b}(0)\in
  S^n$. 
\end{theorem}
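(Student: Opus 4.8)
The plan is to run the classical ``shrinking spread'' argument. For a vector $\mathbf{x}\in\real^n$ write $M(\mathbf{x})=\max_i x_i$, $m(\mathbf{x})=\min_i x_i$, and call $M(\mathbf{x})-m(\mathbf{x})$ the \emph{spread} of $\mathbf{x}$. Since $\mathbf{W}$ is row-stochastic, every entry of $\mathbf{W}\mathbf{x}$ is a convex combination of entries of $\mathbf{x}$, so $m(\mathbf{x})\le m(\mathbf{W}\mathbf{x})$ and $M(\mathbf{W}\mathbf{x})\le M(\mathbf{x})$; hence along the orbit $\mathbf{x}_s:=\mathbf{W}^s\mathbf{b}(0)$ the sequence $M(\mathbf{x}_s)$ is non-increasing, $m(\mathbf{x}_s)$ is non-decreasing, and both are bounded, so both converge. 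It then remains only to show that the spread tends to $0$: the common limit $c:=\lim_s M(\mathbf{x}_s)=\lim_s m(\mathbf{x}_s)$ witnesses the consensus, because every coordinate of $\mathbf{x}_s$ is squeezed between $m(\mathbf{x}_s)$ and $M(\mathbf{x}_s)$, so $\mathbf{x}_s\to c\one_n$, which lies in $S^n$ as $S=\real$, and all its entries are identical.

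The heart of the matter is a one-shot contraction lemma: if $\mathbf{A}$ is row-stochastic and has a column $j_0$ with $A_{ij_0}\ge\epsilon>0$ for every $i$, then the spread of $\mathbf{A}\mathbf{x}$ is at most $(1-\epsilon)$ times the spread of $\mathbf{x}$. To see this, choose rows $p$ and $q$ attaining $\max_i(\mathbf{A}\mathbf{x})_i$ and $\min_i(\mathbf{A}\mathbf{x})_i$, split off the $j_0$-term in each row-sum, and bound the remaining entries by $M(\mathbf{x})$ resp.\ $m(\mathbf{x})$; using $A_{pj_0}\ge\epsilon$ together with $M(\mathbf{x})-x_{j_0}\ge 0$ (and symmetrically for $q$) gives
\begin{align*}
  (\mathbf{A}\mathbf{x})_p &\le A_{pj_0}x_{j_0}+(1-A_{pj_0})M(\mathbf{x}) = M(\mathbf{x})-A_{pj_0}\bigl(M(\mathbf{x})-x_{j_0}\bigr)\le M(\mathbf{x})-\epsilon\bigl(M(\mathbf{x})-x_{j_0}\bigr),\\
  (\mathbf{A}\mathbf{x})_q &\ge A_{qj_0}x_{j_0}+(1-A_{qj_0})m(\mathbf{x}) = m(\mathbf{x})+A_{qj_0}\bigl(x_{j_0}-m(\mathbf{x})\bigr)\ge m(\mathbf{x})+\epsilon\bigl(x_{j_0}-m(\mathbf{x})\bigr).
\end{align*}
Subtracting, the two $\epsilon x_{j_0}$ terms cancel and one is left with $(\mathbf{A}\mathbf{x})_p-(\mathbf{A}\mathbf{x})_q\le(1-\epsilon)\bigl(M(\mathbf{x})-m(\mathbf{x})\bigr)$, which is exactly the claimed bound on the spread of $\mathbf{A}\mathbf{x}$.

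Finally I would assemble the pieces. Let $j_0$ be the positive column of $\mathbf{W}^t$ guaranteed by hypothesis and set $\epsilon:=\min_i[\mathbf{W}^t]_{ij_0}>0$ (a minimum over finitely many strictly positive numbers); note $\mathbf{W}^t$ is itself row-stochastic. Applying the lemma with $\mathbf{A}=\mathbf{W}^t$ to the subsampled orbit $\mathbf{x}_{st}=(\mathbf{W}^t)^s\mathbf{b}(0)$ shows by induction that the spread of $\mathbf{x}_{st}$ is at most $(1-\epsilon)^s$ times the spread of $\mathbf{b}(0)$, hence tends to $0$; and for indices $s'$ with $st\le s'\le (s+1)t$ the spread of $\mathbf{x}_{s'}=\mathbf{W}^{s'-st}\mathbf{x}_{st}$ is only smaller, so the spread of $\mathbf{x}_{s}$ tends to $0$ as $s\to\infty$ as well. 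Combined with the first paragraph, $\lim_{s\to\infty}\mathbf{W}^s\mathbf{b}(0)=c\one_n$ exists and is a consensus. The only mildly delicate point is getting the constant in the contraction lemma right — in particular, the step $A_{pj_0}\ge\epsilon$ may be used only after noting that $M(\mathbf{x})-x_{j_0}\ge 0$, so that shrinking the coefficient to $\epsilon$ enlarges the upper bound in the correct direction (and symmetrically for the lower bound at $q$).
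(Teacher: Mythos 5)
Your proof is correct. The paper does not actually prove this theorem --- it imports it verbatim from DeGroot (1974), where it rests on classical Markov-chain convergence theory. What you have written is the standard self-contained ``ergodicity coefficient'' argument underlying that theory: the one-step contraction lemma is stated and proved correctly (the inequality directions are handled properly, since $M(\mathbf{x})-x_{j_0}\ge 0$ and $x_{j_0}-m(\mathbf{x})\ge 0$ before the coefficient is shrunk to $\epsilon$), the subsampling along multiples of $t$ combined with monotonicity of the spread correctly transfers the geometric decay to the full orbit, and the monotone bounded sequences $M(\mathbf{x}_s)$ and $m(\mathbf{x}_s)$ squeezing every coordinate to the common limit $c$ gives existence of the limit and the consensus property; $c\one_n\in S^n$ since $c$ lies in the convex hull of the entries of $\mathbf{b}(0)$ and $S$ is convex. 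The only thing your write-up adds beyond what the paper needs is that it makes the theorem elementary and quantitative (spread decays like $(1-\epsilon)^{\lfloor s/t\rfloor}$), which the citation-only treatment in the paper does not provide.
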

Theorem \ref{theorem:degroot} can be used in a straightforward manner to
derive conditions, in our setup, under which agents reach a
consensus. Namely, 
during the course of dicussing issues $X_1,X_2,X_3,\ldots$, as
long as no agent has been $\eta$-intelligent (resp.\ $\eta$-wise),
agents do not adjust 
their weights to other agents, and, consequently, agents reach a
consensus if and only if $\mathbf{W}^{(1)}$ induces a consensus. At
the first time point that some agent has been $\eta$-intelligent
(resp.\ $\eta$-wise),
all agents subsequently adjust weights for this agent, and, hence, (at
least) one 
column of 
the respective weight matrix is strictly positive for the subsequent
topic. Hence, for this topic, all agents reach a consensus. But note
that this column remains positive for \emph{all} weight matrices
corresponding to discussion topics discussed 
thereafter (as can easily be shown inductively) because even
redistribution of weight mass to other agents, via weight
normalization, cannot make a matrix entry zero once it has been
positive. Now, we formalize these simple ideas. Then, we
generalize to the setting when agents have individualized tolerances
$\eta_i$. 

Let $A_i$ be the set of time points agent $i$ is
$\eta$-intelligent (resp.\ $\eta$-wise) for some topic $X_k$,
\begin{align*}
  A_i = \set{k\in\nn\sd \norm{b_i^k(\tau)-\mu_k}< \eta}\subseteq\nn,
\end{align*}
where $\tau=0$ (resp.\ $\tau=\infty$) 
and let $a_i$ be the first time that $i$ is $\eta$-intelligent for
some topic $X_k$,
\begin{align*}
  a_i = \min A_i.
\end{align*}
Then, we have the following proposition, for which we assume that
$T(\cdot)>0$ on its whole domain. This assumption is innocuous here;
if it does not hold, the proposition may easily be adjusted 
to account for the different setup. 
\begin{proposition}\label{prop:consensus}
  Let $\eta\ge 0$ be fixed. Let $\tau=0$ (resp.\ $\tau=\infty$).  
  Let $r=\min_{i\in[n]} a_i$ be the earliest time point that some
  agent is $\eta$-intelligent (resp.\ $\eta$-wise) for topic $X_r$. 
  (a) Then agents reach a consensus for all topics $X_k$ with $k>r$,
  independent of their initial beliefs. (b) For topics $1,\ldots,r$,
  agents reach a consensus if and only if $\mathbf{W}^{(1)}$ induces a
  consensus.  
\end{proposition}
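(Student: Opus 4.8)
The plan is to carry out, rigorously, the informal argument already sketched immediately before the statement: track how the weight matrices $\mathbf{W}^{(k)}$ change as the topics $X_1,X_2,\dots$ are processed, observe that they are frozen until topic $X_r$ and acquire a strictly positive column right after it, show this column survives all further adjustments, and then invoke Theorem~\ref{theorem:degroot}. Everything rests on the minimality of $r$ and on the ``add a nonnegative amount, then renormalize'' shape of rule~\eqref{eq:adjusttruth}.

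First I would settle part (b). Since $r=\min_{i\in[n]}a_i$, no agent is $\eta$-intelligent (resp.\ $\eta$-wise) for any of the topics $X_1,\dots,X_{r-1}$, so the upper branch of \eqref{eq:adjusttruth} never triggers at those topics and $\mathbf{W}^{(k)}=\mathbf{W}^{(1)}$ for every $k=1,\dots,r$. Hence the beliefs on each topic $X_k$ with $k\le r$ evolve under the identical operator $\mathbf{W}^{(1)}$, so a consensus is reached on $X_k$ precisely when $\mathbf{W}^{(1)}$ induces one. (If $r=\infty$, meaning no agent is ever $\eta$-intelligent resp.\ $\eta$-wise, the same reasoning covers all topics and part (a) is vacuous.)

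For part (a), fix an agent $i^{\ast}$ attaining the minimum, so $a_{i^{\ast}}=r$ and $\norm{b_{i^{\ast}}^r(\tau)-\mu_r}<\eta$. At the transition from topic $X_r$ to $X_{r+1}$, every agent $i$ increments its weight on $i^{\ast}$, so the pre-normalization entry in column $i^{\ast}$ equals $W_{ii^{\ast}}^{(r)}+\delta\cdot T(\cdot)>0$ by the standing hypothesis that $T>0$ on its whole domain; renormalizing (dividing row $i$ by its row sum, which is $\ge 1$) preserves strict positivity. Thus column $i^{\ast}$ of $\mathbf{W}^{(r+1)}$ is everywhere strictly positive. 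I would then argue by induction on $k\ge r+1$ that column $i^{\ast}$ of $\mathbf{W}^{(k)}$ stays strictly positive: the pre-normalization entry in that column is either $W_{ii^{\ast}}^{(k)}$ or $W_{ii^{\ast}}^{(k)}+\delta T(\cdot)$, in both cases $\ge W_{ii^{\ast}}^{(k)}>0$, and the normalizing denominator $\sum_{j'}\widetilde W_{ij'}^{(k+1)}\ge\sum_{j'}W_{ij'}^{(k)}=1$ is positive. Hence for every $k>r$ the matrix $\mathbf{W}^{(k)}$ has a strictly positive column, and Theorem~\ref{theorem:degroot} (with exponent $t=1$) yields that $\mathbf{W}^{(k)}$ induces a consensus for any initial opinion vector in $S^n$, i.e.\ agents reach a consensus on $X_k$ independent of their initial beliefs, which is statement (a).

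There is no deep obstacle here; the two points that deserve care are (i) the bookkeeping that renormalization cannot turn a positive entry into zero, which holds because every row sum of the adjusted (pre-normalization) matrix is $\ge 1>0$, and (ii) in the variant $\tau=\infty$, that the limiting beliefs $\mathbf{b}^k(\infty)$ entering \eqref{eq:adjusttruth} actually exist for the topics $k\le r$. The latter is handled by the natural convention that an agent counts as not $\eta$-wise on a topic whose beliefs fail to converge: then either some $\eta$-wise event does occur (so $\mathbf{W}^{(1)}$ converges on the relevant vectors and the argument above applies verbatim) or no such event ever occurs, in which case $r=\infty$, $\mathbf{W}^{(k)}\equiv\mathbf{W}^{(1)}$, part (a) is vacuous, and part (b) holds trivially.
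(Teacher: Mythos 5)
Your proposal is correct and follows essentially the same route as the paper's proof: the first $\eta$-intelligent (resp.\ $\eta$-wise) event at topic $X_r$ creates a strictly positive column in $\mathbf{W}^{(r+1)}$, positivity survives all later increments and renormalizations, and Theorem~\ref{theorem:degroot} gives consensus for $k>r$, while for $k\le r$ the matrices are all equal to $\mathbf{W}^{(1)}$. Your additional bookkeeping (row sums $\ge 1$ after adjustment, the $r=\infty$ convention, existence of limiting beliefs when $\tau=\infty$) only makes explicit what the paper leaves implicit.
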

\begin{proof}
  (a) By the proposition, we know that some agent $i$ is
  $\eta$-intelligent (resp.\ $\eta$-wise) for topic
  $X_r$. Accordingly, agents increase 
  their weight to $i$ by $\delta\cdot T(\cdot)>0$ at time $r+1$. Hence, weight
  matrix $\mathbf{W}^{(r+1)}$ has a strictly positive column and so do,
  in general, have all matrices $\mathbf{W}^{(k)}$, for $k>r$. By Theorem
  \ref{theorem:degroot}, agents thus reach a consensus for all issues
  $X_k$, with $k>r$. 

  (b) For issues $X_1,\ldots,X_r$, no weight adjustments are made, whence
  $\mathbf{W}^{(1)}=\cdots=\mathbf{W}^{(r)}$ and a consensus is reached if and
  only if $\mathbf{W}^{(1)}$ induces a consensus. 
\end{proof}
\begin{remark}
  Assume, for the moment, that agents have individualized tolerances 
  $\eta_i$. Then part (a) of Proposition \ref{prop:consensus} is
  true if we 
  replace $A_i$ as above by
  \begin{align*}
    A_i = \set{k\in\nn\sd \norm{b_i^k(\tau)-\mu_k}< \min_{j\in[n]}\eta_j},
  \end{align*}
  and we define $a_i$ as above as $a_i=\min A_i$. 
\end{remark}
\begin{remark}\label{rem:distR}
  Consider $\tau=0$ for this remark. 
  If initial beliefs are random variables, then $r$, as specified in
  Proposition \ref{prop:consensus}, is a random variable (which we
  could consider a `stopping time'). Accordingly, its distribution
  might be of interest. Assuming 
  agent $i$'s initial opinions for
  each topic $X_k$ 
  to be 
  distributed with distribution function $F_{i,k}$,
  that is, $P[b_i^k(0)\in A]=F_{i,k}(A)$, for $A\subseteq S$, 
  we have that the
  probability that at least one agent $i$ is $\eta$-intelligent
  for topic $X_k$ is given by
  $p_{k,\eta}=1-\prod_{i\in[n]}F_{i,k}(B_{k,\eta}^c)$, due to independence of
  agents' initial beliefs. Then, if $F_{i,k}(B_{k,\eta}^c)$ does not depend
  on $X_k$ but only on $\eta$, we have that $r$ has a geometric
  distribution with probability $p_{\eta}$ (where we omit, in the
  notation, the 
  dependence on $k$ due to our assumption), that is, 
  \begin{align*}
    P[r=\nu] = (1-p_\eta)^{\nu-1}p_\eta,\quad \text{for }\nu=1,2,3,\ldots
  \end{align*}
  From the specification of $p_\eta$, we thus see that if
  $F_{i,k}(B_{k,\eta}^c)<1$ for all $i$, then $p_\eta\goesto 1$ as
  $n\goesto\infty$. Accordingly, the distribution of $r$ converges to
  the degenerate distribution with $P[r=1]=1$ and $P[r\neq
    1]=0$. Thus, in this situation, agents `almost always' --- that is,
  with possibly  
  only finitely many, namely, one, exceptions, topic $X_1$ --- reach a 
  consensus for topics $X_k$, for $k=1,2,3,\ldots$. 
\end{remark}
We also find the next simple result which states that if \emph{all}
agents start with initial beliefs within a precision of $\epsilon$
around truth, then
agents will also end up with limiting beliefs with level of wisdom of
$\epsilon$, provided that agents' beliefs convergence at all, as time
goes to infinity. 
\begin{proposition}\label{prop:wiseTrivial}
  Let level of intelligence $\epsilon\ge 0$ be fixed. 
  If all agents are $\epsilon$-intelligent for $X_k$ and the DeGroot
  learning  
  process \eqref{eq:degrootupdate} converges, then all are
  $\epsilon$-wise for 
  topic 
  $X_k$.
\end{proposition}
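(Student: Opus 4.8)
The plan is to use the fact that each DeGroot update \eqref{eq:degrootupdate} is a convex combination of the current beliefs, so that beliefs can never leave the convex hull of the initial beliefs; since every initial belief lies strictly within distance $\epsilon$ of $\mu_k$, so does every later belief, and hence so does the limit.

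Concretely, I would first set $m=\min_{j\in[n]}b_j^k(0)$ and $M=\max_{j\in[n]}b_j^k(0)$ and show by induction on $t$ that $b_i^k(t)\in[m,M]$ for every $i\in[n]$ and every $t\ge 0$. The case $t=0$ is trivial, and for the inductive step one uses row-stochasticity of $\mathbf{W}^{(k)}$: since $W_{ij}^{(k)}\ge 0$ and $\sum_j W_{ij}^{(k)}=1$, the quantity $b_i^k(t+1)=\sum_j W_{ij}^{(k)}b_j^k(t)$ is a weighted average of numbers that, by the inductive hypothesis, all lie in $[m,M]$, hence lies in $[m,M]$ itself. Equivalently, every power $(\mathbf{W}^{(k)})^t$ is again row-stochastic, so each coordinate of $(\mathbf{W}^{(k)})^t\mathbf{b}^k(0)$ is a convex combination of the coordinates of $\mathbf{b}^k(0)$. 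The same computation also shows that $m_t=\min_i b_i^k(t)$ is nondecreasing and $M_t=\max_i b_i^k(t)$ is nonincreasing in $t$, which is a convenient sharpening but not strictly needed.

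Next, because every agent is $\epsilon$-intelligent for $X_k$, we have $\mu_k-\epsilon<b_j^k(0)<\mu_k+\epsilon$ for all $j$; taking the minimum and maximum over the \emph{finite} index set $[n]$ preserves the strict inequalities, so $\mu_k-\epsilon<m\le M<\mu_k+\epsilon$, i.e.\ the closed interval $[m,M]$ is contained in the open interval $(\mu_k-\epsilon,\mu_k+\epsilon)$. By hypothesis the process converges, so $b_i^k(\infty)=\lim_{t\to\infty}b_i^k(t)$ exists, and as the limit of a sequence contained in the closed interval $[m,M]$ it again lies in $[m,M]$; therefore $\norm{b_i^k(\infty)-\mu_k}\le\max\{\,|m-\mu_k|,\,|M-\mu_k|\,\}<\epsilon$, so agent $i$ is $\epsilon$-wise for $X_k$.

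There is essentially no obstacle here; the only point that needs a little care is the strict inequality in the definitions of $\epsilon$-intelligence and $\epsilon$-wisdom --- a limit of points lying in an \emph{open} interval could in principle touch the boundary --- but this is ruled out precisely because the beliefs stay inside the \emph{closed} subinterval $[m,M]$, which is bounded away from $\mu_k\pm\epsilon$ thanks to the finiteness of $[n]$. If one settled for the weak conclusion $\norm{b_i^k(\infty)-\mu_k}\le\epsilon$, even this finiteness consideration could be dropped. I note also that the argument uses nothing about how $\mathbf{W}^{(k)}$ was obtained, only its row-stochasticity, so the same proof would go through verbatim if the weight matrix were allowed to vary with $t$.
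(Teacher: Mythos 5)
Your proof is correct and follows essentially the same route as the paper's: beliefs remain in the convex hull of the initial beliefs because each update is a convex combination under a row-stochastic matrix, so the limit stays within $\epsilon$ of $\mu_k$. In fact your treatment is slightly more careful than the paper's one-line argument, since you explicitly handle the point that a limit of points in the \emph{open} interval $B_{k,\epsilon}$ could a priori touch its boundary, ruling this out via the closed subinterval $[m,M]$ and the finiteness of $[n]$.
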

\begin{proof}
  This simply follows from the fact that the interval
  $B_{k,\epsilon}=(\mu_k-\epsilon,\mu_k+\epsilon)$ is a convex set and weights
  are always row-stochastic in our model setup. Thus, if all agents
  start their beliefs in $B_{k,\epsilon}$, limit beliefs will also be
  in $B_{k,\epsilon}$, 
  provided that they converge.
\end{proof}
As we have seen in Proposition \ref{prop:consensus}, whether or not
the DeGroot learning process \eqref{eq:degrootupdate} converges on the
first $r$ topics depends on the initial weight matrix
$\mathbf{W}^{(1)}$. Thereafter, convergence (even to consensus) is
guaranteed. 
Hence, using Proposition \ref{prop:wiseTrivial}, we obtain: 
\begin{corollary}
  Let level of intelligence $\epsilon\ge 0$ be fixed. 
  If all agents are $\epsilon$-intelligent (i.e., for all topics
  $X_k$), then all agents are $\epsilon$-wise for all topics $X_k$,
  with $k>r$, where $r$ is defined as in Proposition
  \ref{prop:consensus}. 
\end{corollary}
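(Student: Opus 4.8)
The plan is to obtain this corollary as an immediate chaining of Proposition \ref{prop:consensus} with Proposition \ref{prop:wiseTrivial}, so that essentially no new argument is needed. First I would fix an arbitrary topic $X_k$ with $k>r$, where $r=\min_{i\in[n]}a_i$ is, as in Proposition \ref{prop:consensus}, the earliest time point at which some agent is $\eta$-intelligent (reading $\tau=0$ throughout, since the notion of $\epsilon$-intelligence in the corollary concerns initial beliefs). By Proposition \ref{prop:consensus}(a), for every such $k$ the matrix $\mathbf{W}^{(k)}$ induces a consensus for the initial opinion vector $\mathbf{b}^k(0)$; in particular the DeGroot learning process \eqref{eq:degrootupdate} converges on topic $X_k$, which is precisely the hypothesis that Proposition \ref{prop:wiseTrivial} requires --- recall that ``induces a consensus for $\mathbf{b}^k(0)$'' entails ``convergent for $\mathbf{b}^k(0)$'' by the definition in Section \ref{sec:notation}.

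Next I would invoke the standing hypothesis that all agents are $\epsilon$-intelligent, which by definition means that every agent is $\epsilon$-intelligent for \emph{every} topic, hence in particular for our fixed $X_k$. Feeding this together with the convergence just established into Proposition \ref{prop:wiseTrivial} yields that all agents are $\epsilon$-wise for topic $X_k$. Since $k>r$ was arbitrary, the conclusion holds for all topics $X_k$ with $k>r$, as claimed.

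The only point requiring a moment's care --- and it is not really an obstacle --- is the bookkeeping: one must make sure that $r$ in the corollary is read with the same adjustment time point $\tau$ as in Proposition \ref{prop:consensus}, and that ``convergent for $\mathbf{b}(0)$'' in the definition of Section \ref{sec:notation} is identified with ``the DeGroot learning process \eqref{eq:degrootupdate} converges'' as used in Proposition \ref{prop:wiseTrivial}. Once these identifications are made explicit, the proof is a one-line composition of the two propositions, with no genuinely hard step. One could additionally remark that if $\epsilon\le\eta$, then under the corollary's hypothesis every agent is already $\eta$-intelligent for $X_1$, so $r=1$ and the conclusion in fact covers all topics except possibly $X_1$.
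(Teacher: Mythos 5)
Your proof is correct and is exactly the route the paper intends: the corollary is stated immediately after the observation that Proposition \ref{prop:consensus} guarantees convergence for all topics $X_k$ with $k>r$, and then Proposition \ref{prop:wiseTrivial} is applied topicwise, which is precisely your chaining argument. The bookkeeping remarks you add (matching $\tau$, identifying ``induces a consensus'' with convergence) are sound and do not change the substance.
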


\subsection{Unbiased agents}\label{sec:unbiased}
In this setup, we assume that \emph{all} agents receive initial signals 
\begin{align}\label{eq:centered}
  b_i^k(0) = \mu_k+\epsilon_{ik},
\end{align}
where $\mu_k$ is truth for issue $X_k$ and $\epsilon_{ik}$ is white
noise (i.e., with mean zero and independent of other variables) with
variance $\sigma_i^2=\Var[\epsilon_{ik}]$ (note that we assume the
variance to be independent of the issue $X_k$). As
throughout, we assume agents' initial signals to be independent. 

We consider first the situation when agents adjust weights based on
limiting beliefs, i.e., $\tau=\infty$. In the next proposition, we
show that agents become $\epsilon$-wise in this situation (for any
$\epsilon>0$), in the 
limit as both $n$, population size, and $k$, which indexes topics, go
to infinity. The intuition behind this result is simple: 
since, in our setup, agents tend
toward a consensus (see Proposition \ref{prop:consensus}), agents will
generally all be \emph{jointly} $\eta$-wise (where $\eta$ is agents'
tolerance) or not.  
Then, since agents
adjust based on limiting beliefs, 
agents receive the same increments (or not) to their weight
structure, so that, as $k$ becomes large, $\mathbf{W}^{(k)}$ is the
matrix with entries $\frac{1}{n}$, approximately. Then, the law of
large number implies convergence to truth, as $n$ becomes large, since
initial beliefs are stochastically centered around truth by
\eqref{eq:centered}.  
\begin{proposition}\label{prop:centered1}
  Let $\eta\ge 0$ be fixed. 
  Assume that agents' initial beliefs are centered around truth in the
  form \eqref{eq:centered}. Moreover, assume that agents initially
  follow their own beliefs, that is, $\mathbf{W}^{(1)}$ is the
  $n\times n$ identity matrix $\mathbf{I}_n$. Finally, assume that
  agents adjust 
  weights based on limiting beliefs, i.e., $\tau=\infty$. Let
  $T(\cdot)> 0$. 
  Then, as $k,n\goesto\infty$, all agents become $\epsilon$-wise for
  topics $X_k$, for all $\epsilon>0$, almost surely. 
\end{proposition}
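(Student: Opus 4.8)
The plan is to argue in three stages: first show that from some topic onward the learning matrix is constant, then identify its limit as $n\to\infty$ with the uniform averaging matrix, and finally invoke the law of large numbers. First I would recall that, by Proposition~\ref{prop:consensus}(a) applied with $\tau=\infty$, there is a (random) topic index $r$ after which every weight matrix $\mathbf{W}^{(k)}$ has a strictly positive column, so that the DeGroot process converges to a \emph{consensus} for every $k>r$. Consequently, for all $k>r$ the limiting belief vector $\mathbf{b}^k(\infty)$ has all entries equal to a common value $c_k$, and hence either \emph{all} agents are simultaneously $\eta$-wise for $X_k$ (if $\norm{c_k-\mu_k}<\eta$) or \emph{none} of them are. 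In the first case every agent $i$ increments $W_{ij}^{(k)}$ by the same amount $\delta\cdot T(n)$ for \emph{every} $j$, and in the second case no increments are made at all; in either case the increment is identical across all ordered pairs $(i,j)$. I would then observe that adding a common constant to every entry of a row and renormalizing pushes each row toward the uniform vector $\tfrac1n\one_n^\intercal$, and that once this `common increment' event has occurred infinitely often (which I must check happens, or handle the alternative), the accumulated weight on the diagonal-origin matrix $\mathbf{W}^{(1)}=\mathbf{I}_n$ becomes negligible relative to the accumulated uniform mass, so that $\mathbf{W}^{(k)}\to \tfrac1n\one_n\one_n^\intercal$ as $k\to\infty$, for each fixed $n$.

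The second stage makes the last limit precise. After topic $r$, write $\mathbf{W}^{(k)}$ before renormalization as $\mathbf{I}_n + \big(\sum_{h} \delta T(n)\one[\text{event at }h]\big)\mathbf{J}$ where $\mathbf{J}=\one_n\one_n^\intercal$ and the sum is over topics $h\le k$ on which the common-increment event fired; call that cumulative coefficient $c_k^{(n)}$. Row-normalizing gives $W_{ij}^{(k)} = \frac{[\mathbf{I}_n]_{ij}+c_k^{(n)}}{1+nc_k^{(n)}}$, which converges to $\tfrac1n$ for every $i,j$ provided $c_k^{(n)}\to\infty$ as $k\to\infty$. So I need the common-increment event to occur infinitely often almost surely; by the assumption on the excerpt that $F_{\struct{N}_l,k}(B_{k,\eta})$ does not depend on $k$ together with consensus being reached for each $k>r$, the events `all agents $\eta$-wise for $X_k$' are (conditionally) i.i.d.\ with a fixed positive probability (or, if that probability were zero, no increments ever occur and $\mathbf{W}^{(k)}=\mathbf{I}_n$ forever --- but then every agent stays at $b_i^k(0)=\mu_k+\epsilon_{ik}$, which already goes to $\mu_k$ in probability as, hmm, $n$ does not help a single fixed agent, so I should be careful: in this degenerate branch I would instead note $T(\cdot)>0$ is assumed, and the interval $B_{k,\eta}$ has positive probability under a centered noise distribution, so the firing probability is strictly positive and Borel--Cantelli (second part) applies). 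Hence $c_k^{(n)}\to\infty$ a.s.\ and $\mathbf{W}^{(k)}\to\tfrac1n\mathbf{J}$ a.s., so $b_i^k(\infty)\to \tfrac1n\sum_{j=1}^n b_j^k(0)$ as $k\to\infty$, for each fixed $n$.

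The third stage is the law of large numbers: $\tfrac1n\sum_{j=1}^n b_j^k(0) = \mu_k + \tfrac1n\sum_{j=1}^n\epsilon_{jk}$, and since the $\epsilon_{jk}$ are independent, mean zero, with variances $\sigma_j^2$ that are (at worst) uniformly bounded, $\tfrac1n\sum_j\epsilon_{jk}\to 0$ almost surely (or at least in probability) as $n\to\infty$. Combining the two limits --- first $k\to\infty$ to collapse $\mathbf{W}^{(k)}$ to uniform averaging, then $n\to\infty$ to kill the averaged noise --- gives $\norm{b_i^k(\infty)-\mu_k}<\epsilon$ eventually, for every $\epsilon>0$, which is exactly $\epsilon$-wisdom.

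\textbf{Main obstacle.} The delicate point is the interchange/order of the two limits $k\to\infty$ and $n\to\infty$ and making the `almost surely' uniform in the right way: for fixed $n$ the renormalized matrix genuinely tends to $\tfrac1n\mathbf{J}$, but the rate $c_k^{(n)}$ depends on $n$, and one must be sure the statement is read as `for each $n$, as $k\to\infty$, \dots\ and then $n\to\infty$' rather than as a genuine joint limit. I also need to handle carefully the event that no agent is ever $\eta$-wise (so weights never move): there one must use $T>0$ plus positivity of the noise density on $B_{k,\eta}$ to rule it out almost surely, or else argue that conditional on it the conclusion still holds via $n\to\infty$ on the frozen identity dynamics --- and that latter route in fact fails for a single fixed agent, so ruling the event out is the cleaner path and is the step I would write most carefully.
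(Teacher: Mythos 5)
Your proposal is correct and follows essentially the same route as the paper's proof: locate the first topic $r$ at which some agent is $\eta$-wise, use the resulting consensus to show all subsequent weight increments are applied jointly so that $\mathbf{W}^{(k)}$ converges to the uniform matrix $\tfrac1n\one_n\one_n^\intercal$ as $k\to\infty$, and then apply the law of large numbers as $n\to\infty$. The only cosmetic difference is that you track the normalized entries $\frac{[\mathbf{I}_n]_{ij}+c_k^{(n)}}{1+nc_k^{(n)}}$ directly and invoke Borel--Cantelli for the infinitely-many-increments step, where the paper instead diagonalizes the symmetric matrix of form \eqref{eq:formA} to compute $\lim_t\mathbf{A}^t$ exactly; your explicit attention to the order of the two limits and to the degenerate no-increment branch is, if anything, more careful than the paper's own treatment.
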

\begin{proof}
As before, let
$r=\min_{i\in[n]}a_i$ be the first time point that one agent is
$\eta$-intelligent for topic $X_r$ (which is the same as
$\eta$-wise for topic $X_r$, as $\mathbf{W}^{(1)}$ is the $n\times
n$ identity matrix, by assumption). 
For
simplicity, assume first that, for topic $X_r$, \emph{all} agents are
$\eta$-intelligent (and hence, $\eta$-wise); we then treat 
the more general case where only some agents are
$\eta$-intelligent for $X_r$ as an analogous situation. In this case,
$\mathbf{W}^{(r+1)}$ looks as follows, after weight adjustments,
\begin{align*}
  \mathbf{W}^{(r+1)} = \frac{1}{1+n\tilde{\delta}}\begin{pmatrix}
    1+\tilde{\delta} & \tilde{\delta} &\ldots & \tilde{\delta}\\
    \tilde{\delta} & 1+\tilde{\delta} & \ldots & \tilde{\delta}\\
    \vdots & \cdots & \ddots & \vdots\\
    \tilde{\delta} & \tilde{\delta} & \ldots & 1+\tilde{\delta}
    \end{pmatrix},
\end{align*}
where we let $\tilde{\delta}=\delta\cdot T(\cdot)$. 
Consider any matrix $\mathbf{A}$ of the form 
\begin{align}\label{eq:formA}
  \mathbf{A} = 
  \begin{pmatrix}
    \beta & \alpha & \ldots & \alpha\\
    \alpha & \beta & \ldots & \alpha\\
    \vdots & \ldots & \ddots & \vdots\\
    \alpha & \alpha & \ldots & \beta
  \end{pmatrix}
\end{align}
such that $\beta+(n-1)\alpha=1$ (that is, $\mathbf{A}$ is
row-stochastic), with $0<\alpha,\beta<1$. 
In Appendix \ref{sec:appendix}, we show 
that matrix  $\mathbf{A}$ has one eigenvalue $\lambda=1$, to which
corresponds an eigenvector $\mathbf{c}=(c,\ldots,c)^\intercal$, and
$(n-1)$ 
identical eigenvalues of absolute size smaller than $1$. Moreover,
since $\mathbf{A}$ is symmetric, it is diagonalizable of the form
$\mathbf{A}=\mathbf{U}\mathbf{V}\mathbf{U}^\intercal$, where
$\mathbf{V}$ is a diagonal matrix that contains the eigenvalues of
$\mathbf{A}$ on the diagonal and $\mathbf{U}$ is orthonormal, that is,
$\mathbf{U}\mathbf{U}^\intercal=\mathbf{I}_n$; without loss of generality,
assume that the eigenvalues in $\mathbf{V}$ are arranged by size,
i.e., $V_{11}=1> V_{22}=\cdots= V_{nn}$ and the corresponding
eigenvectors are located in the respective columns of $\mathbf{U}$,
i.e., the first column of $\mathbf{U}$ is the vector $\mathbf{c}$. 
 We have
\begin{align*}
  \mathbf{A}^t = \mathbf{U}\mathbf{V}^t\mathbf{U}^\intercal.
\end{align*}
As $t\goesto\infty$, $\mathbf{V}$ converges to the matrix with one
entry equal to $1$ and all other entries equal to zero (due to the
eigenvalue structure of $\mathbf{A}$). Thus, we then have
\begin{align*}
  \lim_{t\goesto\infty}\mathbf{A}^t = \begin{bmatrix}\mathbf{c} &\mathbf{0} &\ldots
    &\mathbf{0}\end{bmatrix}\mathbf{U}^\intercal = \begin{bmatrix}\mathbf{c} &\mathbf{0} &\ldots
    &\mathbf{0}\end{bmatrix}\begin{bmatrix}\mathbf{c}^\intercal\\ \mathbf{c}_2^\intercal\\ \vdots\\ \mathbf{c}_n^\intercal\end{bmatrix}
  = c^2\begin{pmatrix}1 & \ldots & 1\\ \vdots & \ddots & \vdots \\ 1 & \ldots & 1\end{pmatrix},
\end{align*}
where $\mathbf{c}_2,\ldots,\mathbf{c}_n$ are the eigenvectors
corresponding to eigenvalues $\lambda_2$ to $\lambda_n$. 
Moreover, since $\mathbf{A}$ is row-stochastic, $\mathbf{A}^t$ is
row-stochastic for every $t$, and, accordingly, $\lim_t A^t$ is
row-stochastic. Therefore $c^2=\frac{1}{n}$. In other words, if each
agent is $\eta$-wise for topic $X_r$, then for topic $X_{r+1}$, we
have 
\begin{align}\label{eq:average}
  \mathbf{b}^{r+1}(\infty)=\lim_{t\goesto\infty}(\mathbf{W}^{(r+1)})^t\mathbf{b}^{r+1}(0)
  =
  \left(\sum_{j=1}^n\frac{1}{n}b_j^{r+1}(0)\right)\begin{pmatrix}1\\ \vdots
    \\ 1\end{pmatrix}
  = \left(\frac{\sum_{j=1}^nb_j^{r+1}(0)}{n}\right)\begin{pmatrix}1\\ \vdots
    \\ 1\end{pmatrix}. 
\end{align}
Now, for all topics $X_k$, with $k>r$, agents reach a
consensus by Proposition \ref{prop:consensus}. Hence, agents are
either all jointly $\eta$-wise or 
none of them is, for all $k>r$. Therefore, all weight matrices
$\mathbf{W}^{(k)}$, 
for $k>r$, 
have the form \eqref{eq:formA} (either all entries receive an
increment of $\tilde{\delta}$ and are then renormalized, or none
receives an 
increment). Hence, agents' limiting beliefs are 
always weighted averages of their initial beliefs, where the weights are
$\frac{1}{n}$. Applying the law of large numbers then implies that
agents become $\epsilon$-wise as $n\goesto\infty$ for any $\epsilon>0$
almost surely, for all $k>r$. 

For the more general case when not all agents are $\eta$-wise for
topic $X_r$, one can show that agents' limiting beliefs for topic
$X_{r+1}$ are (uniform) averages of the initial beliefs of the agents
who were $\eta$-wise for $X_r$, rather than averages of all
agents' initial beliefs. As topics progress, either all agents are
jointly $\eta$-wise or they are not (since agents always reach a
consensus for topics $X_k$, with $k>r$). Hence, since agents adjust
weights 
based on limiting beliefs, the entries in the weight matrices
$\mathbf{W}^{(k)}$ all either receive jointly an increment of $\tilde{\delta}$
or not (in fact, increments of $\tilde{\delta}$ are added infinitely
often, almost surely, as $k\goesto\infty$ since initial beliefs are
centered around 
truth). Hence, $\mathbf{W}^{(k)}$ tends toward a matrix with all 
entries $\frac{1}{n}$ as $k\goesto\infty$ and the law of large numbers
takes care for almost sure convergence. 
\end{proof}
Next, we state that Proposition \eqref{prop:centered1} holds true also
if agents adjust weights based on initial beliefs. This is
understandable: if agents adjust weights based on limiting beliefs, 
weights converge to $\frac{1}{n}$ as $k$ increases. However, this
weighting structure 
is not optimal, as it ignores the different variances of agents'
initial beliefs, but agents' final beliefs still converge to truth in the
limit. Hence, if agents set weights `closer to optimality' as they do
when they adjust based on initial beliefs 
(cf.\ Section \ref{sec:justification}), they
should certainly also converge to truth. We prove the proposition more
formally by referring, in Appendix \ref{sec:appendix}, to results
developed in Golub 
and Jackson (2010) \cite{Golub2010}, which generalize the `ordinary'
law of large numbers. 
\begin{proposition}\label{prop:centered2}
  Let $\eta\ge 0$ be fixed. 
  Assume that agents' initial beliefs are centered around truth in the
  form \eqref{eq:centered}. Moreover, assume that agents initially
  follow their own beliefs, that is, $\mathbf{W}^{(1)}$ is the
  $n\times n$ identity matrix $\mathbf{I}_n$. Finally, assume that
  agents adjust 
  weights based on initial beliefs, i.e., $\tau=0$. Let
  $T(\cdot)>0$. 
  Then, as $k,n\goesto\infty$, all agents become $\epsilon$-wise for
  topics $X_k$, for all $\epsilon>0$, almost surely. 
\end{proposition}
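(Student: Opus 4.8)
The plan is to reduce the claim to the consensus regime supplied by Proposition~\ref{prop:consensus}, then to identify the limiting consensus belief for each topic $X_k$ as an explicit, influence-weighted average of the topic-$k$ initial beliefs, and finally to kill the resulting error term by a law-of-large-numbers argument of the Golub--Jackson \cite{Golub2010} type. (One could also hope to argue by monotone comparison with Proposition~\ref{prop:centered1} --- ``adjusting on initial beliefs puts weights closer to the optimal inverse-variance weights'' --- but since the weights here are genuinely non-uniform, a direct argument is cleaner.)

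Fix the population size $n$ and let $r=\min_{i\in[n]}a_i$ be the first topic for which some agent is $\eta$-intelligent; since $\mathbf{W}^{(1)}=\mathbf{I}_n$, this coincides with $\eta$-wisdom. By Proposition~\ref{prop:consensus}(a) every $\mathbf{W}^{(k)}$ with $k>r$ has a strictly positive column and so induces a consensus, and by Remark~\ref{rem:distR}, $r\goesto 1$ in distribution as $n\goesto\infty$ (using that each agent is $\eta$-intelligent with positive probability, which holds for $\eta>0$ under the white-noise assumption on $\epsilon_{ik}$), so it suffices to treat topics with $k>r$ in the joint limit. The key structural simplification for $\tau=0$ is that at topic $h$ the increment $\delta\cdot T(\cdot)$ is the same in every row and is added to the whole of column $j$ exactly when $b_j^h(0)\in B_{h,\eta}$; hence the renormalising factor is common to all rows, and writing $\hat{\mathbf{W}}^{(k)}=c_k\mathbf{W}^{(k)}$ for the un-normalised matrices, a one-line induction gives $\hat W^{(k)}_{ij}=\one[i=j]+\hat a_j$ with $\hat a_j=\delta\sum_{h<k}c_h\,T(\cdot)\,\one[b_j^h(0)\in B_{h,\eta}]$ and $c_k=1+\sum_j\hat a_j$; equivalently $\mathbf{W}^{(k)}=\frac{1}{c_k}\bigl(\mathbf{I}_n+\one_n\hat a^\intercal\bigr)$. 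For $k>r$ this matrix induces a consensus, so $\lim_t(\mathbf{W}^{(k)})^t=\one_n (s^{(k)})^\intercal$, and multiplying the fixed-point identity $(s^{(k)})^\intercal\mathbf{W}^{(k)}=(s^{(k)})^\intercal$ out shows $s^{(k)}=\hat a/\norm{\hat a}_1$. Consequently, for every agent $i$ and every $k>r$,
\[
  b_i^k(\infty)=\sum_{j=1}^n s^{(k)}_j\,b_j^k(0)=\mu_k+\sum_{j=1}^n s^{(k)}_j\,\epsilon_{jk}.
\]

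The main --- and only nontrivial --- step is to show the error $\sum_j s^{(k)}_j\epsilon_{jk}$ vanishes as $n\goesto\infty$. The crucial observation is that $s^{(k)}$ is a function of the initial beliefs of topics $h<k$ only, hence by independence across topics is independent of $(\epsilon_{jk})_{j\in[n]}$; conditioning on $s^{(k)}$, the error is a sum of independent mean-zero terms with conditional variance $\sum_j (s^{(k)}_j)^2\sigma_j^2\le(\sup_j\sigma_j^2)\,\norm{s^{(k)}}_\infty$. So everything reduces to the ``no agent is excessively influential'' estimate $\norm{s^{(k)}}_\infty\goesto 0$. From the explicit formula, with $I_h=\set{j:b_j^h(0)\in B_{h,\eta}}$, one has $s^{(k)}_j\le\bigl(\sum_{h<k}c_h\,\delta\,T(\cdot)\bigr)/\bigl(\sum_{h<k}c_h\,\delta\,T(\cdot)\,\abs{I_h}\bigr)$, i.e.\ $\norm{s^{(k)}}_\infty$ is at most the reciprocal of a weighted average of the $\abs{I_h}$; since each $\abs{I_h}$ is a sum of independent indicators with mean $\sum_j p_j$, which grows linearly in $n$ when $\sup_j\sigma_j^2<\infty$, a concentration bound yields $\norm{s^{(k)}}_\infty=O(1/n)$ uniformly in $k>r$. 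Plugging this into a conditional Chebyshev inequality gives $b_i^k(\infty)\goesto\mu_k$ in probability; the almost-sure statement then follows from the generalised strong law of Golub and Jackson \cite{Golub2010}, applied conditionally on the past --- which is exactly the appeal the authors announce relegating to Appendix~\ref{sec:appendix}. I expect the genuine obstacles to be (i) making the control of $\norm{s^{(k)}}_\infty$ uniform in $k$, not merely pointwise, and (ii) the upgrade from convergence in probability to almost-sure convergence, which may require either the uniform-integrability hypothesis of \cite{Golub2010} or a mild strengthening of the reliability assumption, e.g.\ $\liminf_n\frac1n\sum_j p_j>0$.
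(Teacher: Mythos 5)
Your argument is correct and follows the same overall route as the paper: reduce to the consensus regime of Proposition~\ref{prop:consensus}, write the limiting consensus for topic $X_k$ as $\sum_j s^{(k)}_j b_j^k(0)$ for a social influence vector $s^{(k)}$, and then invoke the Golub--Jackson \cite{Golub2010} wisdom criterion that the error $\sum_j s^{(k)}_j\epsilon_{jk}$ vanishes iff $\max_j s^{(k)}_j\goesto 0$. The difference is one of explicitness. The paper's appendix argument is avowedly informal: it asserts that an agent's influence is ``inversely proportional to his variance'' and that, with variances bounded below, the most influential agent's weight mass stays bounded while $n$ grows, so $\max_j s_j^{(k)}\goesto 0$. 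You instead exploit the special structure of the $\tau=0$ dynamics started from $\mathbf{W}^{(1)}=\mathbf{I}_n$ --- increments are column-constant, so $\mathbf{W}^{(k)}=\frac{1}{c_k}\bigl(\mathbf{I}_n+\one_n\hat a^\intercal\bigr)$ and hence $s^{(k)}=\hat a/\sum_j\hat a_j$ with $\hat a_j$ a weighted count of agent $j$'s past successes --- which turns the paper's heuristic into an identity and yields $\norm{s^{(k)}}_\infty=O(1/n)$ by concentration of $\abs{I_h}$. You also make explicit the step the paper leaves implicit but which the law of large numbers actually requires: $s^{(k)}$ is measurable with respect to topics $h<k$ and hence independent of $(\epsilon_{jk})_j$. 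The caveats you flag at the end (needing $\eta>0$ and something like $\liminf_n\frac1n\sum_j p_j>0$ so that $\abs{I_h}$ really grows linearly, and the in-probability versus almost-sure upgrade) are genuine and are glossed over by the paper as well; your version makes visible exactly where they enter.
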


\subsection{Biased agents}\label{sec:biased}

\subsubsection*{The case $T(n)=0$} 
In the biased agent setup, we start with the following conditions. 
Fix a level of wisdom $\epsilon>0$, with $\epsilon\le \eta$, agents'
tolerance. Let there be $n=n_1+n_2$ agents,  
and 
denote by $\struct{N}_1$ and
$\struct{N}_2$ the respective agent sets such that
$[n]=\struct{N}_1\cup\struct{N}_2$. The agents in $\struct{N}_1$ 
are $\epsilon$-intelligent and we think of them as having unbiased initial
beliefs about any topic $X_k$; in particular,
we think of their initial 
beliefs as distributed according to $\mu_k+\epsilon_{ik}$, where
$\epsilon_{ik}$ is white noise, appropriately restricted such that
$\mu_k+\epsilon_{ik}\in B_{k,\epsilon}$. 
Conversely, let the $n_2$ agents in 
$\struct{N}_2$ have initial beliefs distributed according to a random 
variable $Z_k$ (that depends on topic $X_k$) with 
distribution function $F_{Z_k}(A)=P[b_i^k(0)\in A]$, for $A\subseteq S$
(in particular, agents in $\struct{N}_2$ all have the same
distribution of initial beliefs).
Assume that $F_{Z_k}(A)>0$ for all non-empty intervals $A\subseteq S$.
We think of the agents in $\struct{N}_2$ as biased in that it holds
that $\beta_k=\norm{\Exp[Z_k]-\mu_k}>0$ for all topics $X_k$. 
Finally, assume 
that $T(m)>0$ for all $m<n$ and $T(n)=0$ and 
let 
$\mathbf{W}^{(1)}$ be the $n\times n$ identity matrix. 
For short, we will also refer to the $n_2$ agents in $\struct{N}_2$ as
`biased' agents. 

Our first result, concerning weight adjustment at $\tau=\infty$, states
that agents' limiting beliefs, in expectation,  
in this context will be a mixture of truth $\mu_k$ and $\Exp[Z_k]$
unless no biased agent `guesses' truth for topic $X_1$, the 
first topic to be discussed, in which case all agents reach level of
wisdom $\epsilon$ for all topics $X_k$. In other words, 
if a biased agent is true for the initial topic $X_1$, 
then agents will
always mix truth with a biased variable. 
That agents do not mix 
when no biased agent is true for $X_1$ 
crucially depends on the condition $T(n)=0$. Namely,
if no biased agent is close enough to truth for topic $X_1$, only the
$\epsilon$-intelligent agents will be, such that, for topic
$X_2$, agents only increment 
weights to agents in $\struct{N}_1$; consequently, as we show, for
topic $X_2$, limiting consensus beliefs will be uniform means of these
agents' beliefs so
that all agents are $\epsilon$-wise for $X_2$; but, since $T(n)=0$,
no more weight adjustments occur whatsoever, so that all agents are
$\epsilon$-wise for all topics $X_k$ to come. We also remark that if
agents' limiting beliefs are mixtures of truth and a biased variable,
this does not mean that agents would not be $\epsilon$-wise for a
certain topic (which depends both on the biased agents' bias and on
$\epsilon$); it 
solely means that agents mix truth with something that 
distracts them away from truth. 

For the proof of the result, we 
make use of 
the insight that if
someone is wise (or intelligent) at a more refined level, he is also
wise (or intelligent) at a coarser level; the following lemma, which
restates this, is
self-explanatory and needs no proof. 
\begin{lemma}\label{lemma:trivial}
  Let $0\le \epsilon_1\le \epsilon_2$. If an agent $i$ is
  $\epsilon_1$-wise ($\epsilon_1$-intelligent) for some topic $X_k$,
  then she is also $\epsilon_2$-wise ($\epsilon_2$-intelligent) for
  $X_k$. 
\end{lemma}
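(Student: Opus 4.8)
The plan is simply to unwind the definitions of $\epsilon$-wisdom and $\epsilon$-intelligence and invoke transitivity of the order on $\real_{\ge 0}$. First I would recall that, by definition, agent $i$ being $\epsilon_1$-wise for topic $X_k$ means $\norm{b_i^k(\infty)-\mu_k}<\epsilon_1$, and being $\epsilon_1$-intelligent for $X_k$ means $\norm{b_i^k(0)-\mu_k}<\epsilon_1$. These are exactly the conditions $b_i^k(\infty)\in B_{k,\epsilon_1}$ and $b_i^k(0)\in B_{k,\epsilon_1}$, respectively.

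Next, since $\epsilon_1\le\epsilon_2$ by hypothesis, chaining the strict inequality with this weak inequality gives $\norm{b_i^k(\infty)-\mu_k}<\epsilon_1\le\epsilon_2$ in the wisdom case (respectively $\norm{b_i^k(0)-\mu_k}<\epsilon_1\le\epsilon_2$ in the intelligence case), hence $\norm{b_i^k(\infty)-\mu_k}<\epsilon_2$ (respectively $\norm{b_i^k(0)-\mu_k}<\epsilon_2$). This is precisely the statement that $i$ is $\epsilon_2$-wise (respectively $\epsilon_2$-intelligent) for $X_k$. Equivalently, one observes the nesting $B_{k,\epsilon_1}\subseteq B_{k,\epsilon_2}$, which holds for $\epsilon_1\le\epsilon_2$ since both are open intervals centered at $\mu_k$, and then the conclusion is immediate from the inclusion.

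There is no genuine obstacle here; the only ingredient is transitivity of $<$ and $\le$ (equivalently, monotonicity of the nested intervals $B_{k,\epsilon}$ in $\epsilon$), and the argument is identical for the two variants since they differ only in whether one looks at $b_i^k(0)$ or $b_i^k(\infty)$. This is why the author is right to flag the lemma as self-explanatory; I would at most record the one-line inequality chain above for completeness and move on.
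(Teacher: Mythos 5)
Your argument is correct and is exactly the one-line unwinding of the definitions that the paper has in mind when it declares the lemma ``self-explanatory'' and omits a proof: the strict inequality $\norm{b_i^k(\tau)-\mu_k}<\epsilon_1$ combined with $\epsilon_1\le\epsilon_2$ gives the conclusion for both the wisdom ($\tau=\infty$) and intelligence ($\tau=0$) variants. Nothing further is needed.
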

In the following proposition, $\epsilon_1$ will be $\epsilon$, the
level of wisdom to be obtained, and $\epsilon_2$ will be $\eta$,
agents' tolerance. 
\begin{proposition}\label{prop:inf}
  Let the weight adjustment time point be $\tau=\infty$. Let tolerance
  $\eta\ge 0$ be fixed and fix a level $\epsilon\ge 0$ of wisdom, with
  $\epsilon\le \eta$.

  Under the outlined conditions, if $N_\eta(\mathbf{b}^1(\tau),\mu_k)$
  contains only unbiased $\epsilon$-intelligent agents --- that is, 
  $N_\eta(\mathbf{b}^1(\tau),\mu_k)=\struct{N}_1$ ---  
  then all agents become
  $\epsilon$-wise for all topics $X_k$, with $k>1$. 
  If $N_\eta(\mathbf{b}^1(\tau),\mu_k)$ contains also agents from the set
  $\struct{N}_2$,\footnote{But not all of them. If
    $N_\eta(\mathbf{b}^1(\tau),\mu_k)=[n]$, then the set
    $N_\eta(\mathbf{b}^1(\tau),\mu_k)$ should be replaced by
    $N_\eta(\mathbf{b}^2(\tau),\mu_k)$, etc.} then agents'
  limiting beliefs, in expectation, are 
  given by 
  $\lambda_Z\Exp[Z_k]+\lambda_\mu\mu_k$, for all topics $k>1$, where
  $\lambda_Z$ and $\lambda_\mu$ are coefficients such that
  $\lambda_\mu=\frac{n_1}{\length{N_\eta(\mathbf{b}^1(\tau),\mu_k)}}$ and
  $\lambda_Z=\frac{\length{N_\eta(\mathbf{b}^1(\tau),\mu_k)\cap\struct{N}_2}}{\length{N_\eta(\mathbf{b}^1(\tau),\mu_k)}}$
  so that $\lambda_\mu+\lambda_Z=1$.   
\end{proposition}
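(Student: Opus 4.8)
The plan is to track the learning matrices $\mathbf{W}^{(1)},\mathbf{W}^{(2)},\ldots$ topic by topic and to show that, after the transition from $X_1$ to $X_2$, the matrix is \emph{frozen} forever, thanks to the condition $T(n)=0$ together with the fact that our process reaches a consensus (Proposition~\ref{prop:consensus}). Once the matrix is pinned down, the limiting belief for every topic $X_k$, $k>1$, is a single number read off from a stationary distribution, and the two cases of the statement reduce to a one-line convexity argument and a one-line expectation computation, respectively.

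First I would exploit $\mathbf{W}^{(1)}=\mathbf{I}_n$: then $\mathbf{b}^1(\tau)=\mathbf{b}^1(\infty)=\mathbf{b}^1(0)$, so the set $N:=N_\eta(\mathbf{b}^1(\tau),\mu_1)$ governing the increment into $X_2$ is exactly the set of agents $\eta$-intelligent for $X_1$; by Lemma~\ref{lemma:trivial} (applied with $\epsilon\le\eta$) it contains $\struct{N}_1$. If $N=[n]$, then $T(n)=0$ forces $\mathbf{W}^{(2)}=\mathbf{W}^{(1)}$ and one restarts the argument with $X_2$ in the role of $X_1$ (this is the case flagged in the footnote), so assume $N\neq[n]$ and put $m=\length{N}<n$, hence $T(m)>0$. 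By \eqref{eq:adjusttruth}, $\mathbf{W}^{(2)}$ is obtained from $\mathbf{I}_n$ by adding $\delta T(m)>0$ to every column indexed by $N$ and renormalizing each row by the common factor $1+m\delta T(m)$; it is row-stochastic and has a strictly positive column, so by Theorem~\ref{theorem:degroot} it induces a consensus.

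The key step is to identify that consensus value. In $\mathbf{W}^{(2)}$ every agent places positive weight only on members of $N$ and possibly on itself; in particular the agents in $N$ place \emph{all} their weight inside $N$, so $N$ is a closed class of the associated Markov chain (and $[n]\setminus N$ is transient), and the restriction of $\mathbf{W}^{(2)}$ to $N\times N$ is precisely a matrix of the symmetric, row-stochastic form \eqref{eq:formA}. By the eigenvalue computation already carried out in the proof of Proposition~\ref{prop:centered1}, powers of such a matrix converge to the operator that replaces every coordinate by the uniform average over $N$; combined with the fact that $\mathbf{W}^{(2)}$ induces a single consensus over all of $[n]$, this shows that whenever $\mathbf{W}^{(2)}$ is the operative matrix for a topic $X_k$, every agent's limiting belief equals $\tfrac1m\sum_{j\in N}b_j^k(0)$. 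Next, $\mathbf{W}^{(k)}=\mathbf{W}^{(2)}$ for all $k\ge2$ by induction: if $\mathbf{W}^{(k)}=\mathbf{W}^{(2)}$ then $\mathbf{b}^k(\infty)$ is a consensus, so $N_\eta(\mathbf{b}^k(\tau),\mu_k)$ is either $[n]$ (and the increment vanishes since $T(n)=0$) or empty (nothing to increment), hence $\mathbf{W}^{(k+1)}=\mathbf{W}^{(k)}$. Therefore $\mathbf{b}^k(\infty)=\bigl(\tfrac1m\sum_{j\in N}b_j^k(0)\bigr)\one_n$ for every $k>1$. If $N=\struct{N}_1$, each $b_j^k(0)$ lies in the convex set $B_{k,\epsilon}$ (members of $\struct{N}_1$ are $\epsilon$-intelligent for every topic), so the average does too and all agents are $\epsilon$-wise for $X_k$. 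If instead $N\cap\struct{N}_2\neq\emptyset$, splitting the sum over $\struct{N}_1$ (where $\Exp[b_j^k(0)]=\mu_k$) and over $N\cap\struct{N}_2$ (where $\Exp[b_j^k(0)]=\Exp[Z_k]$) and using $\length{\struct{N}_1}=n_1$ gives $\Exp[\mathbf{b}^k(\infty)]=(\lambda_\mu\mu_k+\lambda_Z\Exp[Z_k])\one_n$ with $\lambda_\mu=n_1/m$ and $\lambda_Z=\length{N\cap\struct{N}_2}/m$, as claimed.

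The part I expect to be most delicate is the middle step: one must keep in mind that $\mathbf{W}^{(2)}$ is a \emph{reducible} stochastic matrix, so the consensus is the uniform mean over the closed class $N$ alone (not over all of $[n]$), and one must check that the interplay between $T(n)=0$ and the consensus property — a consensus makes the set of $\eta$-wise agents all-or-nothing, and $T(n)=0$ neutralizes the ``all'' case — is exactly what prevents any further change of the learning matrix. The remaining ingredients (the spectral behaviour of matrices of the form \eqref{eq:formA}, the consensus criterion of Theorem~\ref{theorem:degroot}, and the monotonicity Lemma~\ref{lemma:trivial}) are all already available in the text.
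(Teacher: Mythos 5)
Your proposal is correct and follows essentially the same route as the paper's proof: freeze the learning matrix after topic $X_1$ via the interplay of consensus and $T(n)=0$, identify the limiting belief for each $k>1$ as the uniform average of the initial beliefs of the agents in $N_\eta(\mathbf{b}^1(\tau),\mu_1)$, and finish by convexity (first case) or a direct expectation computation (second case). The only difference is one of execution: where the paper computes matrix powers explicitly for two representative special cases (one intelligent agent, resp.\ one additional lucky biased agent) and then appeals to a ``straightforward extension,'' you handle the general set $N$ at once by observing that $N$ is a closed class on which $\mathbf{W}^{(2)}$ restricts to a matrix of the form \eqref{eq:formA}, which is a cleaner way to obtain the general coefficients $\lambda_\mu$ and $\lambda_Z$.
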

\begin{proof}
  For convenience, we consider the situation when only one agent, $i=1$,
  is $\epsilon$-intelligent. The more general case is a
  straightforward extension of our arguments. We also assume that
  agent $i=1$ holds beliefs $b_i^k(0)=\mu_k$, for all $k\ge 1$. 

  Let $N_\eta(\mathbf{b}^1(\tau),\mu_k)$
  contain only $\epsilon$-intelligent agents. 
  Since $\mathbf{W}^{(1)}$ is
  the identity matrix,  
  the limiting beliefs of agents $1,\ldots,n$ on topic
  $X_1$ are as follows:
  \begin{align*}
    b_1^1(\infty)=\mu_1,\: b_2^1(\infty)=b_2^1(0),\:\ldots,\:
  b_n^1(\infty)=b_n^1(0).         
  \end{align*}
  Moreover, since initial beliefs of the agents in $\struct{N}_2$ are in
  $B_{k,\eta}^c$, 
  the agents in $\struct{N}_2$ are, consequently, also
  not $\eta$-wise for topic 
  $X_1$, in contrast to the $\epsilon$-intelligent agent, who is
  $\eta$-wise for topic $X_1$. 
  Thus, the weight structure at the
  beginnning of discussion of topic 
  $X_2$ looks as follows, after weight adjustment and renormalization
  \begin{align*}
  \mathbf{W}^{(2)} =
  \frac{1}{1+\tilde{\delta}}
  \begin{pmatrix}
  1 & 0 & 0 & \cdots & 0\\
  {\tilde{\delta}} & 1 & 0 & \cdots &
  0\\
  \vdots & \cdots & \ddots & \\
   {\tilde{\delta}} & 0 & 0 & \cdots &
  {1}\\
  \end{pmatrix};
  \end{align*}
  recall our convention that $\tilde{\delta}=\delta\cdot T(\cdot)$. 
  Limiting beliefs for
  topic $X_2$ are thus given by
  \begin{align*}
  \mathbf{b}^2(\infty) = \lim_{t\goesto\infty}(\mathbf{W}^{(2)})^t\mathbf{b}^2(0),
  \end{align*}
  where the initial belief vector $\mathbf{b}^2(0)$ is
  $(\mu_2,b^2_2(0),\ldots,b^2_n(0))^\intercal$. It is 
  not difficult to see that powers of any matrix with structure $\begin{pmatrix}
  1 & 0 & 0 & \cdots & 0\\
  \alpha & 1-\alpha & 0 & \cdots &
  0\\
  \vdots & \cdots & \ddots & \\
   \alpha & 0 & 0 & \cdots &
  1-\alpha\\
  \end{pmatrix}  $ have the form
  \begin{align*}
  \begin{pmatrix}
  1 & 0 & 0 & \cdots & 0\\
  \alpha & 1-\alpha & 0 & \cdots &
  0\\
  \vdots & \cdots & \ddots & \\
   \alpha & 0 & 0 & \cdots &
  1-\alpha\\
  \end{pmatrix}^t = \begin{pmatrix}
  1 & 0 & 0 & \cdots & 0\\
  \alpha\sum_{i=1}^{t-1}(1-\alpha)^i & (1-\alpha)^t & 0 & \cdots &
  0\\
  \vdots & \cdots & \ddots & \\
   \alpha\sum_{i=1}^{t-1}(1-\alpha)^i & 0 & 0 & \cdots &
  (1-\alpha)^t\\
  \end{pmatrix}.
  \end{align*}
  For $0<\alpha\le 1$, the right-hand side of the last equation
  obviously converges to the matrix with all entries identical to
  zero, except for the first column, which consists of $n$ entries
  $1$. Hence, by this fact, $\mathbf{b}^2(\infty)$ is the vector with
  all entries $\mu_2$ and all agents are, consequently,
  $\epsilon$-wise for topic $X_2$, and, thus, also $\eta$-wise (by
  Lemma \ref{lemma:trivial}).  Since 
  in this case, it holds that $\abs{N_\eta(\mathbf{b}^2(\infty),\mu_2)}=n$, 
  we have $T(\abs{N_\eta(\mathbf{b}^2(\infty),\mu_2)})=0$ by assumption, so
  that agents do not adjust weights for topic $X_3$ (more precisely,
  the adjustment increment is zero). Hence,
  $\mathbf{W}^{(3)}=\mathbf{W}^{(2)}$, and agents will also be
  $\epsilon$-wise for topic $X_3$ since agent $i=1$ is
  $\epsilon$-intelligent for $X_3$. Inductively, this holds for all
  $X_k$, with $k>1$.  

  Now, 
  assume that at least one 
  agent in $\struct{N}_2$ 
  happens to know truth for topic $X_1$ (that is, his initial belief
  is within an $\eta$ 
  radius of truth), which may always occur 
  since $F_{Z_k}(A)>0$ for all intervals $A\subseteq S$ by assumption. 
  For convenience, we 
  assume that exactly one 
  agent in $\struct{N}_2$, say, agent $2$, happens to
  know truth for topic $X_1$. 
  Then, at the beginning of the
  discussion of topic $X_2$, agents increase their weights for agents
  $1$ and $2$, resulting in
  the following structure:
  \begin{align*}
  \mathbf{W}^{(2)}=
  \frac{1}{1+2\tilde{\delta}}
  \begin{pmatrix}
  {1+\tilde{\delta}} & \tilde{\delta} & 0 &
  0 & \cdots & 0\\ 
  \tilde{\delta} & {1+\tilde{\delta}}
  & 0 & 0 & \cdots & 0 \\
  {\tilde{\delta}} & {\tilde{\delta}}
  & {1} & 0 & \cdots & 0 \\
  \vdots & \vdots & \cdots & \ddots \\
  \tilde{\delta} & \tilde{\delta} & 0 & 0 & \cdots & 1
  \end{pmatrix}
  \end{align*} 
  Again, limiting beliefs for topic $X_2$ are then given by 
  \begin{align*}
    \mathbf{b}^2(\infty)
  = \lim_{t\goesto\infty}\bigl(\mathbf{W}^{(2)}\bigr)^t\mathbf{b}^2(0).
  \end{align*}
  It is not difficult to see that powers of matrices with structures
  as in the given $\mathbf{W}^{(2)}$ converge to the matrix with the first two
  columns being 
  $\frac{1}{2}\one_n$ 
  and the remaining
  columns are zero vectors. Thus, limiting beliefs of all agents are
  just the average of the first two agents' initial beliefs. This
  implies a limiting consensus such that all
  agents are either jointly $\eta$-wise or not $\eta$-wise for topic
  $X_2$. If all are $\eta$-wise, 
  no weight adjustments occur for topic $X_3$ (since $T(n)=0$), but if
  they are not $\eta$-wise, 
  no weight adjustments occur as well (no one was right). Thus, 
  as before,
  $\mathbf{W}^{(2)}=\mathbf{W}^{(3)}=\mathbf{W}^{(4)}=\cdots$, 
  such that for all topics
  to come, limiting beliefs of all agents will always be averages of
  the first agent's (who is $\epsilon$-intelligent) and the second
  agent's (who was just 
  lucky for topic $X_1$) initial beliefs. 
  Hence, in expectation, agents' limiting (consensus) beliefs will be 
  \begin{align*}
    \frac{1}{2}\Exp[Z_k]+\frac{1}{2}\mu_k.
  \end{align*}
  The more general forms of $\lambda_Z$ and $\lambda_\mu$ can be
  straightforwardly derived in an analogous manner in the more general
  setting.  
\end{proof}
\begin{example}
  We illustrate Proposition \ref{prop:inf} in Figure
  \ref{fig:inf_illustrate}, where we let $S=[0,1]$, $\mu_k=0$ for all
  $k\ge 1$,
  $[n]=\set{1,\ldots,50}$, $\struct{N}_1=\set{1}$ and $F_{Z_k}$ is the
  random uniform distribution on $S$, $\epsilon=0$ and $\eta=0.2$. 
  \begin{figure}
    \centering
    \includegraphics[scale=0.25]{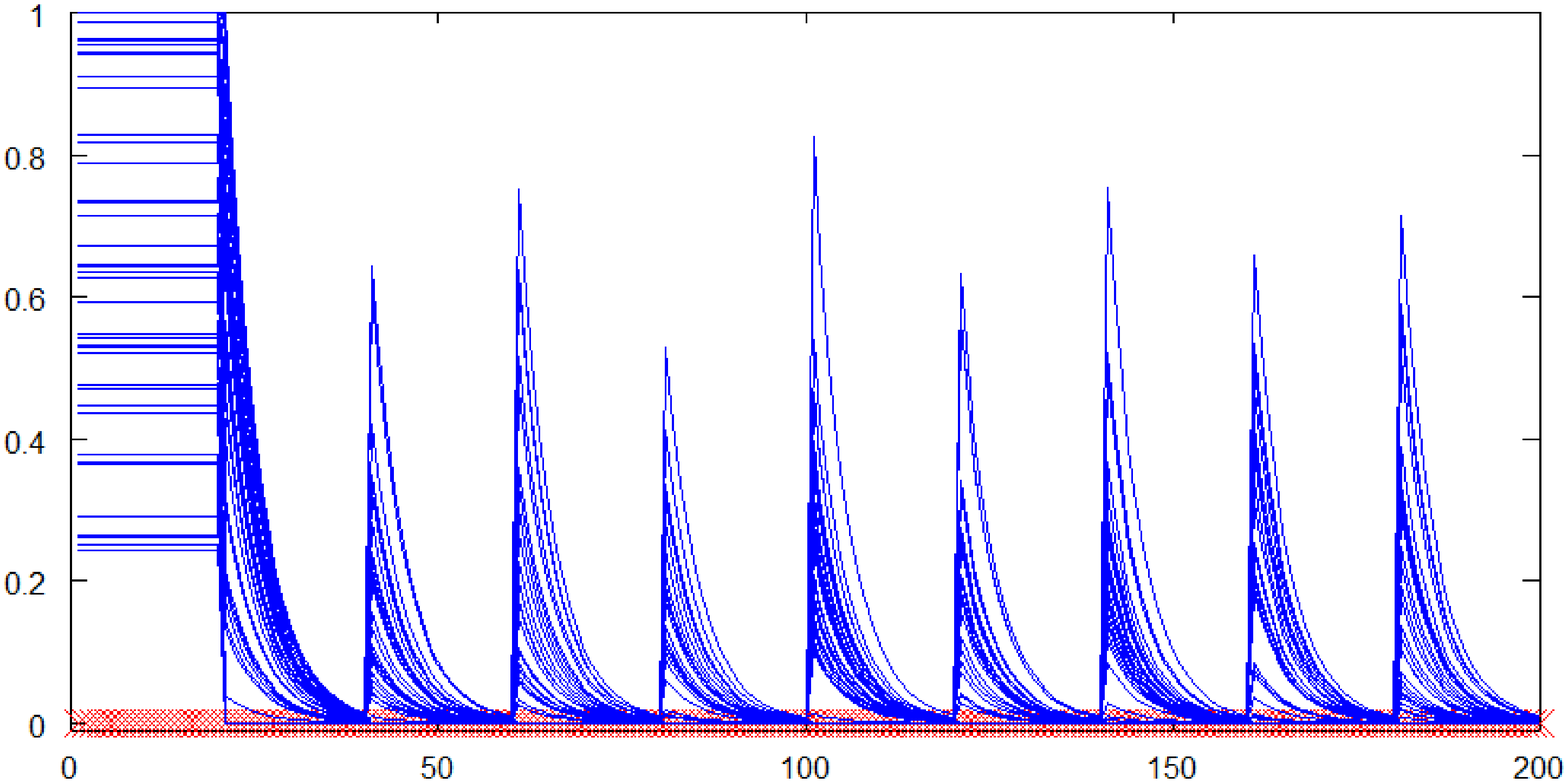}
    \includegraphics[scale=0.25]{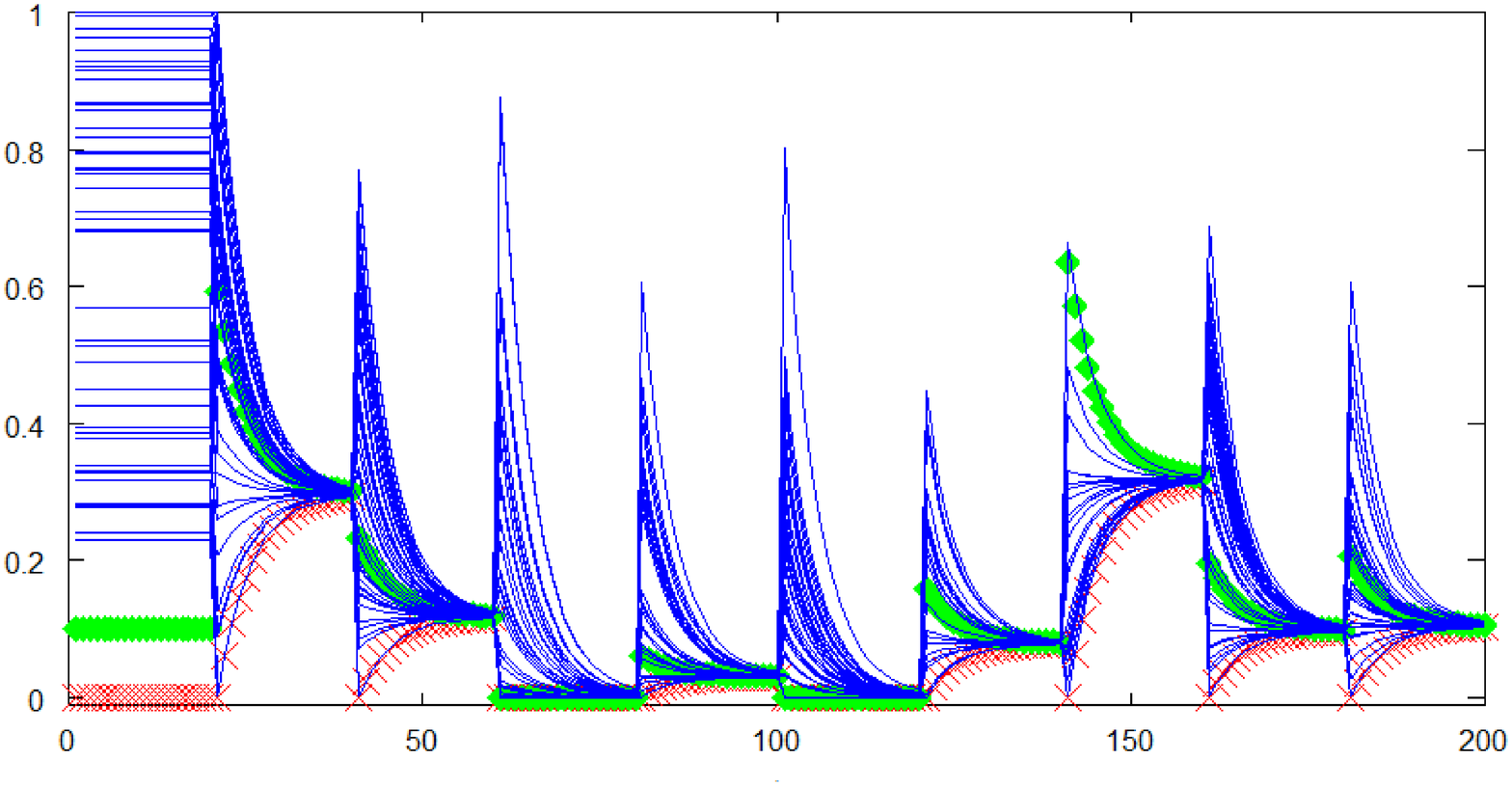}
    \caption{
      Description in text; also $\tau=\infty$, $\delta=0.2$ and
      $T(m)=1$ for $m<n$ 
      and $T(n)=0$. For each topic, we plot discussion rounds
      $t=0,1,2,\ldots,20$. 
      Left: $N_\eta(\mathbf{b}^1(\infty),\mu_1)=\set{1}=\struct{N}_1$
      ($\epsilon$-intelligent agent in red)
      such that all agents are $\epsilon$-wise for all topics $X_k$,
      with $k>1$. Right:
      $N_\eta(\mathbf{b}^1(\infty),\mu_1)=\set{1,2}$ contains also
      one biased agent (in green) such that limiting beliefs of agents are
      mixtures of truth and $Z_k$. 
    }
    \label{fig:inf_illustrate}
  \end{figure}
\end{example}
\begin{remark}
  As an application of Proposition \ref{prop:inf}, consider the
  situation when the number $n$ of agents goes to infinity. Then, if
  the fraction $\frac{n_2}{n}$ of agents in $\struct{N}_2$ converges
  to zero, agents become $\epsilon$-wise, in the limit, as
  $n\goesto\infty$. Namely, first, the coefficient $\lambda_Z$
  converges to zero in this case since $\lambda_Z\le
  \frac{n_2}{n_1}=\frac{n_2}{n-n_2}$, so that agents' expected
  consensus is indeed $\mu_k$ as $n\goesto\infty$. Moreover, not only
  do agents' limiting beliefs converge to $\mu_k$ \emph{in expectation}, but
  agents become indeed $\epsilon$-wise in the limit, as the support of
  the distribution of the $\epsilon$-intelligent agents' initial
  beliefs is 
  $B_{k,\epsilon}$. 

  As examples of $\frac{n_2}{n}$ converging to zero, of course, if
  the number $n_2$ remains constant as $n\goesto 0$, then
  $\frac{n_2}{n}$ goes to zero. But even if, for example, $n_2$ grows
  as in $\sqrt{n}$, all agents finally become $\epsilon$-wise. 
\end{remark}
Proposition \ref{prop:inf} may be restated in the following way;
agents' limiting beliefs, in expectation, are given by
$\lambda_Z\Exp[{Z_k}]+\lambda_\mu\mu_k$, where $\lambda_Z=0$ if
$N_\eta(\mathbf{b}^1(0),\mu_k)=\struct{N}_1$. 
We can then determine the probability that $\lambda_Z=0$. 
\begin{corollary}
  Under the conditions of Proposition \ref{prop:inf}, 
  with probability 
  exactly $F_{Z_k}(B_{k,\eta}^c)^{n_2}>0$, we have 
  $\lambda_Z=0$. 
\end{corollary}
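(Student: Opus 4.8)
The plan is to turn the algebraic condition $\lambda_Z=0$ into an event about the biased agents' signals on the very first topic $X_1$, and then evaluate the probability of that event by a one‑line independence computation combined with the cross‑topic invariance of $F_{Z_k}$.

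First I would establish that, under the hypotheses of Proposition~\ref{prop:inf}, $\lambda_Z=0$ if and only if $N_\eta(\mathbf{b}^1(0),\mu_1)=\struct{N}_1$ — the ``if'' direction being exactly the restatement recorded just above the corollary, and the ``only if'' direction needing a short check below. Since $\mathbf{W}^{(1)}=\mathbf{I}_n$ we have $\mathbf{b}^1(0)=\mathbf{b}^1(\infty)$, so this set is the same whether $\tau=0$ or $\tau=\infty$. Every $\epsilon$-intelligent agent of $\struct{N}_1$ lies in $B_{1,\epsilon}\subseteq B_{1,\eta}$ because $\epsilon\le\eta$, hence $\struct{N}_1\subseteq N_\eta(\mathbf{b}^1(0),\mu_1)$ always holds; therefore the condition $N_\eta(\mathbf{b}^1(0),\mu_1)=\struct{N}_1$ is equivalent to the event
\[
  \mathcal{E}:=\set{\,b_i^1(0)\in B_{1,\eta}^c\ \text{for every}\ i\in\struct{N}_2\,},
\]
i.e., that no biased agent happens to land within tolerance of truth on $X_1$. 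For the converse direction: whenever some biased agent \emph{is} $\eta$-intelligent for $X_1$ — including the boundary case $N_\eta(\mathbf{b}^1(0),\mu_1)=[n]$, handled via the footnote of Proposition~\ref{prop:inf} by passing to $X_2$ — the relevant coefficient equals $\length{N_\eta(\mathbf{b}^1(0),\mu_1)\cap\struct{N}_2}/\length{N_\eta(\mathbf{b}^1(0),\mu_1)}\ge 1/n>0$, and since $\beta_1=\norm{\Exp[Z_1]-\mu_1}>0$ such a strictly positive coefficient genuinely pulls the expected consensus off $\mu_k$, so that case is truly disjoint from the event $\lambda_Z=0$.

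Next I would compute $\Pr[\mathcal{E}]$. By hypothesis the initial beliefs $\set{b_i^1(0):i\in\struct{N}_2}$ are independent, each with distribution function $F_{Z_1}$, so
\[
  \Pr[\mathcal{E}]=\prod_{i\in\struct{N}_2}\Pr\bigl[b_i^1(0)\in B_{1,\eta}^c\bigr]=\bigl(F_{Z_1}(B_{1,\eta}^c)\bigr)^{n_2}.
\]
By the standing assumption of Section~\ref{sec:notation} that $F_{\struct{N}_2,k}(B_{k,\eta})$ — and hence also its complement — does not depend on $k$, we may rewrite $F_{Z_1}(B_{1,\eta}^c)=F_{Z_k}(B_{k,\eta}^c)$, giving the stated probability $F_{Z_k}(B_{k,\eta}^c)^{n_2}$.

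Finally, for strict positivity: since $S=\real$ and $\eta$ is finite, the set $B_{k,\eta}^c=\real\wo(\mu_k-\eta,\mu_k+\eta)$ contains a non-degenerate interval (say $[\mu_k+\eta,\mu_k+\eta+1]$), so the blanket assumption that $F_{Z_k}(A)>0$ for every non-empty interval $A\subseteq S$ forces $F_{Z_k}(B_{k,\eta}^c)>0$, whence $F_{Z_k}(B_{k,\eta}^c)^{n_2}>0$. The only step requiring genuine care is the word ``exactly'' in the statement, i.e.\ the converse direction of the second paragraph: one must verify that no sub-case of Proposition~\ref{prop:inf} in which a lucky biased agent enters the first relevant weight-adjustment step can accidentally yield $\lambda_Z=0$ — which it cannot, the coefficient there always being a positive ratio of cardinalities. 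Everything else is an immediate independence calculation.
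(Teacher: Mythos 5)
Your proposal is correct and follows essentially the same route as the paper: the paper's (one-line) proof simply identifies $\lambda_Z=0$ with the event that all $n_2$ biased agents' initial beliefs for $X_1$ fall in $B_{k,\eta}^c$ and invokes the iid property to get $F_{Z_k}(B_{k,\eta}^c)^{n_2}$. You merely spell out the details the paper leaves implicit — the two-sided equivalence with $N_\eta(\mathbf{b}^1(0),\mu_k)=\struct{N}_1$, the cross-topic invariance needed to write $F_{Z_1}$ as $F_{Z_k}$, and the positivity via the non-degenerate-interval assumption — all of which are consistent with the paper's argument.
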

\begin{proof}
  The event that the $n_2$ biased agents' initial beliefs
  $b_i^1(0)$ are in $B_{k,\eta}^c$ is, by the iid property,
  $F_{Z_k}(B_{k,\eta}^c)^{n_2}$. 
\end{proof}
\begin{remark}
  According to the corollary, the probability that $\lambda_Z=0$ is
  strictly positive but decreasing as the number of 
  biased agents 
  increases. Hence, as $n_2$ becomes large, agents' limiting beliefs are very
  likely mixtures of truth $\mu_k$ and $\Exp[Z_k]$, a 
  value that is different from truth. 
\end{remark}
\begin{remark}
  We may consider the setup of Proposition \ref{prop:inf} as a `type
  inference' problem. What the proposition says and shows is that,
  since agents adjust their weights based on 
  limiting beliefs, they cannot infer the intelligent agents once a
  biased non-intelligent agent has guessed truth because agents always reach
  a consensus in our situation (cf.\ also Proposition
  \ref{prop:consensus}). Thus, the intelligent agents cannot properly  
  signal their type in this case because all agents' limiting beliefs
  are indistinguishable. 
\end{remark}
Now, consider the exact same situation as in Proposition
\ref{prop:inf}, except that agents adjust weights based on initial
beliefs, i.e., $\tau=0$. In this
situation, a sufficient condition for wisdom is that agents find
truth sufficiently valuable, i.e., $\delta$ is sufficiently
large. In this case, wisdom, in the limit as $k\goesto\infty$, obtains
almost surely, 
namely, all 
that is required is that only the $\epsilon$-intelligent agents in
$\struct{N}_1$ are 
initially true for some topic $X_k$.

\begin{proposition}\label{prop:0}
  Let the weight adjustment time point be $\tau=0$.

  Under the conditions as in Proposition \ref{prop:inf},  if $\delta$
  is sufficiently large, 
  then, almost surely,  
  there exists a (time point) $M\in\nn$ such that all agents are
  $\epsilon$-wise 
  for all topics $X_k$, with $k>M$. 
\end{proposition}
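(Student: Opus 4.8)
The plan combines a monotonicity remark, a Borel--Cantelli argument, and the ``freezing'' device already exploited in Proposition \ref{prop:inf}. Since $\epsilon\le\eta$, Lemma \ref{lemma:trivial} makes every agent of $\struct{N}_1$ $\eta$-intelligent (indeed $\epsilon$-intelligent) for \emph{every} topic, so $\struct{N}_1\subseteq N_\eta(\mathbf{b}^k(0),\mu_k)$ for all $k$; in particular some agent is $\eta$-intelligent for $X_1$, so Proposition \ref{prop:consensus}(a) (with $r=1$) gives a consensus on every $X_k$ with $k\ge 2$, whence $\length{N_\eta(\mathbf{b}^k(\infty),\mu_k)}\in\set{0,n}$ there. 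Let $G_k$ be the event that no biased agent is $\eta$-intelligent for $X_k$, i.e.\ $N_\eta(\mathbf{b}^k(0),\mu_k)=\struct{N}_1$; its probability equals $F_{Z_k}(B_{k,\eta}^c)^{n_2}$, which is bounded below by a constant $p>0$ by the topic-invariance assumption of Section \ref{sec:notation}, and the $G_k$ are independent across $k$, so by the second Borel--Cantelli lemma $G_k$ holds for infinitely many $k$ almost surely.

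The structural point is that every adjustment adds the \emph{same} vector $\delta\,T(\cdot)\,\one_{N_\eta(\cdot)}$ to every row of the weight matrix and then rescales all rows by the same factor; hence $\mathbf{W}^{(k)}=\alpha_k\mathbf{I}_n+\one_n\mathbf{w}_k^\intercal$, where $\mathbf{w}_k$ is a nonnegative combination of uniform vectors $\one_{N_\eta(\cdot)}$ in which — because of the per-step rescaling — the increment made at a topic $X_h$ is weighted by $\prod_{h<h'<k}\bigl(1+\delta\,T_{h'}\length{N_\eta(\cdot)}\bigr)^{-1}$ relative to the most recent one. Such a matrix has the unique stationary distribution $\pi_k=\mathbf{w}_k/\norm{\mathbf{w}_k}_1$, so the consensus on $X_k$ is $\langle\pi_k,\mathbf{b}^k(0)\rangle$, and the total stationary weight on biased agents is at most $\bigl(\delta\,T(n_1)n_1\bigr)^{-1}$ once the last adjustment was the $\struct{N}_1$-increment of size $\delta\,T(n_1)$ triggered by a topic on which $G$ held. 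Choosing $\delta$ large makes this bound small. Then, at a topic $\kappa$ where $G_\kappa$ holds, $\pi_{\kappa+1}$ is $\ell^1$-close to the uniform distribution on $\struct{N}_1$, uniformly over the random earlier history, so the consensus on $X_{\kappa+1}$ is within $\epsilon$ of $\mu_{\kappa+1}$: it is, up to a small perturbation, the uniform $\struct{N}_1$-average $\tfrac1{n_1}\sum_{j\in\struct{N}_1}b_j^{\kappa+1}(0)\in B_{\kappa+1,\epsilon}$. Hence all agents are $\epsilon$-wise, thus $\eta$-wise, for $X_{\kappa+1}$; then $\length{N_\eta(\mathbf{b}^{\kappa+1}(\infty),\mu_{\kappa+1})}=n$, so the increment for $X_{\kappa+2}$ is $\delta\,T(n)=0$, and $\mathbf{W}^{(k)}=\mathbf{W}^{(\kappa+1)}$ for all $k\ge\kappa+1$ by induction; since $\pi_{\kappa+1}$ concentrates on $\struct{N}_1$, the consensus on every later topic is again (essentially) the $\struct{N}_1$-average, in $B_{k,\epsilon}$. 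Setting $M=\kappa$ gives the claim.

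The delicate step is the invocation of ``$\delta$ sufficiently large'': one has to (a) prove the uniform bound $\norm{\text{biased part of }\pi_{\kappa+1}}_1\le (\delta\,T(n_1)n_1)^{-1}$ from the recency discounting above, and (b) — the genuinely subtle point — convert the arbitrarily small but possibly nonzero residual weight on $\struct{N}_2$ into the \emph{exact} conclusion that the frozen consensus stays inside $B_{k,\epsilon}$, and not merely inside a slightly enlarged interval, for \emph{every} later topic $X_k$, whose biased beliefs are drawn afresh from $Z_k$; here the precise interplay between $\epsilon$, $\eta$, and the support (in particular the boundedness) of $Z_k$ has to be used. Everything else — the matrix algebra for $\alpha_k\mathbf{I}_n+\one_n\mathbf{w}_k^\intercal$, the convexity argument placing the $\struct{N}_1$-average in $B_{k,\epsilon}$, and the inductive freezing — is routine.
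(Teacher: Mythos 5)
Your argument follows essentially the same route as the paper's: wait (almost surely, by your Borel--Cantelli step) for a topic on which only the agents in $\struct{N}_1$ are initially within $\eta$ of truth, let the large increment $\delta$ push the renormalized weights to (near-)uniformity on $\struct{N}_1$, conclude that the resulting beliefs land in the convex set $B_{k,\epsilon}$ and stay there, and then use $T(n)=0$ to freeze the network forever. Your version is in fact more explicit than the paper's in three places: the a.s.\ existence of the trigger topic, the closed form $\alpha_k\mathbf{I}_n+\one_n\mathbf{w}_k^\intercal$ of the weight matrices, and the quantitative bound on the residual mass assigned to $\struct{N}_2$. The ``delicate step (b)'' you flag but do not carry out is precisely the step the paper also asserts without proof (``$\delta$ may be so large that all agents' beliefs $b_i^{M+1}(1)$ lie in $B_{k,\epsilon}$''); you are right that, with a $\delta$ fixed once and for all and fresh draws of $Z_k$ at every later topic, this requires the biased beliefs to be (essentially) bounded -- as in the paper's examples where $S=[0,1]$ -- so your proposal is no less complete than the published argument on this point. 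One small correction: per the adjustment rule, the increment factor is $T\bigl(\length{N_\eta(\mathbf{b}^\kappa(\infty),\mu_\kappa)}\bigr)$, evaluated at the number of $\eta$-\emph{wise} agents (which, post-consensus, is $0$ or $n$), not $T(n_1)$; this is why the paper's choice of $M$ also imposes that not all agents be $\eta$-wise for $X_M$, and your bound should be stated with $T$ bounded below on the relevant arguments rather than with the specific value $T(n_1)$.
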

\begin{proof}
  Let $M$ be the first time point that (1) only $\epsilon$-intelligent
  agents in $\struct{N}_1$ happen to know truth, initially, for topic
  $X_M$, that is, 
  $b_i^M(0)\in B_{k,\eta}$ for all $i\in \struct{N}_1$ and no
  $i\in \struct{N}_2$; and, (2) not all agents are $\eta$-wise for
  $X_M$ (such that $T(\cdot)>0$). 
  Then, weight adjustment at $M+1$ will add $\tilde{\delta}>0$ to
  the weights of the $\epsilon$-intelligent agents in
  $\struct{N}_1$. If $\tilde{\delta}$ is 
  sufficiently large, after normalization, weights for the
  non-intelligent agents become arbitrarily small and (arbitrarily
  close to) uniform for the $\epsilon$-intelligent agents. 
  In particular, $\delta$ may be so large that all agents' beliefs
  $b_i^{M+1}(1)$ lie in $B_{k,\epsilon}$. 
  Since this is a convex set 
  and weight matrices are row-stochastic, beliefs will remain in
  $B_{k,\epsilon}$ for all time periods $t$; hence, agents will be
  $\epsilon$-wise in the limit for topic $X_{M+1}$, and, consequently,
  also $\eta$-wise. Since $T(n)=0$, no
  more adjustments will occur after time point $M+1$ and all agents
  become 
  $\epsilon$-wise for all topics $X_k$, with $k>M$, since their
  weights are now (sufficiently close to) uniform for the
  $\epsilon$-intelligent agents in $\struct{N}_1$. 
\end{proof}
\begin{example}
  We illustrate Proposition \ref{prop:0} in Figure
  \ref{fig:0_illustrate}, where we let $S=[0,1]$, $\mu_k=0$ for all
  $k\ge 1$,
  $[n]=\set{1,\ldots,50}$, $\struct{N}_1=\set{1}$ and $F_{Z_k}$ is the
  random uniform distribution on $S$, $\epsilon=0$ and $\eta=0.05$. 
  \begin{figure}
    \centering
    \includegraphics[scale=0.25]{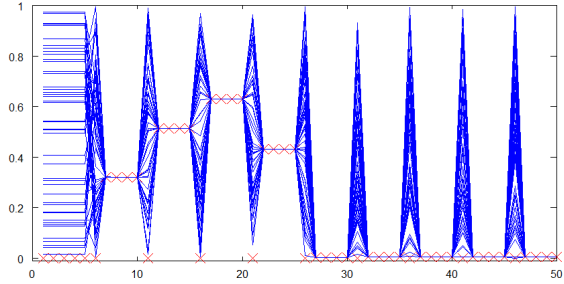}
    \caption{
      Description in text; also $\tau=0$, $\delta=100$ and $T(m)=1$ for $m<n$
      and $T(n)=0$. For each topic, we plot discussion rounds
      $t=0,1,2,\ldots,5$. 
      For topic $M=5$, we have
      $N_\eta(\mathbf{b}^M(0),\mu_M)=\struct{N}_1$ such that all
      agents are $\epsilon$-wise for all topics $X_k$, with $k>M$. 
    }
    \label{fig:0_illustrate}
  \end{figure}
\end{example}
\begin{remark}
  To summarize, 
  the intelligent agents in $\struct{N}_1$ can now correctly signal
  their type. All that 
  is required is that only $\epsilon$-intelligent agents in
  $\struct{N}_1$ happen to 
  know truth for some topic, in which case they will receive such a
  large weight increment that they lead society to $\epsilon$-wisdom;
  then, no more weight adjustments occur because the `right guys' have
  been identified. 
\end{remark}
\begin{remark}
  In our current setup, the difference between weight adjustment at
  $\tau=0$ vs.\ at $\tau=\infty$ is as follows. While adjusting at
  $\tau=0$ leads agents to $\epsilon$-wisdom almost surely provided that
  they find truth sufficiently valuable, that is, $\delta$ is
  large enough; updating at $\tau=\infty$ leads agents to
  $\epsilon$-wisdom provided that biased agents do not know (or,
  perhaps, 
  `guess') truth for topic $X_1$. The latter condition is difficult to
  satisfy if we assume that the number of biased agents becomes
  large, while the condition of sufficiently large $\delta$ 
  also
  depends on population size $n$ and, in particular, on $n_2$, the
  population 
  size of the biased agents. In other words, if $T(n)=0$, we
  can 
  specify sufficient conditions for wisdom even under the presence of
  biased agents, but these are rather
  challenging. 
\end{remark}

\subsubsection*{The case $T(\cdot)>0$}
Now, we consider the same setup as in the last subsection, except that
we assume that $T(\cdot)>0$ on its whole domain. In this case, agents
continuously adjust their weights to other agents, which is also the
rational behavior of an agent who assumes the conditions outlined in
Section \ref{sec:justification}; recall our previous discussion. 

We consider a slightly more general situation here than in the last
subsection in that we allow each agent to have initial beliefs
distributed according to individualized distribution functions, rather
than to assume groups with identical distribution functions; the more
restrictive setting is then a special case of our generalization. 
Accordingly, assume that agent $i$'s initial belief for topic $X_k$
is distributed 
according to 
random variable $Z_{i,k}$ with distribution function 
$F_{i,k}(A)=\text{Pr}[Z_{i,k}\in A]$ for all $A\subseteq S$ and all
topics $X_k$, for $k\in\nn$. 
We assume that $F_{i,k}(B_{k,\eta})$, which gives the probability that
agent $i$ is within an $\eta$-radius around truth $\mu_k$, does not
depend on topic $X_k$, that is, 
$F_{i,k}(B_{k,\eta})=F_{i,k'}(B_{\eta,k'})$ for all $k,k'$, which
means that the probability that agent $i$ is 
truthful
is the same across topics. 
We then have 
the following proposition.
\begin{proposition}\label{prop:general1}
  Let tolerance $\eta\ge 0$ be fixed. 
  Assume that agents adjust weights based on initial beliefs, i.e.,
  $\tau=0$, and 
  assume that $T(\cdot)>0$. Then, 
  as $k\goesto\infty$, agents' limiting consensus beliefs on issue $X_k$ are
  distributed according to
  \begin{align*}
    b_i^k(\infty) \sim \sum_{j=1}^n \lambda_jZ_{j,k}
  \end{align*}
  where
  \begin{align*}
    \lambda_j \propto F_{j,k}(B_{k,\eta})
  \end{align*}
  with $\sum_{j=1}^n \lambda_j=1$ (note that $\lambda_j$ does not
  depend upon $k$ by assumption). In particular, we have
  \begin{align*}
    \Exp[b_i^k(\infty)] = \sum_{j=1}^n \lambda_j\Exp[Z_{j,k}]. 
  \end{align*}
\end{proposition}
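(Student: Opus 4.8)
The plan is to write the weight matrix $\mathbf{W}^{(k)}$ explicitly, observe that for large $k$ it is arbitrarily close to a column--constant (hence one--step convergent) stochastic matrix, identify its stationary row with the normalized vector of accumulated trust increments, and finally invoke the strong law of large numbers across topics.

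\emph{Structure of $\mathbf{W}^{(k)}$.} Since $\tau=0$, the matrix $\mathbf{W}^{(k)}$ is a deterministic function of the initial beliefs $b_j^h(0)$, $h<k$, only; it does not involve the opinion dynamics. Put $\xi_{j,h}=1$ if $b_j^h(0)\in B_{h,\eta}$ and $\xi_{j,h}=0$ otherwise, let $I_h=\set{j:\xi_{j,h}=1}$, and write $t_h$ for the common increment $\delta\cdot T(\cdot)$ used at topic $h$; note $\delta T(n)\le t_h\le\delta T(1)$ since $T$ is non-increasing and strictly positive, so the $t_h$ are bounded away from $0$ and $\infty$. Starting from $\mathbf{W}^{(1)}=\mathbf{I}_n$ and applying \eqref{eq:adjusttruth} followed by renormalization (treating the normalization as applied once, consistently with the shorthand $\tilde W_{ij}^{(k)}\propto R_j^k(\eta)$ used earlier), one obtains
\begin{align*}
  W_{ij}^{(k)} = \frac{\one_{\{i=j\}}+\sum_{h=1}^{k-1}t_h\xi_{j,h}}{1+\sum_{h=1}^{k-1}t_h\length{I_h}},
\end{align*}
the denominator being the common row sum. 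The off--diagonal part of column $j$ is the same for every $i$, and since $\xi_{j,\cdot}$ are independent across topics with constant mean $F_{j,k}(B_{k,\eta})$, the accumulated column weight $\sum_{h<k}t_h\xi_{j,h}$ diverges a.s.\ whenever $F_{j,k}(B_{k,\eta})>0$ (and column $j$ is identically $0$ otherwise). Hence $\mathbf{W}^{(k)}=\one_n(\mathbf{w}^{(k)})^\intercal+\epsilon_k\mathbf{I}_n$ with $\epsilon_k=(1+\sum_{h<k}t_h\length{I_h})^{-1}\goesto 0$ a.s., where $w_j^{(k)}$ is the normalized accumulated weight of $j$.

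\emph{From column--constancy to the limit.} For $k$ past the (a.s.\ finite, assuming some $F_{j,k}(B_{k,\eta})>0$) first time an agent is $\eta$--intelligent at $\tau=0$, Proposition \ref{prop:consensus} gives that $\mathbf{W}^{(k)}$ induces a consensus, so $\lim_t(\mathbf{W}^{(k)})^t=\one_n(\mathbf{s}^{(k)})^\intercal$ for the unique stationary vector $\mathbf{s}^{(k)}$. Substituting $\mathbf{W}^{(k)}=\one_n(\mathbf{w}^{(k)})^\intercal+\epsilon_k\mathbf{I}_n$ into $(\mathbf{s}^{(k)})^\intercal\mathbf{W}^{(k)}=(\mathbf{s}^{(k)})^\intercal$ and using $(\mathbf{s}^{(k)})^\intercal\one_n=1$ yields $\mathbf{s}^{(k)}=\mathbf{w}^{(k)}/(1-\epsilon_k)$, i.e.\ $s_j^{(k)}=\bigl(\sum_{h<k}t_h\xi_{j,h}\bigr)\big/\bigl(\sum_{h<k}t_h\length{I_h}\bigr)$. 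As $\mathbf{b}^k(0)=(Z_{1,k},\dots,Z_{n,k})^\intercal$, the limiting consensus belief is $b_i^k(\infty)=\sum_{j=1}^n s_j^{(k)}Z_{j,k}$. Now the vectors $(\xi_{1,h},\dots,\xi_{n,h})$ are i.i.d.\ across $h$ and $t_h\length{I_h}=\sum_j t_h\xi_{j,h}$ is a bounded function of the $h$--th such vector, so by the strong law $\tfrac{1}{k-1}\sum_{h<k}t_h\xi_{j,h}\goesto\Exp[t_h\xi_{j,h}]$ and $\tfrac{1}{k-1}\sum_{h<k}t_h\length{I_h}\goesto\sum_j\Exp[t_h\xi_{j,h}]$ a.s.; hence $s_j^{(k)}\goesto\lambda_j:=\Exp[t_h\xi_{j,h}]/\sum_{j'}\Exp[t_h\xi_{j',h}]$, which gives $b_i^k(\infty)\sim\sum_j\lambda_j Z_{j,k}$ and, by linearity, $\Exp[b_i^k(\infty)]=\sum_j\lambda_j\Exp[Z_{j,k}]$.

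\emph{The main obstacle.} Everything above is bookkeeping plus a single SLLN application; the genuinely delicate point is the claimed proportionality $\lambda_j\propto F_{j,k}(B_{k,\eta})$. Since $\Exp[t_h\xi_{j,h}]=\delta\cdot\Exp\!\bigl[T(\length{I_h})\bigm|j\in I_h\bigr]\cdot F_{j,k}(B_{k,\eta})$, the proportionality holds exactly iff $\Exp[T(\length{I_h})\mid j\in I_h]$ does not depend on $j$ --- automatic when $T$ is constant (e.g.\ $T\equiv1$) or when the agents are exchangeable, and only true ``up to a common factor'' in general. Consistently with the discussion in Section \ref{sec:justification}, I would therefore state the proposition with equality for $T\equiv\mathrm{const}$ and as an up--to--common--factor claim otherwise; a secondary technicality, likewise folded into the ``normalize once'' convention, is that renormalizing after every topic geometrically discounts older topics and would replace the limit by an exponentially weighted average.
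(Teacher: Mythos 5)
Your proposal is correct and, in fact, considerably more careful than the paper's own argument, which is explicitly flagged as ``not rigorous'': the paper merely asserts that rows of $\mathbf{W}^{(k)}$ become identical by homogeneity of the tolerance $\eta$, that each entry is ``clearly proportional'' to $F_{j,k}(B_{k,\eta})$ because that probability measures how often $j$ is truthful, and then reads off $b_i^k(\infty)\approxeq\sum_j\lambda_j b_j^k(0)$. You supply exactly the missing machinery: the explicit accumulated-increment form of $\mathbf{W}^{(k)}$, the decomposition $\one_n(\mathbf{w}^{(k)})^\intercal+\epsilon_k\mathbf{I}_n$ with $\epsilon_k\goesto 0$, the identification of the stationary vector $s_j^{(k)}=\sum_{h<k}t_h\xi_{j,h}\big/\sum_{h<k}t_h\length{I_h}$, and the SLLN across topics. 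Your two caveats are both genuine and are not acknowledged in the paper: (i) since $\Exp[t_h\xi_{j,h}]=\delta\,\Exp\bigl[T(\length{I_h})\mid j\in I_h\bigr]F_{j,k}(B_{k,\eta})$, the stated proportionality $\lambda_j\propto F_{j,k}(B_{k,\eta})$ is exact only when $T$ is constant or the conditional expectation of $T(\length{I_h})$ given $j\in I_h$ is $j$-independent (e.g.\ under exchangeability) --- the paper's proof silently assumes this away, and indeed its heuristic ``counting'' identity $\tilde W_{ij}^{(k)}\propto R_j^k(\eta)$ in Section \ref{sec:justification} is derived only for the weighted count, not for general $T$; (ii) the per-topic renormalization in \eqref{eq:adjusttruth} does geometrically discount older increments, so the clean SLLN limit requires the ``normalize once'' reading that the paper itself implicitly adopts. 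Stating the proposition with equality for $T\equiv\mathrm{const}$ and up to a common factor otherwise, as you suggest, is the honest formulation.
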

\begin{proof} 
  Our proof is not rigorous. 

  Since agents are 
  homogenous with respect to tolerance $\eta$, they will all
  \emph{jointly} increase their weight to a 
  particular agent $j$ (or they will \emph{jointly} not do
  so). Therefore, as $k$ increases, rows of $\mathbf{W}^{(k)}$ become more
  and more similar, independent of the initial conditions
  $\mathbf{W}^{(1)}$ (if weight matrix $\mathbf{W}^{(1)}$ is identical
  in each row, this will propagate to any $\mathbf{W}^{(k)}$ with
  $k>1$, but even if not, rows will become more and more similar by
  the homogeneity of agents). The
  weight mass that any particular agent $i$ assigns to any particular
  agent $j$ is clearly proportional to $F_{j,k}(B_{k,\eta})$ (cf.\ Figure
  \ref{fig:intro}) since this value indicates how frequently agent $j$ is
  truthful. Hence, since rows of $\mathbf{W}^{(k)}$ are
  (approximately) identical, as $k$ becomes large, with each entry
  $[\mathbf{W}^{(k)}]_{ij}$ 
  being proportional to $F_{j,k}(B_{k,\eta})$, limiting beliefs of agents
  are given by,  
  \begin{align*}
    b_i^k(\infty) \approxeq b_i^k(1)=\sum_{j=1}^n\lambda_j b_j^k(0),
  \end{align*}
  where $\lambda_j\propto  F_{j,k}(B_{k,\eta})$. 
  This completes the proof. 
\end{proof}
\begin{remark}
  The coefficients $\lambda_j$ have a very intuitive
  interpretation. Since they indicate how limiting consensus beliefs
  are formed in terms of initial beliefs, their standard
  interpretation is that of \emph{social influence weights} (cf.,
  e.g., 
  Golub and Jackson
  (2010) \cite{Golub2010}). Clearly, in our 
  endogenous weight formation model, with weight sizes dependent upon
  `past performance', an agent's social influence is intuitively given by his
  likelihood of correctly predicting truth. 
\end{remark}
\begin{example}
  Considering the distribution of limiting consensus beliefs, we note that if
  two $Z_{j,k}$, for $j=x,y$, are normally distributed with parameters
  $(\mu_x^k,\sigma_{x,k}^2)$ and $(\mu_y^k,\sigma_{y,k}^2)$, then both
  $\lambda_j Z_{j,k}$ as well as $\sum_j \lambda_jZ_{j,k}$ are normally
  distributed; the latter sum has 
  normal distribution with parameters
  $(\lambda_x\mu_x^k+\lambda_y\mu_y^k,\sigma_{x,k}^2+\sigma_{y,k}^2)$. Hence, if
  all agents' initial beliefs are normally distributed, their limiting
  beliefs are also normally distributed. 

  Moreover, if there are several `types' or `groups' of agents,
  $\mathcal{N}_1,\ldots,\mathcal{N}_m$, of sizes $n_1,\ldots,n_m$,
  where each group has 
  identical and independent initial distribution (within groups), 
  then agents in each group receive about the same weight mass, which
  is proportional to (see example below)
  $\lambda_{\mathcal{N}_l}\frac{1}{n_l}$, for $l\in\set{1,\ldots,m}$,
  so that 
  if sizes $n_1,\ldots,n_m$ of groups become large, then, by the 
  central limit theorem, $\sum_{j\in \mathcal{N}_l}\lambda_{\mathcal{N}_l}\frac{1}{n_l}Z_{j,k}=$ is
  approximately normally distributed.
  Thus, by our above remark,
  $\sum_{j\in[n]}\lambda_jZ_{j,k}$ is also approximately normally
  distributed. In other words, we would generally expect agents'
  limiting beliefs to be normally distributed, in this setup. 
%
%
\end{example}
\begin{example}\label{example:3groups}
  Consider three groups of agents,
  $\mathcal{N}_1,\mathcal{N}_2,\mathcal{N}_3\subseteq[n]$ with 
  $\mathcal{N}_1\cup \mathcal{N}_2\cup \mathcal{N}_3=[n]$ and where the $\mathcal{N}_l$'s are pairwise
  mutually disjoint. The
  first group, 
  which we call experts, 
  has initial beliefs distributed according to
  $\mathtt{N}(\mu_k,\sigma_1^2)$, where $\sigma_1^2>0$ is fixed (that
  is, each member in $\mathcal{N}_1$ has the given distribution function, and we
  assume members' initial beliefs to be independent). The
  second and third groups are biased. Assume, for illustration, that group
  two has distribution $\mathtt{N}(\mu_k-a,\sigma_2^2)$ and group
  three has $\mathtt{N}(\mu_k+b,\sigma_3^2)$. Assume the groups have
  sizes $n_1=\frac{1}{5}n$, and $n_2=n_3=\frac{2}{5}n$, that is, the
  group of experts is smallest in size (but still growing in
  $n$). Moreover, let, for instance,    
  $a=3$, $b=1$ and $\sigma_1^2=\sigma_2^2=\sigma_3^2=1$, and let
  $\eta=0.25$. Then, each 
  expert has $\lambda_j$ of about $\lambda_j\propto{0.19741}$,
  members of group two have $\lambda_j\propto {0.0024}$ and
  members of group three have $\lambda_j\propto
  {0.0278}$. Since the $\lambda$'s must sum to one, we have
  about $\lambda_{\mathcal{N}_1}\approxeq\frac{0.19751n_1}{C_0}$ for experts, and
  $\lambda_{\mathcal{N}_2}\approxeq\frac{0.0024n_2}{C_0}$ and
  $\lambda_{\mathcal{N}_3}\approxeq\frac{0.0278n_3}{C_0}$ for groups two and
  three, respectively, and where
  $C_0=\lambda_{\mathcal{N}_1}+\lambda_{\mathcal{N}_2}+\lambda_{\mathcal{N}_3}$ and
  $\lambda_{\mathcal{N}_l}=\sum_{j\in \mathcal{N}_l}\lambda_j$ for $l=1,2,3$. For $n=100$, this
  is about $\lambda_{\mathcal{N}_1}\approxeq 0.44$, $\lambda_{\mathcal{N}_2}\approxeq
  0.01$, and $\lambda_{\mathcal{N}_3}\approxeq 0.55$, which is also,
  approximately, the limiting structure of the distribution of
  $\lambda$ as $n\goesto\infty$. Hence, in the limit, as
  $n\goesto\infty$, these agents beliefs' would converge to a consensus that is
  off by about $\lambda_{\mathcal{N}_1}\cdot 0+\lambda_{\mathcal{N}_2}\cdot
  (-3)+\lambda_{\mathcal{N}_3}\cdot 1=0.51227$ from truths $\mu_k$. More
  precisely, the agents' limiting consensus values are distributed
  according to a 
  normal distribution with mean $\mu_k+0.51227$ and variance that
  converges to zero in $n$; in particular, variance
  of limiting consensus values is given by
  $\frac{\sigma_1^2}{n_1}+\frac{\sigma_2^2}{n_2}+\frac{\sigma_3^2}{n_3}$,
  which is $\frac{1}{10}$ for our example. We plot the (predicted and
  theoretical) limiting distribution of $b_i^k(\infty)$ 
  and a sample
  histogram from an actual simulation 
  in Figure \ref{fig:sim}.  
  \begin{figure*}[!ht]
    \centering
    \begin{subfigure}{0.49\textwidth}
    \resizebox{1.0\textwidth}{!}{
        \input{plots/example3groups.tex}}
    \end{subfigure}
    \begin{subfigure}{0.49\textwidth}
      \includegraphics[scale=0.425]{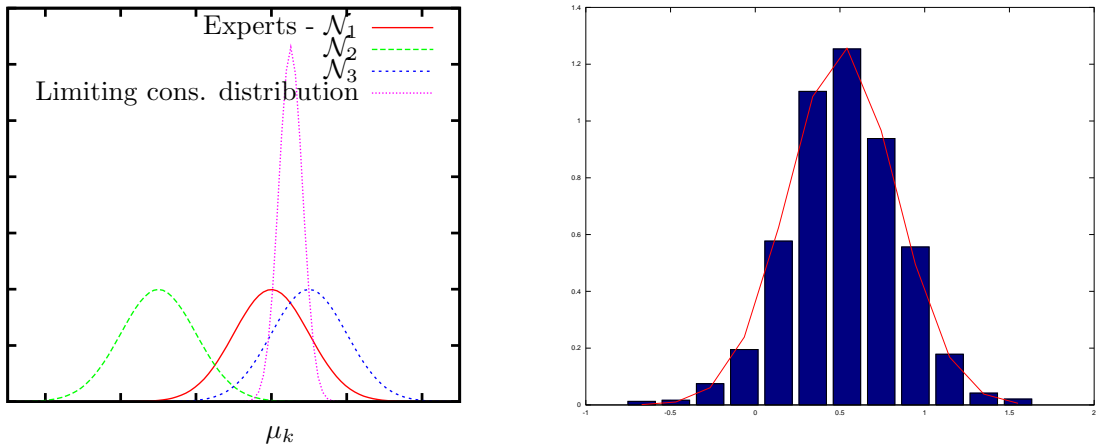}
    \end{subfigure}
    \caption
{ 
  Left: The distribution function of three groups of agents as
  discussed in Example \ref{example:3groups}; experts'
  initial beliefs are always centered around truth, for all topics
  $X_k$, while there are two biased groups, one which underestimates
  truth and one which overestimates truth. If the relative sizes of
  groups are as described in the text, agents distribution of limiting
  beliefs, as $k$ becomes large, is given by the high-peak normal
  distribution indicated, whose mean is off from truth by about
  $0.5$. 
  Right: Sample distribution from a simulation vs.\ predicted
  distribution according to Proposition \ref{prop:general1}.  
}
\label{fig:sim}
\end{figure*}
\end{example}
In the next proposition, we discuss weight adjustment based on
limiting beliefs. We assume that $F_{i,k}(A)>0$ for all non-empty
intervals 
$A\subseteq S$. 
\begin{proposition}\label{prop:general2}
  Let tolerance $\eta\ge 0$ be fixed. 
    Assume that agents adjust weights based on limiting beliefs, i.e.,
    $\tau=\infty$, and 
  assume that $T(\cdot)>0$. Then, 
  as $k\goesto\infty$, agents' limiting consensus beliefs on issue $X_k$ are
  distributed according to
  \begin{align*}
    b_i^k(\infty) \sim \sum_{j=1}^n \frac{1}{n}Z_{j,k}
  \end{align*}
  In particular, we have
  \begin{align*}
    \Exp[b_i^k(\infty)] = \sum_{j=1}^n \frac{1}{n}\Exp[Z_{j,k}]. 
  \end{align*}
\end{proposition}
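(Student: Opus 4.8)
The plan is to show that, as $k\to\infty$, the learning matrices $\mathbf{W}^{(k)}$ converge almost surely to $\tfrac1n\mathbf{J}$, the matrix all of whose entries equal $\tfrac1n$ (writing $\mathbf{J}=\one_n\one_n^\intercal$); the stated distributional claim and its expectation version then follow by reading off the limiting consensus. The structure mirrors the proof of Proposition \ref{prop:centered1}: consensus (Proposition \ref{prop:consensus}) makes all agents jointly $\eta$-wise or jointly not, and, since weights are now adjusted on \emph{limiting} beliefs, every agent receives the \emph{same} weight increment, which drives $\mathbf{W}^{(k)}$ toward uniformity; the one new ingredient is that these symmetric increments must be shown to recur infinitely often, which is where the full-support hypothesis $F_{i,k}(A)>0$ replaces the stochastic-centering hypothesis of Proposition \ref{prop:centered1}. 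Throughout I assume $\eta>0$ (otherwise no weight is ever adjusted and the claim is hopeless) and that $\mathbf{W}^{(1)}$ is convergent, as is needed for $\tau=\infty$ to be well defined and as holds e.g.\ for $\mathbf{W}^{(1)}=\mathbf{I}_n$.

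First I would establish recurrence of the ``everybody truthful'' event. For each topic $X_k$ let $E_k=\{\,b_j^k(0)\in B_{k,\eta}\text{ for all }j\in[n]\,\}$. By independence of initial beliefs across agents, $\Pr[E_k]=\prod_{j=1}^nF_{j,k}(B_{k,\eta})$; this is a positive constant $q$ independent of $k$, using $F_{j,k}(A)>0$ for non-empty intervals $A$ and $F_{j,k}(B_{k,\eta})=F_{j,k'}(B_{k',\eta})$. Since initial beliefs are independent across topics, the $E_k$ are independent, so the second Borel--Cantelli lemma gives that $E_k$ occurs for infinitely many $k$, almost surely. On $E_k$, every agent's limiting belief for $X_k$ — which, $\mathbf{W}^{(k)}$ being convergent, is a convex combination of $b_1^k(0),\dots,b_n^k(0)$ — lies in the convex set $B_{k,\eta}$, so every agent is $\eta$-wise for $X_k$. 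In particular $r:=\min_{i\in[n]}a_i<\infty$ almost surely, and Proposition \ref{prop:consensus}(a) then yields that $\mathbf{W}^{(k)}$ induces a consensus for every $k>r$.

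Next I would use consensus to turn these truthful topics into a contraction. Fix $k>r$: all agents share one limiting belief, so either all are $\eta$-wise for $X_k$ — in which case $\length{N_\eta(\mathbf{b}^k(\infty),\mu_k)}=n$ and \eqref{eq:adjusttruth} adds the same amount $\tilde\delta:=\delta\,T(n)>0$ to \emph{every} entry of $\mathbf{W}^{(k)}$, which after renormalisation (each row now summing to $1+n\tilde\delta$) gives $\mathbf{W}^{(k+1)}=\tfrac{1}{1+n\tilde\delta}(\mathbf{W}^{(k)}+\tilde\delta\mathbf{J})$ — or none is, in which case $\mathbf{W}^{(k+1)}=\mathbf{W}^{(k)}$. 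A one-line computation gives the key identity $\mathbf{W}^{(k+1)}-\tfrac1n\mathbf{J}=\tfrac{1}{1+n\tilde\delta}(\mathbf{W}^{(k)}-\tfrac1n\mathbf{J})$, so the first kind of step strictly contracts toward $\tfrac1n\mathbf{J}$ by the fixed factor $\rho=1/(1+n\tilde\delta)\in(0,1)$, while the second is the identity. Combining this with the recurrence of the previous paragraph — $E_k$, hence (for $k>r$) the contracting step, occurs infinitely often — yields $\mathbf{W}^{(k)}\to\tfrac1n\mathbf{J}$ almost surely. For large $k$, $\mathbf{W}^{(k)}$ is then entrywise positive, hence consensus-inducing with limiting value $\mathbf{s}^{(k)\intercal}\mathbf{b}^k(0)$ for its left Perron (influence) vector $\mathbf{s}^{(k)}$, and $\mathbf{s}^{(k)}\to\tfrac1n\one_n$ by continuity of the Perron vector on primitive stochastic matrices. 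Since $\mathbf{W}^{(k)}$, and thus $\mathbf{s}^{(k)}$, is determined by topics $X_1,\dots,X_{k-1}$ alone and is therefore independent of $(Z_{1,k},\dots,Z_{n,k})$, we conclude that $b_i^k(\infty)=\sum_{j=1}^n s_j^{(k)}Z_{j,k}$ is, in the limit $k\to\infty$, distributed as $\sum_{j=1}^n\tfrac1n Z_{j,k}$, and $\Exp[b_i^k(\infty)]=\sum_{j=1}^n s_j^{(k)}\Exp[Z_{j,k}]\to\sum_{j=1}^n\tfrac1n\Exp[Z_{j,k}]$.

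The main obstacle is the recurrence step: without it one only gets that $\mathbf{W}^{(k)}$ is \emph{eventually constant} (equal to $\mathbf{W}^{(r+1)}$) on any sample path where the consensus misses the $\eta$-tolerance band forever, and that constant need not be $\tfrac1n\mathbf{J}$. The full-support assumption $F_{i,k}(A)>0$ is precisely what rules this out, by making ``all agents truthful on $X_k$'' an event of constant positive probability to which Borel--Cantelli applies; the contraction identity of the third paragraph is the other load-bearing fact, but it is pure algebra, and every remaining step — convergence of $\mathbf{W}^{(1)}$, continuity of the influence vector, the loose reading of ``$\sim$'' as convergence in distribution as $k\to\infty$ — parallels Proposition \ref{prop:centered1} and the (admittedly non-rigorous) argument for Proposition \ref{prop:general1}.
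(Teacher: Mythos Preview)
Your argument follows the same line as the paper's: consensus for $k>r$ forces all-or-none $\eta$-wisdom under $\tau=\infty$, the full-support hypothesis guarantees (via Borel--Cantelli) that the ``all $\eta$-wise'' event recurs infinitely often, and each such occurrence pulls $\mathbf{W}^{(k)}$ toward $\tfrac1n\mathbf{J}$. Your version is considerably more detailed --- the explicit contraction identity $\mathbf{W}^{(k+1)}-\tfrac1n\mathbf{J}=\tfrac{1}{1+n\tilde\delta}(\mathbf{W}^{(k)}-\tfrac1n\mathbf{J})$ and the Perron-vector continuity step are spelled out where the paper only gestures --- but the underlying mechanism is the same.
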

\begin{proof}
  Since agents reach a consensus for topics $X_k$, with $k>r$, and
  agents adjust weights based on limiting beliefs, weight matrix entries for
  all agents 
  converge to $\frac{1}{n}$. Convergence to $\frac{1}{n}$ is assured
  since all agents have $F_{i,k}(B_{k,\eta})>0$ by assumption such
  that the probability that agents' limiting consensus is within an
  $\eta$-interval around truth is at
  least $F_{i,k}(B_{k,\eta})^n>0$, from which it follows that agents
  adjust weights infinitely often (which each time entails an
  increment of $\tilde{\delta}$ and, thus, implies convergence of
  weight matrix entries to $\frac{1}{n}$) with probability $1$.  
\end{proof}
\begin{remark}
  We see here, again, that adjusting based on limiting beliefs is
  `worse' than adjusting based on initial beliefs, since 
  limiting beliefs 
  are formed through social interaction and may thus not indicate the
  inherent `intelligence' of an 
  agent. 

  To quantify the difference by way of illustration, in Example
  \ref{example:3groups}, agents' beliefs would now converge to a
  consensus, as $k\goesto\infty$, that is off from truths by about
  $\frac{n_1}{n}\cdot 0+\frac{n_2}{n}\cdot(-3)+\frac{n_3}{n}\cdot
  1=-\frac{4}{5}$, which is further away than the value of about 
  $0.51$ given in the situation when agents adjust weights based on initial
  beliefs. In particular, agents in group $\mathcal{N}_2$, who are
  very poor at estimating 
  truth, now receive much larger social influence than in the situation
  where 
  agents adjust based on initial beliefs. 

  However, qualitatively, the results do not change (by much): in
  both circumstances, $\tau=0$ and $\tau=\infty$, agents' 
  limiting 
  beliefs, under our endogenous weight adjustment process, are given
  by convex combinations of all agents' initial beliefs, 
  whereby adjusting based on initial beliefs captures, in the
  social influence weights $\lambda_j$, the intelligence of agents while 
  adjusting based on limiting beliefs leads agents to uniform social
  influence 
  weights $\lambda_j$. 
\end{remark}
Now, consider, again, the setup where there are two groups of agents, which we
denote by $\mathcal{N}_1$ and $\mathcal{N}_2$, respectively; the
first groups' initial beliefs are unbiased while the second groups'
initial beliefs are biased, where we assume that agents within each group
have independently and identically distributed initial
beliefs. Assume, furthermore, that 
$F_{i,k}(B_{k,\eta})>0$ for all 
agents $i=1,\ldots,n$. 
\begin{corollary}\label{cor:noepswisdom}
  Let $\tau=0$ or $\tau=\infty$ and 
  let $\eta\ge 0$, the radius within which agents are considered to be
  truthful, be fixed. 
  Then, if the group of biased agents $\mathcal{N}_2$ is `large
  enough' (e.g., relative to $\mathcal{N}_1$), 
  agents will not become $\epsilon$-wise almost surely as
  $n,k\goesto\infty$,  
  for any $\epsilon\in(0,\norm{\mu_k-\Exp[Z_{\mathcal{N}_2,k}]})$, whereby
  $Z_{\mathcal{N}_2,k}$ denotes a random variable that represents the
  distribution of initial beliefs of any agent from group $\mathcal{N}_2$. 
\end{corollary}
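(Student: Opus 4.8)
The plan is to combine the characterizations of limiting consensus beliefs from Propositions \ref{prop:general1} and \ref{prop:general2} with a second‑moment (Chebyshev) concentration argument in the population size $n$. Fix a topic $X_k$ with $k$ large enough that $k>r$, so that a consensus is reached and the cited propositions apply (by Proposition \ref{prop:consensus} and Remark \ref{rem:distR} this holds with probability tending to $1$). By Proposition \ref{prop:general1} when $\tau=0$, and by Proposition \ref{prop:general2} when $\tau=\infty$, the common limiting belief $b^k(\infty)$ satisfies, in distribution, $b^k(\infty)\sim\sum_{j=1}^n\lambda_jZ_{j,k}$ with the $Z_{j,k}$ independent, where $\lambda_j\propto F_{j,k}(B_{k,\eta})$ in the first case and $\lambda_j=\tfrac1n$ in the second. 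Since the agents within $\mathcal{N}_1$ (resp.\ $\mathcal{N}_2$) are identically distributed, the weights are constant within each group; write $\lambda_{\mathcal{N}_2}=\sum_{j\in\mathcal{N}_2}\lambda_j$, so that $\lambda_{\mathcal{N}_2}=n_2/n$ when $\tau=\infty$ and $\lambda_{\mathcal{N}_2}=\frac{n_2\,F_{\mathcal{N}_2,k}(B_{k,\eta})}{n_1\,F_{\mathcal{N}_1,k}(B_{k,\eta})+n_2\,F_{\mathcal{N}_2,k}(B_{k,\eta})}$ when $\tau=0$ (both well-defined because $F_{i,k}(B_{k,\eta})>0$).

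First I would compute the two moments of $b^k(\infty)$. The unbiased agents have $\Exp[Z_{j,k}]=\mu_k$ and the biased agents have $\Exp[Z_{j,k}]=\Exp[Z_{\mathcal{N}_2,k}]$, so $\Exp[b^k(\infty)]=(1-\lambda_{\mathcal{N}_2})\mu_k+\lambda_{\mathcal{N}_2}\Exp[Z_{\mathcal{N}_2,k}]$, whence $\norm{\Exp[b^k(\infty)]-\mu_k}=\lambda_{\mathcal{N}_2}\norm{\Exp[Z_{\mathcal{N}_2,k}]-\mu_k}$. By independence and within-group constancy of the weights, $\Var[b^k(\infty)]=\sum_j\lambda_j^2\Var[Z_{j,k}]$ equals $\frac{(1-\lambda_{\mathcal{N}_2})^2}{n_1}\Var[Z_{\mathcal{N}_1,k}]+\frac{\lambda_{\mathcal{N}_2}^2}{n_2}\Var[Z_{\mathcal{N}_2,k}]$ when $\tau=0$ and $\tfrac1{n^2}\bigl(n_1\Var[Z_{\mathcal{N}_1,k}]+n_2\Var[Z_{\mathcal{N}_2,k}]\bigr)$ when $\tau=\infty$; since the weights lie in $[0,1]$, in both cases this is $O(1/n_1+1/n_2)\goesto 0$ as $n\goesto\infty$, provided the per-topic initial-belief variances are finite (this finiteness is the one extra standing hypothesis I would state explicitly). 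Chebyshev's inequality then gives, for every $\rho>0$, $\text{Pr}\bigl[\,\norm{b^k(\infty)-\Exp[b^k(\infty)]}\ge\rho\,\bigr]\goesto 0$ as $n\goesto\infty$.

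To finish, set $D_k:=\norm{\Exp[Z_{\mathcal{N}_2,k}]-\mu_k}>0$ and fix $\epsilon\in(0,D_k)$. The phrase ``$\mathcal{N}_2$ large enough'' I would read as: $\mathcal{N}_2$ grows fast enough relative to $\mathcal{N}_1$ that $\liminf_n\lambda_{\mathcal{N}_2}>\epsilon/D_k$; for instance $n_1/n_2\goesto0$ forces $\lambda_{\mathcal{N}_2}\goesto1$ in both formulas, which works simultaneously for every $\epsilon<D_k$. Pick $\rho>0$ with $\epsilon+\rho<\lambda_{\mathcal{N}_2}D_k$ for all large $n$. Then on the event $\norm{b^k(\infty)-\Exp[b^k(\infty)]}<\rho$, the triangle inequality gives $\norm{b^k(\infty)-\mu_k}\ge\lambda_{\mathcal{N}_2}D_k-\rho>\epsilon$, so no agent is $\epsilon$-wise for $X_k$; by the Chebyshev bound this event has probability $\to1$ as $n\goesto\infty$. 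Since the argument applies to every sufficiently large $k$, agents fail to be $\epsilon$-wise in the joint limit $n,k\goesto\infty$ (and, for $n$ fixed but large, independence of the topic realizations yields non-$\epsilon$-wisdom for infinitely many topics almost surely), which is the assertion.

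The main obstacle is not the arithmetic above but the foundation it rests on: Proposition \ref{prop:general1} is only argued heuristically --- rows of $\mathbf{W}^{(k)}$ becoming ``approximately'' identical with entries proportional to $F_{j,k}(B_{k,\eta})$ --- so a fully rigorous proof must either upgrade that proposition to a quantitative estimate or at least check that the weight-approximation error, and hence the error in $b^k(\infty)$, stays negligible uniformly as $n$ also grows, not just for fixed $n$. A secondary care point is the order of limits: I would take $k$ large first, run the $n\goesto\infty$ concentration, and only then send $k\goesto\infty$, and I would make explicit the finite-variance (ideally uniform-in-$k$) assumption on the initial-belief distributions that Chebyshev's step silently needs.
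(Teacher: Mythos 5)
Your proposal is correct and follows essentially the same route as the paper: both arguments read off from Propositions \ref{prop:general1} and \ref{prop:general2} that the limiting consensus is a convex combination with group weight $\lambda_{\mathcal{N}_2}$ on the biased agents, observe that $\lambda_{\mathcal{N}_2}\goesto 1$ when $n_2$ dominates $n_1$, and then concentrate the consensus around its expectation as $n\goesto\infty$ (the paper invokes the law of large numbers where you use an explicit Chebyshev bound). Your version is somewhat more careful — making the finite-variance hypothesis, the order of limits, and the reliance on the heuristic Proposition \ref{prop:general1} explicit — but these are refinements of, not departures from, the paper's argument.
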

\begin{proof}
  By Proposition \ref{prop:general1} and its proof, if $\tau=0$,
  $\lambda_{\mathcal{N}_l}=\sum_{j\in \mathcal{N}_l}\lambda_j\approxeq
  \frac{F_{\mathcal{N}_l,k}(B_{k,\eta})n_l}{C_0}$ as 
  $k\goesto\infty$ (by $F_{\mathcal{N}_l,k}$, we denote the distribution
  function of an agent from group $\mathcal{N}_l$; also note that
  $F_{N_l,k}(B_{k,\eta})$ does not depend on $k$ by assumption), where 
  $l=1,2$ and $C_0=\lambda_{\mathcal{N}_1}+\lambda_{\mathcal{N}_2}$. Thus, if $n_2$ is
  large enough (relative to $n_1$), $\lambda_{\mathcal{N}_2}$ be may
  arbitrarily 
  close to $1$ such that, in expectation, agents' limiting consensus
  belief will 
  be arbitrarily close to $\Exp[Z_{\mathcal{N}_2,k}]$, whereby, by assumption,
  $Z_{\mathcal{N}_2,k}$ is a 
  biased variable. As $n\goesto\infty$, limiting beliefs will converge
  to $\Exp[Z_{\mathcal{N}_2,k}]$ almost surely, in this case, by the
  law of large numbers. 

  If $\tau=\infty$, Proposition \ref{prop:general2} leads to the same
  conclusion. 
\end{proof}
\begin{remark}
  What Corollary \ref{cor:noepswisdom} shows is that agents may not
  become \emph{infinitely} wise under our endogenous weight adjustment
  process if the group of agents with biased initial beliefs becomes large,
  as, in this case, this group's social influence will become
  arbitrarily large. 
  But the corollary shows more: agents may not become $\epsilon$-wise
  for any $\epsilon\in (0,\norm{\mu_k-\Exp[Z_{\mathcal{N}_2,k}]})$, 
  which may be an
  arbitrarily large interval, depending on the bias of the agents in
  $\mathcal{N}_2$.  
  In other words, if the number of
  biased agents is large (relative to the number of
  intelligent agents), the wisdom that society as a whole can attain
  is limited by the latter agents' bias. 
\end{remark}

\subsection{Varying weights on own beliefs}\label{sec:demarzo}
DeMarzo, Vayanos, and Zwiebel (2003) \cite{Demarzo2003} consider a
slight generalization of belief updating
process \eqref{eq:degrootupdate} where agents may place varying
weights on their own beliefs such that \eqref{eq:degrootupdate} reads as 
\begin{align}\label{eq:degrootupdatedemarzo}
  \mathbf{b}^k(t+1)
  = \Bigl((1-\lambda_t)\mathbf{I}_n+\lambda_t\mathbf{W}^{(k)}\Bigr)\mathbf{b}^k(t) 
\end{align}
whereby $0< \lambda_t\le 1$ (note that we treat $\lambda_t$ as an
exogenous variable). Such a weighting scheme may be
empirically plausible, as it has been found (cf., e.g., Mannes
(2009) \cite{Mannes2009}) that 
people often tend to overweight their own beliefs relative to that of
outsiders, probably because individuals have access to their own
motivations for beliefs while they do not have such justification for
others' beliefs. This reasoning would imply that $\lambda_t$ is
`relatively small'. However, as long as weights on others' beliefs do not
drop to zero too quickly, belief updating
rule \eqref{eq:degrootupdatedemarzo} leads to the same limiting
beliefs as the original DeGroot updating
rule \eqref{eq:degrootupdate} where $\lambda_t=1$, for all $t$,
provided that the latter converges; 
convergence may take sufficiently 
longer, however. Hence, under these circumstances, all our previous
results remain valid. The following proposition is a straightforward
generalization of the corresponding theorem, Theorem 1, in DeMarzo,
Vayanos, and Zwiebel (2003) \cite{Demarzo2003}, which restates the
lessons we have just mentioned. 
\begin{proposition}\label{prop:demarzo}
  Assume that $\mathbf{W}^{(k)}$ converges (for all initial belief
  vectors $\mathbf{b}(0)$), then if,
  $\sum_{t=1}^\infty\lambda_t=\infty$, updating
  process \eqref{eq:degrootupdatedemarzo} also converges (for all
  initial belief vectors $\mathbf{b}(0)$) and leads to the same
  limiting beliefs as \eqref{eq:degrootupdate} where $\lambda_t=1$ for
  all $t$.  
\end{proposition}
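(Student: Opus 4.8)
The plan is to exploit the fact that, for a fixed topic index $k$, the iteration operators $\mathbf{M}_t:=(1-\lambda_t)\mathbf{I}_n+\lambda_t\mathbf{W}^{(k)}$ occurring in \eqref{eq:degrootupdatedemarzo} are all first–degree polynomials in the single matrix $\mathbf{W}:=\mathbf{W}^{(k)}$, hence share its invariant–subspace structure. First I would recall the standard characterisation of convergence for a row–stochastic matrix: $\mathbf{W}$ is convergent for every $\mathbf{b}(0)$ --- equivalently, $\mathbf{W}^t$ converges to a limit $\mathbf{W}^\infty$ --- precisely when $1$ is a semisimple eigenvalue of $\mathbf{W}$ and all other eigenvalues lie strictly inside the unit disk. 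Using this, I would decompose $\real^n=V_1\oplus V_{<1}$, where $V_1=\ker(\mathbf{W}-\mathbf{I}_n)$ and $V_{<1}$ is the sum of the (real) generalised eigenspaces attached to the eigenvalues of modulus $<1$. Both summands are $\mathbf{W}$-invariant, hence $\mathbf{M}_t$-invariant; the restriction $\mathbf{W}|_{V_{<1}}$ has spectral radius $r<1$; and $\mathbf{W}^\infty$ is exactly the projection onto $V_1$ along $V_{<1}$, i.e.\ the identity on $V_1$ and $0$ on $V_{<1}$.

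Next I would estimate the finite products $\mathbf{P}_T:=\mathbf{M}_{T-1}\mathbf{M}_{T-2}\cdots\mathbf{M}_0$ on each summand. On $V_1$ every vector is fixed by $\mathbf{W}$, so each $\mathbf{M}_t$, and hence $\mathbf{P}_T$, restricts to the identity there. On $V_{<1}$ I would fix $\epsilon>0$ with $r+\epsilon<1$ and choose a norm $\norm{\cdot}$ on $V_{<1}$ in which the operator norm of $\mathbf{W}|_{V_{<1}}$ is at most $r+\epsilon$ (an adapted norm with this property exists for any operator whose spectral radius is below $r+\epsilon$). Since $0<\lambda_t\le 1$, the restriction $\mathbf{M}_t|_{V_{<1}}=(1-\lambda_t)\mathbf{I}+\lambda_t\,\mathbf{W}|_{V_{<1}}$ is a convex combination of operators of norm $\le 1$ and $\le r+\epsilon$, so
\[
  \norm{\mathbf{M}_t|_{V_{<1}}}\ \le\ (1-\lambda_t)+\lambda_t(r+\epsilon)\ =\ 1-\delta\lambda_t,\qquad \delta:=1-r-\epsilon>0 .
\]
By submultiplicativity and $1-x\le e^{-x}$ (valid here since $0\le\delta\lambda_t\le 1$),
\[
  \norm{\mathbf{P}_T|_{V_{<1}}}\ \le\ \prod_{t=0}^{T-1}(1-\delta\lambda_t)\ \le\ \exp\!\Bigl(-\delta\sum_{t=0}^{T-1}\lambda_t\Bigr)\ \longrightarrow\ 0 \quad\text{as }T\to\infty,
\]
and this last limit is precisely where the hypothesis $\sum_{t}\lambda_t=\infty$ is used.

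Putting the two pieces together, $\mathbf{P}_T$ converges to the operator that is the identity on $V_1$ and zero on $V_{<1}$, i.e.\ to $\mathbf{W}^\infty$. Hence for every $\mathbf{b}(0)$ the process \eqref{eq:degrootupdatedemarzo} converges, with $\mathbf{b}^k(\infty)=\mathbf{W}^\infty\mathbf{b}(0)$, which is the same limit as that of the plain DeGroot update \eqref{eq:degrootupdate} (the case $\lambda_t\equiv 1$), whose iterates satisfy $\mathbf{W}^t\to\mathbf{W}^\infty$ by assumption. This proves the proposition.

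I expect the only genuinely delicate point to be the contraction estimate on $V_{<1}$: one must justify the existence of a norm making $\mathbf{W}|_{V_{<1}}$ a strict contraction, so that the convex–combination bound actually yields a factor $1-\delta\lambda_t<1$, and one must check that the splitting $V_1\oplus V_{<1}$ is genuinely $\mathbf{W}$-invariant --- which is exactly the content of the convergence hypothesis on $\mathbf{W}$, via the characterisation of convergent stochastic matrices. The order of the factors in $\mathbf{P}_T$ is irrelevant, since only submultiplicativity of the norm is invoked, and everything else (restricting to invariant subspaces, passing to the limit) is routine.
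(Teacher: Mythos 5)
Your proof is correct, but it takes a genuinely different route from the one in the paper. The paper follows DeMarzo, Vayanos, and Zwiebel: it introduces independent Bernoulli variables $\Lambda_t$ with $\Pr[\Lambda_t=1]=\lambda_t$, observes that the random product $\prod_{s<t}\mathbf{W}(\Lambda_s)$ collapses to the single power $\mathbf{W}^{\sum_{s<t}\Lambda_s}$ whose expectation is exactly the deterministic product $\prod_{s<t}\mathbf{W}(\lambda_s)$, invokes the second Borel--Cantelli lemma to get $\sum_t\Lambda_t=\infty$ almost surely from $\sum_t\lambda_t=\infty$, and concludes by dominated convergence. You instead argue deterministically: you use the spectral characterisation of convergent stochastic matrices ($1$ semisimple, all other eigenvalues strictly inside the unit disk), split $\real^n=V_1\oplus V_{<1}$ into $\mathbf{W}$-invariant (hence $\mathbf{M}_t$-invariant, since each $\mathbf{M}_t$ is a polynomial in $\mathbf{W}$) summands, note that the products act as the identity on $V_1$, and on $V_{<1}$ use an adapted norm plus convexity to get the factor $1-\delta\lambda_t$ and the bound $\exp(-\delta\sum_t\lambda_t)\to 0$. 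Both arguments are complete and hinge on the same two inputs, $0<\lambda_t\le 1$ and $\sum_t\lambda_t=\infty$. What your version buys is an explicit convergence rate, $\norm{\mathbf{P}_T-\mathbf{W}^\infty}=O\bigl(\exp(-\delta\sum_{t<T}\lambda_t)\bigr)$, which directly quantifies the paper's informal remark that ``convergence may take sufficiently longer,'' and it avoids any measure-theoretic machinery; what the paper's version buys is brevity and the ability to reuse the convergence of $\mathbf{W}^t$ as a black box without ever examining the spectrum, since the whole product is rewritten as a single (random) power of $\mathbf{W}$. The one point you flag as delicate --- the existence of an adapted norm making $\mathbf{W}|_{V_{<1}}$ a strict contraction --- is indeed the only nonroutine ingredient, and it is a standard fact, so no gap remains.
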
 
We list the proof in the appendix.

\bigskip

In all subsequent sections, we only discuss the situations when
$\tau=0$ and $T(\cdot)>0$, as the other cases may be derived in a
manner similar to what we have sketched in this section.

\section{Opposition}\label{sec:opposition}
In this section, we consider the situation when two subsets of agents
`oppose' each other. Such opposition may derive, for example, from
in-group 
vs.\ out-group antagonisms, as is an important concept in psychology
and sociology (cf.\ Brewer (1979) \cite{Brewer1979}, Castano, Yzerbyt,
Bourguignon, and Seron (2002) \cite{Castano2002}, 
Kitts (2006) \cite{Kitts2006})
and as has also more recently been taken into account in economics 
models (cf., in an experimental context, e.g., Charness, Rigotti, and
Rustichini 
(2007) \cite{Charness2007}, Fehrler and Kosfeld
(2013) \cite{Fehrler2013}) and in social network theory (cf.\ Beasley
and Kleinberg (2010) \cite{Beasley2010}). 
Prime exemplars of 
opposition forces can be found in politics (e.g., democrats
vs.\ republicans; opposition parties vs.\ governing party in charge),
for example, 
or also on a more 
global 
societal level (e.g., 
punks or hippies/counterculture vs.\ mainstream culture). 
In the context of (DeGroot-like) opinion dynamics models, opposition
has, prominently, been discussed in Eger (2013) \cite{Eger2013} (but
see also our discussion in Section \ref{sec:related}), whose
modeling we relate to. 

In the model of Eger (2013) \cite{Eger2013}, there are \emph{two}
types of links between agents. One link type refers to whether agents
follow or oppose each other and the other link type denotes the
intensity of relationship and is given by a non-negative real number
$W_{ij}\in\real$. Belief updating is then performed via the operation
\begin{align}\label{eq:egerupdate}
  \mathbf{b}^k(t+1) = (\mathbf{W}^{(k)}\circ\mathbf{F}^{(k)})\mathbf{b}^k(t),
\end{align}
whereby the operator $\mathbf{W}^{(k)}\circ\mathbf{F}^{(k)}$ is
defined via
\begin{align*}
  \bigl((\mathbf{W}^{(k)}\circ\mathbf{F}^{(k)})(\mathbf{b})\bigr)_{i}=
  \sum_{j=1}^n W_{ij}^{(k)}F_{ij}^{(k)}(b_{j}),
\end{align*}
whereby $F_{ij}^{(k)}\in\set{F,D}$, where $F:S\goesto S$ is the
identity function  
(`agent $i$ follows agent $j$') and $D:S\goesto S$ is an opposition
function (`agent 
$i$ opposes/deviates from agent $j$'). In other words, in this model,
agents form their current beliefs by inverting (via $D$) or not (`via
$F$') the 
past belief signals of others and then taking a weighted arithmetic
average, as in standard DeGroot learning, 
of the so modified (or not) belief signals of their neighbors. As
becomes evident, endogenizing this model would require endogenizing two
variables, namely, $F_{ij}^{(k)}$ and $W_{ij}^{(k)}$, a task that is
beyond the scope of this section. Therefore, we take $F_{ij}^{(k)}$ as
exogenous and keep, as before, $W_{ij}^{(k)}$ as an endogenous variable,
formed, in the case that $F_{ij}=F$, by reference to an agent's past
performance.\footnote{If $F_{ij}=D$, it would make no sense, or be at
  least problematic, to posit that an agent would increase his
  intensity of relationship, relating to opposition behavior, in
  proportion to another 
agent's accuracy of predicting truth.}  

\begin{figure}
        \centering
        \begin{tikzpicture}[-,>=stealth',shorten >=1pt,auto,node distance=2.5cm,
  thick,main node/.style={circle,fill=blue!20,draw,font=\sffamily\Large\bfseries}]

  \begin{scope}
  \node[main node] (1) {1};
  \node[main node] (2) [below left of=1] {2};
  \node[main node] (3) [below right of=2] {3};
  \node[main node] (4) [below right of=1] {4};

  \node[main node] (5) [right= 4.75cm of 1] {5};
  \node[main node] (6) [below left of=5] {6};
  \node[main node] (7) [below right of=6] {7};

  \tikzset{Friend/.style   = {
                                 double          = green,
                                 double distance = 1pt}}
  \tikzset{Enemy/.style   = {
                                 double          = red,
                                 double distance = 1pt}}

  \draw[Friend](1) to (2); 
  \draw[Friend](1) to (3);
  \draw[Friend](1) to (4);
  \draw[Friend](2) to (3);
  \draw[Friend](2) to (4);
  \draw[Friend](3) to (4);

  \draw[Friend](5) to (6);
  \draw[Friend](5) to (7);
  \draw[Friend](6) to (7);

  \draw[Enemy](1) to (5);
  \draw[Enemy](1) to (6);
  \draw[Enemy](3) to (6);
  \draw[Enemy](3) to (7);
  \draw[Enemy](4) to (6);
  \begin{pgfonlayer}{background}
    \filldraw [line width=4mm,join=round,black!10]
      (1.north  -| 2.west)  rectangle (3.south  -| 4.east)
      (5.north -| 6.west) rectangle (7.south -| 7.east);
  \end{pgfonlayer}
\end{scope}
\end{tikzpicture}
        \caption{
        Illustration of an opposition bipartite operator
        $\mathbf{F}$ (agent nodes as blue circles). For convenience,
        $D$ relationships are indicated 
        in red, and $F$ relationships in green. Here, we draw the
        network of agents as undirected, although we generally allow
        directed links between agents.}      
        \label{fig:bip}
\end{figure}
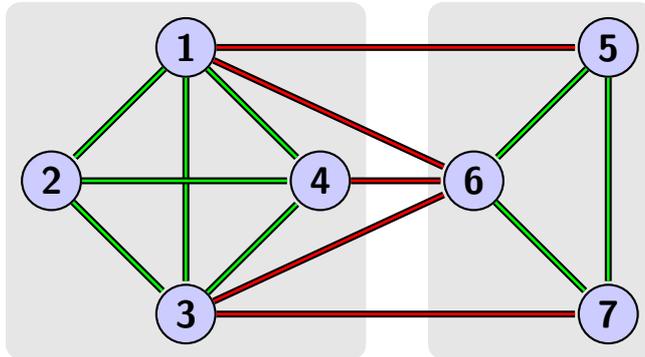
Hence, we consider the following situation. Denote by
$\struct{A}\subseteq [n]$ and $\struct{B}\subseteq[n]$ the two groups
of agents that oppose each other. We posit that $\mathbf{F}$ is
\emph{opposition bipartite} (cf.\ Figure \ref{fig:bip}): for all agents
$i,i'\in\struct{A}$ it holds that $F_{ii'}=F_{i'i}=F$ (analogously for
$\struct{B}$) and for all agents $i\in\struct{A}$ and
$i'\in\struct{B}$ it holds that $F_{ii'}=F_{i'i}=D$, which simply
means that agents within the two groups follow each other whereas agents
across the two groups oppose each other.\footnote{This specificiation
  is not self-evident; intra-group antagonisms, based, e.g., on
  personalized differences between members of the same group, might
  plausibly be allowable.} Next, we assume that, regarding weight
adjustments, members of both groups \emph{ignore} members of the other
group (a sin or bias of omission), taking into account only members of
their own group, that is, 
\begin{align}\label{eq:adjusteger}
  W_{ij}^{(k+1)} = 
      \begin{cases}
        {W}_{ij}^{(k)}+\delta\cdot T(\abs{N({\mathbf{b}^k(\tau)},\mu_k)}) & \text{if }
        \norm{b_{j}^k(\tau)-\mu_k}<\eta \text{ and } G(i)=G(j),\\
        {W}_{ij}^{(k)}& \text{otherwise},
      \end{cases}
\end{align}
where $G(i)$ denotes the group of agent $i$, which is either
$\struct{A}$ or $\struct{B}$; for simplicity, assume $T(\cdot)=1$,
here and in the remainder of this section. Finally, we assume that
agents $i$ of group 
$\struct{A}$ initially assign uniform intensity of relationship
$W_{ij}^{(1)}=
b$ to each member $j$ of
group $\struct{B}$ and members of group $\struct{B}$ do analogously,
assigning $W_{ij}^{(1)}=
c$, where
$b$ and $c$ are positive constants. We
also 
assume that these levels stay fixed over topics, that is,
$W_{ij}^{(k)}=W_{ij}^{(1)}$ whenever $G(i)\neq G(j)$. When
$G(i)=G(j)$, as said, we let weights be formed according to
\eqref{eq:adjusteger}. Finally, we always assume that weight matrices
$\mathbf{W}^{(k)}$ are row-stochastic. We now discuss the so specified
model, with endogenous weight (or intensity) formation for at least a
subset of 
agents, in the following example. For opposition function $D$, we let
$D$ be soft opposition on $S=\real$ (see Eger (2013) \cite{Eger2013} for
details) with the functional form $D(x)=-x$. We 
first outline 
the
following proposition from Eger (2013) \cite{Eger2013}, which gives
conditions for convergence of $\mathbf{W}\circ\mathbf{F}$, where we omit,
here and in the following, reference to topics $X_k$ for notational
convenience.  
\begin{proposition}\label{prop:convergenceToPol}
  Let $D$ be soft opposition on $S=\real$. 
  Then, 
  $\mathbf{W}\circ\mathbf{F}$ is 
  affine-linear with representation $(\mathbf{A},\mathbf{0})$. Then,
  if $\mathbf{F}$ is opposition bipartite,
  $\lambda=1$ is an eigenvalue of $\mathbf{A}$. If $\lambda=1$ is the
  only eigenvalue of $\mathbf{A}$ on the unit circle and if
  $\lambda=1$ 
  has algebraic multiplicity of $1$, then 
  $\lim_{t\goesto\infty}(\mathbf{W}\circ\mathbf{F})^t\mathbf{b}(0)=\mathbf{p}$
  for some polarization opinion vector $\mathbf{p}$ (that depends on
  $\mathbf{b}(0)$) and all initial opinion vectors $\mathbf{b}(0)\in
  S^n$. 
\end{proposition}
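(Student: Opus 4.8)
The plan is to exploit that, for soft opposition, $D(x)=-x$ is \emph{linear}, so $\mathbf{W}\circ\mathbf{F}$ is a linear map, and that the opposition-bipartite structure makes its matrix an involution-conjugate of the row-stochastic matrix $\mathbf{W}$. Let $\mathbf{v}\in\set{-1,+1}^n$ be the vector with $v_i=+1$ for $i\in\struct{A}$ and $v_i=-1$ for $i\in\struct{B}$, and set $\mathbf{D}=\diag(\mathbf{v})$, so $\mathbf{D}^2=\mathbf{I}_n$ and $\mathbf{D}^{-1}=\mathbf{D}$. Because $\mathbf{F}$ is opposition bipartite, $F_{ij}=F$ exactly when $G(i)=G(j)$, i.e.\ when $v_iv_j=+1$, and $F_{ij}=D$ exactly when $v_iv_j=-1$; in either case $F_{ij}(x)=(v_iv_j)\,x$. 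Hence $\bigl((\mathbf{W}\circ\mathbf{F})\mathbf{b}\bigr)_i=\sum_j W_{ij}F_{ij}(b_j)=\sum_j v_iv_jW_{ij}b_j=(\mathbf{D}\mathbf{W}\mathbf{D}\,\mathbf{b})_i$, so $\mathbf{W}\circ\mathbf{F}$ is affine-linear with representation $(\mathbf{A},\mathbf{0})$, where $\mathbf{A}:=\mathbf{D}\mathbf{W}\mathbf{D}$. (Linearity with zero translation part, incidentally, uses only that $F$ and $D$ are linear, not bipartiteness.)

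Next I would record the eigenvalue claim. From $\mathbf{A}=\mathbf{D}\mathbf{W}\mathbf{D}^{-1}$ the matrices $\mathbf{A}$ and $\mathbf{W}$ are similar, hence have the same characteristic polynomial and thus the same eigenvalues with the same multiplicities. Row-stochasticity gives $\mathbf{W}\one_n=\one_n$, so $\lambda=1$ is an eigenvalue of $\mathbf{W}$, hence of $\mathbf{A}$, with eigenvector $\mathbf{D}\one_n=\mathbf{v}$ (indeed $\mathbf{A}\mathbf{v}=\mathbf{D}\mathbf{W}\mathbf{D}\mathbf{v}=\mathbf{D}\mathbf{W}\one_n=\mathbf{D}\one_n=\mathbf{v}$). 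This also anticipates the form of the limit below.

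For convergence, by the similarity the two hypotheses --- $\lambda=1$ is the only eigenvalue of $\mathbf{A}$ on the unit circle, and it has algebraic multiplicity $1$ --- hold verbatim for $\mathbf{W}$; and since $\mathbf{W}$ is row-stochastic all its eigenvalues have modulus $\le 1$, so every eigenvalue other than $\lambda=1$ has modulus strictly less than $1$. Passing to the Jordan normal form of $\mathbf{W}$, the block at $\lambda=1$ is $1\times 1$ while the blocks at the remaining eigenvalues vanish upon taking the $t$-th power as $t\goesto\infty$; hence $\mathbf{W}^t$ converges. Its limit $\mathbf{W}^\infty$ has rank $1$ and, since $\mathbf{W}^t\one_n=\one_n$ for all $t$, image $\Span\set{\one_n}$; thus $\mathbf{W}^\infty=\one_n\mathbf{q}^\intercal$ for a row vector $\mathbf{q}^\intercal$ with $\mathbf{q}^\intercal\one_n=1$. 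Using $\mathbf{A}^t=\mathbf{D}\mathbf{W}^t\mathbf{D}$ and $\mathbf{D}\one_n=\mathbf{v}$,
\begin{align*}
  \lim_{t\goesto\infty}(\mathbf{W}\circ\mathbf{F})^t\mathbf{b}(0)
    &= \lim_{t\goesto\infty}\mathbf{D}\,\mathbf{W}^t\,\mathbf{D}\,\mathbf{b}(0)
     = \mathbf{D}\bigl(\one_n\mathbf{q}^\intercal\bigr)\mathbf{D}\,\mathbf{b}(0)\\
    &= \Bigl(\sum_{i=1}^n v_iq_ib_i(0)\Bigr)\mathbf{v}
     \;=:\; \mathbf{p}.
\end{align*}
The vector $\mathbf{p}=c\,\mathbf{v}$ with $c:=\sum_i v_iq_ib_i(0)$ assigns the common value $c$ to every agent of $\struct{A}$ and $-c$ to every agent of $\struct{B}$, so it is a polarization opinion vector in the sense of Eger (2013) \cite{Eger2013}; it depends on $\mathbf{b}(0)$ only through the scalar $c$, and the derivation is valid for every $\mathbf{b}(0)\in S^n$.

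The one step needing care is the bookkeeping in the first paragraph: confirming that ``$\mathbf{F}$ opposition bipartite'' is exactly the identity $F_{ij}(x)=(v_iv_j)x$ for all $i,j$, which is what lets $\mathbf{A}$ factor as the $\mathbf{D}$-conjugate of $\mathbf{W}$. After that the argument is standard spectral theory of row-stochastic matrices (one could instead invoke Theorem~\ref{theorem:degroot} applied to $\mathbf{W}$, once a power of $\mathbf{W}$ is seen to have a strictly positive column, but the Jordan-form route is cleaner and needs no additional hypothesis). Conjugation by the involution $\mathbf{D}$ preserves the absolute row sums of $\mathbf{W}$ but flips signs across the $\struct{A}/\struct{B}$ split, which is precisely why the invariant direction becomes the polarization vector $\mathbf{v}$ rather than the consensus direction $\one_n$.
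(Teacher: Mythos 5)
Your proof is correct. Note, however, that the paper does not actually prove Proposition \ref{prop:convergenceToPol}: it imports the statement from Eger (2013) \cite{Eger2013} and only verifies the spectral hypotheses in a special case (the appendix lemmas computing the characteristic polynomial of $\mathbf{A}$ for $n_1=1$, plus the cited bound that all eigenvalues of $\mathbf{A}$ lie in the closed unit disc). So your argument is a self-contained replacement rather than a variant of an existing proof. Its organizing idea --- that opposition bipartiteness is exactly the identity $F_{ij}(x)=v_iv_j\,x$ for the sign vector $\mathbf{v}$, so that $\mathbf{A}=\mathbf{D}\mathbf{W}\mathbf{D}$ with $\mathbf{D}=\diag(\mathbf{v})$ an involution --- is sound and buys several things at once: it reproduces the block form \eqref{eq:A} used in Example \ref{example:opposition}; it gives the spectral radius bound $\abs{\lambda}\le 1$ for free from row-stochasticity of $\mathbf{W}$ (which the paper instead cites from Eger (2013), Proposition 6.3); it transports the two spectral hypotheses verbatim to $\mathbf{W}$, where the Jordan-form convergence argument is standard; and it identifies the limit as $c\,\mathbf{v}$ with $c=\sum_i v_iq_ib_i(0)$, i.e.\ it exhibits the generalized influence vector $\mathbf{s}=\mathbf{D}\mathbf{q}$ of Example \ref{example:opposition} (satisfying $\sum_j\length{s_j}=1$) as the sign-twisted stochastic left eigenvector of $\mathbf{W}$. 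The only presentational wrinkle is that the proposition asserts affine-linearity of $\mathbf{W}\circ\mathbf{F}$ with zero translation part \emph{before} assuming bipartiteness, whereas your displayed computation already uses $F_{ij}(x)=v_iv_jx$; you flag this correctly in the parenthetical (for general $\mathbf{F}$ one simply takes $A_{ij}=\pm W_{ij}$ according to $F_{ij}\in\set{F,D}$), so nothing is missing.
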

  In the proposition, a polarization opinion (or belief) vector is any
  vector $\mathbf{p}$ consisting of two beliefs $a,b\in S$ such that
  $D(a)=b$ and $D(b)=a$. For our specification of $D$, this would mean
  that $a=-b$. 
  Hence, the proposition says that if $D$ is soft opposition, then
  $\mathbf{W}\circ\mathbf{F}$ is representable by a matrix
  $\mathbf{A}$, and if in addition $\mathbf{F}$ is opposition
  bipartite (as we assume), then convergence of
  $\mathbf{W}\circ\mathbf{F}$ depends on the eigenvalues of
  $\mathbf{A}$. 
  In the following example, we will make reference to the
  proposition. 
\begin{example}\label{example:opposition}
  Let
  $n_1=\abs{\struct{A}}$ and $n_2=\abs{\struct{B}}$ with
  $n_1+n_2=n$. 
  Before (partly) endogenizing $\mathbf{W}$,  
  assume first that $\mathbf{W}$ and $\mathbf{F}$ have the
  following form,
  \begin{align}\label{eq:formWF}
    \mathbf{W} = \begin{pmatrix}
    \mathbf{W}_{\mathcal{A},\mathcal{A}}
    & \mathbf{W}_{\mathcal{A},\mathcal{B}}\\
    \mathbf{W}_{\mathcal{B},\mathcal{A}} & \mathbf{W}_{\mathcal{B},\mathcal{B}}
    \end{pmatrix}
    \quad
    \mathbf{F} = 
    \begin{pmatrix}
    \mathbf{F}_{\mathcal{A},\mathcal{A}}
    & \mathbf{F}_{\mathcal{A},\mathcal{B}}\\
    \mathbf{F}_{\mathcal{B},\mathcal{A}} & \mathbf{F}_{\mathcal{B},\mathcal{B}}
    \end{pmatrix}
  \end{align}
%
  where $[\mathbf{W}_{\mathcal{A},\mathcal{A}}]_{ij}=a$,
  $[\mathbf{W}_{\mathcal{A},\mathcal{B}}]_{ij}=b$,
  $[\mathbf{W}_{\mathcal{B},\mathcal{A}}]_{ij}=c$,
  $[\mathbf{W}_{\mathcal{B},\mathcal{B}}]_{ij}=d$, and 
  $[\mathbf{F}_{\mathcal{A},\mathcal{A}}]_{ij}=[\mathbf{F}_{\mathcal{B},\mathcal{B}}]_{ij}=F$,
  $[\mathbf{F}_{\mathcal{A},\mathcal{B}}]_{ij}=[\mathbf{F}_{\mathcal{B},\mathcal{A}}]_{ij}=D$;
  matrices $\mathbf{W}_{\mathcal{A},\mathcal{A}}$ and
  $\mathbf{F}_{\mathcal{A},\mathcal{A}}$ are of size $n_1\times n_1$,
  $\mathbf{W}_{\mathcal{A},\mathcal{B}}$ and
  $\mathbf{F}_{\mathcal{A},\mathcal{B}}$ of size $n_1\times n_2$,
  etc. 
  Hence,
  agents in $\struct{A}$ follow each other, assigning weight
  $a$ to each other, and agents in $\struct{B}$ assign weight $d$ to
  each other; across the two sets, agents oppose each other, with
  weights $b$ and $c$, respectively, as already indicated above. 
  Moreover, for simplicity, as the given specification posits, we
  assume that weights are uniform within 
  groups and opposition weights are also uniformly distributed. 
  Since, as 
  Proposition \ref{prop:convergenceToPol} tells, the so defined
  $\mathbf{W}\circ\mathbf{F}$ allows an 
  (affine-)linear representation, this is given by, in this setup, 
  \begin{align}\label{eq:A}
    \mathbf{A} = 
    \begin{pmatrix}
       \mathbf{W}_{\mathcal{A},\mathcal{A}}
    & -\mathbf{W}_{\mathcal{A},\mathcal{B}}\\
    -\mathbf{W}_{\mathcal{B},\mathcal{A}} & \mathbf{W}_{\mathcal{B},\mathcal{B}}
    \end{pmatrix},
  \end{align}
  as one can 
  verify (cf.\ Eger (2013) \cite{Eger2013}). 
  Furthermore, if
  $(\mathbf{W}\circ\mathbf{F})^t\mathbf{b}(0)=\mathbf{A}^t\mathbf{b}(0)$
  converges to a polarization vector (e.g., under the conditions of
  Proposition \ref{prop:convergenceToPol}), then the one limiting belief is
  given by $\sum_{j=1}^n s_jb_j(0)$ and the other is given by
  $-\sum_{j=1}^n s_jb_j(0)$, where
  $\mathbf{s}=(s_1,\ldots,s_n)^\intercal$ is 
  the unique
  eigenvector of $\mathbf{A}^\intercal$ satisfying
  $\sum_{j=1}^n\length{s_j}=1$ and corresponding to eigenvalue
  $\lambda=1$ of $\mathbf{A}^\intercal$ (cf.\ Eger (2013)
  \cite{Eger2013}, Remark 6.4). The vector $\mathbf{s}$ is 
  then a (generalized) social influence vector (cf.\ the concept of
  eigenvector centrality, e.g., Bonacich (1972) \cite{Bonacich1972}),
  with $\length{s_i}$ 
  denoting the social influence (proper) of agent $i$ and
  $\text{sgn}(s_i)$ his group membership. Since, by our
  specification of $\mathbf{W}\circ\mathbf{F}$, agents in group 
  $\struct{A}$ must have the same social influence (by homogeneity of
  these agents due to the uniform weight structure) as well as agents
  in group $\struct{B}$, we must 
  accordingly have that
  $\mathbf{s}=(\underbrace{x,\ldots,x}_{n_1},\underbrace{y,\ldots,y}_{n_2})^\intercal$
  for some $x,y\in\real$. 
  Then, $y$ (or $\length{y}$) measures social influence of members of
  group $\struct{B}$ and accordingly for $\struct{A}$. 
  Hence, from
  $\mathbf{A}^\intercal\mathbf{s}=\mathbf{s}$, we find (1)
  $n_1ax-n_2cy=x$, (2) $-n_1bx+n_2dy=y$, and (3) $n_1x-n_2y=1$ (from
  the unit condition on $\mathbf{s}$). From this, it follows that
  \begin{align}\label{eq:closed}
    y = \frac{b}{n_2(d-b)-1},\quad\text{and}\quad x=(1+n_2y)a-n_2cy. 
  \end{align}
  The case of $y$ may serve as an illustration. 
  Computing the comparative statics of $\length{y}$ with respect to
  $b$ and $d$, we first find that
  since $n_2(d-b)\le
  n_2d\le 1$, it holds that $y\le 0$. Hence,
  $\length{y}=\frac{b}{1-n_2(d-b)}$ and then,
  \begin{align*}
  \pardiv{\length{y}}{b}=\frac{1-n_2d}{\bigl(1-n_2(d-b)\bigr)^2}\ge
  0,\quad 
  \pardiv{\length{y}}{d}=\frac{n_2b}{\bigl(1-n_2(d-b)\bigr)^2}>
  0
  \end{align*}
  such that 
  an increase in $d$ leads to an increase
  in the absolute value of $y$, as we would expect --- if the weight
  that members of group $\struct{B}$ place upon each other increases,
  their social influence, measured in absolute value, 
  increases. Moreover, $\length{y}$ also increases in $b$ --- the more
  members 
  of group $\struct{A}$ want to disassociate from members of group
  $\struct{B}$, the more does group $\struct{B}$'s social influence
  increase, in absolute value. We exemplify in Figure \ref{fig:pol}
    (left).  
  
  Hence, under our current assumptions, limiting polarization beliefs
  of agents are given by $\sum_{j\in[n]}s_jb_j(0)=\sum_{j\in\struct{A}}xb_j(0)+\sum_{j\in\struct{B}}yb_j(0)$ and
  $-\sum_{j\in[n]}s_jb_j(0)=-\sum_{j\in\struct{A}}xb_j(0)-\sum_{j\in\struct{B}}yb_j(0)$,
  respectively. Let us, for the moment,  
  assume that all agents are $\epsilon$-intelligent with $\epsilon=0$,
  that is, all agents precisely receive truths for topics, as initial
  beliefs. Then, limiting 
  beliefs are, thus,  
  \begin{align*}
    b_{\mathcal{A}}^k(\infty)=\sum_{j\in[n]}s_jb_j(0)=\mu_k\underbrace{\Bigl(n_1x+n_2y\Bigr)}_{=\mathsf{c}},\quad
    \text{and}\quad  b_{\mathcal{B}}^k(\infty) = 
    -\sum_{j\in[n]}s_jb_j(0)=\mu_k\Bigl(\underbrace{-\bigl(n_1x+n_2y\bigr)}_{=-\mathsf{c}}\Bigr), 
  \end{align*}
  respectively, where closed-form solutions of $x$ and $y$ are given
  in Equation \eqref{eq:closed}. In Figure \ref{fig:pol} (right), we
  plot, for $c=\frac{1}{2n}$ and $d=\frac{1-n_1c}{n_2}$ fixed, the
  coefficient 
  $\mathsf{c}=n_1x+n_2y$ of limiting 
  beliefs (and its negative, as coefficient of the other limiting
  belief), as a  
  function of $b$ (and, hence, also of $a$ since
  $a=\frac{1-n_2b}{n_1}$); note that this 
  coefficient denotes the `scaling' of truth in the limiting beliefs,
  whence, if it is equal to $1$, (some) agents exactly reach truth. We
  observe the following: if $b$ is very low (compared with $c$), i.e.,
  agents 
  in group $\struct{A}$ care little about agents in group $\struct{B}$
  (at least relatively) --- that is, opposition from $\struct{A}$ to
  $\struct{B}$ is (relatively) low --- then $\mathsf{c}$ is very close
  to $1$, which means that agents in group $\struct{A}$ have limiting
  beliefs very close to truth, while agents in group $\struct{B}$ hold
  limiting beliefs that are very close to $-\mu_k$, the `opposite' of
  truth. As $b$ increases, $\mathsf{c}$ becomes smaller, approaching
  zero as $b=c$. In other words, if opposition `force' is equal between
  groups $\struct{A}$ and $\struct{B}$ --- in the sense that $b=c$ ---
  then both groups reach limiting beliefs of zero, no matter what
  truth is. As group $\struct{A}$ begins to oppose group $\struct{B}$
  more heavily than $\struct{B}$ opposes $\struct{A}$, that is, $b>c$,
  group $\struct{A}$ goes further away from truth, toward opposite
  levels of truth in that $\mathsf{c}$ becomes negative, while group
  $\struct{B}$ begins to approach truth. 
  \begin{figure*}[!ht]
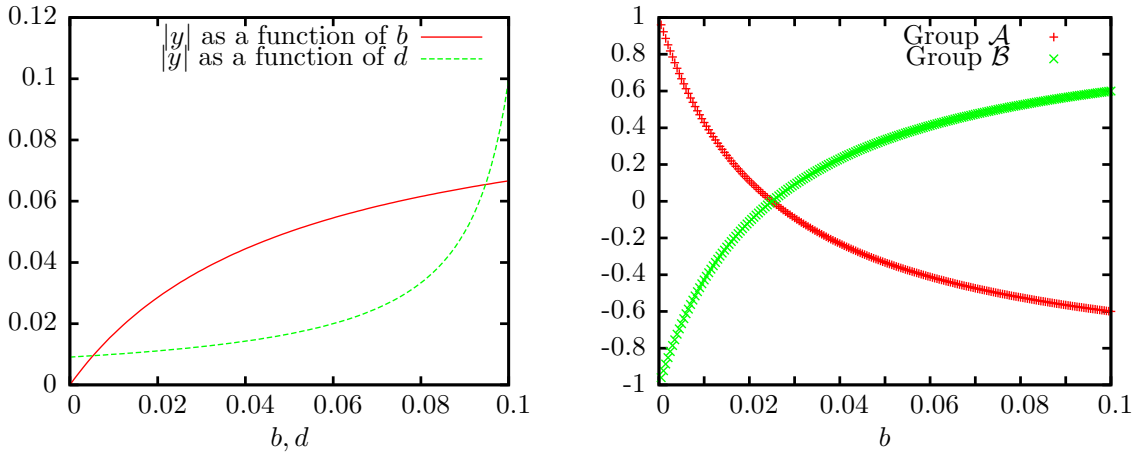

  \centering
  \begin{subfigure}{0.49\textwidth}
        \centering  
        \resizebox{1.0\textwidth}{!}{
        \input{plots/y.tex}}
    \end{subfigure}
  \begin{subfigure}{0.49\textwidth}
    \centering
    \resizebox{1.0\textwidth}{!}{
      \input{plots/polarization.tex}}
    \end{subfigure}
    \caption
{ 
  Both graphs: $n=n_1+n_2=10+10=20$ agents. 
  Left: Social influence, $\length{y}$, of agent $i$ of group
  $\struct{B}$ increases both as 
  a function 
  of $d$ ($b$ fixed), `within-group trust' of members of group
  $\struct{B}$, and 
  $b$ ($d$ fixed), the importance assigned to members of group
  $\struct{B}$ via agents of 
  group $\struct{A}$. Note that $n_2b\le 1$ and $n_2d\le 1$ (by
  row-stochasticity of weight matrices $\mathbf{W}$), which
  implies, in our case, $b,d\le \frac{1}{10}$.  
  Right: $c=\frac{1}{2n}=0.025$, $d=\frac{3}{2n}$ fixed. 
  Coefficient 
  $\mathsf{c}=(n_1x+n_2y)$ of $\mu_k$ (red) and $-\mathsf{c}$ (green)
  as a function 
  of $b$. Description in 
  text. 
  }
\label{fig:pol}
\end{figure*}

  Now, concerning the question whether the conditions on the
  eigenvalues of matrix $\mathbf{A}$, stated in Proposition
  \ref{prop:convergenceToPol}, are satisfied --- that is, do agents in
  fact converge to a polarization? --- we mention that exactly
  determining the spectrum of $\mathbf{A}$ is difficult in the current 
  situation, for general $n_1$ and $n_2$, and $a,b,c,d$. For $n_1=1$
  and $n_2$ arbitrary (and, by symmetry hence also for $n_2=1$ and
  $n_1$ arbitrary), we find, in the appendix, that  
  $\mathbf{A}$
  has exactly one eigenvalue, namely $\lambda=1$, on the unit circle
  and whose algebraic multiplicity is $1$. Thus, in this case, by
  Proposition \ref{prop:convergenceToPol}, 
  beliefs under $\mathbf{W}\circ\mathbf{F}$ indeed converge to a
  polarization, as we have sketched it, and limiting beliefs have the
  indicated form. We strongly suspect that this is true for arbitrary
  $n_1$ and $n_2$, but leave the derivation open. 

  Finally, when would we expect $\mathbf{W}\circ\mathbf{F}$ to have
  the form \eqref{eq:formWF}, taking the form of $\mathbf{F}$ as
  exogenous? The structure of $\mathbf{W}$ holds, for instance, when 
  $\mathbf{W}^{(1)}$ has the form indicated in \eqref{eq:formWF},
  agents adjust weights (to members of their own group) based on
  initial beliefs, $\tau=0$, and, e.g., $\epsilon=0$ (agents' initial
  beliefs are exactly $\mu_k$); then all $\mathbf{W}^{(k)}$ have the
  form as given in \eqref{eq:formWF}. Form \eqref{eq:formWF} also
  arises, in the limit, as $k$ becomes large, when $\tau=0$ and
  initial beliefs are stochastically centered around truth and each
  agent has the same variance (namely, agents then tend to assign
  uniform weights to those they take into consideration in adjusting
  weights; uniform weights for outgroup members have been assumed
  exogenous by us, anyways). In fact, the simulations shown in Figure
  \ref{fig:polSim}, 
  for the latter case, show good agreement with
  the analytical predictions for the situation when all agents are
  $\epsilon$-intelligent, for $\epsilon=0$, even for small $k$,
  indicating that $\mathbf{W}^{(k)}$ has a form close
  to \eqref{eq:formWF} quickly, when agents are stochastically
  intelligent with identical variances. 
\end{example}

\begin{figure*}[!ht]
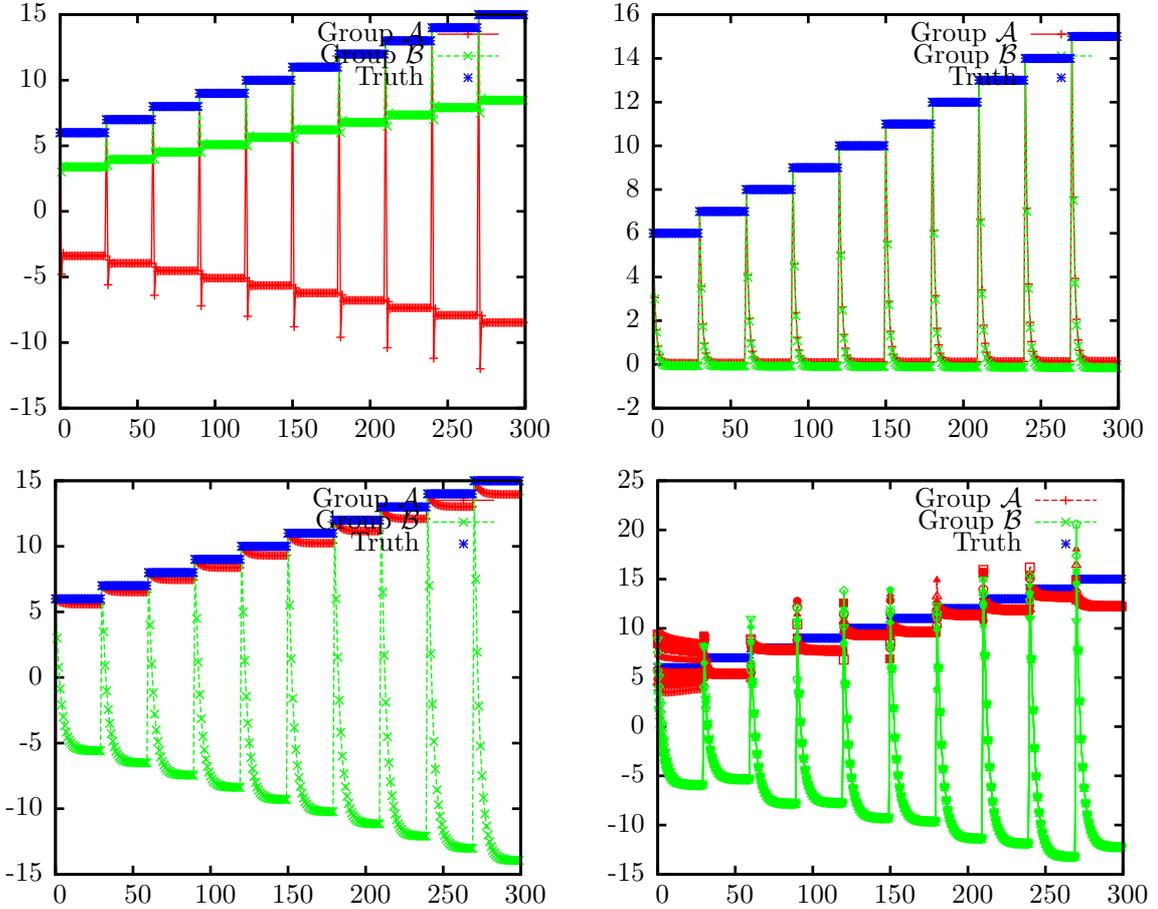

  \centering
      \begin{subfigure}{0.49\textwidth}
        \centering  
        \resizebox{1.0\textwidth}{!}{
        \input{plots/opposition0-09.tex}}
    \end{subfigure}%
    \begin{subfigure}{0.49\textwidth}
        \centering 
        \resizebox{1.0\textwidth}{!}{
        \input{plots/opposition0-0245.tex}}
    \end{subfigure}\\%
    \begin{subfigure}{0.49\textwidth}
        \centering  
        \resizebox{1.0\textwidth}{!}{
        \input{plots/opposition0-0009.tex}}
    \end{subfigure}
    \begin{subfigure}{0.49\textwidth}
        \centering  
        \resizebox{1.0\textwidth}{!}{
        \input{plots/opposition0-0009r.tex}}
    \end{subfigure}
    \caption
{ 
  Throughout: $n=n_1+n_2=10+10=20$ agents, $c=\frac{1}{2n}=0.025$,
  $d=\frac{3}{2n}$ fixed. Discussion of $10$ topics; 
  $t=0,1,2,\ldots,20$ discussion periods shown, for each topic. Truth
  $\mu_k=k+5$, for 
  $k=1,\ldots,10$ (black lines). In red and green: 
  members of groups $\struct{A}$ and $\struct{B}$, respectively.   
  Top left: All agents receive initial beliefs $b_i^k(0)=\mu_k$,
  $b=0.09$ (cf.\ Figure \ref{fig:pol} (right)). Top right:
  $b_i^k(0)=\mu_k$, 
  $b=0.0245$. Bottom left: $b_i^k(0)=\mu_k$,
  $b=0.0009$. Bottom right: $b_i^k(0)=\mathtt{N}(\mu_k,4)$,
  $b=0.0009$. 
  }
    \label{fig:polSim}
\end{figure*}
\begin{remark}
  In Example \ref{example:opposition}, we have outlined conditions
  under which we expect, due to polarization, at most one group of
  agents to become wise for topics. The conditions that we have
  highlighted --- e.g., $\epsilon$-intelligence, for $\epsilon=0$, or
  initial beliefs stochastically centered around truth whereby all
  agents have identical variances --- might appear quite special. We
  believe that similar polarization results hold for much more general
  conditions, but those outlined have the benefit of being
  analytically tractable more easily while still indicating results,
  as we think, of a
  general nature. 
\end{remark}
\begin{remark}
  To summarize the importance of results indicated in Example
  \ref{example:opposition}, note that in this section, agents have
  been influenced by two `polar' forces. On the one hand, they were
  attracted by truth by their adherence to principles that
  potentially lead them closer to truth --- e.g., weight adjustment to
  those agents in their in-group that have been truthful in the past. On
  the other hand, agents had --- exogenously --- specified antagonisms
  to members of another group (a sin or bias of commission), their
  outgroup, which drew them toward 
  beliefs that are different from those held by their adversaries. The
  message from Example \ref{example:opposition} is clear in this
  context: the group that has (relatively) stronger incentives to
  disassociate from negative referents tendentially will drift away from truth
  considerably, while the group with (relatively) weaker such
  incentives may still become wise (under appropriate initial conditions
  on beliefs), which is an intuitive result since, for the former
  group, disassociation seems to be (relatively) more critical than
  truth. 
\end{remark}


\section{Conformity}\label{sec:conformity}
Buechel, Hellmann, and Kl\"{o}{\ss}ner (2013) \cite{Buchel2013} and
Buechel, Hellmann, and Kl\"{o}{\ss}ner
(2012) \cite{Buchel2012}\footnote{Henceforth, we only relate to the
more recent version of their paper, unless the difference becomes
important.} study 
a DeGroot-like opinion dynamics model under \emph{conformity}, 
that is, where individuals are not
only \emph{informationally} socially influenced by others 
but also \emph{normatively} in that they are
motivated to state opinions that tend to fit the group norm, possibly,
in order 
to get ``utility gain[s] by simply making the same choice as one's
reference group'' (cf.\ Zafar 2011 \cite{Zafar2011}, p.\ 774). A
classical example of such conforming behavior is documented in the
famous study of Asch (1955) \cite{Asch1955} where subjects wrongly
judged the length of a stick after some other, supposedly neutral,
participants had given the same wrong judgment. More examples and
relevant theoretical and empirical literature, e.g., Deutsch and
Gerard (1955) \cite{Deutsch1955}, Jones (1984)
\cite{Jones1984}, etc., on conforming behavior among human subjects are
directly 
provided in Buechel, Hellmann, and Kl\"{o}{\ss}ner (2013)
\cite{Buchel2013}. As we have indicated in the introduction, we may
perceive of conformity to a reference opinion, in our context, as a bias
toward  
the beliefs of 
one's reference group. 

Mathematically, agents in the named model update their beliefs
\emph{informationally} 
according to the 
following rule,
\begin{align}\label{eq:buechel}
  \mathbf{b}(t+1) =
  \mathbf{D}\mathbf{b}(t)+(\mathbf{W}-\mathbf{D})\mathbf{s}(t), 
\end{align} 
where $\mathbf{s}(t)\in S^n$ denotes the vector of \emph{stated
  opinions} or beliefs (whose formation, as assumed, underlies
normative social pressure, as we indicate below), 
$\mathbf{b}(t)\in S^n$ denotes the vector of \emph{true beliefs},
$\mathbf{W}$ is the social network (or, `learning matrix')
as in the standard DeGroot model, and $\mathbf{D}$ denotes its
diagonal. Updating rule \eqref{eq:buechel} says that agents form their
current beliefs by taking a weighted arithmetic average of their past
true beliefs and others' stated beliefs. Then, as concerns \emph{normative
social influence}, agents are assumed to choose stated beliefs
$s_i(t)$ 
by reference to the 
utility maximization 
problem 
\begin{align}\label{eq:buechel_util}
  u_i(\mathbf{s};\mathbf{b}) = -(1-\delta_i)(s_i-b_i)^2-\delta_i(s_i-q_i)^2, 
\end{align}
whereby the term $(s_i-b_i)^2$ represents an agent's preference for
honesty (misrepresenting true opinions may cause cognitive discomfort, 
cf.\ Festinger (1957) \cite{Festinger1957}) and the term $(s_i-q_i)^2$
represents 
preference for conforming 
to a reference opinion
$q_i$. 
The parameter $\delta_i\in(-1,+1)$ displays the relative importance of
the 
preference for conformity in relation to the preference for
honesty. If $\delta_i<0$, then agents have preference for
\emph{counter-conformity} in that their reference group serves as a negative
referent. Now, consider that at the end of each (opinion updating) round
$t=0,1,2,\ldots$, agents play a normal form game
$([n],S^n,u_i(\cdot;b_i(t)))$. Let
$\mathbf{q}=(q_1,\ldots,q_n)^\intercal$ and let
$\mathbf{q}(t)=\mathbf{Q}\mathbf{s}(t)$ where $\mathbf{Q}$ is an
$n\times n$ matrix that indicates how reference opinions are formed
from stated opinions; we assume that $Q_{ii}=0$ for all $i\in[n]$ such
that agents do not take into account their own stated opinion in
reference opinion formation\footnote{They know better anyway, by knowledge of
their true opinions.} and we also assume that $\mathbf{Q}$ is
row-stochastic. 
The next proposition says that the normal form
game has a unique Nash equilibrium.
\begin{proposition}\label{prop:buechel1}
  Denote by
  $\mathbf{\Delta}$ 
  the diagonal matrix with
  $\Delta_{ii}=\delta_i$. Then the normal form game
  $([n],S^n,u(\cdot;\mathbf{b}(t)))$, for
  $u(\cdot;\mathbf{b}(t))=(u_1(\cdot;\mathbf{b}(t)),\ldots,u_n(\cdot;\mathbf{b}(t)))$, 
  has a unique Nash equilibrium, which  
  is given by
  \begin{align*}
    \mathbf{s}^*(t) = \inv{(\mathbf{I}_n-\mathbf{\Delta}
      \mathbf{Q})}(\mathbf{I}_n-\mathbf{\Delta})\mathbf{b}(t) =
    \tilde{\mathbf{Q}}\mathbf{b}(t). 
  \end{align*}
\end{proposition}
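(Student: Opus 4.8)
The plan is to characterize Nash equilibria of $([n],S^n,u(\cdot;\mathbf{b}(t)))$ via the players' first-order conditions and then solve the resulting linear system. Fix $t$ and write $\mathbf{b}=\mathbf{b}(t)$. For each agent $i$, since $Q_{ii}=0$, the reference opinion $q_i=\sum_{j}Q_{ij}s_j$ does not depend on $s_i$; hence, holding $\mathbf{s}_{-i}$ fixed, the map $s_i\mapsto u_i(\mathbf{s};\mathbf{b})=-(1-\delta_i)(s_i-b_i)^2-\delta_i(s_i-q_i)^2$ is a quadratic in $s_i$ with second derivative $-2(1-\delta_i)-2\delta_i=-2<0$, so it is strictly concave and coercive (tends to $-\infty$ as $|s_i|\to\infty$). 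Therefore agent $i$'s best response is the unique solution of $\partial u_i/\partial s_i=0$, namely
\begin{align*}
  (1-\delta_i)(s_i-b_i)+\delta_i(s_i-q_i)=0
  \quad\Longleftrightarrow\quad
  s_i=(1-\delta_i)b_i+\delta_i q_i.
\end{align*}
First I would record that a strategy profile $\mathbf{s}^*$ is a Nash equilibrium if and only if it simultaneously satisfies these $n$ equations, i.e.\ $\mathbf{s}^*=(\mathbf{I}_n-\mathbf{\Delta})\mathbf{b}+\mathbf{\Delta}\mathbf{Q}\mathbf{s}^*$, which rearranges to the linear system $(\mathbf{I}_n-\mathbf{\Delta}\mathbf{Q})\mathbf{s}^*=(\mathbf{I}_n-\mathbf{\Delta})\mathbf{b}$.

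Next I would show that $\mathbf{I}_n-\mathbf{\Delta}\mathbf{Q}$ is invertible, which simultaneously yields existence and uniqueness of the equilibrium and the announced closed form $\mathbf{s}^*=(\mathbf{I}_n-\mathbf{\Delta}\mathbf{Q})^{-1}(\mathbf{I}_n-\mathbf{\Delta})\mathbf{b}=\tilde{\mathbf{Q}}\mathbf{b}$. The key estimate is on the maximum absolute row sum: since $\mathbf{Q}$ is row-stochastic with nonnegative entries and $\delta_i\in(-1,1)$,
\begin{align*}
  \norm{\mathbf{\Delta}\mathbf{Q}}_\infty = \max_{i\in[n]}\sum_{j=1}^n\abs{\delta_i Q_{ij}}
  = \max_{i\in[n]}\abs{\delta_i}\sum_{j=1}^n Q_{ij} = \max_{i\in[n]}\abs{\delta_i} < 1.
\end{align*}
Hence the Neumann series $\sum_{m\ge 0}(\mathbf{\Delta}\mathbf{Q})^m$ converges and equals $(\mathbf{I}_n-\mathbf{\Delta}\mathbf{Q})^{-1}$ (equivalently, $1$ is not an eigenvalue of $\mathbf{\Delta}\mathbf{Q}$ since its spectral radius is at most $\norm{\mathbf{\Delta}\mathbf{Q}}_\infty<1$). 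So the linear system has the unique solution displayed, and since best responses are uniquely pinned down by the first-order conditions, this solution is the unique Nash equilibrium.

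I expect the only subtle point — the ``main obstacle'' — to be handling the unbounded strategy space $S^n=\real^n$ cleanly: one must justify that best responses are genuinely attained (not merely approached), which follows from strict concavity plus coercivity of each $u_i$ in $s_i$ as noted above, so that the FOC is both necessary and sufficient for individual optimality. Everything else — assembling the FOCs into matrix form and invoking the row-sum bound from $|\delta_i|<1$ together with row-stochasticity of $\mathbf{Q}$ to get invertibility of $\mathbf{I}_n-\mathbf{\Delta}\mathbf{Q}$ — is routine. (One could alternatively cite Rosen's theorem on concave games for existence/uniqueness, but the direct linear-algebra argument is self-contained and also delivers the explicit formula for $\tilde{\mathbf{Q}}$.)
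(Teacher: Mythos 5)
Your proposal is correct and follows essentially the same route as the paper: derive each agent's best response from the first-order condition of the strictly concave quadratic $u_i$, assemble the $n$ conditions into the linear system $(\mathbf{I}_n-\mathbf{\Delta}\mathbf{Q})\mathbf{s}^*=(\mathbf{I}_n-\mathbf{\Delta})\mathbf{b}$, and invert via the Neumann series using $\length{\delta_i}<1$ together with row-stochasticity of $\mathbf{Q}$. Your direct bound $\norm{\mathbf{\Delta}\mathbf{Q}}_\infty=\max_i\length{\delta_i}<1$ is just a slightly more streamlined version of the paper's norm estimate, and your explicit remark on coercivity/attainment of best responses on the unbounded strategy space is a welcome but minor tightening of what the paper leaves implicit.
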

We prove Proposition \ref{prop:buechel1} in the appendix. The 
proposition is a (straightforward) extension of the corresponding
proposition, Proposition 1, in Buechel, Hellmann, and Kl\"{o}{\ss}ner
(2012) \cite{Buchel2012} in that they choose the particular
$\mathbf{Q}$ with $Q_{ij}=\frac{W_{ij}}{1-W_{ii}}$ ($Q_{ii}=0$). In the revised
version of their paper, the named authors also specify an iterative
process that explains how agents reach the Nash equilibrium
$\mathbf{s}^*(t)$ but we omit the recapitulation of this idea, because
it is rather technical and does not provide further insight at this
point. 

Hence, simply assuming that agents play the Nash equilibrium
$\mathbf{s}^*(t)$ at the end of each period $t$ (such that, for $t+1$,
$\mathbf{b}(t)$ \emph{and} $\mathbf{s}(t)$ are available), beliefs
evolve 
according to, combining \eqref{eq:buechel} with $\mathbf{s}^*(t)$,
\begin{align*}
  \mathbf{b}^k(t+1) = \mathbf{M}^{(k)}\mathbf{b}^k(t),
\end{align*}
where 
\begin{align}\label{eq:M}
  \mathbf{M}^{(k)} = \mathbf{D}^{(k)}+(\mathbf{W}^{(k)}-\mathbf{D}^{(k)})\tilde{\mathbf{Q}}^{(k)}=\mathbf{D}^{(k)}+(\mathbf{W}^{(k)}-\mathbf{D}^{(k)})\inv{(\mathbf{I}-\mathbf{\Delta}^{(k)}\mathbf{Q}^{(k)})}(\mathbf{I}-\mathbf{\Delta}^{(k)}),
\end{align}
and where we also index matrices by topic indices. As becomes
obvious, this model has now many variables that can potentially be
endogenized, namely, $\mathbf{W}^{(k)}$, $\mathbf{\Delta}^{(k)}$,
which summarizes the conformity parameters, and $\mathbf{Q}^{(k)}$,
which summarizes how agents form reference opinions. In the following,
we take $\mathbf{\Delta}^{(k)}$ as exogenously given and constant
across 
topics; the elements $[\mathbf{\Delta}^{(k)}]_{ii}=\delta_i$ may then
be perceived as `personality traits' of individuals. For
$\mathbf{W}^{(k)}$, we assume the same endogenous weight formation as
before, where weight increments depend upon past performance. The
matrix $\mathbf{Q}^{(k)}$, we take as arbitrary exogenous variable first,
satisfying row-stochasticity and $Q_{ii}=0$ as above, and specialize
then in the examples.  

Our first proposition paths the way for a convergence result in our
situation. It says that the property of having a positive column
propagates from $\mathbf{W}^{(k)}$ to $\mathbf{M}^{(k)}$ if no agent
is counter-conforming. 
\begin{proposition}\label{prop:column}
  Let $k\ge 1$ be arbitrary. 
  Let $\delta_i\ge 0$ for all $i\in[n]$ such that agents never
  counter-conform. Then, if $\mathbf{W}^{(k)}$ has a 
  positive column, then so does $\mathbf{M}^{(k)}$.
\end{proposition}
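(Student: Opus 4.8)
The plan is to exploit the explicit form $\mathbf{M}^{(k)}=\mathbf{D}^{(k)}+(\mathbf{W}^{(k)}-\mathbf{D}^{(k)})\tilde{\mathbf{Q}}^{(k)}$ from \eqref{eq:M} and to show that whichever column of $\mathbf{W}^{(k)}$ is strictly positive, that \emph{same} column of $\mathbf{M}^{(k)}$ is strictly positive. Throughout I would fix the topic index $k$ and suppress it in the notation.

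First I would check that $\tilde{\mathbf{Q}}=\inv{(\mathbf{I}-\mathbf{\Delta}\mathbf{Q})}(\mathbf{I}-\mathbf{\Delta})$ is an entrywise nonnegative matrix with strictly positive diagonal. This is where the hypothesis $\delta_i\ge 0$ enters, and it is the only place it is used. Since $0\le\delta_i<1$ and $\mathbf{Q}$ is nonnegative and row-stochastic, $\mathbf{\Delta}\mathbf{Q}$ is nonnegative with maximal row sum $\max_i\delta_i<1$, so its spectral radius is $<1$; hence $\mathbf{I}-\mathbf{\Delta}\mathbf{Q}$ is invertible and $\inv{(\mathbf{I}-\mathbf{\Delta}\mathbf{Q})}=\sum_{m\ge 0}(\mathbf{\Delta}\mathbf{Q})^m$ converges entrywise to a nonnegative matrix whose diagonal entries are $\ge 1$ (from the $m=0$ term $\mathbf{I}$). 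Right-multiplying by the positive diagonal matrix $\mathbf{I}-\mathbf{\Delta}$ (entries $1-\delta_i>0$) then gives $\tilde{\mathbf{Q}}\ge\mathbf{0}$ and $\tilde{Q}_{\ell\ell}\ge 1-\delta_\ell>0$ for each $\ell$. (One can additionally verify $\tilde{\mathbf{Q}}\one_n=\one_n$, so $\tilde{\mathbf{Q}}$ is itself row-stochastic, but only nonnegativity and the positive diagonal are needed below.)

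Next I would record that $\mathbf{M}$ is nonnegative and write out a single entry. Since $\mathbf{D}$ is the diagonal of the row-stochastic $\mathbf{W}$ it is nonnegative, and $\mathbf{W}-\mathbf{D}$ is $\mathbf{W}$ with its diagonal set to zero, hence also nonnegative; combined with $\tilde{\mathbf{Q}}\ge\mathbf{0}$, every entry of $\mathbf{M}=\mathbf{D}+(\mathbf{W}-\mathbf{D})\tilde{\mathbf{Q}}$ is $\ge 0$. Explicitly,
\begin{align*}
  M_{ij}=D_{ij}+\sum_{\ell=1}^{n}(W_{i\ell}-D_{i\ell})\,\tilde{Q}_{\ell j},
\end{align*}
where $D_{ij}=W_{ii}$ if $i=j$ and $D_{ij}=0$ otherwise, and $W_{i\ell}-D_{i\ell}=W_{i\ell}$ for $\ell\neq i$ and $=0$ for $\ell=i$.

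The remaining (and essentially only substantive) step is to locate the positive column. Let $j^{\ast}$ be such that $W_{ij^{\ast}}>0$ for all $i$. For $i\neq j^{\ast}$, the sum in the displayed formula includes the term $\ell=j^{\ast}$, which equals $W_{ij^{\ast}}\,\tilde{Q}_{j^{\ast}j^{\ast}}>0$ by the two positivity facts just established, while every other summand is $\ge 0$; hence $M_{ij^{\ast}}>0$. For $i=j^{\ast}$ one gets $M_{j^{\ast}j^{\ast}}=W_{j^{\ast}j^{\ast}}+\sum_{\ell\neq j^{\ast}}W_{j^{\ast}\ell}\,\tilde{Q}_{\ell j^{\ast}}\ge W_{j^{\ast}j^{\ast}}>0$, since $W_{j^{\ast}j^{\ast}}$ is itself one of the (positive) entries of column $j^{\ast}$ of $\mathbf{W}$. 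Therefore column $j^{\ast}$ of $\mathbf{M}^{(k)}$ is strictly positive, which is the claim. I do not expect a genuine obstacle here; the only delicate point is the first step, i.e.\ ensuring $\tilde{\mathbf{Q}}$ is nonnegative with a strictly positive diagonal, which would fail if counter-conforming agents ($\delta_i<0$) were allowed, since then $\mathbf{\Delta}\mathbf{Q}$ and hence $\tilde{\mathbf{Q}}$ may acquire negative entries.
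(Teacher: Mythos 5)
Your proof is correct and follows essentially the same route as the paper's: both establish via the Neumann series that $\tilde{\mathbf{Q}}=\inv{(\mathbf{I}-\mathbf{\Delta}\mathbf{Q})}(\mathbf{I}-\mathbf{\Delta})$ is nonnegative with strictly positive diagonal, then observe that the off-diagonal entries of column $j^{\ast}$ of $\mathbf{M}$ pick up the positive term $W_{ij^{\ast}}\tilde{Q}_{j^{\ast}j^{\ast}}$ while the diagonal entry is covered by $D_{j^{\ast}j^{\ast}}=W_{j^{\ast}j^{\ast}}>0$. Your entrywise bookkeeping is if anything slightly more explicit than the paper's matrix-level argument, but the ideas are identical.
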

\begin{proof}
  By the proof of Proposition \ref{prop:buechel1}, given in the
  appendix, the inverse of $\mathbf{I}_n-\mathbf{\Delta}^{(k)}\mathbf{Q}^{(k)}$
  always exists (as long as $\length{\delta_i}<1$, which we assume
  throughout) and is given by $\sum_{r=0}^\infty
  \bigl(\mathbf{\Delta}^{(k)}\mathbf{Q}^{(k)}\bigr)^r$. Since
  $\delta_i\ge 0$ and since $\mathbf{Q}^{(k)}$ is assumed to be 
  row-stochastic, the latter sum is a sum of non-negative matrices and
  therefore, the infinite sum yields a matrix with non-negative
  entries. Moreover, since $\mathbf{A}^0=\mathbf{I}_n$ for any
  arbitrary matrix $\mathbf{A}$, all diagonals of $\sum_{r=0}^\infty
  \bigl(\mathbf{\Delta}^{(k)}\mathbf{Q}^{(k)}\bigr)^r$ are hence
  strictly positive ($\mathbf{I}_n$ has strictly positive diagonals
  and the remaining summands are all non-negative). Moreover, since
  $\mathbf{P}:=\mathbf{I}_n-\mathbf{\Delta}^{(k)}$ is a diagonal matrix with each
  entry $P_{ii}\in(0,1]$, 
    \begin{align*}
      \tilde{\mathbf{P}}=
      \inv{\Bigl(\mathbf{I}_n-\mathbf{\Delta}^{(k)}\mathbf{Q}^{(k)}\Bigr)}\bigl(\mathbf{I}_n-\mathbf{\Delta}^{(k)}\bigr)
    \end{align*}
    accordingly also has diagonal entries that are all strictly
    positive. Next, since $\mathbf{W}^{(k)}$ has a strictly positive
    column $j$ by assumption, $\mathbf{W}^{(k)}-\mathbf{D}^{(k)}$ has a
    strictly positive column $j$, except for 
    element $j$ 
    of that column,
    which is zero. Hence, multiplying,
    $\mathbf{W}^{(k)}-\mathbf{D}^{(k)}$, a non-negative matrix by
    assumption, with $\tilde{\mathbf{P}}$ 
    results in a matrix that also has a strictly positive column $j$,
    except possibly for its diagonal. But since $\mathbf{D}^{(k)}$ has
    a positive entry $[\mathbf{D}^{(k)}]_{jj}$ (since column $j$ of
    $\mathbf{W}^{(k)}$ is positive by assumption),
    \begin{align*}
      \mathbf{D}^{(k)}+(\mathbf{W}^{(k)}-\mathbf{D}^{(k)})\inv{(\mathbf{I}-\mathbf{\Delta}^{(k)}\mathbf{Q}^{(k)})}(\mathbf{I}-\mathbf{\Delta}^{(k)})
    \end{align*}
    has a positive column $j$. The latter matrix is, by definition
    (Eq. \ref{eq:M}), precisely the matrix $\mathbf{M}^{(k)}$. 
\end{proof}
As a 
corollary, 
we have our first convergence (to consensus)
result, which 
provides both an alternative to the convergence result provided in
Buechel, Hellmann, and Kl\"{o}{\ss}ner (2013) \cite{Buchel2013}, and a
generalization as well as a strengthening. It is more general since it
considers arbitrary  
$\mathbf{Q}$ rather than the peculiar choice that the named authors
consider. It provides an alternative since it says that conformity and
a positive column of $\mathbf{W}^{(k)}$ are sufficient conditions for
convergence, while the proposition in Buechel, Hellmann,
and Kl\"{o}{\ss}ner (2013) \cite{Buchel2013} states that conformity
and a positive \emph{diagonal} of $\mathbf{W}^{(k)}$ are sufficient
conditions for 
convergence. Finally, it is a strengthening because it states
convergence to \emph{consensus} rather than merely convergence. Before
proving the theorem, we need the following lemma which says that the
rows of $\mathbf{M}^{(k)}$ sum to $1$ and which we prove in the
appendix. 
\begin{lemma}\label{lemma:one}
  The matrix $\mathbf{M}^{(k)}$ defined in \eqref{eq:M} satisfies
  \begin{align*}
  \mathbf{M}^{(k)}\one = \one
  \end{align*}
  for any row-stochastic $\mathbf{Q}$. 
\end{lemma}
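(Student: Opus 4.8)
The plan is to reduce the whole statement to a single bookkeeping fact about the auxiliary matrix $\tilde{\mathbf{Q}}^{(k)} = \inv{(\mathbf{I}_n-\mathbf{\Delta}^{(k)}\mathbf{Q}^{(k)})}(\mathbf{I}_n-\mathbf{\Delta}^{(k)})$ appearing in \eqref{eq:M}: namely, that it has all row sums equal to $1$, i.e.\ $\tilde{\mathbf{Q}}^{(k)}\one = \one$. Granting this, the lemma is immediate: from $\mathbf{M}^{(k)} = \mathbf{D}^{(k)} + (\mathbf{W}^{(k)}-\mathbf{D}^{(k)})\tilde{\mathbf{Q}}^{(k)}$ we get $\mathbf{M}^{(k)}\one = \mathbf{D}^{(k)}\one + (\mathbf{W}^{(k)}-\mathbf{D}^{(k)})\one = \mathbf{W}^{(k)}\one = \one$, the last equality using that $\mathbf{W}^{(k)}$ is row-stochastic, as assumed throughout.

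To establish $\tilde{\mathbf{Q}}^{(k)}\one = \one$, I would first recall, exactly as in the proof of Proposition \ref{prop:column}, that $\mathbf{I}_n - \mathbf{\Delta}^{(k)}\mathbf{Q}^{(k)}$ is invertible whenever $\length{\delta_i}<1$ for all $i$ (its inverse being the Neumann series $\sum_{r\ge 0}(\mathbf{\Delta}^{(k)}\mathbf{Q}^{(k)})^r$), which is the standing assumption. It then suffices to check that the two vectors $(\mathbf{I}_n-\mathbf{\Delta}^{(k)})\one$ and $(\mathbf{I}_n-\mathbf{\Delta}^{(k)}\mathbf{Q}^{(k)})\one$ coincide, since applying $\inv{(\mathbf{I}_n-\mathbf{\Delta}^{(k)}\mathbf{Q}^{(k)})}$ to both then gives the claim. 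But $\mathbf{Q}$ is row-stochastic by hypothesis, so $\mathbf{Q}^{(k)}\one = \one$, hence $\mathbf{\Delta}^{(k)}\mathbf{Q}^{(k)}\one = \mathbf{\Delta}^{(k)}\one$, and therefore $(\mathbf{I}_n-\mathbf{\Delta}^{(k)}\mathbf{Q}^{(k)})\one = \one - \mathbf{\Delta}^{(k)}\one = (\mathbf{I}_n-\mathbf{\Delta}^{(k)})\one$, as wanted.

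There is no genuinely hard step here; the only points needing a word of care are invoking the invertibility of $\mathbf{I}_n - \mathbf{\Delta}^{(k)}\mathbf{Q}^{(k)}$ (which I would simply cite from the proof of Proposition \ref{prop:column}) and keeping the left-to-right order of the factors in $\tilde{\mathbf{Q}}^{(k)}$ straight, since an inverse and a diagonal matrix do not commute in general. I would also emphasize that, in contrast to Proposition \ref{prop:column}, no non-negativity assumption on the $\delta_i$ is used — only the stochasticity (row-sum) bookkeeping enters — which is precisely why the lemma holds for \emph{any} row-stochastic $\mathbf{Q}$.
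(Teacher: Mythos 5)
Your proof is correct, and its key step is genuinely slicker than the paper's. Both arguments begin identically, reducing the claim to $\tilde{\mathbf{Q}}^{(k)}\one=\one$ via row-stochasticity of $\mathbf{W}^{(k)}$; but where the paper then expands $\inv{(\mathbf{I}_n-\mathbf{\Delta}\mathbf{Q})}$ as the Neumann series $\sum_{r\ge 0}(\mathbf{\Delta}\mathbf{Q})^r$ and telescopes the resulting infinite sum, you simply observe that $(\mathbf{I}_n-\mathbf{\Delta}\mathbf{Q})\one=(\mathbf{I}_n-\mathbf{\Delta})\one$ because $\mathbf{Q}\one=\one$, and then apply the inverse to both sides. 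This replaces several lines of series manipulation with a one-line identity, and it isolates exactly the two ingredients actually needed: invertibility of $\mathbf{I}_n-\mathbf{\Delta}\mathbf{Q}$ (for which $\length{\delta_i}<1$ and row-stochasticity of $\mathbf{Q}$ suffice, as in the proof of Proposition \ref{prop:column}) and the row-sum condition $\mathbf{Q}\one=\one$. Your closing remark that no sign condition on the $\delta_i$ enters is also accurate and worth making, since the neighbouring results (Proposition \ref{prop:column}, Corollary \ref{cor:consensusBuechel}) do require $\delta_i\ge 0$; the paper's own proof reaches the same conclusion but obscures this point somewhat behind the series computation.
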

\begin{corollary}\label{cor:consensusBuechel}
  Let $k\ge 1$ be arbitrary. Assume that $\delta_i\ge 0$ for all
  $i\in[n]$. Then, if $\mathbf{W}^{(k)}$ has a positive column, then
  $\mathbf{M}^{(k)}$ induces a consensus. 
\end{corollary}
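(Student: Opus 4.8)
The plan is to assemble the result directly from the three ingredients already in hand: Theorem \ref{theorem:degroot} (the DeGroot consensus criterion), Proposition \ref{prop:column} (propagation of a positive column from $\mathbf{W}^{(k)}$ to $\mathbf{M}^{(k)}$), and Lemma \ref{lemma:one} (row-stochasticity of $\mathbf{M}^{(k)}$). The only genuine work is to check that $\mathbf{M}^{(k)}$ is a bona fide learning matrix in the sense required by Theorem \ref{theorem:degroot}, i.e., that it is non-negative and row-stochastic; once that is in place, Theorem \ref{theorem:degroot} applies with the exponent $t=1$.

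First I would argue non-negativity. Recall from \eqref{eq:M} that $\mathbf{M}^{(k)}=\mathbf{D}^{(k)}+(\mathbf{W}^{(k)}-\mathbf{D}^{(k)})\tilde{\mathbf{Q}}^{(k)}$, where $\tilde{\mathbf{Q}}^{(k)}=\inv{(\mathbf{I}_n-\mathbf{\Delta}^{(k)}\mathbf{Q}^{(k)})}(\mathbf{I}_n-\mathbf{\Delta}^{(k)})$. As established in the proof of Proposition \ref{prop:column}, under the hypothesis $\delta_i\ge 0$ the matrix $\inv{(\mathbf{I}_n-\mathbf{\Delta}^{(k)}\mathbf{Q}^{(k)})}=\sum_{r=0}^\infty(\mathbf{\Delta}^{(k)}\mathbf{Q}^{(k)})^r$ is a convergent sum of non-negative matrices (here I use $\length{\delta_i}<1$, assumed throughout, and row-stochasticity of $\mathbf{Q}^{(k)}$), hence non-negative; multiplying on the right by the non-negative diagonal $\mathbf{I}_n-\mathbf{\Delta}^{(k)}$ keeps $\tilde{\mathbf{Q}}^{(k)}$ non-negative. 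Since $\mathbf{W}^{(k)}$ is row-stochastic its diagonal $\mathbf{D}^{(k)}$ has entries in $[0,1]$, so $\mathbf{W}^{(k)}-\mathbf{D}^{(k)}\ge 0$ entrywise; therefore the product $(\mathbf{W}^{(k)}-\mathbf{D}^{(k)})\tilde{\mathbf{Q}}^{(k)}$ is non-negative, and adding $\mathbf{D}^{(k)}\ge 0$ gives $\mathbf{M}^{(k)}\ge 0$.

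Next, Lemma \ref{lemma:one} gives $\mathbf{M}^{(k)}\one=\one$, so $\mathbf{M}^{(k)}$ is row-stochastic, and Proposition \ref{prop:column} gives that $\mathbf{M}^{(k)}$ has a strictly positive column (because $\mathbf{W}^{(k)}$ does, by hypothesis). Thus $\mathbf{M}^{(k)}$ is a non-negative, row-stochastic matrix with $(\mathbf{M}^{(k)})^1$ having an entirely positive column. Applying Theorem \ref{theorem:degroot} with the integer $t=1$ to $\mathbf{M}^{(k)}$ in place of $\mathbf{W}$ then yields that $\mathbf{M}^{(k)}$ induces a consensus for every opinion vector $\mathbf{b}(0)\in S^n$, which is the claim.

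I do not anticipate any real obstacle here: the statement is essentially a corollary in the literal sense, stitching together Proposition \ref{prop:column} and Lemma \ref{lemma:one} and feeding the output into Theorem \ref{theorem:degroot}. The one point a careful reader might stumble on is whether $\mathbf{M}^{(k)}$ qualifies as an input to Theorem \ref{theorem:degroot} (a non-negative stochastic matrix), which is exactly why the non-negativity check above — already implicit in the proof of Proposition \ref{prop:column} — should be made explicit.
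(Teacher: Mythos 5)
Your proposal is correct and follows essentially the same route as the paper's own proof: non-negativity of $\mathbf{M}^{(k)}$ from the argument in the proof of Proposition \ref{prop:column}, row-stochasticity from Lemma \ref{lemma:one}, a positive column from Proposition \ref{prop:column}, and then Theorem \ref{theorem:degroot} with $t=1$. Your explicit spelling-out of the non-negativity check (in particular that $\mathbf{W}^{(k)}-\mathbf{D}^{(k)}\ge 0$ entrywise) is a welcome bit of added care, but it is the same argument.
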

\begin{proof}
  First, if $\delta_i\ge 0$, then $\mathbf{M}^{(k)}$ is a non-negative
  matrix by the proof of Proposition \ref{prop:column}. Moreover, by
  Lemma \ref{lemma:one}, $\mathbf{M}^{(k)}$ is then also
  row-stochastic. Finally, by Proposition \ref{prop:column}, if
  $\mathbf{W}^{(k)}$ has a positive column, then so does
  $\mathbf{M}^{(k)}$. A row-stochastic matrix with positive column
  induces a consensus by Theorem \ref{theorem:degroot}. 
\end{proof}
It might be worthwhile, in future considerations, to study in more
depth which properties 
$\mathbf{M}^{(k)}$ inherits from $\mathbf{W}^{(k)}$, and under which
conditions. As mentioned, Buechel, Hellmann, and Kl\"{o}{\ss}ner
(2013) \cite{Buchel2013}
demonstrate that $\mathbf{M}^{(k)}$ inherits a positive diagonal from
$\mathbf{W}^{(k)}$ (under their particular choice of $\mathbf{Q}$ and
under conformity) as well as the general social network structure (see
their discussion in their Section 4), while we show that the property of
positive columns also propagates, for arbitrary $\mathbf{Q}$. 

For now, we contend ourselves with the fact that
Corollary \ref{cor:consensusBuechel} implies that, as in the standard
DeGroot model, agents almost always reach a consensus --- that
is, for almost all topics --- even under conformity
($\delta_i\ge 0$), under very mild conditions.
\begin{proposition}\label{prop:consensusBuechel}
  Let $\eta\ge 0$, agents' tolerance, be fixed. Let $\tau=0$ (resp.\
  $\tau=\infty$). Let $T(\cdot)>0$. 
  As in Proposition \ref{prop:consensus}, let $r$ be the earliest time
  point that some agent is $\eta$-intelligent (resp. $\eta$-wise) for
  topic $X_r$. Then, 
  under the conformity model presented above, with $\delta_i\ge 0$ for
  all $i\in[n]$, 
  (a) agents reach a consensus for all topics $X_k$ with $k>r$,
  independent of their initial beliefs. (b) For topics $1,\ldots,r$,
  agents' reaching a 
  consensus 
  depends on $\mathbf{W}^{(1)}$ and on $\mathbf{Q}^{(1)}$ to
  $\mathbf{Q}^{(r)}$ as well as on $\mathbf{\Delta}^{(1)}$ to
  $\mathbf{\Delta}^{(r)}$. 
\end{proposition}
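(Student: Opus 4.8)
The plan is to replay the proof of Proposition \ref{prop:consensus} almost verbatim, the one new ingredient being Corollary \ref{cor:consensusBuechel}, which (under $\delta_i\ge 0$ for all $i$) converts a strictly positive column of the informational matrix $\mathbf{W}^{(k)}$ into the consensus\-inducing property of the effective updating operator $\mathbf{M}^{(k)}$. So I would only need to track when $\mathbf{W}^{(k)}$ acquires --- and then keeps --- a strictly positive column, and then push that through $\mathbf{M}^{(k)}$.

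For part (a): by definition of $r$ some agent, say $i$, is $\eta$-intelligent (resp.\ $\eta$-wise) for $X_r$; since $T(\cdot)>0$ and since in the conformity model the endogenous part $\mathbf{W}^{(k)}$ is still updated by rule \eqref{eq:adjusttruth}, every agent $j$ adds $\delta\cdot T(\cdot)>0$ to $W_{ji}^{(r)}$, so column $i$ of $\mathbf{W}^{(r+1)}$ is strictly positive. An induction on $k>r$ then shows this column stays strictly positive: any further adjustment only adds non-negative mass, and the subsequent row renormalization (division of a row by a positive constant) cannot send a positive entry to zero. Hence $\mathbf{W}^{(k)}$ has a strictly positive column for every $k>r$, so by Corollary \ref{cor:consensusBuechel} the operator $\mathbf{M}^{(k)}$ induces a consensus; since $\mathbf{b}^k(t+1)=\mathbf{M}^{(k)}\mathbf{b}^k(t)$, agents reach a consensus for every topic $X_k$ with $k>r$, independently of $\mathbf{b}^k(0)$. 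For part (b): by minimality of $r$ no agent lies within the $\eta$-radius of truth for any of $X_1,\dots,X_r$, so no weight increments are triggered and $\mathbf{W}^{(1)}=\cdots=\mathbf{W}^{(r)}$; but for topic $X_k$ beliefs evolve under $\mathbf{M}^{(k)}=\mathbf{D}^{(k)}+(\mathbf{W}^{(k)}-\mathbf{D}^{(k)})\inv{(\mathbf{I}-\mathbf{\Delta}^{(k)}\mathbf{Q}^{(k)})}(\mathbf{I}-\mathbf{\Delta}^{(k)})$ by \eqref{eq:M}, so whether a consensus is reached for $X_1,\dots,X_r$ is determined by $\mathbf{W}^{(1)}$ together with $\mathbf{Q}^{(1)},\dots,\mathbf{Q}^{(r)}$ and $\mathbf{\Delta}^{(1)},\dots,\mathbf{\Delta}^{(r)}$; unlike for $k>r$ there is no forced positive column, so without further hypotheses on those data nothing more can be asserted --- which is precisely what (b) claims.

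The delicate point I expect is the case $\tau=\infty$: there the phrase ``$i$ is $\eta$-wise for $X_r$'' presupposes that $\mathbf{b}^r(\infty)$ exists, i.e.\ that $\mathbf{M}^{(r)}$ converges, and likewise the adjustment producing $\mathbf{W}^{(r+1)}$ is defined only once $\mathbf{b}^r(\infty)$ is available; so in this case the statement should be read on the event that the early operators $\mathbf{M}^{(1)},\dots,\mathbf{M}^{(r)}$ converge (if they fail to, no agent can be declared $\eta$-wise en route and $r$ is simply not attained), while for $k>r$ convergence of $\mathbf{M}^{(k)}$ is automatic from Corollary \ref{cor:consensusBuechel}, so the induction step is self-contained. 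A secondary, purely bookkeeping item is that invoking DeGroot's Theorem \ref{theorem:degroot} for $\mathbf{M}^{(k)}$ requires $\mathbf{M}^{(k)}$ to be non-negative and row-stochastic; both hold --- non-negativity from the Neumann-series identity $\inv{(\mathbf{I}-\mathbf{\Delta}^{(k)}\mathbf{Q}^{(k)})}=\sum_{m\ge 0}(\mathbf{\Delta}^{(k)}\mathbf{Q}^{(k)})^m$ used in the proof of Proposition \ref{prop:column} together with $\delta_i\ge 0$, and row-stochasticity from Lemma \ref{lemma:one} --- and both are already packaged inside Corollary \ref{cor:consensusBuechel}, so no extra work is needed there.
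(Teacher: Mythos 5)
Your proof is correct and follows essentially the same route as the paper's: propagate the strictly positive column of $\mathbf{W}^{(k)}$ for $k>r$ exactly as in Proposition \ref{prop:consensus}, then hand it to Corollary \ref{cor:consensusBuechel} to get consensus under $\mathbf{M}^{(k)}$, with part (b) following because no adjustments are triggered before topic $X_{r+1}$. Your extra remarks on the well-definedness of the $\tau=\infty$ case and on the non-negativity/row-stochasticity of $\mathbf{M}^{(k)}$ are sound and more careful than the paper's (very terse) proof, but they do not change the argument.
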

\begin{proof}
  (a) As in the corresponding proof of
  Proposition \ref{prop:consensus}, $\mathbf{W}^{(r+1)}$ has a
  positive column and so do, in general, have all matrices
  $\mathbf{W}^{(k)}$, for $k>r$. Hence, by
  Corollary \ref{cor:consensusBuechel}, agents reach a consensus for
  topics $X_k$, for $k>r$, under
  the conformity model, as long as $\delta_i\ge 0$ for all
  $i\in[n]$. (b) Of course, whether or not agents reach a consensus
  for $X_1$ to $X_r$ depends on the parameters of the model. 
\end{proof} 
As mentioned before, if there exist agents whose initial beliefs have
positive probability of lying within an $\eta$-interval around truths,
then $r$, as defined in Proposition \ref{prop:consensusBuechel}, is a
finite number almost surely. For standard parametrizations (e.g., all
agents have positive probability of being truthful, for all topics),
$r$ is very low --- typically $r=1$ --- with probability that goes to
$1$ in $n$, population size (cf.\ Remark \ref{rem:distR}). 
\begin{example}\label{example:counter}
        If agents are counter-conforming,
        Proposition \ref{prop:consensusBuechel} may be false in that
        beliefs may even diverge, rather than lead to a
        consensus. Consider, for instance, 
        \begin{align*}
  \mathbf{W}^{(r+1)} =
  \frac{1}{1+2\delta} 
  \begin{pmatrix}
  1+\delta & \delta & 0 \\
  \delta & 1+\delta & 0 \\
  \delta & \delta & 1 \\
  \end{pmatrix},
  \end{align*} 
  which would be the resulting weight matrix if $\tau=0$ and agent
        $1$'s and $2$'s initial beliefs were
        in an $\eta$-radius around truth, for the first
        time, for topic $X_r$. For convenience, assume that 
        \begin{align*}
  \mathbf{Q}^{(r+1)} = 
  \begin{pmatrix}
  0 & \frac{1}{2} & \frac{1}{2}\\
  \frac{1}{2} & 0 & \frac{1}{2}\\
  \frac{1}{2} & \frac{1}{2} & 0\\
  \end{pmatrix},\quad\mathbf{\Delta}=
  \begin{pmatrix}
  a & 0 & 0 \\ 0 & b & 0 \\ 0 & 0 & c
  \end{pmatrix},
  \end{align*}
  where $-1<a,b,c<1$. Then, sample belief dynamics for this setting are
        sketched in Figure \ref{fig:counter}. As the graphs
        illustrate, under
        counter-conformity, agents may want to diassociate from others
        in a manner strong enough to induce divergence. 
        \begin{figure}
        \centering
        \includegraphics[scale=0.25]{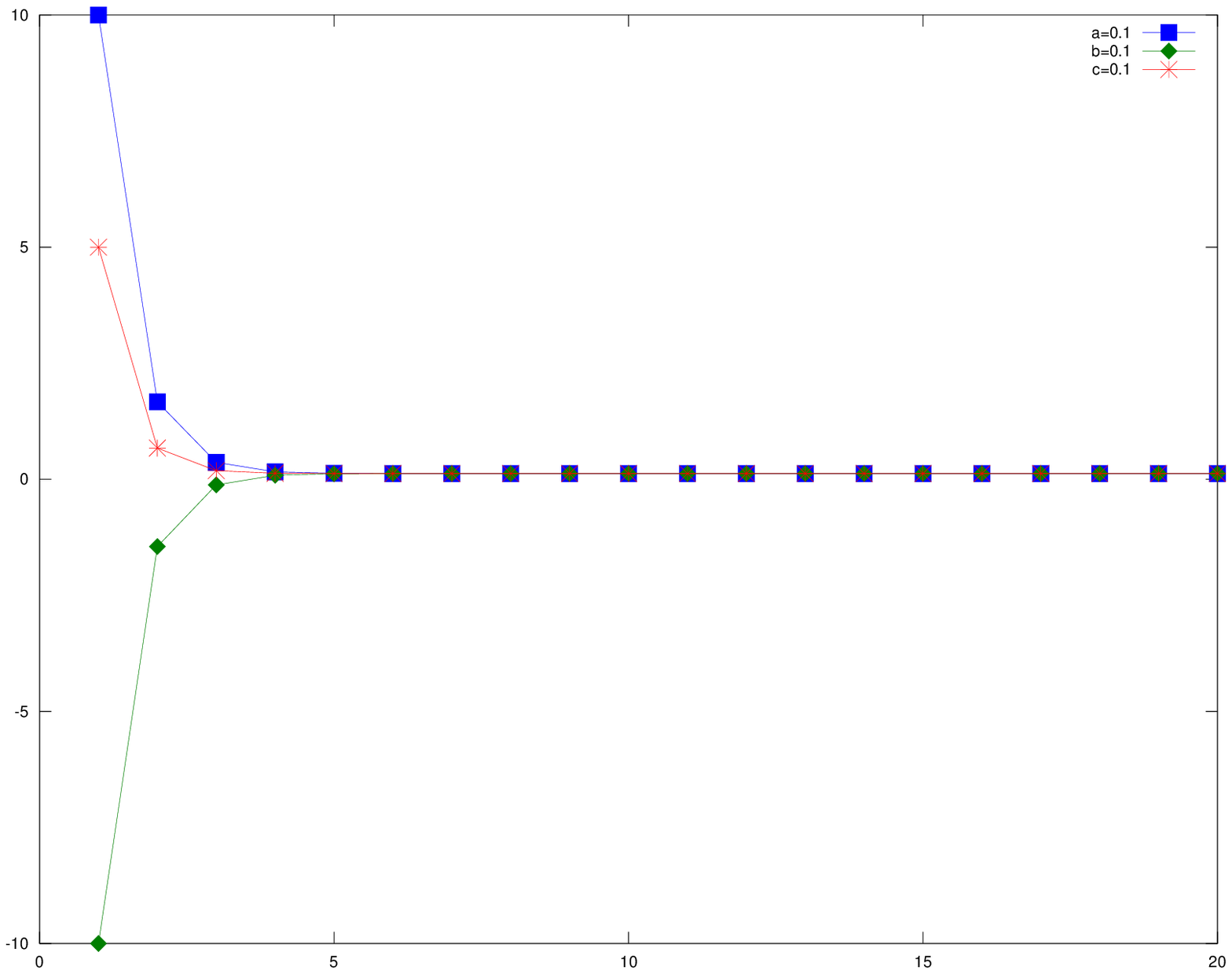}
        \includegraphics[scale=0.25]{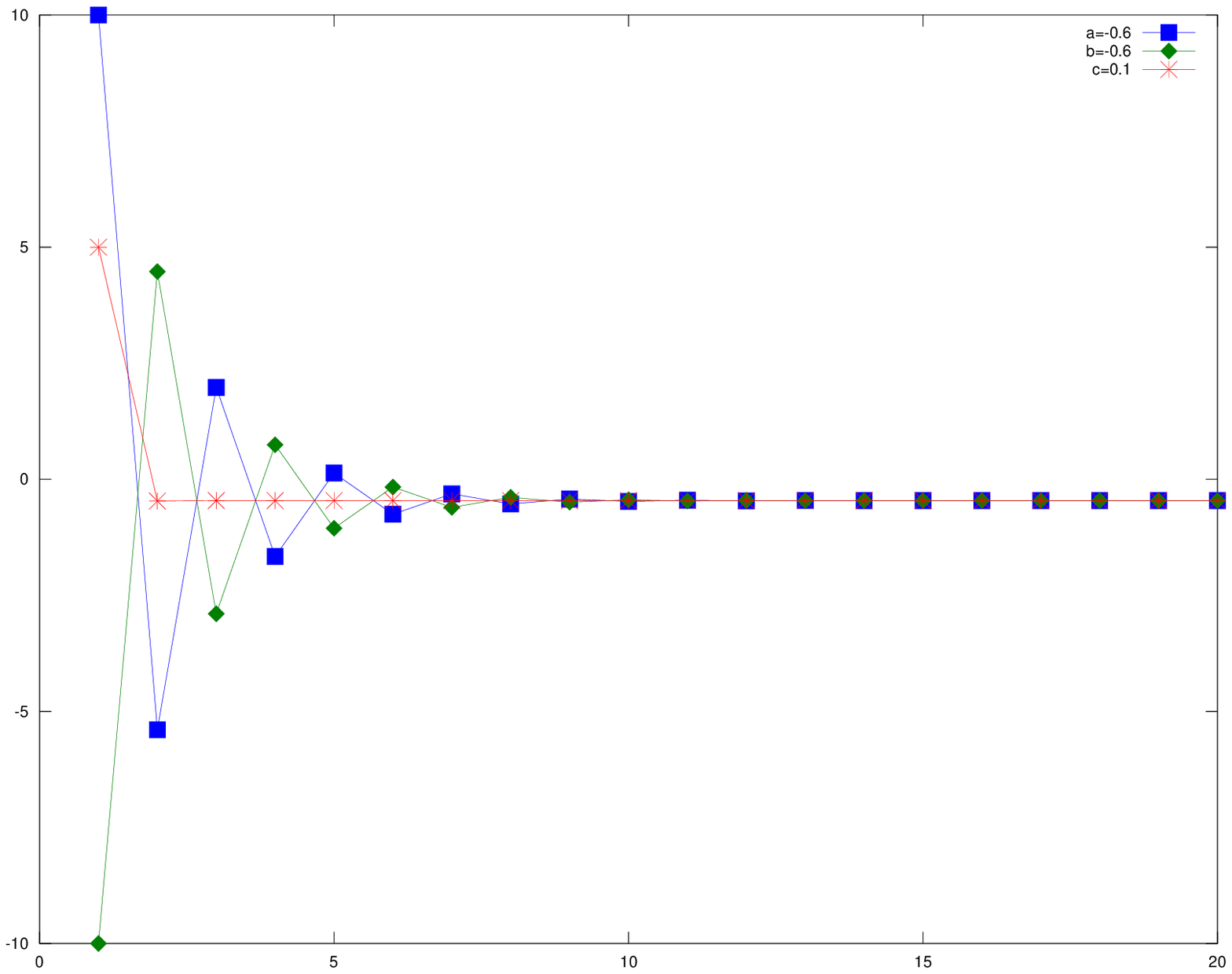}
        \includegraphics[scale=0.25]{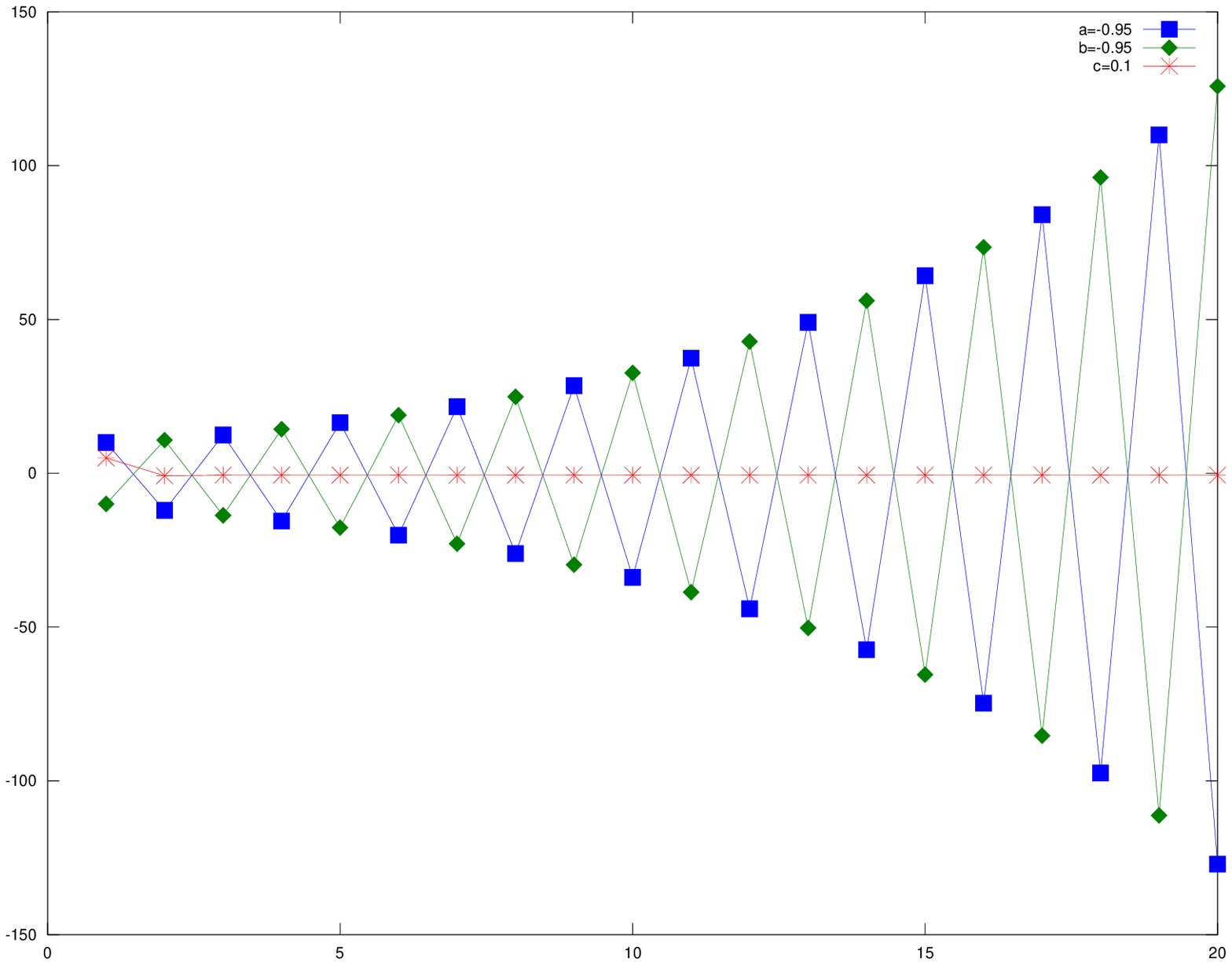}
        \caption{
        Belief dynamics for topic $X_{r+1}$ as sketched in 
        Example \ref{example:counter}. We use $\delta=5$ for the
        weight adjustment increment. Left:
        $a=b=0.1$. Middle: $a=b=-0.6$. Right: $a=b=-0.95$. Throughout
        $c=0.1$. 
        }      
        \label{fig:counter}
        \end{figure}
\end{example} 
Next, we discuss social influence of agents as a function of conformity
parameters and the structure of $\mathbf{W}^{(k)}$. In particular, we
show that even agents with an `empty record of past successes' can be
influential in the endogenous conformity model if conformity
parameters $\delta_i$ and matrix $\mathbf{Q}$ are appropriately
specified. 
\begin{proposition}\label{prop:influential}
  Let $\delta_i>0$ for all $i\in[n]$. 
  Assume that $\mathbf{Q}$ is strictly positive on all off-diagonal
  entries. Then, if $\mathbf{W}$ has at least two positive columns,
  $\mathbf{M}$ is strictly positive everywhere.  
\end{proposition}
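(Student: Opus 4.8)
The plan is to establish strict positivity of $\mathbf{M}=\mathbf{M}^{(k)}$ factor by factor, reusing the machinery from the proof of Proposition~\ref{prop:column}. Recall from \eqref{eq:M} and Proposition~\ref{prop:buechel1} that $\mathbf{M}=\mathbf{D}+(\mathbf{W}-\mathbf{D})\tilde{\mathbf{Q}}$ with $\tilde{\mathbf{Q}}=\inv{(\mathbf{I}_n-\mathbf{\Delta}\mathbf{Q})}(\mathbf{I}_n-\mathbf{\Delta})$, and that the proof of Proposition~\ref{prop:buechel1} (and of Proposition~\ref{prop:column}) identifies $\inv{(\mathbf{I}_n-\mathbf{\Delta}\mathbf{Q})}=\sum_{r=0}^\infty(\mathbf{\Delta}\mathbf{Q})^r$, the Neumann series converging because the row sums of $\mathbf{\Delta}\mathbf{Q}$ equal $\delta_i<1$.

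First I would show $\inv{(\mathbf{I}_n-\mathbf{\Delta}\mathbf{Q})}$ is strictly positive in every entry. Since $\delta_i>0$ and $\mathbf{Q}$ is strictly positive off the diagonal with $Q_{ii}=0$, the matrix $\mathbf{\Delta}\mathbf{Q}$ is nonnegative with $[\mathbf{\Delta}\mathbf{Q}]_{ij}=\delta_iQ_{ij}>0$ for $i\neq j$ and zero diagonal. Hence the $r=1$ term already contributes strictly positive off-diagonal entries, while the $r=2$ term satisfies $[(\mathbf{\Delta}\mathbf{Q})^2]_{ii}\ge\delta_iQ_{i\ell}\,\delta_\ell Q_{\ell i}>0$ for any $\ell\neq i$ (such $\ell$ exists as $n\ge 2$, which is forced by the two-positive-columns hypothesis). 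Since every term of the series is nonnegative, the sum is strictly positive everywhere; equivalently, the support digraph of $\mathbf{\Delta}\mathbf{Q}$ is the complete digraph on $[n]$, hence strongly connected.

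Next, $\mathbf{I}_n-\mathbf{\Delta}$ is diagonal with entries $1-\delta_i\in(0,1)$, so right-multiplying the strictly positive matrix $\inv{(\mathbf{I}_n-\mathbf{\Delta}\mathbf{Q})}$ by it preserves strict positivity; thus $\tilde{\mathbf{Q}}$ is strictly positive everywhere. Now I would invoke the two-positive-columns hypothesis: $\mathbf{W}-\mathbf{D}$ is nonnegative (as $\mathbf{W}$ is row-stochastic) with zero diagonal, and for each row $i$ at least one of the two distinct positive columns $j_1\neq j_2$ of $\mathbf{W}$ differs from $i$, call it $j^*$; then $[\mathbf{W}-\mathbf{D}]_{ij^*}=W_{ij^*}>0$, so $[(\mathbf{W}-\mathbf{D})\tilde{\mathbf{Q}}]_{ij}\ge W_{ij^*}\tilde{Q}_{j^*j}>0$ for every $j$. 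Finally, $\mathbf{M}=\mathbf{D}+(\mathbf{W}-\mathbf{D})\tilde{\mathbf{Q}}$ is the sum of a nonnegative diagonal matrix and a strictly positive matrix, hence strictly positive everywhere.

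The only mild obstacle is the first step --- upgrading the positivity statement of Proposition~\ref{prop:column}, which only gave a strictly positive diagonal for $\tilde{\mathbf{Q}}$, to full strict positivity. This is exactly where the two extra hypotheses ($\delta_i>0$ strictly, and $\mathbf{Q}$ strictly positive off-diagonal) enter, via the irreducibility/Neumann-series argument above; the remainder is the same nonnegative-matrix bookkeeping as in the proof of Proposition~\ref{prop:column}.
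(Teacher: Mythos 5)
Your proof is correct and follows essentially the same route as the paper: expand $\inv{(\mathbf{I}_n-\mathbf{\Delta}\mathbf{Q})}$ as the Neumann series to get strict positivity (the paper takes the diagonal positivity from the $r=0$ term $\mathbf{I}_n$ rather than from $r=2$, a trivial variation), then push positivity through $(\mathbf{I}_n-\mathbf{\Delta})$ and $(\mathbf{W}-\mathbf{D})$. Your explicit justification of why \emph{two} positive columns are needed --- so that each row of $\mathbf{W}-\mathbf{D}$, whose diagonal is zeroed out, still retains a strictly positive entry --- is actually spelled out more carefully than in the paper, which asserts this step without elaboration.
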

\begin{proof}
  Again, $\mathbf{M}$ is 
  \begin{align*}
  \mathbf{M}
  = \mathbf{D}+(\mathbf{W}-\mathbf{D})\inv{(\mathbf{I}_n-\mathbf{\Delta}\mathbf{Q})}(\mathbf{I}_n-\mathbf{\Delta}). 
  \end{align*}
  We have
  $\mathbf{R}:=\inv{(\mathbf{I}_n-\mathbf{\Delta}\mathbf{Q})}=\sum_{r=0}^\infty
  (\mathbf{\Delta}\mathbf{Q})^r$ is strictly positive in each entry
  since $\mathbf{Q}$ is positive on all off-diagonals and since
  $\delta_i>0$, whence $(\mathbf{\Delta}\mathbf{Q})^1$ is positive on all
  off-diagonals and note that
  $(\mathbf{\Delta}\mathbf{Q})^0=\mathbf{I}_n$ has positive diagonals;
  the remaining terms $(\mathbf{\Delta}\mathbf{Q})^r$, for $r\ge 2$,
  are non-negative. 
  Hence, also
  $\tilde{\mathbf{R}}:=\mathbf{R}(\mathbf{I}_n-\mathbf{\Delta})>0$
  entrywise, since the 
  diagonals of $(\mathbf{I}_n-\mathbf{\Delta})$ are positive. Hence,
  since $\mathbf{W}$ has at least two positive columns,
  $(\mathbf{W}-\mathbf{D})\tilde{\mathbf{R}}$ is also positive 
  and then also
  $\mathbf{M}= \mathbf{D}+(\mathbf{W}-\mathbf{D})\tilde{\mathbf{R}}$. 
\end{proof}
\begin{remark}
  Thus, if $\mathbf{Q}$ is strictly positive in each entry (other than
  the diagonals) and all agents are (strictly) conforming and
  $\mathbf{W}$ has at least two positive columns, then $\mathbf{M}$ is
  strictly positive in each entry. This means that $\mathbf{M}^t$ is
  strictly positive in each entry, for all powers $t\ge 1$. This also
  means that $\lim_{t\goesto\infty}\mathbf{M}^t$, which exists
  for a row-stochastic $\mathbf{M}$ that is strongly connected and
  aperiodic (as an $\mathbf{M}$ with strictly positive entries is),
  is positive in each entry. But this means that, under conformity, if
  agents form their reference opinions $\mathbf{q}$, to which they
  strive to conform,  
  with respect to all other agents (that is, $\mathbf{Q}$ is strictly
  positive, except for the diagonals), then each agent $i$ has strictly
  positive social
  influence on the limiting beliefs, even if $i$ has never been
  truthful in the past. The amount of influence (even non-truthful)
  agents have on limiting beliefs depends then both on past
  performance \emph{and} on the conformity parameters $\delta_i$. We
  illustrate in the next example. 
\end{remark}
\begin{example}\label{example:socialInfluence}
  Assume the following situation. There are $n=3$ agents and
  $\mathbf{W}^{(k)}$ has the form
  \begin{align*}
  \mathbf{W}^{(k)} = 
  \begin{pmatrix}
  \frac{1}{2} & \frac{1}{2} & 0 \\
  \frac{1}{2} & \frac{1}{2} & 0 \\
  \frac{1}{2} & \frac{1}{2} & 0 \\
  \end{pmatrix}.
  \end{align*} 
  Such a $\mathbf{W}^{(k)}$ may arise, for instance, for large $k$,
  when $\tau=0$ and
  when agents $1$ and $2$ have initial beliefs centered around truth,
  with identical variances, 
  and agent $3$ has never been truthful, for instance, because
  $\text{Pr}[b_3^k(0)\in B_{k,\eta}]=0$, for all $k\ge 1$. Assume that
  \begin{align*}
  \mathbf{Q}^{(k)} = 
  \begin{pmatrix}
  0 & \frac{1}{2} & \frac{1}{2}\\
  \frac{1}{2} & 0 & \frac{1}{2}\\
  \frac{1}{2} & \frac{1}{2} & 0\\
  \end{pmatrix},
  \end{align*}
  which means that everyone weighs everyone else uniformly in
  forming reference opinions, 
  and assume that $\mathbf{Q}^{(k)}$ is constant across topics. Finally, let
  $\delta_1=\delta_2=a$ and $\delta_3=b$, where $0<a,b<1$. Consider
  the social influence 
  of agents --- that is, their
  influence on limiting beliefs as a function of initial
  beliefs. Since agents $1$ and $2$ are identical, they must have the
  same social influence, which we denote by $x\ge 0$, and let agent $3$
  have influence $y\ge 0$, such that $2x+y=1$. In the appendix, we
  show that $y$ has the form
  \begin{align*}
  y = \frac{a(1-b)}{4-ab-3a}. 
  \end{align*}
  Computing the comparative statics with respect to $a$ and $b$, we
  find
  \begin{align*}
  \pardiv{y}{a}
  = \frac{(1-b)\Bigl(4-ab+3b\Bigr)}{\Bigl(4-ab-3a\Bigr)^2}>0,\quad 
  \pardiv{y}{b}
  = \frac{a(a-7)}{\Bigl(4-ab-3a\Bigr)^2}<0.
  \end{align*}
  Thus, influence of agent $3$ decreases in own
  conformity, $b$, and increases in conformity of the stochastically
  intelligent agents, $a$. Moreover, we find
  \begin{align*}
  \lim_{a\goesto 1}y = 1,\quad \lim_{b\goesto 1}y = 0,
  \end{align*}
  such that agent $3$, who has zero probability of knowing truth, may
  have arbitrarily large social influence on limiting beliefs, as long as
  agents $1$ and $2$ exhibit arbitrarily large conformity, and agent
  $3$'s social influence may also vanish, as his own conformity
  becomes arbitrarily large. In Figure \ref{fig:socInf}, we plot $y$ as a
  function of $a$ (for fixed levels of $b$) and $b$ (for fixed levels
  of $a$). 
  \begin{figure*}[!ht]
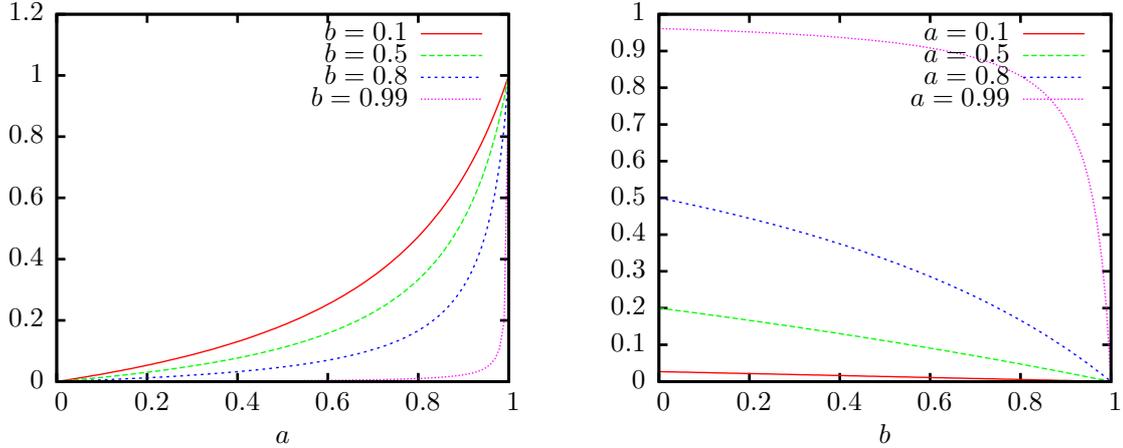

  \centering
  \begin{subfigure}{0.49\textwidth}
    \centering
    \resizebox{1.0\textwidth}{!}{
      \input{plots/socialInfluence.tex}}
    \end{subfigure}
    \begin{subfigure}{0.49\textwidth}
        \centering  
        \resizebox{1.0\textwidth}{!}{
        \input{plots/socialInfluence_yb.tex}}
    \end{subfigure}
    \caption
{ 
Social influence $y$ of agent $3$ as a function of $a$, conformity of
  agents $1$ and $2$, and $b$, own conformity. 
  }
\label{fig:socInf}
\end{figure*}
Note that this result, namely, that an agent with zero probability of 
being truthful may become arbitrarily influential, could not have been
possible in the standard model with endogenous weight formation as we
have sketched, since such an agent's social influence would always
be 
zero (or converge to zero) by the results developed in
Section \ref{sec:standard}, as $k$ becomes large. 
This
result would also not have 
been possible had there been only one positive column of matrix
$\mathbf{W}^{(k)}$ (and the others all zero) since in this case the corresponding agent, even if
he were excessively conforming and would thus state an opinion
arbitrarily close to that of his reference group, would both ignore
his own stated opinion (because he knows his true opinion) and that of
others (because all other columns are zero, by assumption) such that
the agent corresponding to the positive column would always have
social influence of $1$. In other words, we require at least two
conforming intelligent agents in order for a non-intelligent to be
able to become 
influential, under our current model specification. 

To summarize, our current example shows that if conformity parameters
are `adequately' specified, an agent with zero probability of being
close to 
truth may become arbitrarily influential, which, again, means that
society may be arbitrarily far off from truth in our setup. 
In particular, under conformity, society may be drawn away from truth
by agents without any past successes.\footnote{Which might be an
explanation of why
even `blatantly' and repetitively false propaganda may work.}
\end{example}

But, of course, a crucial aspect in the current example has also been
matrix $\mathbf{Q}$, which determines how agents form reference
opinions, and which agents a particular agent strives to conform to, 
and which we have assumed strictly positive. Of course, it might not
be implausible to assume that $\mathbf{Q}$ depends, in particular, on
past 
performance. This is our final example. 
\begin{example}
  Let
  $[\mathbf{Q}^{(k)}]_{ij}=\frac{[\mathbf{W}^{(k)}]_{ij}}{1-[\mathbf{W}^{(k)}]_{ij}}$
  for $i\neq j$ and let $[\mathbf{Q}^{(k)}]_{ii}=0$ such that
  $\mathbf{Q}$ is formed in an analogous way as
  $\mathbf{W}$.\footnote{As mentioned, this is the specification
  discussed in Buechel, Hellmann, and Kl\"{o}{\ss}ner
  (2013) \cite{Buchel2013}.} In 
  particular, in this setup, agents want to conform to those who have
  performed well in the past. Assume that there are two types of
  agents, whereby one type has initial beliefs centered around truth
  as in \eqref{eq:centered} and the other type has zero probability of
  ever 
  being truthful, 
  $\text{Pr}[b_i^k(0)\in B_{k,\eta}]=0$ for all $k$ and all
  $i\in \struct{N}_2$ where $\struct{N}_2\subseteq[n]$ is the set of
  agents of type two, and by $\struct{N}_1$ we denote the set of agents of
  type one. Then, if $\tau=0$ and as $k$ becomes large,
  $\mathbf{W}^{(k)}$ has the 
  structure where in each row $i$, 
  $W_{ij}^{(k)}\approx \frac{1}{C\sigma_j^2}$ for
  $j\in\struct{N}_1$ ($C$ is a normalization constant) and
  $W_{ij}^{(k)}\approx 0$ for $j\in\struct{N}_2$, by the results developed in
  Section \ref{sec:standard}, and where $\sigma_j^2$ is the variance
  of agent $j$'s initial belief. The  
  informational social influence of agent $i$ is then also given by,
  roughly, 
  $w_i=\frac{1}{C\sigma_i^2}$, for $i\in\struct{N}_1$, and $w_i=0$,
  for $i\in\struct{N}_2$, respectively. Moreover, the
  combined informational as well as normative social influence of
  agents is given by
  \begin{align*}
  v_i = \frac{(1-\delta_i)\cdot w_i}{\sum_{j\in\struct{N}_1}(1-\delta_j)w_j}
  \end{align*}
  for agents $i\in\struct{N}_1$ and $v_i=0$ for agents
  $i\in\struct{N}_2$. These results follow directly from Theorem 1 and
  Corollary 1 given in Buechel, Hellmann, and Kl\"{o}{\ss}ner
  (2013) \cite{Buchel2013}, which precisely state that closed and
  strongly connected groups (which $\struct{N}_1$ is, at least for
  large $k$) have social
  influence $v_i$ as given and the `rest of the world', which group
  $\struct{N}_2$ forms, has $v_i=0$.

  This example shows that if $\mathbf{Q}$ follows the structure of
  $\mathbf{W}$, then, unlike in the previous example where
  $\mathbf{Q}$ was uniform (or at least strictly positive on the
  off-diagonals), agents 
  that 
  never know truth cannot be influential. It moreover shows that
  social influence decreases in conformity (for the agents in
  $\struct{N}_1$), as we have already observed in the previous
  example.  

  Investigating social influence in the general case, for arbitrary
  $\mathbf{Q}, \mathbf{W}$, and $\mathbf{\Delta}$, would be highly
  interesting, as it indicates to which degree agents who are
  never truthful can still be influential,
  and 
  scope for future work.\footnote{Proposition \ref{prop:influential}
  is an important step in this direction 
  already, since it says that if $\mathbf{Q}^{(k)}$ is strictly positive on
  all off-diagonals, $\mathbf{\Delta}^{(k)}$ is strictly positive in all
  diagonals, and $\mathbf{W}^{(k)}$ has two positive columns,
  then \emph{all} agents are influential.}
\end{example}

\section{Homophily}\label{sec:homophily}

In this section, we extend the standard endogenous weight adjustment
opinion dynamics model discussed in Section \ref{sec:standard} by
introduction of the concept of \emph{homophily}, according to which, as
McPherson, Smith-Lovin, and Cook (2001) \cite{McPherson2001} phrase
it, `similarity breeds connection', 
and which is a majorly accepted standard concept in modern
socio-economic research. In the opinion dynamics
literature, homophily has been modeled 
by positing 
that weights
(social ties)
between any two agents are functionally dependent on the agents' current
belief 
distance (cf.\ the Hegselmann and Krause models, Deffuant, Neau,
Amblard, and Weisbuch (2000) \cite{Deffuant2000}, Pan (2010) 
\cite{Pan2010}, etc.), that is, agents with more similar current beliefs place
greater (current) weight upon each other. Opinion updating is then
performed as in 
standard DeGroot 
learning, 
via weighted averages of peers' past beliefs,
where the weights are now 
endogenously 
formed by the 
homophily principle. As indicated in the introduction, we think of
homophily, in our context, as arising from biased reasoning,
where individuals overrate beliefs that are similar to their own (cf.\ Kunda
(1990) \cite{Kunda1990}).  

In the Hegselmann and Krause models, to which we
relate, agents set \emph{time-varying} weights according to the
following 
rule,\footnote{Note that, thus far, we have assumed weights to be only
varying across \emph{topics} and not in addition across
discussion \emph{rounds} 
(time) within a given topic.}  
\begin{align*}
  W_{ij}(t) &= 
  \begin{cases}
  \frac{1}{I_i(\mathbf{b}(t))} & \text{if } j\in I_i(\mathbf{b}(t)),\\ 
  0 & \text{else},
  \end{cases}
\end{align*}
where $I_i(\mathbf{b}(t))$ denotes the set of agents within an
$\eta_H$-radius, for $\eta_H\ge 0$, around agent $i$'s belief
$b_i(t)$ at time $t$, that is, $I_i(\mathbf{b}(t))=\set{j\in[n]\sd
  \norm{b_j(t)-b_i(t)}<\eta_H}$, and where $\eta_H$ is an
external parameter. One plausible integration of this setup in our
framework is to let agents \emph{increment} weights to other agents whenever
distance between their current beliefs is sufficiently small, that
is,{\footnote{Otherwise, if weights were not incremented but rather set
  in an `absolute manner', the continuity of weight relationships
  across topics could not be maintained.}$^,$\footnote{
  After each round $t$, we \emph{renormalize} weights in order for
  them to satisfy the row-stochasticity condition.}}
\begin{align}\label{eq:timevarying}
  W_{ij}^{(k)}(t+1) &= 
  \begin{cases}
    W_{ij}^{(k)}(t)+\delta_H & \text{if } j\in I_i(\mathbf{b}(t)),\\
    W_{ij}^{(k)}(t).
  \end{cases}
\end{align}
Then, after truth is revealed, agents again adjust weights according
to the `truth related' principles outlined in Section \ref{sec:modelECCS}. In particular,
we would now have,
\begin{align}\label{eq:adjusttruth2}
      W_{ij}^{(k+1)}(0) = 
      \begin{cases}
        \lim_{t\goesto\infty}{W}_{ij}^{(k)}(t)+\delta_T\cdot T(\abs{N({\mathbf{b}^k(\tau)},\mu_k)}) & \text{if }
        \norm{b_{j}^k(\tau)-\mu_k}< \eta_T,\\
        \lim_{t\goesto\infty}{W}_{ij}^{(k)}(t)& \text{otherwise},
      \end{cases}
    \end{align}
    for all $k\ge 1$, where we need to consider, for next topic's
    initial weights, the limit, as time goes to infinity, of the 
    time-varying weights for the
    previous topic, since weights are now also adjusted within topic
    periods. Note that, here, we also subscript $\eta$ --- the radius
    within which weights are adjusted --- and $\delta$ --- the weight
    increment --- by $T$ and $H$, respectively, depending on whether
      we relate to 
    adjusting/incrementing based on truth or based on homophily. 

    Also observe that \eqref{eq:adjusttruth2} is
    well-defined only if $\lim_{t\goesto\infty}W_{ij}^{(k)}(t)$
    exists, which we na\"{i}vely assume in the following but 
    the formal proof of which we leave open. 
    It is worthwhile mentioning that adjusting weights based on truth may
    microeconomically be
    justified precisely as we did in Section \ref{sec:justification}
    --- namely, it may follow from the tenet that agents have
    disutility from not knowing truth, whence, 
    by incrementing weights to agents who have been truthful in
    the past, they increase their likelihood of eventually becoming close to
    truth, provided that the assumptions they make (\emph{bona fides}, etc.)
    are satisfied. In contrast, we offer no explicit microeconomic foundation
    --- that is, based on utility functions and their explicit
    maximization --- 
    here of why agents would increment weights to other agents based
    on the homophily relation, 
      taking this behavior simply as a form of (exogenous) bias. 
      Moreover, similar as in the opposition model, it is
    appropriate, in the current setting, to think of agents as
    motivated by two contrarian 
    forces --- truth and homophily ---\footnote{Note that in the
      opposition model, the two forces were an agent's ingroup and his
    outgroup.} which may possibly act to the `same ends', but which we
    generally think of as of antipodal origin and direction. 

Concerning updating of beliefs, beliefs evolve according to
\begin{align}\label{eq:updatehomophily}
  b_{i}^k(t+1) = \sum_{j=1}^n W_{ij}^{(k)}(t)b_{j}^k(t),
\end{align}
as outlined in Section \ref{sec:modelECCS}, with the addition that we
now let weights $W_{ij}^{(k)}$ vary within discussion periods. Due to
the slightly greater complexity involved in belief dynamics, we
summarize the belief evolution process in the below schematic
form. 
\begin{algorithm}
    \begin{algorithmic}[1]
      \State let $\mathbf{W}^{(1)}(0)$ be (exogenously) given
      \For{$k=1,2,3,\ldots$}
        \State 
        let $b_i^k(0)$ denote initial beliefs for topic $X_k$ for all agents
    $i=1,\ldots,n$ 
        \For{$t=0,1,2,3,4,\ldots$}
         \State  adjust weights $\mathbf{W}^{(k)}(t)$ based on homophily,
    Eq. \eqref{eq:timevarying}; normalize weights
          \State update beliefs $b_{i}^k(t+1)$ for all agents $i$ via
    Eq. \eqref{eq:updatehomophily} 
      \EndFor
      \State adjust weights $\mathbf{W}^{(k+1)}(0)$ based on truth,
    Eq. \eqref{eq:adjusttruth2}; normalize weights
      \EndFor
    \end{algorithmic}
  \end{algorithm}

Providing general results for the belief dynamics
process currently under consideration is not so easy since weights do
not only vary by time now, but, in particular, by the current belief state
vector $\mathbf{b}^{(k)}(t)$. Lorenz (2005) \cite{Lorenz2005} gives
convergence results for this general setup, which we list in 
Appendix \ref{sec:appendix}, 
whose assumptions, however, do not apply to our situation. As a first
illustration, still, we show that, unlike in the standard model and
under conformity, even after some agent has been truthful for a topic
$X_r$, agents need not converge to a consensus, but may hold distinct
limiting beliefs about $X_{r+1}$; whether consensus obtains or not may
depend on the relative sizes of $\delta_T$, which we may think of as
`importance of truth', versus $\delta_H$, which we may think of as
`importance of homophily'. 
\begin{example}\label{example:homExample}
  Let there be $n=4$ agents. Assume that $\mathbf{W}^{(1)}(0)$ is the
  $n\times n$ identity matrix. Moreover, let $\tau=0$ and assume that
  agent $1$ receives initial signal $b^1_1(0)=\mu_1$
  and that agents $2$, $3$ and $4$ are not within an $\eta_T$-radius around
  truth, for topic $X_1$. Finally, assume that any two agents' initial
  beliefs $b_i^1(0)$ and $b_j^1(0)$ are at least
  a distance of $\eta_H$ away from each other for topic $X_1$ so
  that homophily plays no role for topic $X_1$. Then, at the beginning
  of topic $X_2$, agents adjust weights based on truth such that
  $\mathbf{W}^{(2)}(0)$ looks as follows
  \begin{align*}
    \mathbf{W}^{(2)}(0) = \frac{1}{1+\delta_T}\begin{pmatrix}
      1+\delta_T & 0 & 0 & 0 \\
      \delta_T & 1 & 0 & 0\\
      \delta_T & 0 & 1 & 0\\
      \delta_T & 0 & 0 & 1
      \end{pmatrix}.
  \end{align*}
  Assume that initial beliefs of agents for topic $X_2$ are
  $b_1^2(0)=b_2^2(0)$ and $b_3^2(0)=b_4^2(0)$, whereby initial beliefs of
  agents $1$ and $2$, on the one hand, and $3$ and $4$, on the other
  hand, are at a distance of at least $\eta_H$. 
  This specification means that agents $1$ and $2$, on the one hand,
  and $3$ and $4$, on the other, form distinct `homophily clusters',
  at least at time $t=0$, 
  for topic $X_2$. 
  In Figure
  \ref{fig:homExample}, we sketch belief dynamics for topic $X_2$ for
  different values of $\delta_H$, with $\delta_T=0.1$ and
  $\eta_H=\eta_T=0.2$ fixed. We see that, unlike in the standard
  DeGroot learning case in this setup (and also in the conformity
  model) and as already indicated, agents do not necessarily reach a
  consensus. If homophily is `too strong', that is, $\delta_H$ is
  `too large', agents polarize in this setting. As homophily becomes
  weaker, that is, $\delta_H$ becomes smaller, the beliefs of agents
  $3$ and $4$ move closer to the beliefs of agents $1$ and $2$, the
  former of which has been truthful for topic $X_1$. As $\delta_H$
  falls below a certain threshold, the agents reach a consensus. 
  \begin{figure}[!htb]
    \centering
    \includegraphics[scale=0.4]{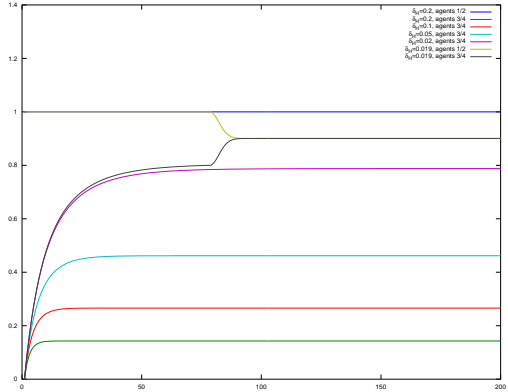}
    \caption{
      Belief dynamics for topic $X_2$ with setup as sketched in Example
      \ref{example:homExample}. Initial beliefs are
      $b_1^2(0)=b_2^2(0)=1$, $b_3^2(0)=b_4^2(0)=0$. As long as beliefs
      of agents $3$ and $4$ do not come within distance of
      $\eta_H=0.2$ to the beliefs of agents $1$ and $2$, the latters'
      beliefs evolve according to $b_1^2(t)=b_2^2(t)=1$ since both
      agents have the same initial beliefs and are not `disturbed' by
      agents $3$ and $4$. In contrast, beliefs of agents $3$ and $4$
      are affected by agent $1$'s beliefs since agent $1$ has been
      truthful for topic $X_1$, but their weight link to this agent
      vanishes as $t$ becomes large if $\delta_H$ is `large
      enough'. In general, we have $b_1^2(t)=b_2^2(t)$ and
      $b_3^2(t)=b_4^2(t)$ for all $t\ge 0$ so that it suffices to
      graph belief dynamics of agents $1$, on the one hand, and $3$,
      on the other.
    }
    \label{fig:homExample}
  \end{figure}
\end{example}
To sketch one (simple) result of a general nature, here, however,
consider, similarly as before, the situation when there are two groups
$\struct{N}_1$ and $\struct{N}_2$ 
of agents, whereby 
agents in $\struct{N}_1$ are $\epsilon$-intelligent, 
for a fixed $\epsilon\ge 0$,
and agents in
$\struct{N}_2$ have initial beliefs $b_i^k(0)$ with
$\text{Pr}[b_i^k(0)\in B_{k,\eta_T}]=0$, that is, agents in
$\struct{N}_2$ have initial beliefs
such that the probability that they `correspond to' truth is zero. In
the next 
proposition, we show that \emph{all} agents \emph{may} become
$\epsilon$-wise, even if $\delta_H>0$, 
in this situation provided that agents value truth `sufficiently
much' \emph{and} value relations based on homophily sufficiently
little. This result is not entirely trivial because, for instance, for 
our opposition model, arbitrarily small `opposition force' could
induce (at least some) agents to not converge to truth.  
The result says that homophily does not \emph{always} need to
interfere with wisdom. 
\begin{proposition}\label{prop:deltaT}
  Let $\eta_T\ge 0$, $\epsilon\in [0,\eta_T]$, and topic $X_k$ be
  fixed, for $k\ge 2$.\footnote{For topic $X_1$, there are 
    no weight adjustments based on truth, so we exclude this situation.}
  Then there 
  exist $\delta_T>0$ large enough and $\delta_H>0$ small enough such
  that all agents become 
  $\epsilon$-wise for $X_k$.\footnote{Of course, if $\delta_H$ were
    zero, any positive $\delta_T$ would satisfy the conditions of the
    proposition. In our setup, we assume, however, that homophily
    always plays a role, that is, $\delta_H>0$ for all `homophily
    increments' $\delta_H$.}
\end{proposition}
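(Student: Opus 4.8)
The plan is to pick a topic $X_k$ with $k \geq 2$ and argue that, for suitable $\delta_T$ and $\delta_H$, the belief dynamics for this topic keep every agent inside $B_{k,\epsilon}$ for all discussion rounds $t$, so that limiting beliefs (if they exist, as we assume) also lie in $B_{k,\epsilon}$. First I would set up the weight matrix $\mathbf{W}^{(k)}(0)$ at the start of topic $X_k$. Since agents in $\struct{N}_2$ have zero probability of being truthful, the only truth-based increments that have accrued up through topic $X_{k-1}$ have gone to the $\epsilon$-intelligent agents in $\struct{N}_1$ (there may also be within-topic homophily increments from earlier topics, but these are bounded, and I can absorb that into a worst-case bound). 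The key quantitative observation is: if $\delta_T$ is large enough relative to the total accumulated homophily mass, then after renormalization the rows of $\mathbf{W}^{(k)}(0)$ place all but an arbitrarily small fraction $\gamma$ of their weight on $\struct{N}_1$-agents, and in particular every agent's round-$1$ belief $b_i^k(1) = \sum_j W^{(k)}_{ij}(0) b_j^k(0)$ is a convex combination that is within $(1-\gamma)\epsilon + \gamma\cdot(\text{diam of }S)$ of truth — which, since $S = \real$, I instead control by noting the $\struct{N}_2$-agents' initial beliefs are (almost surely) in some bounded range for this fixed realization; choosing $\delta_T$ large drives $\gamma \to 0$, so $b_i^k(1) \in B_{k,\epsilon}$ for all $i$.

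Next I would handle the within-topic homophily updates. The subtlety is that $\delta_H > 0$ means weights keep changing at every round $t$. But here is the point: once all agents are within $B_{k,\epsilon} \subseteq B_{k,\eta_T}$ at round $t$, and $\epsilon \leq \eta_T$, pairwise belief distances are at most $2\epsilon$. If in addition $\epsilon$ (equivalently, the cluster diameter) is below the homophily threshold $\eta_H$ — which I can arrange by choosing $\delta_H$ small enough that the already-tight cluster never expands, or more simply by observing that $2\epsilon < \eta_H$ can be assumed by taking $\eta_H$ as given and $\epsilon$ small, or by a separate case — then every agent sees every other agent as a homophily neighbor, so the homophily increments $\delta_H$ are added \emph{uniformly} across the row. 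A uniform additive increment followed by renormalization keeps the weight matrix row-stochastic with all entries positive, and crucially it keeps $b_i^k(t+1) = \sum_j W^{(k)}_{ij}(t) b_j^k(t)$ a convex combination of points already in the convex set $B_{k,\epsilon}$. Hence by induction on $t$, all beliefs remain in $B_{k,\epsilon}$ for every $t$, and therefore $b_i^k(\infty) \in B_{k,\epsilon}$, i.e., all agents are $\epsilon$-wise for $X_k$.

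The step I expect to be the main obstacle is making the interaction between the two thresholds $\eta_T$ (used for truth-based increments and for the definition of $\epsilon$-wisdom) and $\eta_H$ (the homophily radius) rigorous without circularity: I need the cluster to be tight enough ($< \eta_H$) for homophily to act uniformly, but the tightness is itself delivered by choosing $\delta_T$ large, while $\delta_H$ small is what prevents the homophily step from ever \emph{breaking} the cluster in the first round before it has tightened. The clean way around this is a two-stage choice of constants: first fix $\delta_T$ large enough that $b_i^k(1) \in B_{k,\epsilon'}$ for some $\epsilon' < \min(\epsilon, \eta_H/2)$ using only the truth-adjusted matrix $\mathbf{W}^{(k)}(0)$ (homophily has not yet acted at $t=0$ in the sense relevant to round $1$, or its effect on round $1$ can be made negligible by small $\delta_H$), and then fix $\delta_H$ small enough — which, given that the cluster is already inside $B_{k,\epsilon'}$ with $2\epsilon' < \eta_H$, is actually not even needed for the \emph{uniformity} of the homophily step but only to control any transient before the cluster tightens. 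I would present the argument as: choose $\delta_T$ first (depending on $n$, the accumulated homophily mass bound, $\epsilon$, $\eta_H$, and the bounded range of the $\struct{N}_2$ beliefs for the given realization), then choose $\delta_H$ small, then run the convexity induction. The remark after the proposition already flags that unlike the opposition model, small perturbations here are benign — and indeed the convexity/row-stochasticity structure is exactly what makes that true, since homophily among an already-consensual-enough cluster cannot push anyone out.
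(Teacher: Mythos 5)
Your proposal is correct and follows essentially the same route as the paper: choose $\delta_T$ large and $\delta_H$ small so that every agent's round-one belief $b_i^k(1)$ lands in $B_{k,\epsilon}$, then use row-stochasticity of $\mathbf{W}^{(k)}(t)$ together with convexity of $B_{k,\epsilon}$ to keep all beliefs there for every $t$. Your detour about the homophily increments acting \emph{uniformly} once the cluster has diameter below $\eta_H$ is unnecessary (as you yourself half-observe): the convexity argument needs only that the within-topic matrices remain row-stochastic, not any particular structure of the homophily updates.
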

\begin{proof}
  Since the agents in $\struct{N}_1$ are $\epsilon$-intelligent, with
  $\epsilon\le \eta_T$, all agents adjust weights for these agents based on
  truth. By choosing $\delta_T$ large enough 
  and $\delta_H$ small enough (but positive), it can be ensured that
  beliefs $[\mathbf{b}^k(1)]_i$, 
  for all $i\in[n]$, are
  in the (open) $\epsilon$-interval around truth $\mu_k$ (the
  weights for the $\epsilon$-intelligent agents may become arbitrarily
  close to uniform provided $\delta_T$ is large enough and $\delta_H$
  is small enough and the weights
  for the agents in $\struct{N}_2$ may become arbitrarily close to
  zero). 
  Since 
  $\mathbf{W}^{(k)}(t)$ is row-stochastic, for every $t\ge 0$, and
  since the (open) 
  interval of radius $\epsilon$ around truth is a convex set, all
  belief vectors $\mathbf{b}^k(t)$, for $t\ge 1$, lie, component-wise,
  in $B_{k,\epsilon}$.  
\end{proof}
\begin{remark}
In the last proposition, $\delta_T=\delta_T(k)$ and
$\delta_H=\delta_H(k)$ may depend upon the
topic $X_k$ 
(e.g., in particular on the distribution of initial beliefs for this
topic). 
If we let, $\delta_T:=\max_{k\in\nn}\delta_T(k)$ and
$\delta_H:=\min_{k\in\nn}\delta_H(k)$,\footnote{This would require to
  ensure that the so defined $\delta_H$ is strictly positive (rather than
  zero) and that $\delta_T<\infty$.} then for 
this choice of $\delta_T$ and $\delta_H$, all agents will be 
$\epsilon$-wise for all topics $X_k$, for $k\ge 2$. 
\end{remark}
\begin{example}\label{example:deltaT}
We illustrate Proposition \ref{prop:deltaT} in Figure
\ref{fig:deltaT}, where we sketch belief dynamics for a sequence of
topics for fixed parametrizations and various choices of $\delta_T$
and $\delta_H$. In the figure, we simulate belief dynamics across
topics for $n=50$ agents, where $n_1=10$ agents are
$\epsilon$-intelligent and $n_2=40$ agents have 
initial 
distribution of beliefs such that 
their initial beliefs 
are never in an $\eta_T$
interval around truth; for the sake of concreteness, we let $\mu_k=0$,
for all $k\ge 1$, 
$\epsilon=0.25$, $\eta_T=0.25$, and for the agents in $\struct{N}_2$, we let
their initial beliefs be distributed according to the random uniform
distribution on the interval $[1,4]$. 

The graphs illustrate, first, that truth attracts all agents since
even the
beliefs of 
the agents in $\struct{N}_2$ move in the direction of truth, as time
progresses. However, as long as preference for homophily $\delta_H$ is
not small enough and preference for truth $\delta_T$ is not large
enough, the agents in $\struct{N}_2$ do not become $\epsilon$-wise for
topics. The graphs also illustrate the clustering of beliefs due to
the homophily relationship, a circumstance well-known from the
classical Hegselmann-Krause models. 
\begin{figure}[!htb]
  \centering
  \includegraphics[scale=0.35]{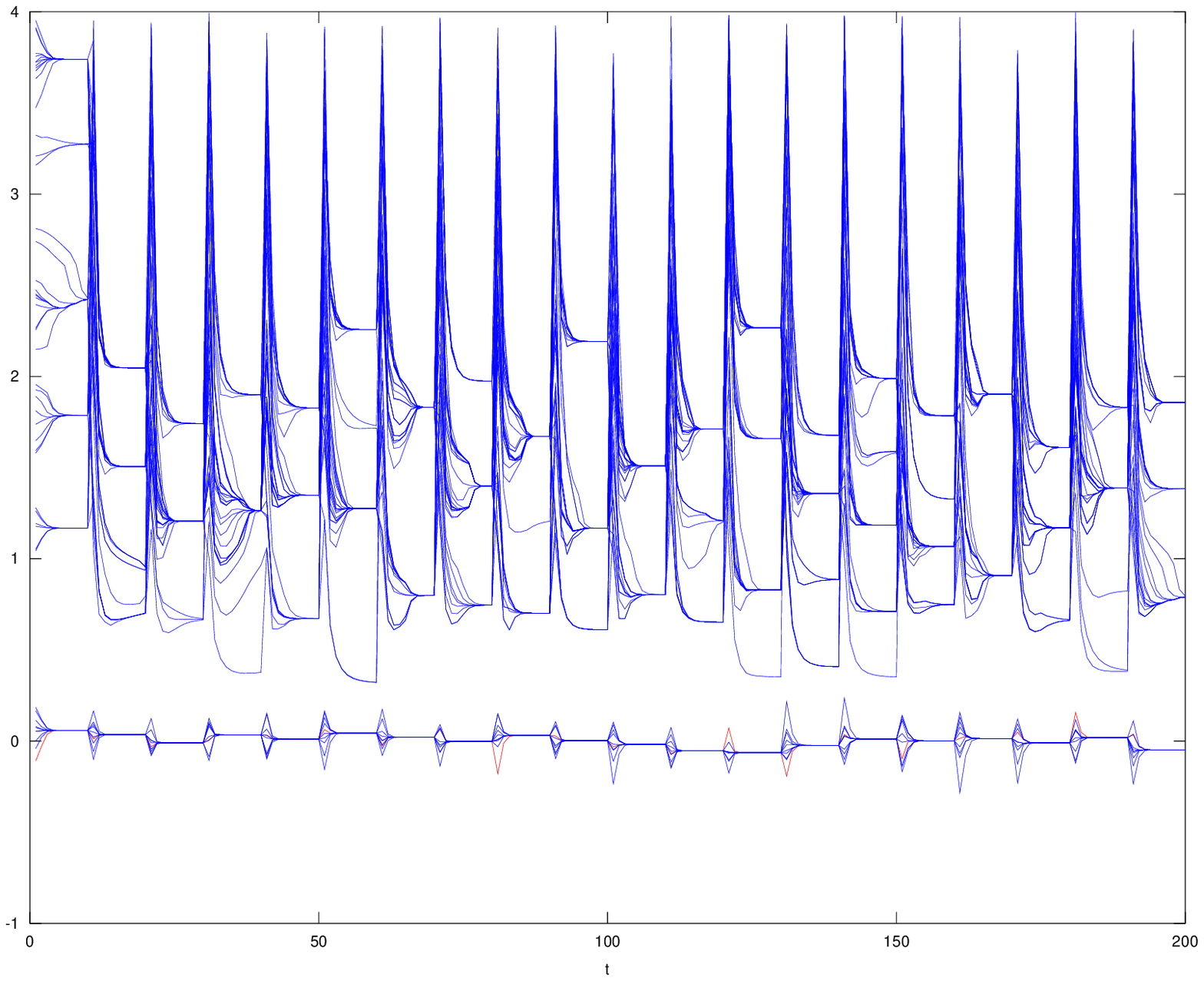}
  \includegraphics[scale=0.35]{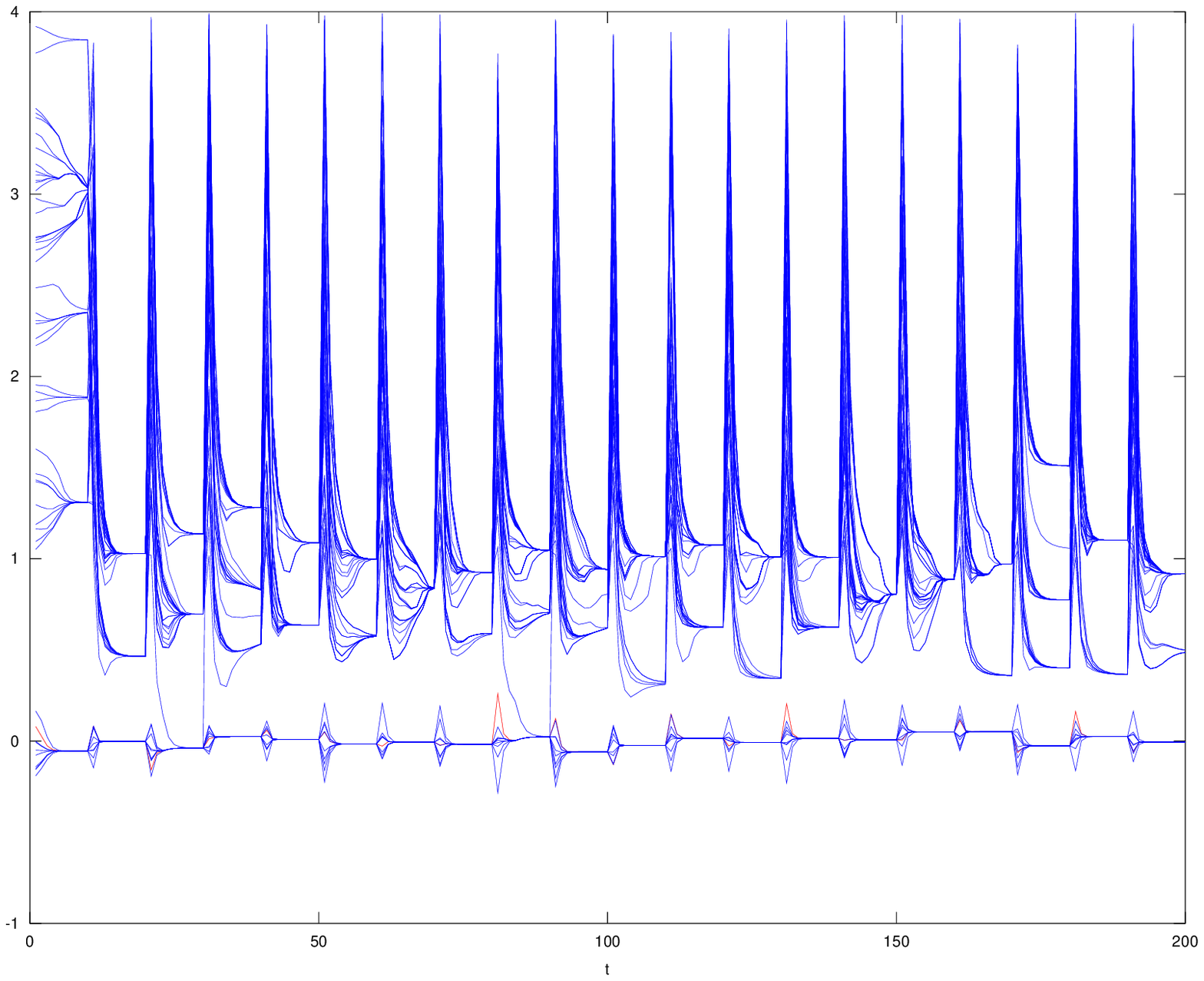}
  \includegraphics[scale=0.35]{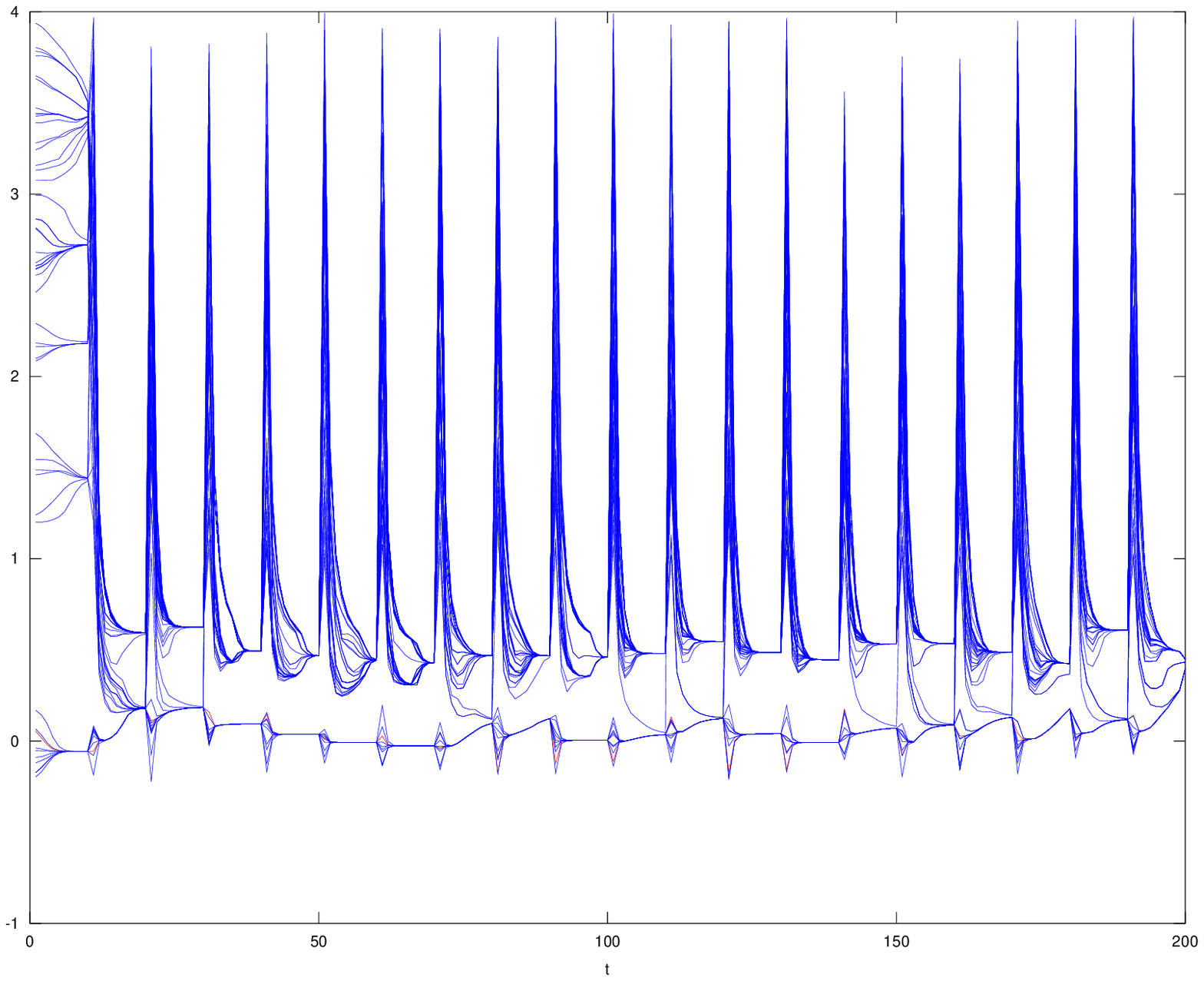}
  \includegraphics[scale=0.35]{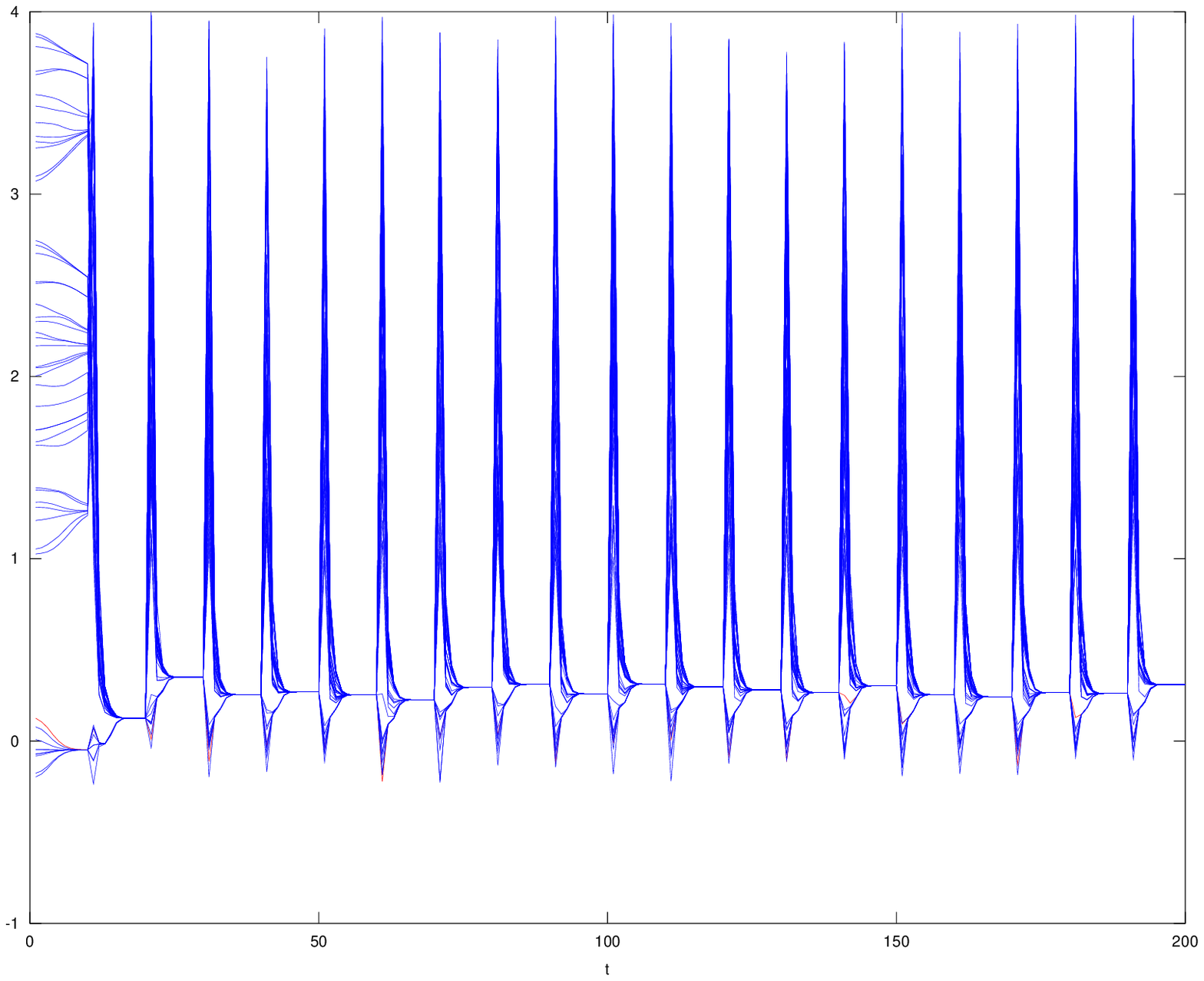}
  \includegraphics[scale=0.35]{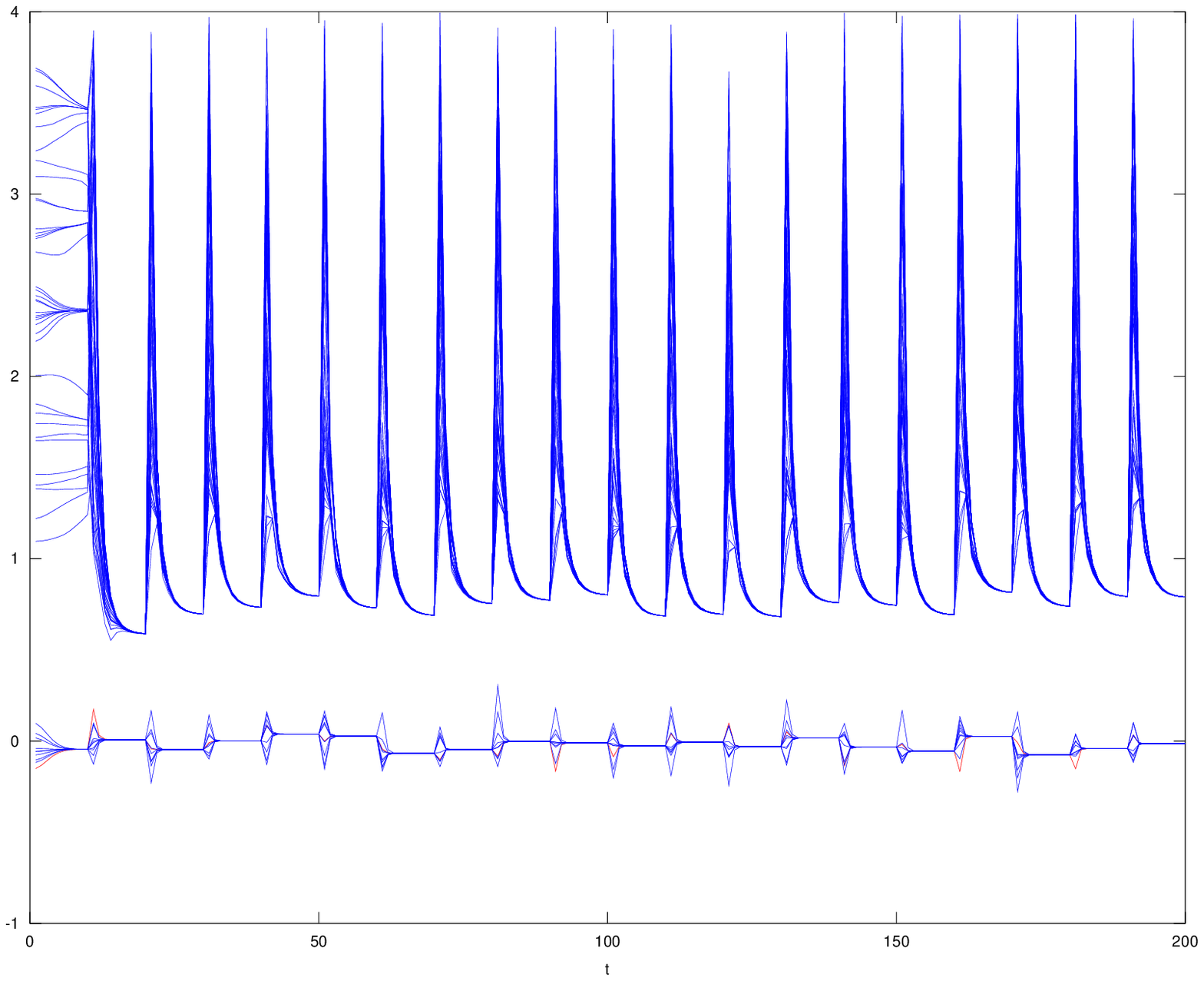}
  \caption{Parametrizations throughout: $\epsilon=0.25$, $\mu_k=0$ for
    all $k\ge 1$, $\eta_T=\eta_H=0.25$, $n=50$ agents,
    $\length{\struct{N}_1}=10$, $\length{\struct{N}_2}=40$. From top
    to bottom and left to 
  right: $(\delta_H,\delta_T)=(0.2,1.0)$;
  $(\delta_H,\delta_T)=(0.1,1.0)$; $(\delta_H,\delta_T)=(0.05,1.0)$;
  $(\delta_H,\delta_T)=(0.02,1.0)$;
  $(\delta_H,\delta_T)=(0.02,0.1)$. We show topics $X_k$, for
  $k=1,2,\ldots,20$ and, for each topic, discussion rounds
  $t=0,1,\ldots,10$.} 
  \label{fig:deltaT}
\end{figure}
\end{example}
\begin{remark}
  The graphs in Figure \ref{fig:deltaT} show much analogy with results
  of the
  original opinion dynamics model `under homophily' as developed in
  the work of Hegselmann and Krause, on which our current modeling is
  based (recall that the difference is that we \emph{increment} weights in
  case two agents' beliefs are similar, while they set weights uniformly
  in this case, and that we in addition introduce truth as an
  influential factor). In particular, in the graphs, we find that
  \begin{itemize}
    \item the opinion dynamics process always converges, and that 
    \item agents (or, rather, their beliefs) cluster into subgroups
      in which agents reach a consensus. 
  \end{itemize}
  Proving these apparently generally true observations is beyond the
  scope of our investigation here, and we leave it for future
  consideration.\footnote{See Krause (2000) \cite{Krause2000} for a
    starting point on how to prove the results in question.}
\end{remark}

  We close this section by presenting simulations on the role of the
  truth related radius $\eta_T$ and the homophily related radius
  $\eta_H$, respectively. Concerning $\eta_H$, we find in Figure
  \ref{fig:etaH} that a smaller $\eta_H$ (that is, based on homophily, agents
  trust/listen to only those with very similar beliefs) tends to
  produce a larger 
  degree of fragmentation of limiting belief spectra while larger
  $\eta_H$ (that is, based on homophily, agents even trust/listen to agents
  with rather 
  distinct beliefs) tends to promote global agreement among agents. 
  Interestingly, smaller $\eta_H$ also leads agents closer to truth
  (since the homophily relation applies to fewer agents). 
  Concerning
  $\eta_T$, in Figure \ref{fig:etaT}, we find that, overall, 
  an increase in $\eta_T$ increases the average distance of
  limiting beliefs to 
  truth since also beliefs that are remote from truth are taken into
  consideration in link weight adjustment. 
\begin{figure}[!htb]
  \centering
  \includegraphics[scale=0.35]{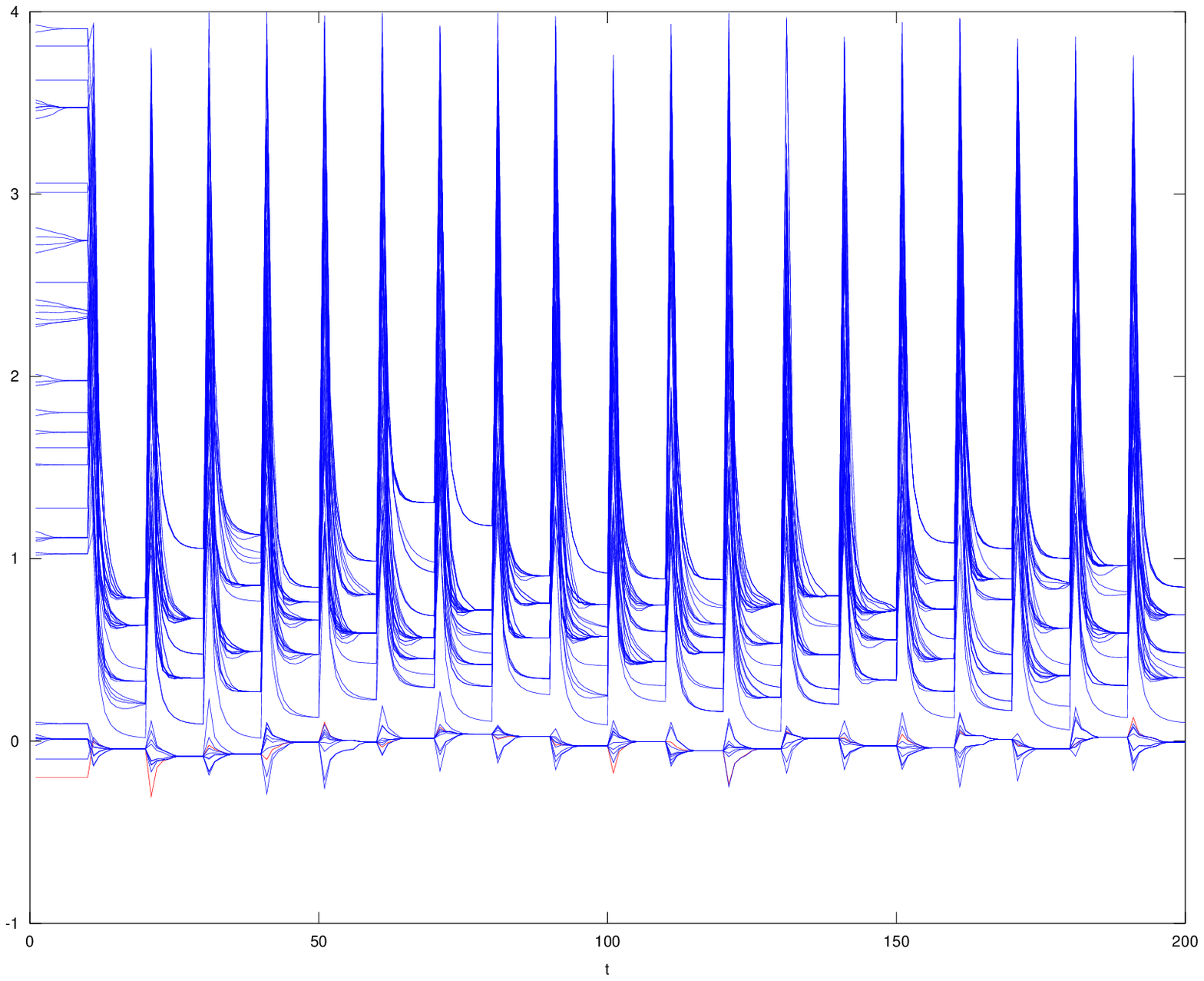}
  \includegraphics[scale=0.35]{plots/homophily/homEvolution_0p2_1p0_0p25_0p25.eps}
  \includegraphics[scale=0.35]{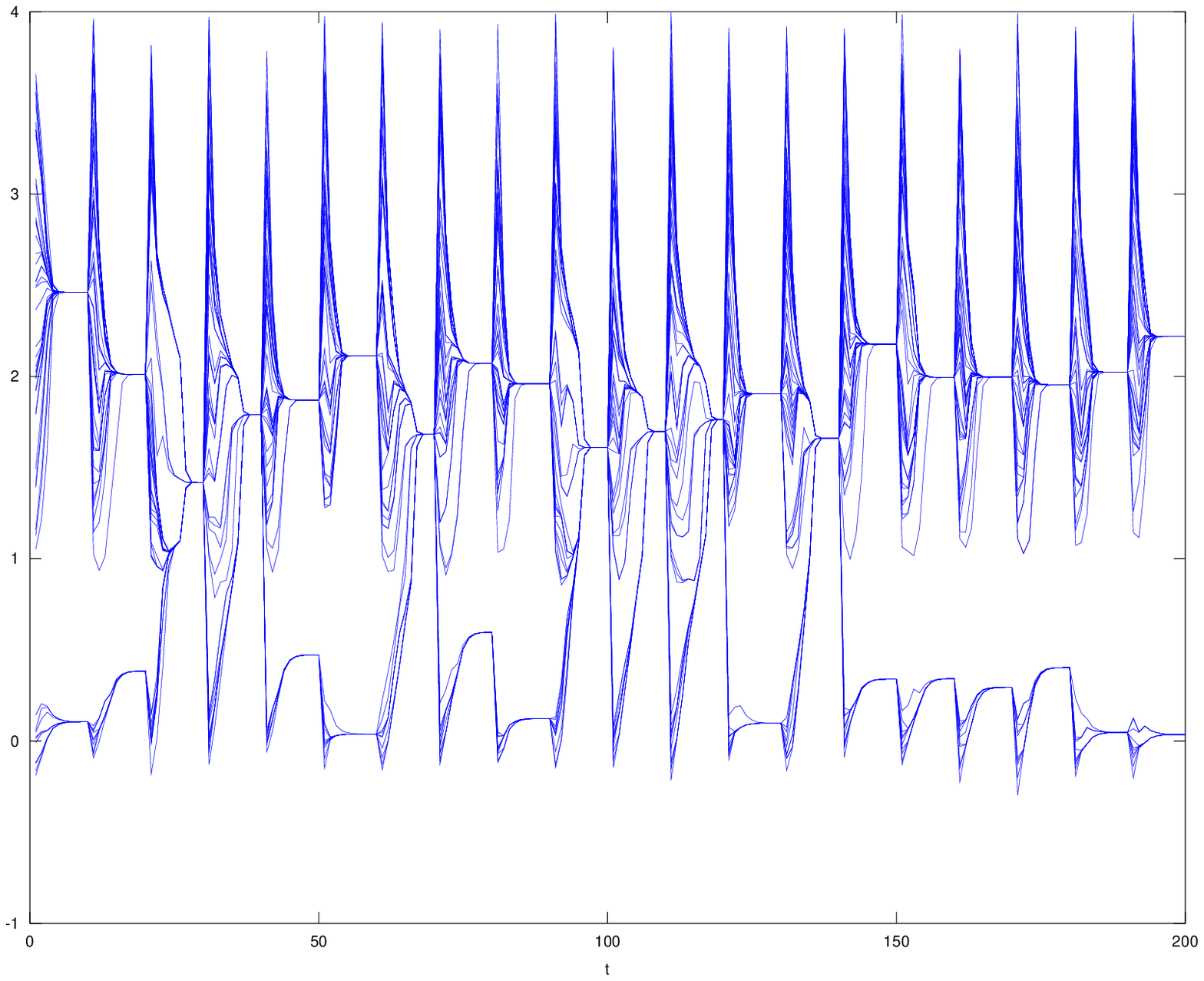} 
  \includegraphics[scale=0.35]{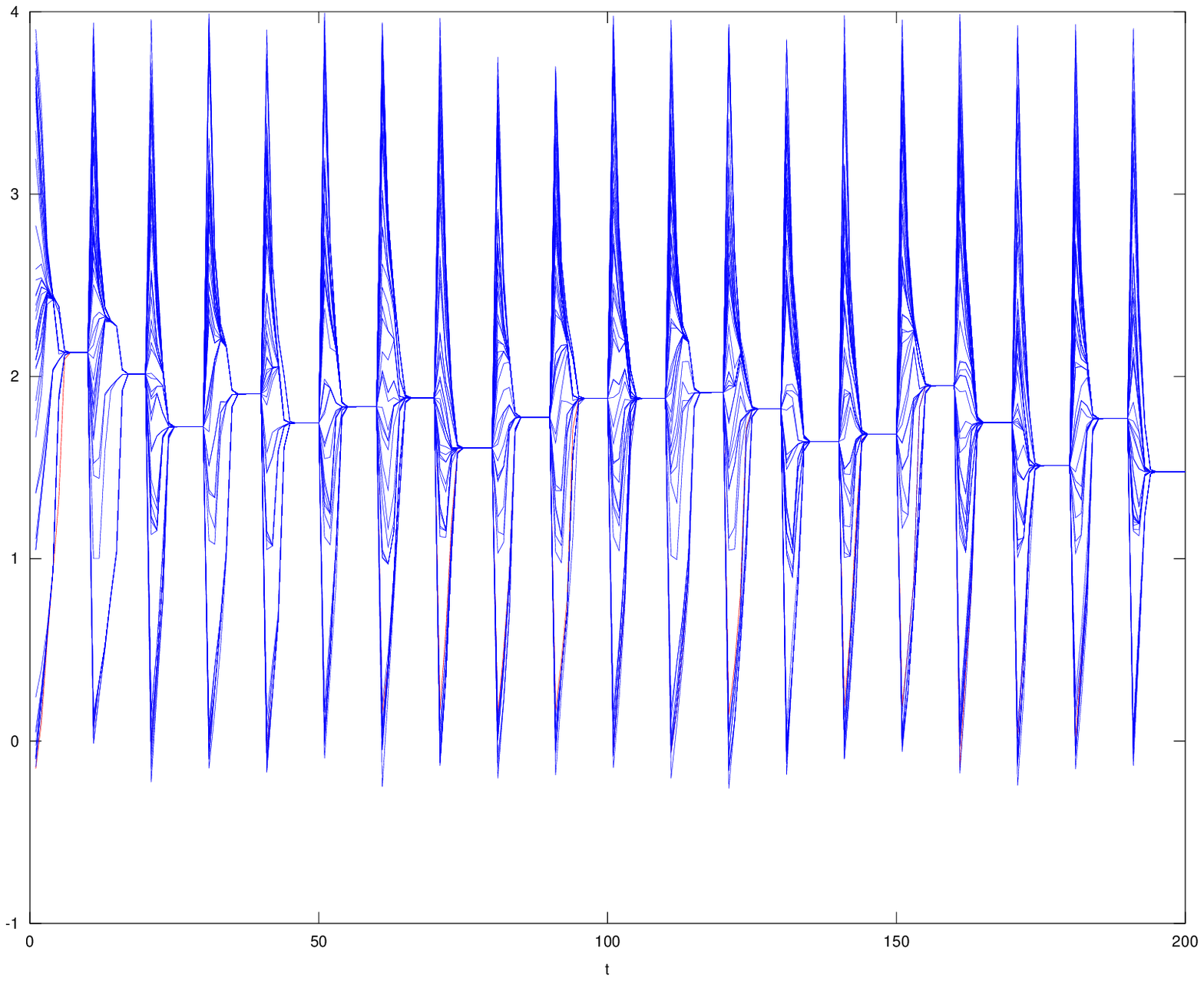}
  \caption{Parametrizations throughout: $\epsilon=0.25$, $\mu_k=0$ for
    all $k\ge 1$, $\eta_T=0.25$, $\delta_H=0.2$, $\delta_T=1.0$, $n=50$ agents,
    $\length{\struct{N}_1}=10$, $\length{\struct{N}_2}=40$. From top
    to bottom and left to 
  right: $\eta_H=0.05$, $\eta_H=0.25$, $\eta_H=1.10$, $\eta_H=1.50$. 
  We show topics $X_k$, for
  $k=1,2,\ldots,20$ and, for each topic, discussion rounds
  $t=0,1,\ldots,10$.} 
  \label{fig:etaH}
\end{figure}
\begin{figure}[!htb]
  \centering
  \includegraphics[scale=0.35]{plots/homophily/homEvolution_0p2_1p0_0p25_0p25.eps}
  \includegraphics[scale=0.35]{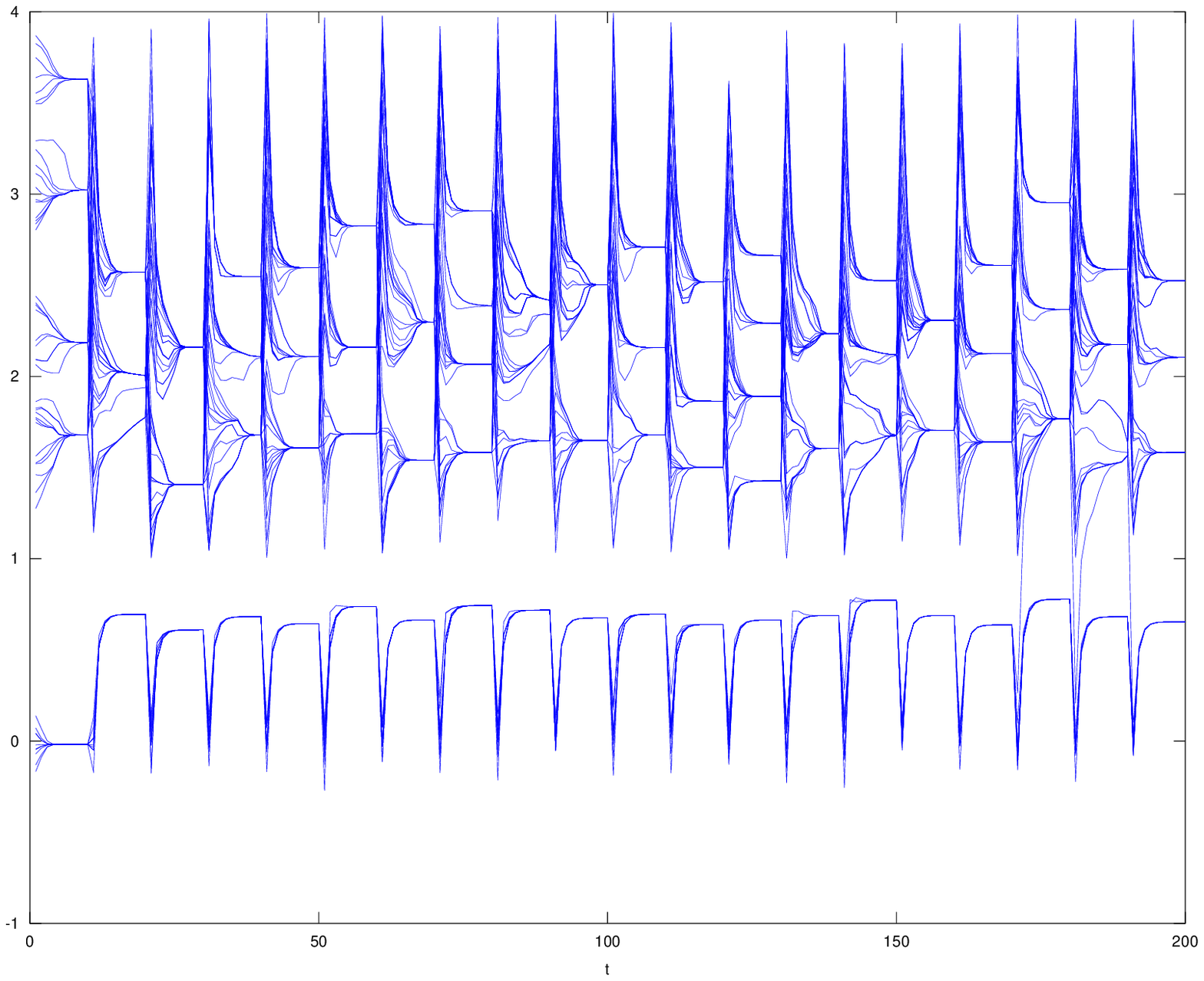}
  \caption{Parametrizations throughout: $\epsilon=0.25$, $\mu_k=0$ for
    all $k\ge 1$, $\eta_H=0.25$, $\delta_H=0.2$, $\delta_T=1.0$, $n=50$ agents,
    $\length{\struct{N}_1}=10$, $\length{\struct{N}_2}=40$. Left:
  $\eta_T=0.25$. Right: $\eta_T=2.50$. 
  We show topics $X_k$, for
  $k=1,2,\ldots,20$ and, for each topic, discussion rounds
  $t=0,1,\ldots,10$.} 
  \label{fig:etaT}
\end{figure}

In sum, in this section, we have shown that, under the `homophily
bias' and under the presence of agents with biased initial beliefs,
agents need neither become wise nor reach a consensus. If the
homophily relation is sufficiently `weak', wisdom may obtain
(Proposition \ref{prop:deltaT}), but if it sufficiently `strong',
agents' beliefs will generally cluster into distinct regions of the
belief spectrum. As in the conformity model (and also as under
opposition), even agents with zero probability of being close to truth
may influence others.


\section{Conclusion}\label{sec:conclusion}
As Acemoglu and Ozdaglar (2011) \cite{Acemoglu2011}, and many others, 
point out, the 
importance of the beliefs we hold 
cannot be overstated. For example, the demand
for a product depends on consumers' opinions and beliefs about the
quality of that product and majority opinions determine the political
agenda. Thus, beliefs also shape (our) behavior in that they lead us
to buy certain products and reject others or in that they 
 are causal factors for the 
implementation of laws and policies. On a more abstract level,
the set of norms and beliefs we hold determine, in the end, who we are
and substantiate our cultural foundations. In modern microeconomic
research, beliefs and opinions are thought to originate from \emph{social
learning} processes whereby individuals are situated in a 
network of peers
and update their opinions, e.g., via communication with
others. Rejecting the hypothesis that individuals are fully rational,
much recent research has assumed that people learn from others via
simple `rules of thumb', simply averaging peers' past beliefs to
arrive at new beliefs. Then, given that there exist `true states' for
the issues that individuals hold beliefs about, a natural question to
ask is
whether such agents, who commit the bias of not properly accounting
for the repetition of information they hear, can, in fact, still learn
these true states and, thus, become collectively `wise' (cf.\ Surowiecki (2004)
\cite{Surowieki2004}), successfully aggregating dispersed
information. 

In the current work, we have studied belief dynamics under an
\emph{endogenous} 
network formation process. In particular, we have assumed that agents
strengthen their ties to other agents based on the criterion of `past
performance' such that agents increment their trust weights to whoever has been
`close enough' to truth for a current topic. We have, moreover, assumed that
agents are \emph{multiply biased} in that they are not only susceptible to
persuasion bias --- the simplifying DeGroot learning rule --- but also
have biased 
initial beliefs (the possibly non-social, `intelligence-based' substrate of
beliefs), and commit several other sins of reasoning, such as being
biased toward members of their in-group and motivated to disassociate
from members of their out-group, being motivated to conform with
the beliefs of their reference group, or overrating beliefs that are
close to their own. Our 
goal 
has been to outline situations
under which collective failure (or at least, `failure of wisdom') can
obtain, even though the potential for wisdom --- dispersed correct 
information --- is assured. Thus, our work was also in part targeted
at the recent 
`optimism' concerning biased (`na\"{i}ve')
learning in 
social networks and crowd wisdom (e.g., Golub and Jackson (2010)
\cite{Golub2010}), which has also been challenged by 
experimental research (cf., e.g., Lorenz, Rauhut, Schweitzer, and
Helbing (2011) \cite{Lorenz2011}). 

As to our results, under the standard DeGroot learning model, we have
seen that wisdom can
fail if there are \emph{sufficiently many} agents with biased 
initial beliefs such that they, still, have positive probability of being
close to truth. 
The intuition behind this result is that 
even if the biased agents have small, but positive, probability of
`guessing' truth, then, if they are sufficiently many --- such
that 
many 
of them will still be close to truth --- the biased 
agents can, in total, receive large enough weight mass from all
agents, whence
they may become arbitrarily socially influential, 
leading
all of society to the expected value of a biased variable, away from
truth. This result may be thought of as based on the \emph{bona
  fides} bias, which says that agents do not give up the assumption
that their own (initial) beliefs are unbiased and that others' beliefs 
share this property with their own, even despite potential collective
failure,  
from which it may be motivated that agents continuously 
apply 
their
trust weight incrementing rule (to all agents).  
In the conformity
model, wisdom may fail even when the biased agents have zero
probability of being close to truth and when their number is small,
provided that the unbiased agents are sufficiently conforming. 
We might take this as an argument for why even `blatantly' and
repetitively false propaganda could work. A necessary condition for
this result is that agents want to conform to reference groups
including even 
biased and completely unknowing 
agents (which might be justified on grounds/biases such as
truth-unrelated 
prominence, e.g., due to political power or popularity). 
In
the opposition model, wisdom can fail even 
if all agents' initial beliefs are unbiased and, in addition, arbitrarily
close to truth, 
merely as a consequence of agents being
attracted by contrarian forces --- their in-groups, on the one hand,
which attract them toward truth, 
and their out-groups, on the other, from which they want to
disassociate. In the homophily model, wisdom can fail because agents
are, again, influenced by antagonistic forces --- truth, on the one
hand, and agents
with similar beliefs, on the other. Hence, biased agents' beliefs may
cluster, if they form a homogenous group, and unbiased agents' beliefs
may also cluster, so that some agents would become wise and others not.  

Concerning future research directions within our context, of course,
endogenizing several 
(more) 
of the parameters of the DeGroot learning models that we have
discussed might be of interest. In the current work, we have solely 
endogenized the social network, without explaining, for example, where
in-group/out-group antagonisms actually come from or how conformity
may develop and how reference groups evolve. The endogenizing of such
parameters would 
plausibly require psychological and socio-economic motivations that
are independent of 
the criterion of `past performance'. Moreover, in our model, we have
generally assumed that agents are \emph{homogenous} with respect to
many dimensions of attributes such as their truth tolerances $\eta$,
trust weight increments $\delta$, etc., and a heterogenous setup may
provide further insight. Finally, introducing strategic agents (cf.,
e.g., Anderlini, Gerardi and Lagunoff (2012) \cite{Anderlini2012}),
that potentially have 
incentives to deliberatively \emph{mislead} others, might be a
promising research direction to incorporate in our general setup of
social learning and collective wisdom/failure.


\begin{appendices}
\section{Proofs}\label{sec:appendix}
\subsection*{Standard model}
\begin{lemma}\label{lemma:identicalRows}
  If matrix $\mathbf{A}\in\real^{n\times n}$ has identical rows with
  row sum $s=\sum_{j=1}^n A_{ij}$, then
  $\mathbf{A}^t = s^{t-1}\mathbf{A}$ for any $t\ge 1$. 
\end{lemma}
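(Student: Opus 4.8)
The plan is to recognize that the hypothesis merely says $\mathbf{A}$ is a very special rank-one matrix. Writing $\mathbf{v}^\intercal = (A_{11},\ldots,A_{1n})$ for the common row of $\mathbf{A}$, we have the factorization $\mathbf{A} = \one_n\mathbf{v}^\intercal$, and the row-sum hypothesis becomes the scalar identity $\mathbf{v}^\intercal\one_n = \sum_{j=1}^n A_{1j} = s$. I would record this factorization first, since it reduces the whole lemma to a one-line computation.

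Next I would compute $\mathbf{A}^2$ directly using associativity of matrix multiplication: $\mathbf{A}^2 = (\one_n\mathbf{v}^\intercal)(\one_n\mathbf{v}^\intercal) = \one_n(\mathbf{v}^\intercal\one_n)\mathbf{v}^\intercal = s\,\one_n\mathbf{v}^\intercal = s\mathbf{A}$, where I have pulled out the scalar $\mathbf{v}^\intercal\one_n = s$. This is the only real content of the argument.

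Finally I would conclude by induction on $t$. The base case $t=1$ is $\mathbf{A} = s^{0}\mathbf{A}$, which is immediate (with the convention $s^0 = 1$, so that the statement is vacuously fine even when $s=0$). For the inductive step, assuming $\mathbf{A}^t = s^{t-1}\mathbf{A}$, I multiply on the right by $\mathbf{A}$ and invoke $\mathbf{A}^2 = s\mathbf{A}$ to get $\mathbf{A}^{t+1} = \mathbf{A}^t\mathbf{A} = s^{t-1}\mathbf{A}^2 = s^{t-1}\cdot s\,\mathbf{A} = s^{t}\mathbf{A}$, which closes the induction. There is no genuine obstacle here; the only point requiring a word of care is the bookkeeping of the exponent $s^{t-1}$ at $t=1$, which is purely a matter of convention.
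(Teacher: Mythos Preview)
Your proof is correct and is essentially the same induction argument the paper has in mind; the paper's own proof consists only of the single line ``Follows by induction.'' Your rank-one factorization $\mathbf{A}=\one_n\mathbf{v}^\intercal$ just makes the inductive step $\mathbf{A}^2=s\mathbf{A}$ explicit.
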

\begin{proof}
  Follows by induction. 
\end{proof}
\subsubsection*{Wisdom of crowds under initial beliefs centered around
  truth}
The following are results from Golub and Jackson (2010)
\cite{Golub2010}. They state conditions under which a growing
population, parametrized by its size $n$, converges to truth $\mu$ under the
assumption that agents receive initial belief signals that are
centered around $\mu$ (as in \eqref{eq:centered}). The statement of
the below results is that agents become ($\epsilon$-)wise (for any
$\epsilon>0$) if and only if the \emph{influence} of the most
influential agent converges to zero as $n$ increases, whereby an
agent's influence is given by his 
\emph{social influence}, as we have discussed above and 
as we define below. In
undirected networks ($W_{ij}=W_{ji}$ for all $i,j\in[n]$) with uniform
weights, this condition is tantamount to all agents' relative degrees
(the number of links they have to other agents divided by the total
number of links in the network) converging to zero as $n$ becomes
large. Hence, in this setup, an obstacle to wisdom would be the
circumstance when each agent who newly enters society assigns, e.g.,
a constant fraction of his links to a particular agent, who would then
become 
excessively influential. 

\begin{remark}
  If a social network $\mathbf{W}$ induces a consensus, then limiting
  beliefs can be represented as
  $\mathbf{b}(\infty)=\mathbf{s}^\intercal\mathbf{b}(0)$, for a
  non-negative vector $\mathbf{s}$ with $\sum_{i=1}^n s_i=1$ which we call
  the \emph{social influence vector} and $s_i$ agent
  $i$'s \emph{influence}. 
  The influence vector is given as the unique normalized unit-vector
  $\mathbf{s}$ which satisfies
  $\mathbf{s}=\mathbf{W}^\intercal\mathbf{s}$ (i.e., $\mathbf{s}$ is
  the normalized 
  unit-eigenvector of $\mathbf{W}^\intercal$ corresponding to the
  eigenvalue $\lambda=1$). 
\end{remark}
Now, as in Golub and Jackson (2010) \cite{Golub2010}, we parametrize
social networks $\mathbf{W}$ by their population size $n$, which we
denote by $\mathbf{W}(n)$; we also
parametrize other quantities such as limiting beliefs of a set of
agents by population size $n$ (here and in the following, we omit
reference to topics $k$ for notational convenience). Moreover, we
denote a \emph{society} by the sequence
$\bigl(\mathbf{W}(n)\bigr)_{n\in\nn}$. We restate the following lemma
and the proposition from Golub and Jackson
(2010) \cite{Golub2010}, which they list as Lemma 1 and Proposition 2.   
\begin{lemma}[A law of large numbers]
  If $\bigl(\mathbf{s}(n)\bigr)_{n\in\nn}$ is any sequence of
  influence vectors, then
  \begin{align*}
    \mathbf{s}(n)^\intercal\mathbf{b}(0;n)\goesto \mu \quad\text{as
    }n\goesto\infty  
  \end{align*}
  (where convergence is in probability or almost surely) if and only
  if $s_1(n)\goesto 0$, where we assume, without loss of generality,
  that $s_1(n)\ge s_2(n)\ge\cdots\ge s_n(n)$. 
\end{lemma}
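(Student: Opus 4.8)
The plan is to reduce the statement to a weighted law of large numbers for the zero-mean noise terms and then exploit the ordering $s_1(n)\ge s_2(n)\ge\cdots\ge s_n(n)$ to control second moments. First I would use row-stochasticity of the influence vector, $\sum_{i=1}^n s_i(n)=1$, together with the centered-signal representation $b_i^k(0)=\mu_k+\epsilon_{ik}$ from \eqref{eq:centered}, to write (suppressing the topic index $k$)
\begin{align*}
  \mathbf{s}(n)^\intercal\mathbf{b}(0;n)-\mu \;=\; \sum_{i=1}^n s_i(n)\,\epsilon_i,
\end{align*}
so the claim becomes: $\sum_i s_i(n)\epsilon_i\to 0$ if and only if $s_1(n)\to 0$. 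Since the $\epsilon_i$ are independent with mean zero and (we assume) uniformly bounded variances $\sigma_i^2\le\bar\sigma^2<\infty$, the right-hand side has mean zero and variance $\sum_i s_i(n)^2\sigma_i^2$.

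For the ``if'' direction I would bound $\sum_i s_i(n)^2\sigma_i^2\le \bar\sigma^2\sum_i s_i(n)^2\le \bar\sigma^2\,\big(\max_i s_i(n)\big)\sum_i s_i(n)=\bar\sigma^2\,s_1(n)$, using $s_i(n)^2\le s_1(n)s_i(n)$ and $\sum_i s_i(n)=1$. Hence if $s_1(n)\to 0$ the variance of $\sum_i s_i(n)\epsilon_i$ tends to zero, and Chebyshev's inequality yields convergence to $\mu$ in probability (equivalently, in $L^2$). The almost-sure statement I would get either by passing to a sufficiently fast subsequence along which $\sum_j s_1(n_j)<\infty$ and invoking Borel--Cantelli, or, in the parametrizations actually used later (e.g.\ uniform weights $s_i(n)=\tfrac1n$), by appealing directly to the strong law of large numbers; I would flag that the fully general almost-sure claim needs a mild strengthening (for instance uniformly bounded fourth moments) and is slightly stronger than what the plain second-moment bound delivers.

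For the ``only if'' direction I would argue by contraposition: suppose $s_1(n)\not\to 0$, so there is $c>0$ and a subsequence with $s_1(n_j)\ge c$. Writing $\sum_i s_i(n_j)\epsilon_i=s_1(n_j)\epsilon_1+R_{n_j}$ with $R_{n_j}$ independent of $\epsilon_1$, I would use non-degeneracy of $\epsilon_1$ (positive variance) together with independence to show the weighted sum cannot concentrate at $0$: conditioning on $R_{n_j}$, its conditional law is that of a fixed shift plus an independent copy of $s_1(n_j)\epsilon_1$, whose spread is bounded below uniformly in $j$, so $P\bigl[\,\bigl\lvert\sum_i s_i(n_j)\epsilon_i\bigr\rvert>\delta\,\bigr]$ stays bounded away from $0$ for some $\delta>0$, ruling out convergence in probability (and a fortiori almost surely). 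The main obstacle is precisely this converse: unlike the forward estimate, which is a one-line variance bound, it requires an anti-concentration (``spreading'') argument that uses independence honestly rather than just moments, and one must pin down the minimal non-degeneracy assumption on the noise that makes it work — here I would simply import the standing assumptions under which Golub and Jackson (2010) \cite{Golub2010} establish their Lemma~1.
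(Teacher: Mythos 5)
The paper offers no proof of this statement at all: it is imported verbatim from Golub and Jackson (2010), cited as their Lemma~1, and used as a black box to justify Proposition~\ref{prop:centered2}. So your proposal is not an alternative to the paper's argument but a reconstruction of the cited one, and it is worth assessing on its own terms. Your reduction to $\sum_i s_i(n)\epsilon_i\to 0$ via row-stochasticity is exactly right, and the ``if'' direction is complete and correct under the standing assumption of uniformly bounded variances: the chain $\sum_i s_i(n)^2\sigma_i^2\le\bar\sigma^2 s_1(n)\sum_i s_i(n)=\bar\sigma^2 s_1(n)$ plus Chebyshev is precisely the standard argument, and your caveat that the almost-sure version needs either summability along a subsequence or stronger moment hypotheses is fair (Golub and Jackson themselves state convergence in probability).

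The one soft spot is in the converse, and it is slightly more than a matter of ``pinning down non-degeneracy.'' A lower bound on the (conditional) variance of $s_1(n_j)\epsilon_{1}$ does \emph{not} by itself yield the anti-concentration you need: a sequence of mean-zero random variables can have variance bounded away from zero and still converge to $0$ in probability (put mass $1/j^2$ at height $j$ and the rest at $0$). So the phrase ``whose spread is bounded below uniformly in $j$'' does not, on its own, imply that $P\bigl[\abs{\sum_i s_i(n_j)\epsilon_i}>\delta\bigr]$ stays bounded away from zero. What rescues the argument is a uniform \emph{upper} bound as well: if beliefs live in a bounded set (as in Golub and Jackson, where $S=[0,1]$) or the $\epsilon_i$ have uniformly bounded fourth moments, then a mean-zero random variable $X$ with $\Var[X]\ge v$ and $\abs{X}\le M$ satisfies $P[\abs{X}>\delta]\ge (v-\delta^2)/M^2$ for $\delta^2<v$, applied directly to the whole sum $\sum_i s_i(n_j)\epsilon_i$ (whose variance is at least $c^2\underline{\sigma}^2$ along the subsequence); the conditioning decomposition then becomes unnecessary. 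With that hypothesis made explicit --- variances bounded below \emph{and} either bounded support or bounded higher moments --- your proof closes. Since the present paper works on $S=\real$ with only second-moment assumptions stated, this is a hypothesis the lemma genuinely needs and that neither the paper nor your write-up currently records.
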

\begin{proposition}
  If $\bigl(\mathbf{W}(n)\bigr)_{n\in\nn}$ is a sequence of 
  networks, each inducing a consensus, then the underlying agents
  become ($\epsilon$-)wise (for any 
  $\epsilon>0$) as $n\goesto\infty$ if and only if the associated
  influence vectors are 
  such that $s_1(n)\goesto 0$ as $n\goesto\infty$. 
\end{proposition}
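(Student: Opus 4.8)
The plan is to assemble the statement from the two ingredients already available in the excerpt: the representation of limiting consensus beliefs via the social influence vector, and the law of large numbers lemma stated immediately above. First I would observe that, since each $\mathbf{W}(n)$ induces a consensus by hypothesis, the Remark preceding the statement gives $\mathbf{b}(\infty;n)=\bigl(\mathbf{s}(n)^\intercal\mathbf{b}(0;n)\bigr)\one_n$; that is, every coordinate of the limiting belief vector equals the single scalar $c(n):=\mathbf{s}(n)^\intercal\mathbf{b}(0;n)$, where $\mathbf{s}(n)$ is the normalized nonnegative left eigenvector of $\mathbf{W}(n)$ for the eigenvalue $\lambda=1$. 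Because all coordinates of $\mathbf{b}(\infty;n)$ coincide, the event ``all agents are $\epsilon$-wise'' is exactly $\norm{c(n)-\mu}<\epsilon$, and hence ``all agents become $\epsilon$-wise for every $\epsilon>0$ as $n\goesto\infty$'' is, by the very definition of (in-probability, resp.\ almost sure) convergence, the statement $c(n)\goesto\mu$.

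Next I would invoke the law of large numbers lemma verbatim: for any sequence of influence vectors $(\mathbf{s}(n))_{n\in\nn}$ (with components ordered decreasingly, so that $s_1(n)$ is the influence of the most influential agent), one has $\mathbf{s}(n)^\intercal\mathbf{b}(0;n)\goesto\mu$ if and only if $s_1(n)\goesto 0$. Applied to the influence vectors associated to the networks $\mathbf{W}(n)$, this reads $c(n)\goesto\mu\iff s_1(n)\goesto 0$. Chaining this equivalence with the reduction of the previous paragraph yields exactly the claim: the underlying agents become $\epsilon$-wise for all $\epsilon>0$ as $n\goesto\infty$ if and only if $s_1(n)\goesto 0$.

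The only genuine subtlety, and thus the step to treat carefully, is the passage between ``$\epsilon$-wise for every $\epsilon>0$'' and the convergence statement $c(n)\goesto\mu$. Here one must (i) use that the consensus property forces all coordinates of $\mathbf{b}(\infty;n)$ to be identical, so that the family of $\epsilon$-radius conditions defining wisdom collapses to a single scalar inequality $\norm{c(n)-\mu}<\epsilon$ (monotonicity in $\epsilon$, Lemma \ref{lemma:trivial}, makes the small-$\epsilon$ conditions the binding ones, and ``for every $\epsilon>0$, eventually within $\epsilon$'' is precisely convergence), and (ii) keep the mode of stochastic convergence — in probability versus almost surely — consistent between the hypothesis invoked in the lemma and the conclusion. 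Everything else is a direct substitution into results already established, so no further estimates are needed.
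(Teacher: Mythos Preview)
Your derivation is correct and is the natural way to obtain the proposition from the preceding law-of-large-numbers lemma: represent the common limiting belief as $c(n)=\mathbf{s}(n)^\intercal\mathbf{b}(0;n)$ via the influence vector, identify ``$\epsilon$-wise for every $\epsilon>0$'' with $c(n)\goesto\mu$, and then apply the lemma. Note, however, that the paper does not supply its own proof of this proposition at all --- it explicitly \emph{restates} the result (together with the lemma) from Golub and Jackson (2010) \cite{Golub2010}, where it appears as their Proposition~2, and then merely \emph{applies} it informally to the endogenous-weight setting. So there is no in-paper argument to compare your proposal against; what you have written is precisely the short reduction one would expect, and it stands on its own.
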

We now argue informally that the proposition entails convergence to
truth in the situation where agents' initial beliefs are centered
around truth as in \eqref{eq:centered} and in our setup of 
endogenous weight formation.\footnote{We assume that $k$ is so large
  that each network $\mathbf{W}^{(k)}(n)$ always induces a
  consensus. Note that, if agents are stochastically intelligent, a
  consensus is reached quickly (and increasingly fast in the number of
  agents $n$), by the results developed in Section \ref{sec:standard}.} 
Namely, we first argue that an agent's influence $s_i$ is
directly inversely proportional to his variance $\sigma_i^2$. Although
a proof thereof would require technical sophistication, the claim
appears very 
intuitive 
since influence $s_i$ captures weight mass assigned to an agent by
other agents (in addition to these agents' influence; cf.\ DeMarzo,
Vayanos and Zwiebel (2003) \cite{Demarzo2003}) and, in our
setup, the weight mass that an agent receives is directly inversely
proportional to his variance $\sigma_i^2$ (more intelligent agents
receive weight increases more often). Next, consider networks
$\bigl(\mathbf{W}(n)\bigr)_{n\in\nn}$ where, for all $n\in\nn$,
agents' 
variances $\sigma_i^2$ satisfy $\sigma_i^2\ge\bar{\sigma}^2>0$ for
some lower bound 
$\bar{\sigma}^2>0$.\footnote{Should there be no lower bound on
  the most 
  intelligent agent's variance, then this agent may become excessively
influential but his initial beliefs also become arbitrarily accurate,
so that society becomes ($\epsilon$-)wise simply because one agent is
arbitrarily well-informed.} Then, as $n\goesto\infty$, the influence of the most
influential agent certainly goes to zero since the number of agents
increases (all of which are influential in the sense that they receive
weight mass from others) while the expected weight mass that the most
influential 
agent receives is bounded.

\subsubsection*{Varying weights on own beliefs}
\begin{proof}[Proof of Proposition \ref{prop:demarzo}]
  Since $\mathbf{W}=\mathbf{W}^{(k)}$ converges for all initial belief
  vectors $\mathbf{b}(0)$, there exists a matrix $\mathbf{W}^\infty$
  such that $\lim_{t\goesto\infty}\mathbf{W}^t=\mathbf{W}^\infty$. To
  prove the proposition, show that
  $\prod_{s=0}^{t-1}\mathbf{W}(\lambda_s)$ converges to
  $\mathbf{W}^\infty$ as $t\goesto\infty$, whereby
  $\mathbf{W}(\lambda)=\Bigl((1-\lambda)\mathbf{I}+\lambda\mathbf{W}\Bigr)$
  and where
  $\mathbf{b}(t)=\Bigl(\prod_{s=0}^{t-1}\mathbf{W}(\lambda_s)\Bigr)\mathbf{b}(0)$
  according to \eqref{eq:degrootupdatedemarzo}. Proceed exactly as in
  DeMarzo, Vayanos, and Zwiebel (2003) \cite{Demarzo2003}.

  Define the random variable $\Lambda_t$ to be equal to $1$ with
  probability $\lambda_t$ and zero otherwise. Assume also that
  $\Lambda_t$ are independent over time. Define the random matrix
  $\mathbf{Z}_t$ by
  $\mathbf{Z}_t=\prod_{s=0}^{t-1}\mathbf{W}(\Lambda_s)=\mathbf{W}^{\sum_{s=0}^{t-1}\Lambda_s}$. Then
  $\Exp[\mathbf{Z}_t]=\prod_{s=0}^{t-1}\mathbf{W}(\lambda_s)$. By the
  Borel-Cantelli lemma, if $\sum_{t=0}^\infty
  \text{Pr}[\Lambda_t=1]=\sum_{t=0}^\infty\lambda_t=\infty$, then 
  \begin{align*}
    \text{Pr}[\sum_{t=0}^\infty \Lambda_t=\infty] =
    \text{Pr}[\Lambda_t=1 \text{ infinitely often}] 
     = 1.
  \end{align*}
  Since the matrix $\mathbf{W}^t$ is bounded uniformly in $t$, the
  dominated convergence theorem implies that
  \begin{align*}
    \lim_{t\goesto\infty} \prod_{s=0}^{t-1}\mathbf{W}(\lambda_s) =
    \lim_{t\goesto\infty} \Exp[\mathbf{Z}_t] =
    \lim_{t\goesto\infty}\Exp[\mathbf{W}^{\sum_{s=0}^{t-1}\Lambda_s}]
    = \mathbf{W}^\infty. 
  \end{align*}
\end{proof}

\subsection*{Opposition}
\begin{lemma}
  Consider any $n\times n$ matrix $\mathbf{A}$ of the form 
\begin{align}\label{eq:formA'}
  \mathbf{A} = 
  \begin{pmatrix}
    \beta & \alpha & \ldots & \alpha\\
    \alpha & \beta & \ldots & \alpha\\
    \vdots & \ldots & \ddots & \vdots\\
    \alpha & \alpha & \ldots & \beta
  \end{pmatrix}
\end{align}
 with $\alpha,\beta\in\real$. 
 The eigenvalues of matrix $\mathbf{A}$ are given by
 $\lambda_1=\beta+(n-1)\alpha$ and
 $\lambda_2=\cdots=\lambda_n=\beta-\alpha$. 
\end{lemma}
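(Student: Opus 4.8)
The plan is to exhibit a full set of eigenvectors and read off the eigenvalues directly, which is the cleanest route for a matrix of this highly symmetric (circulant-like, in fact ``equicorrelation'') form. Write $\mathbf{A} = (\beta-\alpha)\mathbf{I}_n + \alpha\,\mathbf{J}_n$, where $\mathbf{J}_n = \one_n\one_n^\intercal$ is the all-ones matrix. This decomposition is the heart of the argument: everything else follows from the spectral decomposition of $\mathbf{J}_n$.

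First I would note that $\mathbf{J}_n$ has rank one, hence eigenvalue $0$ with multiplicity $n-1$, and its remaining eigenvalue is $\trace(\mathbf{J}_n) = n$, with eigenvector $\one_n$ (since $\mathbf{J}_n\one_n = n\one_n$). Concretely, the eigenspace of $\mathbf{J}_n$ for eigenvalue $0$ is the hyperplane $\{\mathbf{v} : \one_n^\intercal\mathbf{v} = 0\}$, which has dimension $n-1$. Then, since $\mathbf{A} = (\beta-\alpha)\mathbf{I}_n + \alpha\mathbf{J}_n$ is an affine function of $\mathbf{J}_n$, it shares all eigenvectors of $\mathbf{J}_n$, and each eigenvalue $\nu$ of $\mathbf{J}_n$ maps to the eigenvalue $(\beta-\alpha) + \alpha\nu$ of $\mathbf{A}$. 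Applying this to $\nu = n$ gives $\lambda_1 = (\beta-\alpha) + \alpha n = \beta + (n-1)\alpha$ with eigenvector $\one_n$; applying it to $\nu = 0$ (with multiplicity $n-1$) gives $\lambda_2 = \cdots = \lambda_n = \beta - \alpha$, with eigenspace the orthogonal complement of $\one_n$. Since we have produced $n$ linearly independent eigenvectors (one spanning $\Span\{\one_n\}$ and $n-1$ spanning its complement), this accounts for the entire spectrum, counted with multiplicity.

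For a paper-ready write-up I would also offer the elementary alternative verification, which avoids invoking any structure theory: check directly that $\mathbf{A}\one_n = (\beta + (n-1)\alpha)\one_n$ by summing each row, and that for any vector $\mathbf{v}$ with $\sum_j v_j = 0$ one has $(\mathbf{A}\mathbf{v})_i = \beta v_i + \alpha\sum_{j\neq i} v_j = \beta v_i + \alpha(-v_i) = (\beta-\alpha)v_i$, so $\mathbf{A}\mathbf{v} = (\beta-\alpha)\mathbf{v}$; since $\{\mathbf{v} : \one_n^\intercal\mathbf{v}=0\}$ is $(n-1)$-dimensional, this pins down the remaining $n-1$ eigenvalues. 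I do not anticipate a genuine obstacle here — the statement is essentially a one-line consequence of $\mathbf{A} = (\beta-\alpha)\mathbf{I}_n + \alpha\mathbf{J}_n$ — so the only ``care'' needed is to be explicit that the two eigenspaces together span $\real^n$, which justifies that we have found all eigenvalues with the correct multiplicities (and, incidentally, shows $\mathbf{A}$ is diagonalizable, the fact used right after in the proof of Proposition~\ref{prop:centered1}).
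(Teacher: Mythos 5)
Your proof is correct, but it takes a genuinely different route from the paper's. You diagonalize $\mathbf{A}$ by writing it as $(\beta-\alpha)\mathbf{I}_n+\alpha\,\one_n\one_n^\intercal$ and transporting the spectrum of the rank-one all-ones matrix; the paper instead computes the determinant of $\mathbf{A}$ explicitly (by a row-subtraction recursion in $n$, obtaining $\det\mathbf{A}=(\beta-\alpha)^{n-1}(\beta+(n-1)\alpha)$) and then observes that $\mathbf{A}-\lambda\mathbf{I}_n$ has the same structural form, so the same formula with $\beta$ replaced by $\beta-\lambda$ yields the factored characteristic polynomial $\chi(\lambda)=((\beta-\alpha)-\lambda)^{n-1}(\beta+(n-1)\alpha-\lambda)$. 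Your argument buys more with less machinery: it avoids the induction, produces the eigenvectors explicitly ($\one_n$ and the hyperplane $\{\mathbf{v}:\one_n^\intercal\mathbf{v}=0\}$), and establishes diagonalizability as a by-product --- which is exactly what the proof of Proposition \ref{prop:centered1} needs next, whereas the paper has to invoke symmetry of $\mathbf{A}$ separately to get it. The paper's route, in turn, hands you the closed-form determinant as a free by-product and stays entirely within elementary determinant manipulations. One small point of care that you handled correctly and that is worth keeping in any write-up: stating explicitly that the two eigenspaces together span $\real^n$, so that the multiplicities $1$ and $n-1$ exhaust the spectrum.
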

\begin{proof}
  We first consider the determinant of
  $\mathbf{A}=\mathbf{A}(n)$. Subtracting the second row from the
  first, we
  find
  $\det(\mathbf{A}(n))=(\beta-\alpha)\det(\mathbf{A}(n-1))+(\beta-\alpha)\det(\mathbf{B}(n-1))$,
  where $\mathbf{B}(n)$ is the $n\times n$ matrix with
  $[\mathbf{B}(n)]_{ij}=[\mathbf{A}(n)]_{ij}$ for all $i,j$ with
  $(i,j)\neq (1,1)$; for $(i,j)=(1,1)$, we have
  $[\mathbf{B}(n)]_{ij}=\alpha$. Proceeding analogously as for
  $\mathbf{A}(n)$, we find 
  $\det(\mathbf{B}(n))=(\beta-\alpha)^{n-1}\alpha$. Therefore,
  \begin{align*}
    \det(\mathbf{A}(n)) =
    (\beta-\alpha)^{n-1}\bigl(\beta+(n-1)\alpha\bigr). 
  \end{align*}
  Now, consider the characteristic polynomial of $\mathbf{A}(n)$; it
  is $\chi(\lambda)=\det(\mathbf{A}-\lambda \mathbf{I}_n)$. Note that
  $\mathbf{A}-\lambda \mathbf{I}_n$ is a matrix of the form
  \eqref{eq:formA'}. Hence, its determinant is given by 
  \begin{align*}
    \chi(\lambda) =
    \Bigl((\beta-\alpha)-\lambda\Bigr)^{n-1}\Bigl(\beta+\alpha(n-1)-\lambda\Bigr).  
  \end{align*}
  This concludes the proof. 
\end{proof}
\begin{lemma}\label{lemma:A}
  Consider any matrix of the form \eqref{eq:A} with $a,b,c,d>0$
  and such that $\sum_{j=1}^n \length{A_{ij}}=1$ for all
  $i=1,\ldots,n$. 
  Let $n_1=1$. Then, 
  the characteristic polynomial of $\mathbf{A}$ is
  given by
  \begin{align*}
    \chi(\lambda)=\det(\mathbf{A}-\lambda \mathbf{I}_n) =
    (-\lambda)^{n-2}\Bigl(\lambda^2-(a+(n-1)d)\lambda+(n-1)(ad-bc)\Bigr) = 
    (-\lambda)^{n-2}(\lambda-1)(\lambda-q), 
  \end{align*}
  where $q=(n-1)(ad-bc)=a-1+(n-1)d$. 
\end{lemma}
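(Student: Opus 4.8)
The plan is to avoid any brute-force cofactor expansion and instead exhibit an explicit $\mathbf{A}$-invariant direct-sum decomposition $\real^n=V_0\oplus V_1$ with $\dim V_0=n-2$ and $\dim V_1=2$, on which $\mathbf{A}$ acts transparently; the characteristic polynomial then splits as the product of the two block characteristic polynomials, immediately producing the factor $(-\lambda)^{n-2}$ together with a quadratic factor.

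First I would fix $n_1=1$, $n_2=n-1$ and record what $\sum_{j}\length{A_{ij}}=1$ gives. Applied to the first row (entries $a,-b,\ldots,-b$) it reads $a+n_2 b=1$; applied to any $\struct{B}$-row (entries $-c,d,\ldots,d$) it reads $c+n_2 d=1$ --- positivity of $a,b,c,d$ is exactly what lets me drop the absolute values. From $a=1-n_2 b$ and $c=1-n_2 d$ one computes $ad-bc=(1-n_2 b)d-b(1-n_2 d)=d-b$, whence $n_2(ad-bc)=n_2 d-n_2 b=a-1+n_2 d$; this reconciles the two stated formulas for $q$, and the identity $a+n_2 d=1+n_2(ad-bc)$ it yields will be reused below.

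Next I would take $V_0=\set{(0,\mathbf{v})\sd \mathbf{v}\in\real^{n_2},\ \one_{n_2}^\intercal\mathbf{v}=0}$ and $V_1=\Span\set{\mathbf{e}_1,\mathbf{f}}$ with $\mathbf{e}_1=(1,0,\ldots,0)^\intercal$, $\mathbf{f}=(0,1,\ldots,1)^\intercal$. Since each block of $\mathbf{A}$ is a scalar times an all-ones matrix, a one-line check yields $\mathbf{A}(0,\mathbf{v})=\mathbf{0}$ whenever $\one^\intercal\mathbf{v}=0$, as well as $\mathbf{A}\mathbf{e}_1=a\mathbf{e}_1-c\mathbf{f}$ and $\mathbf{A}\mathbf{f}=-n_2 b\,\mathbf{e}_1+n_2 d\,\mathbf{f}$. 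Hence $V_0$ and $V_1$ are both $\mathbf{A}$-invariant; since $V_0\cap V_1=\set{\mathbf{0}}$ (a vector $\alpha\mathbf{e}_1+\beta\mathbf{f}$ lying in $V_0$ forces $\alpha=0$ and $n_2\beta=0$) and $\dim V_0+\dim V_1=(n_2-1)+2=n$, we get $\real^n=V_0\oplus V_1$. Consequently $\det(\mathbf{A}-\lambda\mathbf{I}_n)$ is the product of $\det(-\lambda\mathbf{I}_{n-2})=(-\lambda)^{n-2}$ and the characteristic polynomial of $\mathbf{A}|_{V_1}$, which in the basis $\set{\mathbf{e}_1,\mathbf{f}}$ has matrix with columns $(a,-c)^\intercal$ and $(-n_2 b,n_2 d)^\intercal$, hence equals $\lambda^2-(a+n_2 d)\lambda+n_2(ad-bc)$; with $n_2=n-1$ this is precisely the first displayed form.

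To obtain the factored form I would substitute $\lambda=1$ into the quadratic and invoke $a+n_2 d=1+n_2(ad-bc)$ to see the value vanishes; since the product of the quadratic's two roots equals $n_2(ad-bc)=q$, the roots are exactly $1$ and $q$, giving $\chi(\lambda)=(-\lambda)^{n-2}(\lambda-1)(\lambda-q)$. There is no genuinely hard step here; the only things to watch are the sign bookkeeping (so that $\chi$ has leading coefficient $(-1)^n$) and the degenerate case $n_2=1$, where $V_0=\set{\mathbf{0}}$ and $(-\lambda)^{n-2}=1$, consistently. I expect the only real risk of slippage to be in the block computations $\mathbf{A}\mathbf{e}_1$, $\mathbf{A}\mathbf{f}$, so I would cross-check those against the direct $2\times 2$ computation in the case $n=2$.
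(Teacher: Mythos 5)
Your proof is correct, and it takes a genuinely different route from the paper. The paper expands $\det(\mathbf{A}-\lambda\mathbf{I}_n)$ along the last row, derives a recursion $\det(\mathbf{B}_n)=-\lambda\det(\mathbf{B}_{n-1})-\lambda\det(\mathbf{C}_{n-1})$ in the matrix size, and solves it inductively, then asserts that $\lambda=1$ is an eigenvalue ``as can readily be checked'' to get the factored form. You instead exhibit the $\mathbf{A}$-invariant splitting $\real^n=V_0\oplus V_1$ with $V_0\subseteq\Kern(\mathbf{A})$ of dimension $n-2$ and $V_1=\Span\set{\mathbf{e}_1,\mathbf{f}}$, reducing everything to the $2\times 2$ matrix $\begin{pmatrix}a & -n_2b\\ -c & n_2d\end{pmatrix}$; all your block computations check out, as does the verification that $\lambda=1$ annihilates the quadratic via $a+n_2d=1+n_2(ad-bc)$. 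Your route buys several things the paper's does not: it avoids the somewhat delicate determinant bookkeeping, it produces the eigenvectors (and in particular the $(n-2)$-dimensional kernel) explicitly rather than only the spectrum, and it actually derives the eigenvalue $1$ and the identity $q=(n-1)(ad-bc)=a-1+(n-1)d$ from the row-sum constraints $a+n_2b=1$, $c+n_2d=1$ instead of merely asserting them. The paper's expansion, on the other hand, does not use those constraints until the final factorization step, so it gives the unfactored quadratic for arbitrary $a,b,c,d>0$; but since the lemma as stated assumes the constraints anyway, nothing is lost. Your handling of the degenerate case $n=2$ and of the sign of the leading coefficient is also correct.
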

\begin{proof}
  Expanding the determinant along the last row (and subtracting the
  second-to-last row from the last), we find that the
  determinant $\det(\mathbf{B}_n)$ of
  $\mathbf{B}_n=\mathbf{A}_n-\lambda\mathbf{I}_n$, with
  $\mathbf{A}_n=\mathbf{A}$, is given by
  \begin{align*}
    -\lambda \det(\mathbf{B}_{n-1})-\lambda\det(\mathbf{C}_{n-1})
  \end{align*}
  whereby $\mathbf{C}_n=\mathbf{A}_n-\lambda\mathbf{I}_n$, except for
  the entry in row $n$ and column $n$, which is
  $[\mathbf{C}_n]_{nn}=A_{nn}$. The determinant of $\mathbf{C}_n$ can
  easily be found to be $(-\lambda)^{n-2}\cdot
  \Bigl((a-\lambda)d-bc\Bigr)$. Then solving $\det(\mathbf{A}_n)$
  inductively 
  leads to the required solution. Finally, the factorization of the
  quadratic polynomial results from the fact that $\mathbf{A}$ has one
  eigenvalue of $1$, as can readily be checked. 
\end{proof}
From Lemma \ref{lemma:A}, we can infer that matrix $\mathbf{A}$ from
\eqref{eq:A} has $n-2$ eigenvalues $0$, one eigenvalue of $1$, and one
eigenvalue $q$, which is a real eigenvalue. Moreover, all eigenvalues
of $\mathbf{A}$ are bounded from above by $1$ (cf.\ Eger (2013)
\cite{Eger2013}, Proposition 6.3).  
Assume that $q$ were $-1$. Then 
\begin{align*}
  a-1+(n-1)d = q=- 1 \quad\iff\quad a+(n-1)d = 0 \quad\iff\quad
  a=-(n-1)d, 
\end{align*}
whence $a$ is
negative, which contradicts $a>0$. Thus, assume 
$q$ 
were $+1$. Then
\begin{align*}
  a+(n-1)d = 2,
\end{align*}
which contradicts
$a+(n-1)d=a+n_2d<1+1=2$ 
(since both $b$ and $c$ are positive and recall the row sum
restrictions $n_1a+n_2b=1$, etc.). Therefore, $\lambda=1$ is the
only eigenvalue of $\mathbf{A}$ on the unit circle and it has
algebraic multiplicity of $1$.

\subsection*{Conformity}
\begin{lemma}\label{lemma:invert}
  Consider $\mathbf{I}_n-\mathbf{A}$ for an $n\times n$ matrix
  $\mathbf{A}$. If $\lim_{k\goesto\infty}\mathbf{A}^k=\mathbf{0}$, then
  $\mathbf{I}_n-\mathbf{A}$ is invertible and its inverse is given by
  the \emph{Neumann series}
  \begin{align*}
    \inv{(\mathbf{I}_n-\mathbf{A})} = \sum_{k=0}^\infty\mathbf{A}^k.
  \end{align*}
\end{lemma}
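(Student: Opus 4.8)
The plan is to extract everything from the single hypothesis $\mathbf{A}^k\goesto\mathbf{0}$ by means of the elementary telescoping identity for partial sums of the Neumann series, handling invertibility first so that convergence of the series comes for free.

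First I would establish that $\mathbf{I}_n-\mathbf{A}$ is invertible by showing it has trivial kernel (this suffices since we are in finite dimension). Suppose $(\mathbf{I}_n-\mathbf{A})\mathbf{x}=\mathbf{0}$ for some $\mathbf{x}\in\real^n$. Then $\mathbf{A}\mathbf{x}=\mathbf{x}$, and iterating gives $\mathbf{A}^k\mathbf{x}=\mathbf{x}$ for every $k\ge 0$. Letting $k\goesto\infty$ and using that $\mathbf{A}^k\goesto\mathbf{0}$ entrywise forces $\mathbf{x}=\mathbf{0}$. Hence $\ker(\mathbf{I}_n-\mathbf{A})=\set{\mathbf{0}}$ and $\mathbf{I}_n-\mathbf{A}$ is invertible.

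Next I would record the purely formal telescoping identity: for every $N\in\nn$,
\[
(\mathbf{I}_n-\mathbf{A})\sum_{k=0}^N\mathbf{A}^k \;=\; \sum_{k=0}^N\mathbf{A}^k-\sum_{k=0}^N\mathbf{A}^{k+1}\;=\;\mathbf{I}_n-\mathbf{A}^{N+1},
\]
which holds because all the summands commute with $\mathbf{A}$. Multiplying on the left by $\inv{(\mathbf{I}_n-\mathbf{A})}$, which exists by the previous step, yields $\sum_{k=0}^N\mathbf{A}^k=\inv{(\mathbf{I}_n-\mathbf{A})}\bigl(\mathbf{I}_n-\mathbf{A}^{N+1}\bigr)$. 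Now I would let $N\goesto\infty$: since $\mathbf{A}^{N+1}\goesto\mathbf{0}$ and matrix multiplication is continuous, the right-hand side converges to $\inv{(\mathbf{I}_n-\mathbf{A})}$. Therefore the partial sums of $\sum_{k=0}^\infty\mathbf{A}^k$ converge and their limit equals $\inv{(\mathbf{I}_n-\mathbf{A})}$, which is the asserted Neumann-series formula.

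There is essentially no obstacle here; the only point requiring care is the logical ordering. By deriving invertibility directly from $\mathbf{A}^k\goesto\mathbf{0}$ via the kernel argument, convergence of $\sum_k\mathbf{A}^k$ is obtained as a consequence of the telescoping identity, so one need not separately prove absolute convergence of the series (which would otherwise call for a spectral-radius bound or a submultiplicative-norm estimate of the form $\norm{\mathbf{A}^m}<1$ for some $m$, yielding geometric decay of $\norm{\mathbf{A}^k}$). Either route works, but the kernel route is the shorter one and is what I would write up.
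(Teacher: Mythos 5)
Your proof is correct and complete. Note that the paper does not actually supply an argument for this lemma: its ``proof'' consists of the single line ``See Meyer (2000), p.~618,'' so you have written out in full what the paper delegates to a reference. Your argument is the standard one for the Neumann series, and the logical ordering you chose is the clean version of it: establishing invertibility first via the kernel argument (if $\mathbf{A}\mathbf{x}=\mathbf{x}$ then $\mathbf{x}=\mathbf{A}^k\mathbf{x}\goesto\mathbf{0}$, so $\mathbf{x}=\mathbf{0}$, which suffices in finite dimension) lets you read off convergence of the partial sums $\sum_{k=0}^N\mathbf{A}^k=\inv{(\mathbf{I}_n-\mathbf{A})}(\mathbf{I}_n-\mathbf{A}^{N+1})$ directly from the telescoping identity, with no need for a spectral-radius or norm estimate. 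This is exactly the level of generality the lemma asserts (hypothesis $\mathbf{A}^k\goesto\mathbf{0}$ only, no assumption that $\norm{\mathbf{A}}<1$), and it is also adequate for the way the lemma is used in the conformity section, where $\mathbf{A}=\mathbf{\Delta}\mathbf{Q}$ and the paper separately verifies $(\mathbf{\Delta}\mathbf{Q})^k\goesto\mathbf{0}$. No gaps.
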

\begin{proof}
  See Meyer (2000) \cite{Meyer2000}, p.618. 
\end{proof}
\begin{proof}[Proof of Proposition \ref{prop:buechel1}]
  Our proof follows along the lines of the proof of the corresponding
  proposition 
  of Buechel, Hellmann, and Kl\"{o}{\ss}ner (2012) \cite{Buchel2012}. 

  The best response $s_i^*$ of player $i$ to the strategies $s_{-i}$
  of the other players is given by the first order conditions,
  \begin{align*}
    \pardiv{u_i(s_i,s_{-i};b_i)}{s_i}|_{s_i=s_{i}^*} =
    -2(1-\delta_i)(s_i^*-b_i)-2\delta_i\Bigl(s_i^*-\sum_{j\neq
      i}Q_{ij}s_j\Bigr) = 0
  \end{align*}
  for all $i\in[n]$. 
  Note that the best response is unique. A strategy profile
  $\mathbf{s}^*\in S^n$ is a Nash equilibrium if and only if $s_i^*$
  is a best response to $\mathbf{s}^*_{-i}$. Thus, Nash equilibria
  $\mathbf{s}^*\in S^n$ satisfy:
  \begin{align*}
    (\mathbf{I}_n-\mathbf{\Delta})(\mathbf{s}^*-\mathbf{b})+\mathbf{\Delta}(\mathbf{s}^*-\mathbf{Q}\mathbf{s}^*)
    =
    (\mathbf{I}_n-\mathbf{\Delta})(\mathbf{s}^*-\mathbf{b})+\mathbf{\Delta}(\mathbf{I}_n-\mathbf{Q})\mathbf{s}^*
    = \mathbf{0}. 
  \end{align*}
  Rewriting leads to 
  \begin{align*}
    \mathbf{s}^* =
    \inv{(\mathbf{I}_n-\mathbf{\Delta}\mathbf{Q})}(\mathbf{I}_n-\mathbf{\Delta})\mathbf{b},
  \end{align*}
  which is well-defined since 
  $\mathbf{I}_n-\mathbf{\Delta}\mathbf{Q}$ is invertible by Lemma
  \ref{lemma:invert}. Namely, we have 
  \begin{align*}
    \norm{\mathbf{\Delta Q}}^k \le
    \norm{\mathbf{\Delta}}^k\norm{\mathbf{Q}}^k \le
    \Bigl(\underbrace{\max_{i\in[n]}{\abs{\delta_i}}}_{=:\delta_{\text{max}}}\Bigr)^k\norm{\mathbf{Q}}^k  
  \end{align*}
  for any matrix norm $\norm{\cdot}$. Hence, 
  \begin{align*}
    0\le \lim_{k\goesto\infty} \norm{\mathbf{\Delta Q}}^k\le
    \lim_{k\goesto\infty}{(\delta_{\text{max}})}^k\norm{\mathbf{Q}}^k = 0,
  \end{align*}
  since $\length{\delta_i}<1$ by assumption, for all $i\in[n]$, and
  $\norm{\mathbf{Q}}^k$ 
  is bounded since $\mathbf{Q}$ is row-stochastic. Therefore,
  $\lim_{k\goesto\infty} (\mathbf{\Delta Q})^k=\mathbf{0}$. 
\end{proof}
\begin{proof}[Proof of Lemma \ref{lemma:one}]
  Consider $\mathbf{M}\one$ (which is $\mathbf{M}\cdot\one$), which is
  \begin{align*}
    \mathbf{D}\one +
    (\mathbf{W}-\mathbf{D})\inv{(\mathbf{I}_n-\mathbf{\Delta}\mathbf{Q})}
    (\mathbf{I}_n-\mathbf{\Delta})\one. 
  \end{align*}
  It suffices to show that 
  \begin{align*}
    \mathbf{R}\one:=\inv{(\mathbf{I}_n-\mathbf{\Delta}\mathbf{Q})}
    (\mathbf{I}_n-\mathbf{\Delta})\one = \one
  \end{align*}
  because of row-stochasticity of $\mathbf{W}$, which entails that
  $\mathbf{W}\one = \one$. 

  Now, we have
  $\inv{(\mathbf{I}_n-\mathbf{\Delta}\mathbf{Q})}=\sum_{r=0}^\infty(\mathbf{\Delta}\mathbf{Q})^r$
  by row-stochasticity of $\mathbf{Q}$ and since
  $\length{\delta_i}<1$. Hence
  \begin{align*}
    \mathbf{R}\one &=
    \sum_{r=0}^\infty(\mathbf{\Delta}\mathbf{Q})^r(\mathbf{I}_n-\mathbf{\Delta})\one
    =
    (\mathbf{I}_n-\mathbf{\Delta})\one+\sum_{r=1}^\infty(\mathbf{\Delta}\mathbf{Q})^{r-1}[\mathbf{\Delta}\mathbf{Q}\one-\mathbf{\Delta}\mathbf{Q}\mathbf{\Delta}\one]
    \\
    &=
    (\mathbf{I}_n-\mathbf{\Delta})\one+\sum_{r=1}^\infty(\mathbf{\Delta}\mathbf{Q})^{r-1}[\mathbf{\Delta}\one-\mathbf{\Delta}\mathbf{Q}\mathbf{\Delta}\one]
    =
    (\mathbf{I}_n-\mathbf{\Delta})\one+\inv{(\mathbf{I}_n-\mathbf{\Delta}\mathbf{Q})}[\mathbf{I}_n-\mathbf{\Delta}\mathbf{Q}]\mathbf{\Delta}\one\\
    &= (\mathbf{I}_n-\mathbf{\Delta})\one+\mathbf{\Delta}\one = \one. 
  \end{align*}
\end{proof}
\begin{proposition}
  In the situation of Example \ref{example:socialInfluence}, the
  social influence weights $x,x$ and $y$ of agents $1,2$ and $3$ are
  given by
  \begin{align*}
    x=\frac{2(1-a)}{4-ab-3a},\quad\text{and}\quad y =
    \frac{a(1-b)}{4-ab-3a}.  
  \end{align*}
\end{proposition}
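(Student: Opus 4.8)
The plan is to compute the belief-aggregation matrix $\mathbf{M}=\mathbf{M}^{(k)}$ of \eqref{eq:M} for the data of Example \ref{example:socialInfluence} and then to read off its (unique, strictly positive) left Perron eigenvector. Here $\mathbf{D}=\diag(\tfrac12,\tfrac12,0)$, $\mathbf{\Delta}=\diag(a,a,b)$, and $\mathbf{W},\mathbf{Q}$ are as given, so $\mathbf{M}=\mathbf{D}+(\mathbf{W}-\mathbf{D})\inv{(\mathbf{I}_n-\mathbf{\Delta Q})}(\mathbf{I}_n-\mathbf{\Delta})$ is strictly positive by Proposition \ref{prop:influential} ($\delta_i>0$ for all $i$, $\mathbf{Q}$ positive off the diagonal, $\mathbf{W}$ with two positive columns), hence induces a consensus with a well-defined social-influence vector $\mathbf{s}$ satisfying $\mathbf{s}^\intercal\mathbf{M}=\mathbf{s}^\intercal$ and $\sum_i s_i=1$. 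First I would observe that $\mathbf{W}$, $\mathbf{Q}$ and $\mathbf{\Delta}$ are all invariant under simultaneously exchanging indices $1$ and $2$; hence so is $\mathbf{M}$, and therefore $\mathbf{s}$ must have the form $(x,x,y)^\intercal$ with $2x+y=1$. By the representation of limiting consensus beliefs as $\mathbf{s}^\intercal\mathbf{b}(0)$, the numbers $x,x,y$ are exactly the social influences of agents $1,2,3$.

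Second, I would use this symmetry together with row-stochasticity of $\mathbf{M}$ (Lemma \ref{lemma:one}) to collapse the three eigenvector equations. With $M_{13}=M_{23}$, $M_{31}=M_{32}$, and the row-sum identities $M_{11}+M_{12}+M_{13}=1$ and $2M_{31}+M_{33}=1$, the column-$3$ equation $y=2xM_{13}+yM_{33}$ simplifies to $xM_{13}=yM_{31}$, and the column-$1$ (equivalently column-$2$) equation then reduces to the tautology $M_{12}=M_{21}$. Thus the system collapses to the single scalar relation $xM_{13}=yM_{31}$, which together with $2x+y=1$ determines $x$ and $y$. So only the two off-diagonal entries $M_{13}$ and $M_{31}$ of $\mathbf{M}$ are needed.

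Third, I would compute those two entries. Put $\mathbf{N}=\mathbf{I}_n-\mathbf{\Delta Q}$ and $\mathbf{R}=\mathbf{N}^{-1}$, which exists by Lemma \ref{lemma:invert} (as in the proof of Proposition \ref{prop:buechel1}). Since $\mathbf{I}_n-\mathbf{\Delta}$ is diagonal, right-multiplying $\mathbf{R}$ by it merely rescales column $j$ by the corresponding diagonal entry, and since $\mathbf{W}-\mathbf{D}=\begin{pmatrix}0&\tfrac12&0\\ \tfrac12&0&0\\ \tfrac12&\tfrac12&0\end{pmatrix}$, left-multiplication merely forms simple row combinations; together this gives $M_{13}=\tfrac12(1-b)R_{23}$ and $M_{31}=\tfrac12(1-a)(R_{11}+R_{21})$. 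The entries $R_{23},R_{11},R_{21}$ I would read off from $\mathbf{R}=\det(\mathbf{N})^{-1}\operatorname{adj}(\mathbf{N})$ as ratios of $2\times 2$ cofactors of $\mathbf{N}$ to $\det(\mathbf{N})$; a short computation gives $R_{23}=\tfrac{a(2+a)/4}{\det\mathbf{N}}$ and $R_{11}+R_{21}=\tfrac{(2+a)/2}{\det\mathbf{N}}$. Substituting, the common factor $\det(\mathbf{N})$ together with a factor $(2+a)$ cancels in $xM_{13}=yM_{31}$, leaving $a(1-b)\,x=2(1-a)\,y$; combining with $2x+y=1$ yields $x=\tfrac{2(1-a)}{4-ab-3a}$ and $y=\tfrac{a(1-b)}{4-ab-3a}$.

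The main obstacle here is not any single hard step but the index bookkeeping: being careful that the $(i,j)$ entry of $\mathbf{N}^{-1}$ is the $(j,i)$ cofactor of $\mathbf{N}$ (divided by $\det\mathbf{N}$), and correctly tracking which row/column operations the factors $\mathbf{W}-\mathbf{D}$ and $\mathbf{I}_n-\mathbf{\Delta}$ perform, so that the symmetry reduction genuinely leaves only the relation $xM_{13}=yM_{31}$. I would also record in passing that $4-ab-3a>4(1-a)>0$ for $0<a,b<1$, so the formulas are well defined and give strictly positive influences, consistent with Proposition \ref{prop:influential}.
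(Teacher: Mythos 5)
Your proposal is correct and follows the same route as the paper, which simply says to compute $\mathbf{M}$ and solve $\mathbf{M}^\intercal\mathbf{x}=\mathbf{x}$ for $\mathbf{x}=(x,x,y)^\intercal$, calling the computation ``cumbersome but straightforward.'' You in fact supply the details the paper omits, and your use of the $1\leftrightarrow 2$ symmetry together with row-stochasticity to reduce everything to the single relation $xM_{13}=yM_{31}$ (with $M_{13}=\tfrac12(1-b)R_{23}$ and $M_{31}=\tfrac12(1-a)(R_{11}+R_{21})$, both of which check out) is a clean way to organize that computation.
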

\begin{proof}
  The social influence weights can be found by computing $\mathbf{M}$
  and then solving $\mathbf{M}^\intercal \mathbf{x}=\mathbf{x}$ where
  $\mathbf{x}=(x,x,y)^\intercal$. The computation, though cumbersome,
  is straightforward. 
\end{proof}
\subsection*{Homophily}
The following theorem is the `stabilization theorem' of Lorenz (2005)
\cite{Lorenz2005}. It discusses convergence of the opinion dynamics
process $\mathbf{b}(t+1)=\mathbf{W}(\mathbf{b}(t),t)\mathbf{b}(t)$,
where weight matrix $\mathbf{W}$ may depend on time $t$ and the
current vector of beliefs $\mathbf{b}(t)$, as in the homophily model
we have sketched. We abbreviate the theorem to fit our needs. 
\begin{theorem}[Lorenz (2005) \cite{Lorenz2005}]\label{theorem:lorenz} 
  Let $(\mathbf{W}(t))_{t\in\nn}$ be a sequence of row-stochastic
  matrices. If each matrix $\mathbf{W}(t)$ satisfies 
  \begin{itemize}
    \item[(1)] $[\mathbf{W}(t)]_{ii}>0$ for all $i\in [n]$ (`each agent has
      a little bit of self-confidence'),
    \item[(2)]  $[\mathbf{W}(t)]_{ij}>0\iff  [\mathbf{W}(t)]_{ji}>0$
      (`confidence is mutual'),
    \item[(3)] there exists $\kappa>0$ (that is independent of $t$)
      such that the smallest positive 
      entry of $\mathbf{W}(t)$ is greater than $\kappa$ (`positive
      weights do not converge to zero'),
  \end{itemize}
  then $\lim_{t\goesto\infty}\mathbf{b}(t)$ exists, that is, the
  belief dynamics process converges. 
\end{theorem}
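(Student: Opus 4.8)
The first step is elementary and fixes the setting. Since each $\mathbf{W}(t)$ is row-stochastic and non-negative, every coordinate $b_i(t+1)$ is a convex combination of $b_1(t),\ldots,b_n(t)$; hence $M(t):=\max_i b_i(t)$ is non-increasing, $m(t):=\min_i b_i(t)$ is non-decreasing, and the trajectory $(\mathbf{b}(t))_{t\in\nn}$ stays in the compact box $[m(0),M(0)]^n$. In particular $M(t)\goesto M_\infty$ and $m(t)\goesto m_\infty$ exist, the spread $d(t):=M(t)-m(t)$ decreases to some $d_\infty\ge 0$, and it remains only to rule out oscillation of the individual coordinates.

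The engine of the argument is a contraction estimate along confidence links. Fix $t$ and suppose $[\mathbf{W}(t)]_{ij}>0$. By~(2) then $[\mathbf{W}(t)]_{ji}>0$ as well, and by~(1) and~(3) both $[\mathbf{W}(t)]_{ii}\ge\kappa$ and $[\mathbf{W}(t)]_{ji}\ge\kappa$, so the overlap of rows $i$ and $j$ satisfies $\sum_{l}\min\bigl([\mathbf{W}(t)]_{il},[\mathbf{W}(t)]_{jl}\bigr)\ge\min\bigl([\mathbf{W}(t)]_{ii},[\mathbf{W}(t)]_{ji}\bigr)\ge\kappa$. Writing $b_i(t+1)-b_j(t+1)=\sum_l\bigl([\mathbf{W}(t)]_{il}-[\mathbf{W}(t)]_{jl}\bigr)\bigl(b_l(t)-c\bigr)$ with $c$ the midpoint of $[m(t),M(t)]$ (legitimate since the two rows sum to the same value) and bounding the total variation of the two rows gives
\begin{align*}
  \abs{b_i(t+1)-b_j(t+1)}\le\Bigl(1-\sum_{l}\min\bigl([\mathbf{W}(t)]_{il},[\mathbf{W}(t)]_{jl}\bigr)\Bigr)d(t)\le(1-\kappa)\,d(t).
\end{align*}
Thus any two linked agents are afterwards squeezed into an interval a fixed fraction narrower than the current global spread; this is the mechanism that couples the system and drives contraction whenever the confidence graph is ``rich enough''.

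The remaining step turns this local squeeze into convergence for time-varying and possibly disconnected graphs -- so the limit is in general not a consensus -- and this is where I expect the real difficulty. The plan is induction on $n$: the case $n=1$ is trivial, and for $n>1$ one separates $d_\infty=0$, in which case every coordinate converges to $m_\infty=M_\infty$, from $d_\infty>0$. In the latter case the key claim is that there exist a time $T$ and a fixed partition $[n]=L\,\dcup\,H$ such that $\mathbf{W}(t)$ has no edge between $L$ and $H$ for all $t\ge T$; granting this, for $t\ge T$ the rows of $\mathbf{W}(t)$ restricted to $L$ still sum to one and inherit properties~(1)--(3), so the $L$- and $H$-subsystems evolve autonomously and each converges by the inductive hypothesis, hence so does $\mathbf{b}(t)$. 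To prove the claim one works with the sorted beliefs $y_1(t)\le\cdots\le y_n(t)$ and the ``cuts'' between consecutive order statistics: if some cut is crossed only finitely often then, applying the monotone-envelope argument of the first step separately to the two sides, after the last crossing the bottom block remains the bottom block for all later times and one obtains exactly such a permanent partition.

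It therefore suffices to exclude the case $d_\infty>0$ together with every cut being crossed infinitely often. Here $d_\infty>0$ forces, for every large $t$, a cut whose gap is at least $d_\infty/(n-1)$, and one argues that a cut carrying such a macroscopic gap which is crossed at a moment when the gap is that large must -- because the symmetric zero pattern supplies the mirror edge, pulling the high side down and the low side up simultaneously -- cause, over a window of at most $n$ further steps, a drop of $d(t)$ by a fixed positive amount; since this recurs infinitely often it contradicts $d(t)\goesto d_\infty>0$. Making this window argument precise is the main obstacle: it is the spread-reduction bookkeeping of Lorenz (2005) \cite{Lorenz2005} (of which the displayed $(1-\kappa)$ inequality is the kernel, and which is of the ``cut-balance'' type afforded by the symmetric pattern), whereas everything else in the scheme is either elementary or a routine induction.
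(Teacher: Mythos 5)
First, a point of reference: the paper does not prove this theorem at all --- it is quoted verbatim (and abbreviated) from Lorenz (2005) as an external black-box result, and is in fact only cited in order to explain why it does \emph{not} apply to the homophily model. So there is no in-paper proof to compare your argument against; your proposal has to stand on its own.

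On its own terms, what you have is a correct skeleton with a genuine gap at the decisive step. The elementary part (monotone envelopes $m(t),M(t)$, compactness) and the pairwise contraction lemma are fine: the overlap bound $\sum_l\min([\mathbf{W}(t)]_{il},[\mathbf{W}(t)]_{jl})\ge\min([\mathbf{W}(t)]_{ii},[\mathbf{W}(t)]_{ji})\ge\kappa$ uses (1)--(3) exactly as intended, and the total-variation computation giving the factor $(1-\kappa)$ is correct. The reduction to a ``permanent partition'' claim plus induction on $n$ is also the right architecture. But the claim that closes the argument --- that $d_\infty>0$ together with every cut being crossed infinitely often forces a drop of $d(t)$ by a fixed positive amount over a window of at most $n$ steps --- is precisely the content of the theorem, and as sketched it does not go through: a single crossing of a gap-$g$ cut only pulls the two endpoints together by $O(\kappa g)$, and before the \emph{next} relevant cut is crossed (which may be arbitrarily much later, since no uniform connectivity-in-bounded-windows hypothesis is available) the affected agents can be dragged back toward the envelope by interactions within their own side, washing out the gain. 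Ruling this out is exactly where the symmetry/cut-balance hypothesis has to be exploited quantitatively, and you explicitly defer that bookkeeping to Lorenz (2005) rather than supplying it. So the proposal is an accurate map of the proof, including a correctly proved kernel lemma, but it is not yet a proof: the one step you flag as ``the main obstacle'' is the theorem's actual mathematical content.
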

While Theorem \ref{theorem:lorenz} applies, in particular, to the
Hegselmann and Krause models, on which our homophily model rests, it
does not apply to the latter. This is easy to see: while condition (1)
in the theorem 
on $\mathbf{W}(t)$ is satisfied in our case (due to $\delta_H>0$ and
$\norm{b_i^k(t)-b_i^k(t)}=0<\eta_H$ for any positive $\eta_H$), both
conditions (2) and (3) may be violated in our modeling. Condition (2)
may be violated because of truth related weight adjustment, which is
generally asymmetric (agent $i$
may have been true for a topic $X_k$, while $j$ may not have been
true so that $j$ increases his weight for $i$ while $i$ does not
increase his weight for $j$); and condition (3) may be violated
because a positive link weight between two agents may converge to zero
in our model, e.g., when an agent $i$ has known truth for a topic, so
that another agent $j$ increases his link weight for $i$ (based on
truth), but $i$ and $j$'s beliefs are sufficiently distinct such that
homophily, toward other agents, causes the link weight
$[\mathbf{W}(t)]_{ji}$ to drop to zero, as $t\goesto\infty$. 
 
\section{Experiment} \label{sec:appendixb}
Below, we list details on the experiment indicated in the
introduction. 
In total, $n=119$ subjects, all from \texttt{Amazon Mechanical Turk},
participated in the experiment; not all subjects answered all
questions.\footnote{The data set is available upon request.} We set a
time limit for answering the $16$ `common 
knowledge' questions of $3$
minutes and reimbursed subjects with $60$ US cents if they completed
and submitted the questions (this required them to press the `submit'
button rather than to indeed answer all questions), which corresponds 
to an hourly wage of $12$ USD. Obviously, this was an attractive wage,
since all requested slots ($119$) were filled within approximately one
hour. On average, individuals took $2$ minutes and $25$ seconds to
answer all $16$ questions, including reading the instructions and
optionally providing feedback, although some subjects complained that
time limits were too narrow. Below, we summarize the instructions, the
questions, and give histograms of the distributions of answers (Figure
\ref{fig:a-2}) as well
as of the `logarithmically scaled' data --- for some questions,
individuals beliefs' seemed to be lognormally distributed, so we
provide these histograms in Figure \ref{fig:a-1}. We note that we ---
very slightly --- adjusted the data when it very obviously seemed to
be corrupted. For example, one person gave as average daily
temperature in Miami in July the number $8856347$, which cannot
plausibly be correct; similarly, two people answered the question
concerning the age of homo sapiens sapiens as $80$ years, which
constitutes most probably a misunderstanding of the question. 

From the histograms in Figures \ref{fig:a-2} and \ref{fig:a-1}, we
observe that people's beliefs appear to be centered around truth only
occasionally. In particular, for example, 
the histogram for the question concerning the average height of an
adult male US American appears to be consistent with independent
normal distributions, centered around truth, as underlying subjects'
beliefs. For the question 
regarding the number of official languages of the European union, the
population density of Beijing, and the distance from earth to moon, independent
lognormal distributions appear as plausible. As we have already
discussed in the introduction, neither the mean nor the median are
very reliable quantities for the true values of questions, as
Table \ref{table:appendixB} illustrates.

\bigskip

\noindent
\fbox{\parbox{\linewidth}{
\textbf{Instructions}\\
Give truthful estimates on 16 questions such as ``When did the first settlers arrive in America?''. Don't look them up, don't google them. We're interested in your honest estimate/guess, not in your ability to use search engines. If you don't know the correct answer, please try to provide your best guess. Please answer all 16 questions.\\
A valid answer to the above question might be ``in 1620'' (if this is what you think the correct answer is). Certainly don't take longer than 20 seconds to answer any one question.\\
If your answer requires a unit such as ``pounds'', ``miles'', or
``kilometers'', please indicate it, for the sake of clarity. }}
\begin{table}[!htb]
  {\small
  \begin{tabular}{|c|l|r|}\hline
    (1)&the average daily temperature in Miami in July (in Fahrenheit or Celsius)?&$87.8$F\\
    (2)&the population size of New York city, as of 2012?&$8,336,697$\\
    (3)&the current level of the Dow Jones stock market index?&$15,658.36$\\
    (4)&the number of official languages in the European Union?&$24$\\
    (5)&the age of modern humans (homo sapiens sapiens) on earth? &\\
    & In
    other words, since how long do (modern) humans exist on earth?& $200,000$\\
    (6)&the year the first world war started?& $1914$\\
    (7)&the number of McDonald's restaurants in the US?& $12,804$\\
    (8)&the number of people per square mile (or square kilometer) in
    China's capital Beijing?& $3,300/$sqm\\
    (9)&the how many-th US president was Bill Clinton?& $42$nd\\
    (10)&the average height of an adult male in the US as of 2012? (in
    feet and inches or centimeters)& $177.8$cm\\
    (11)&the distance from earth to the moon (in miles or
    kilometres)?& $238,610$m\\
    (12)&the number of states the United States consists of?& $50$\\
    (13)&the fraction of population in the US that is left-handed in
    percent?& $10.5\%$\\
    (14)&the average life-span of an African elephant (in years) in
    the wild?& $56$\\
    (15)&$17-4\times 2$?& $9$\\
    (16)&the diameter of the sun in miles or kilometers?& $857,490$m\\
    \hline
  \end{tabular}
  }
  \caption{Questions, to be preceded by `What do you think is ...', and
    `true' answers. Most `true' answers are taken from 
    \texttt{wikipedia} or similar resources.}
\end{table}
\begin{figure*}[!htb]
  \includegraphics[scale=0.4]{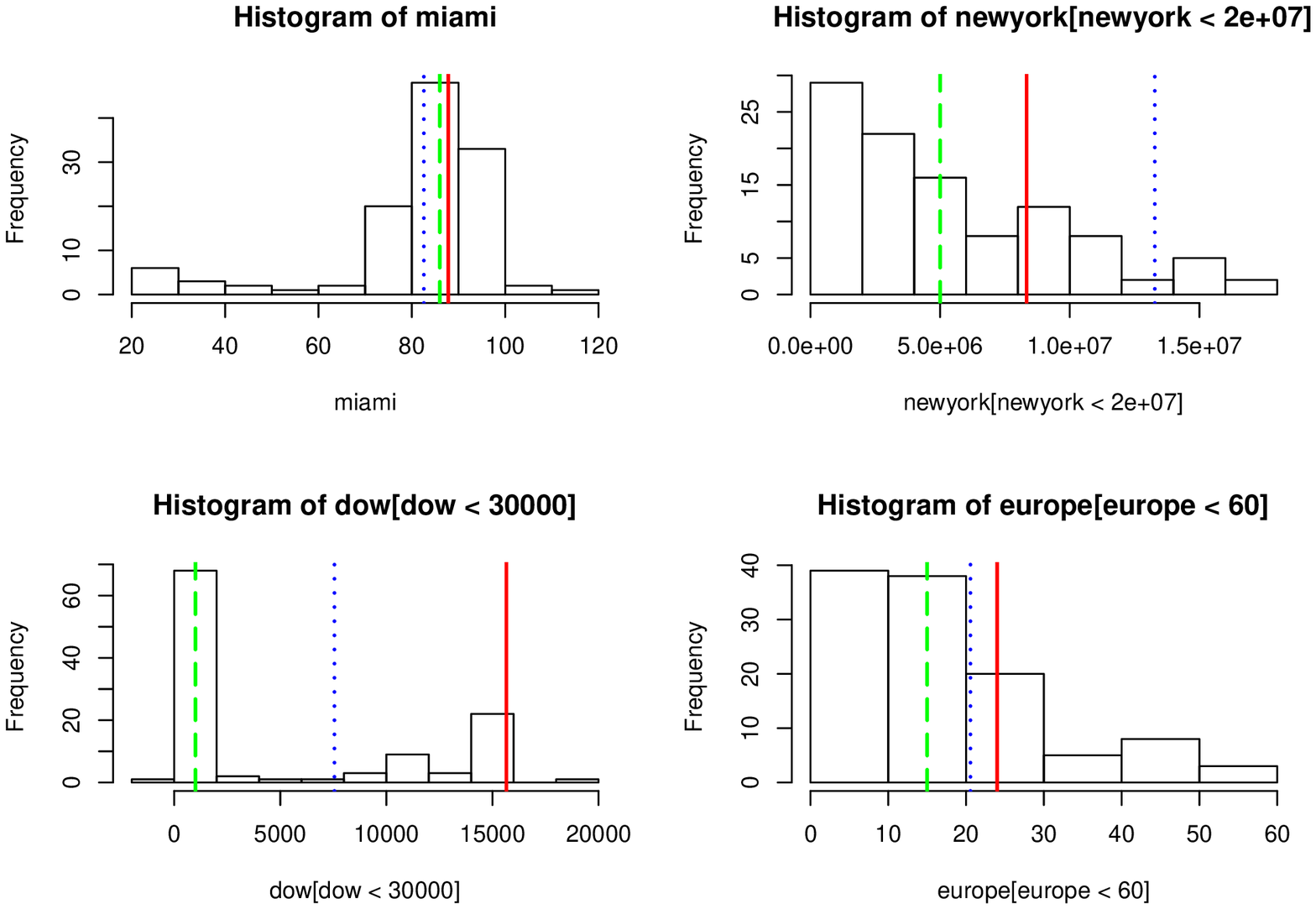}
  \includegraphics[scale=0.4]{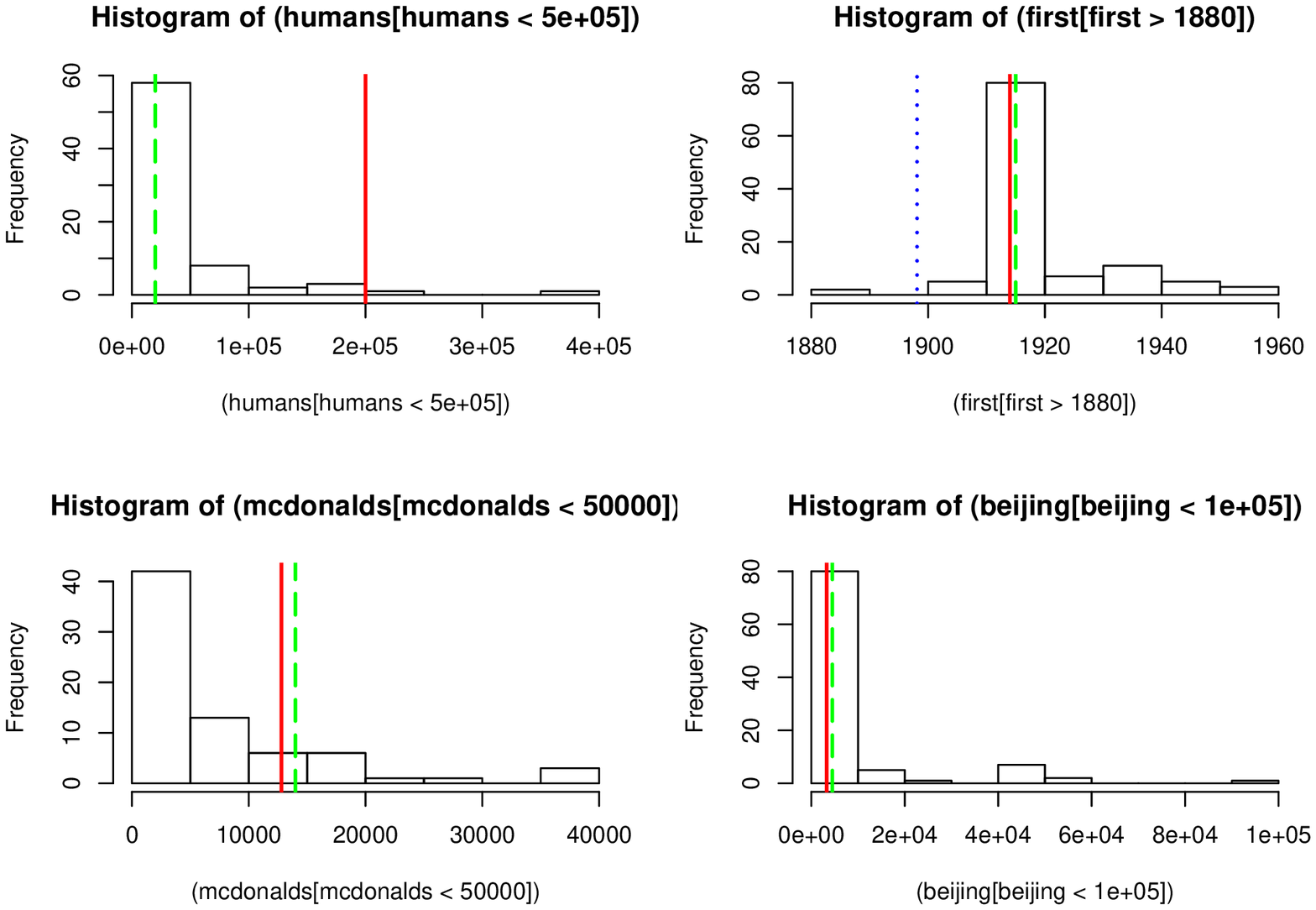}
  \includegraphics[scale=0.4]{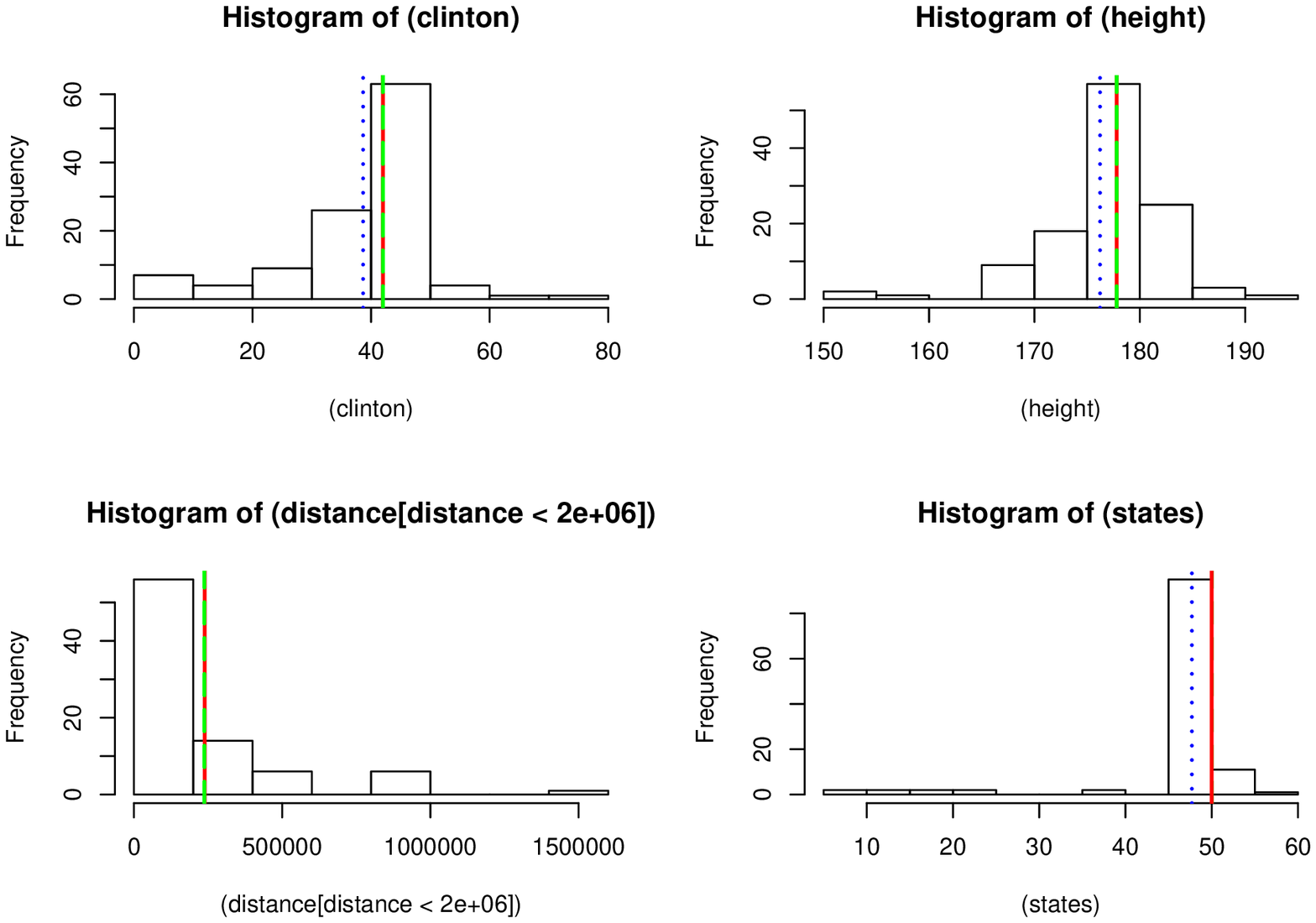}
  \includegraphics[scale=0.4]{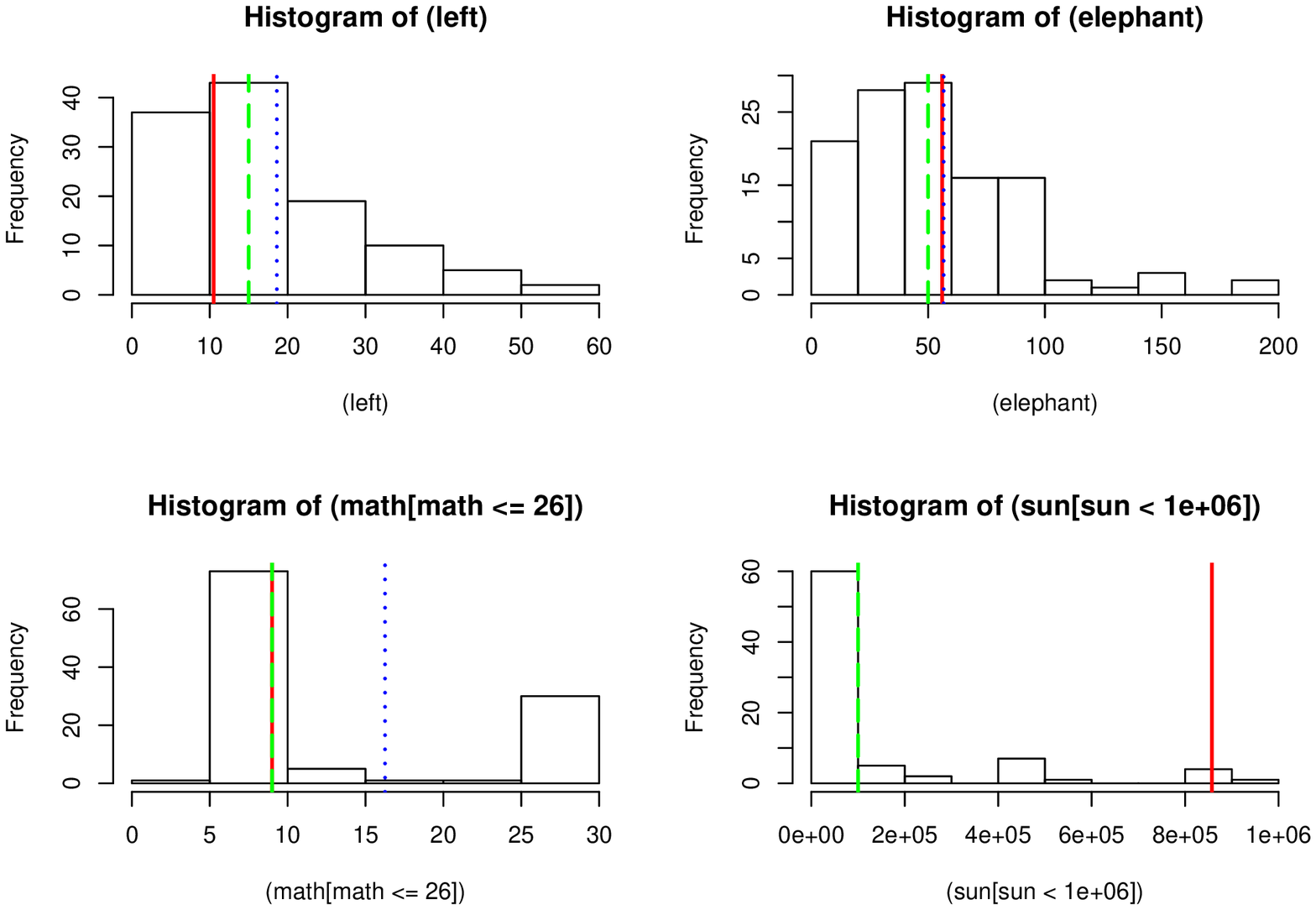}
  \caption{Histograms of answers to questions (1) to (16), top to
    bottom and left to right. Mean (dotted blue), median (dashed green), and
    truth (solid red) indicated.}
  \label{fig:a-2}
\end{figure*}
\begin{figure*}[!htb]
  \includegraphics[scale=0.4]{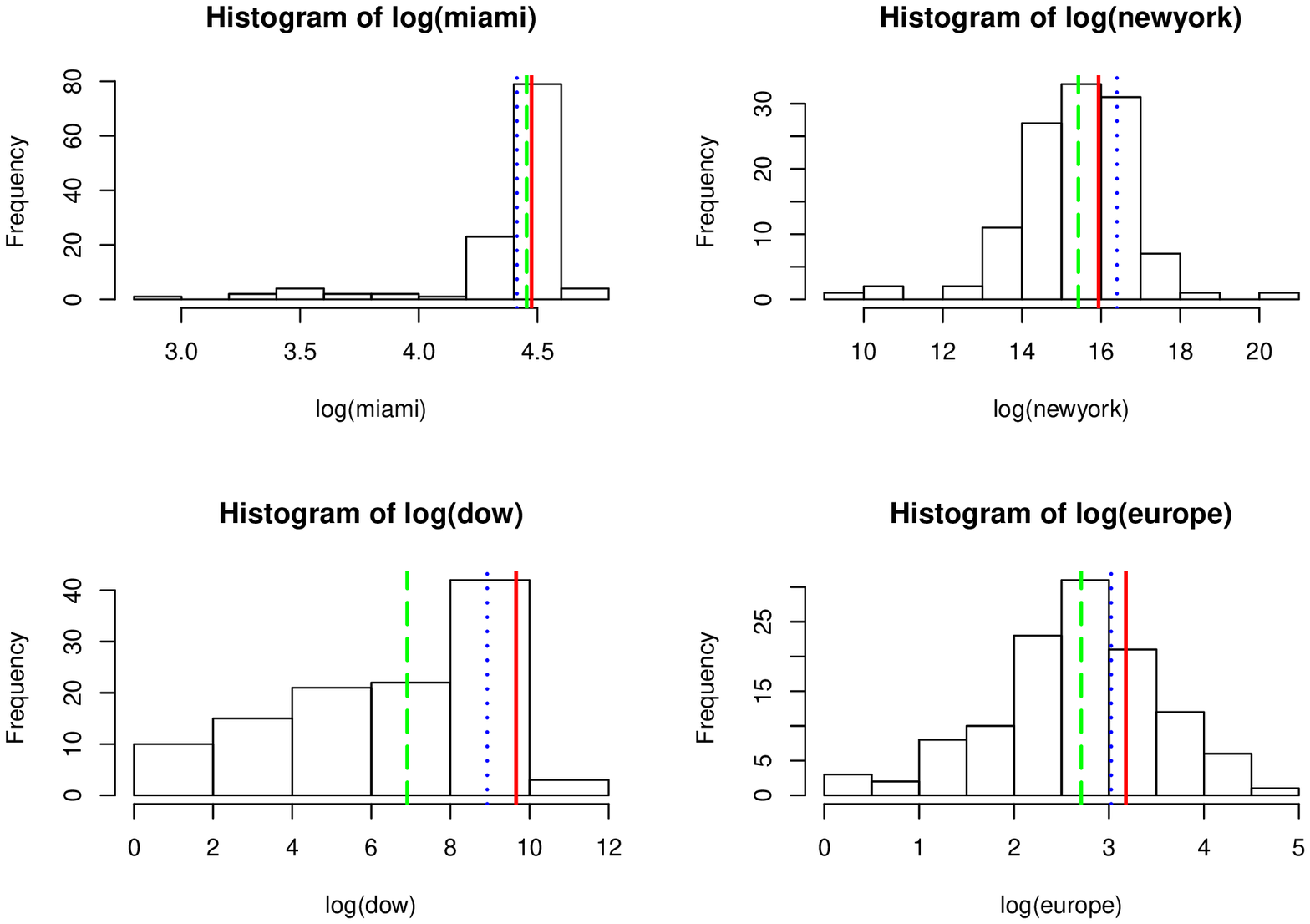}
  \includegraphics[scale=0.4]{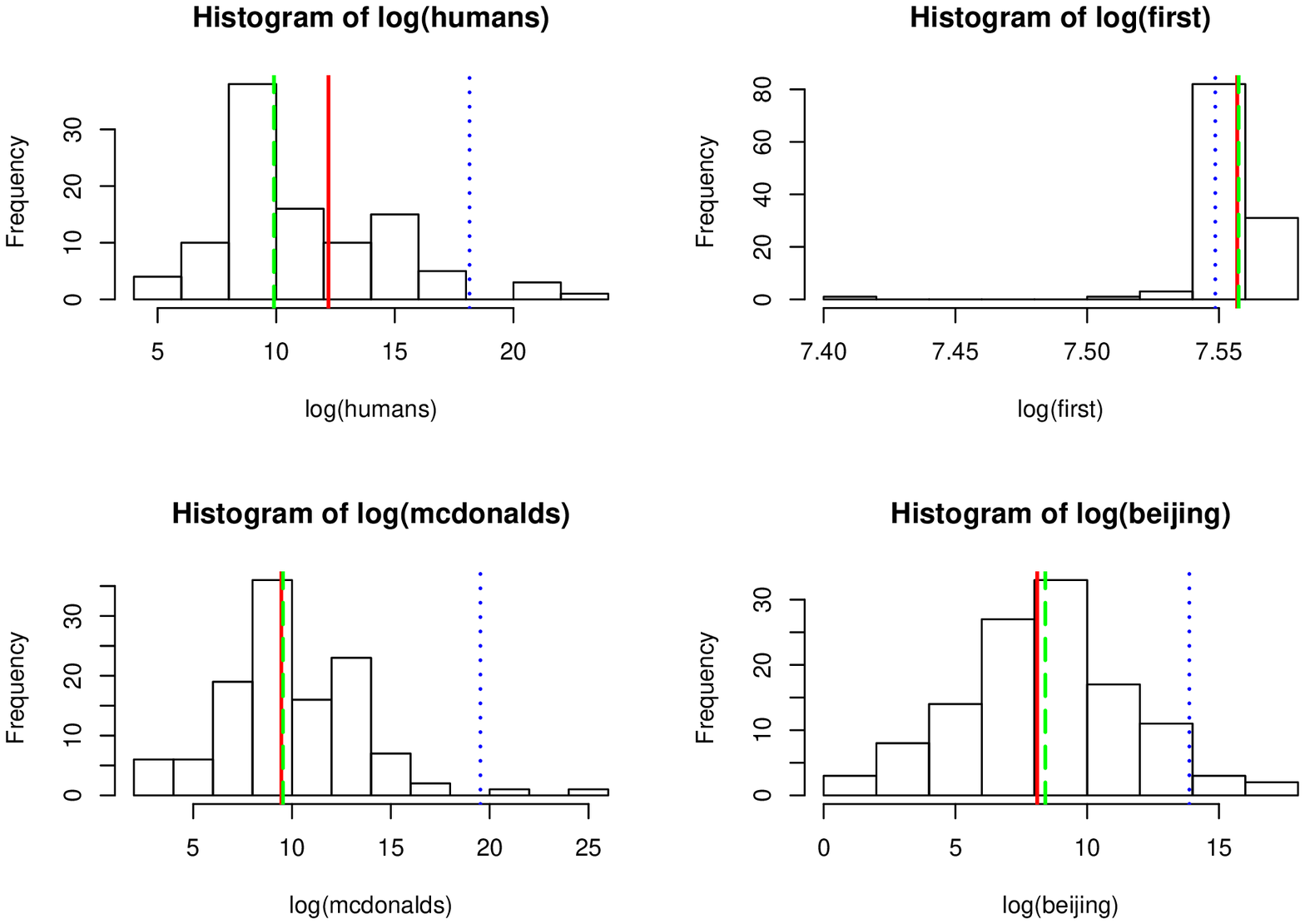}
  \includegraphics[scale=0.4]{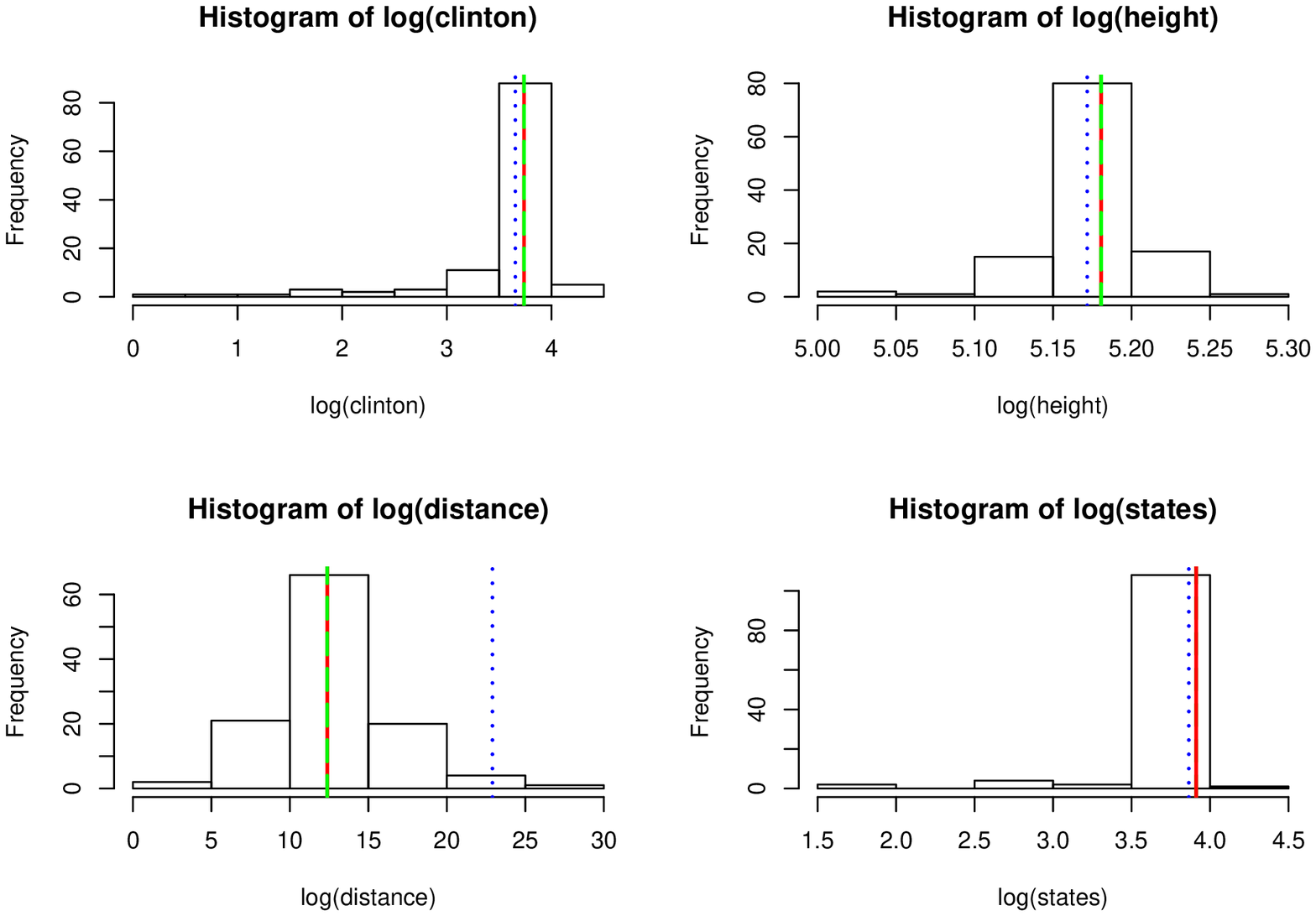}
  \includegraphics[scale=0.4]{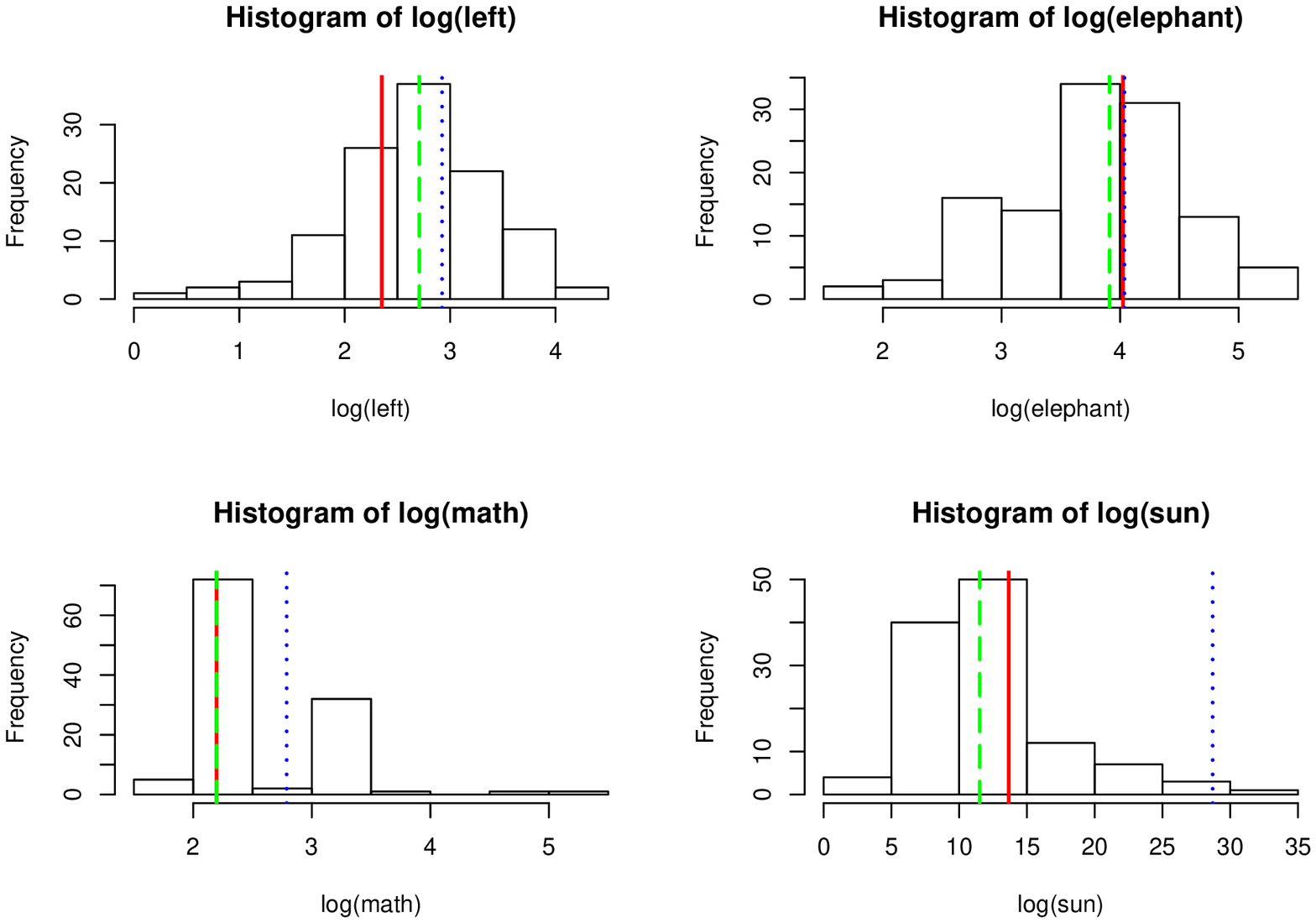}
  \caption{Histograms of $\log$(answers) to questions (1) to (16), top to
    bottom and left to right. Mean (dotted blue), median (dashed green), and
    truth (solid red) indicated.}
  \label{fig:a-1}
\end{figure*}
\begin{table}[!htb]
  \centering
  \begin{tabular}{c||cccc|cccc}
    & \multicolumn{4}{c|}{Median} & \multicolumn{4}{c}{Mean} \\\hline
    & $1\%$ & $2\%$ & $5\%$ & $10\%$ & $1\%$ & $2\%$ & $5\%$ &
    $10\%$\\ \hline\hline
  (1)&   &   &x   &x   &   &   &   &x\\
  (2)&   &   &   &   &   &   &   &\\
  (3)&   &   &   &   &   &   &   &\\
  (4)&   &   &   &   &   &   &   &\\
  (5)&   &   &   &   &   &   &   &\\
  (6)&x   &x   &x   &x   &x   &x   &x   &x\\
  (7)&   &   &   &x   &   &   &   &\\
  (8)&   &   &   &   &   &   &   &\\
  (9)&x   &x   &x   &x   &   &   &   &x\\
  (10)&x   &x   &x   &x   &x   &x   &x   &x\\
  (11)&x   &x   &x   &x   &   &   &   &\\
  (12)&x   &x   &x   &x   &   &   &x   &x\\
  (13)&   &   &   &   &   &   &   &\\
  (14)&   &   &   &   &   &x   &x   &x\\
  (15)&x   &x   &x   &x   &   &   &   &\\
  (16)&   &   &   &   &   &   &   &\\ \hline
      &$6$   & $6$ & $7$ & $8$ & $2$ & $3$ & $4$ & $6$
  \end{tabular}
  \caption{Question numbers and indication whether (x) or not median
    or mean are within the indicated intervals around truth.}
  \label{table:appendixB}
\end{table}

\end{appendices}




\bibliographystyle{plain}
\bibliography{bibliography}

\begin{thebibliography}{10}

\bibitem{AcemogluForthcoming}
Daron Acemoglu, Giacomo Como, Fabio Fagnani, and Asuman Ozdaglar.
\newblock Opinion fluctuations and disagreement in social networks.
\newblock {\em Mathematics of Operations Research}, forthcoming.

\bibitem{Acemoglu2008}
Daron Acemoglu, Munzer~A. Dahleh, Ilan Lobel, and Asuman Ozdaglar.
\newblock Bayesian learning in social networks.
\newblock {\em Review of Economic Studies}, 78:1201--1236, 2011.

\bibitem{Acemoglu2011}
Daron Acemoglu and Asuman Ozdaglar.
\newblock Opinion dynamics and learning in social networks.
\newblock {\em Dynamic Games and Applications}, 1:3--49, 2011.

\bibitem{Acemoglu2010a}
Daron Acemoglu, Asuman Ozdaglar, and Ali ParandehGheibi.
\newblock Spread of (mis)information in social networks.
\newblock {\em Games and Economic Behavior}, 70:194--227, 2010.

\bibitem{Anderlini2012}
Luca Anderlini, Dino Gerardi, and Roger Lagunoff.
\newblock Communication and learning.
\newblock {\em Review of Economic Studies}, 79:419--450, 2012.

\bibitem{Asch1955}
Solomon Asch.
\newblock Opinions and social pressure.
\newblock {\em Scientific American}, 27:31--35, 1955.

\bibitem{Bahrami2012}
Bahador Bahrami, Karsten Olsen, Dan Bang, Andreas Roepstorff, Geraint Rees, and
  Chris Frith.
\newblock What failure in collective decision-making tells us about
  metacognition.
\newblock {\em Philsophical Transactions of the Royal Society B},
  367:1350--1365, 2012.

\bibitem{Banerjee1992}
Abhijit~V. Banerjee.
\newblock A simple model of herd behavior.
\newblock {\em Quarterly Journal of Economics}, 107:797--817, 1992.

\bibitem{Banerjee2004}
Abhijit~V. Banerjee and Drew Fudenberg.
\newblock Word-of-mouth learning.
\newblock {\em Games and Economic Behavior}, 46:1--22, 2004.

\bibitem{Beasley2010}
David Beasley and Dan Kleinberg.
\newblock {\em Networks, Crowds, and Markets: Reasoning about a Highly
  Connected World}.
\newblock Cambridge University Press, Cambridge, UK, 2010.

\bibitem{Bonacich1972}
Philip Bonacich.
\newblock Factoring and weighting approaches to status scores and clique
  identification.
\newblock {\em The Journal of Mathematical Sociology}, 2:113--120, 1972.

\bibitem{Brewer1979}
Marylinn~B. Brewer.
\newblock In-group bias in the minimal intergroup situation: A
  cognitive-motivational analysis.
\newblock {\em Psychological Bulletin}, 86:307--324, 1979.

\bibitem{Budescu2003}
David~V. Budescu, Adrian~K. Rantilla, Hsiu-Ting Yu, and Tzur~M. Karelitz.
\newblock The effects of asymmetry among advisors on the aggregation of their
  opinions.
\newblock {\em Organizational Behavior and Human Decision Processes},
  90:178--194, 2003.

\bibitem{Buchel2012}
Berno Buechel, Tim Hellmann, and Stefan Kl\"{o}{\ss}ner.
\newblock Opinion dynamics under conformity.
\newblock Working Paper, 2012.

\bibitem{Buchel2013}
Berno Buechel, Tim Hellmann, and Stefan Kl\"{o}{\ss}ner.
\newblock Opinion dynamics and wisdom under conformity.
\newblock Working Paper, 2013.

\bibitem{Buchel2012b}
Berno Buechel, Tim Hellmann, and Michael Pichler.
\newblock The dynamics of continuous cultural traits in social networks.
\newblock Working Paper, 2012.

\bibitem{Cao2011}
Zhigang Cao, Mingmin Yang, Xinglong Qu, and Xiaoguang Yang.
\newblock Rebels lead to the doctrine of the mean: Opinion dynamic in a
  heterogeneous degroot model.
\newblock In {\em The 6th International Conference on Knowledge, Information
  and Creativity Support Systems}, pages 29--35, Beijing, China, 2011.

\bibitem{Castano2002}
Emanuele Castano, Vincent Yzerbyt, David Bourguignon, and El\'{e}onore Seron.
\newblock Who may enter? {T}he impact of in-group identification on
  in-group/out-group categorization.
\newblock {\em Journal of Experimental Social Psychology}, 38:315--322, 2002.

\bibitem{Chandra2012}
Arun~G. Chandrasekhar, Horacio Larreguy, and Juan~P. Xandri.
\newblock Testing models of social learning on networks: Evidence from a lab
  experiment in the field.
\newblock {\em Working Paper}, 2012.

\bibitem{Charness2007}
Gary Charness, Luca Rigotti, and Aldo Rustichini.
\newblock Individual behavior and group membership.
\newblock {\em American Economic Review}, 97:1340--1352, 2007.

\bibitem{Corazzini2012}
Luca Corazzini, Filippo Pavesi, Beatrice Petrovich, and Luca Stanca.
\newblock Influential listeners: An experiment on persuasion bias in social
  networks.
\newblock {\em European Economic Review}, 56:1276--1288, 2012.

\bibitem{Concordet1785}
Marquis de~Concordet.
\newblock {\em Essai sur l'application de l'analyse \`{a} la probabilit\'{e}
  des d\'{e}cisions rendues \`{a} la pluralit\'{e} des voix}.
\newblock de l'Impr. Royale, Paris, France, 1785.

\bibitem{Deffuant2000}
Guillaume Deffuant, David Neau, Frederic Amblard, and Gerard Weisbuch.
\newblock Mixing beliefs among interacting agents.
\newblock {\em Advances in Complex Systems}, 3:87--98, 2000.

\bibitem{DeGroot1974}
Morris~H. DeGroot.
\newblock Reaching a consensus.
\newblock {\em Journal of the American Statistical Association},
  69(345):118--121, 1974.

\bibitem{Demarzo2003}
Peter~M. DeMarzo, Dimitri Vayanos, and Jeffrey Zwiebel.
\newblock Persuasion bias, social influence, and unidimensional opinions.
\newblock {\em The Quarterly Journal of Economics}, 118(3):909--968, August
  2003.

\bibitem{Deutsch1955}
Morton Deutsch and Harold~B. Gerard.
\newblock A study of normative and informational social influences upon
  individual judgement.
\newblock {\em Journal of Abnormal Psychology}, 51:629--636, 1955.

\bibitem{Douven2009}
Igor Douven and Alexander Riegler.
\newblock Extending the {H}egselmann-{K}rause model ii.
\newblock {\em Proceedings of ECAP 2009}, 2009.

\bibitem{Douven2009b}
Igor Douven and Alexander Riegler.
\newblock Extending the {H}egselmann-{K}rause model iii: From single beliefs to
  complex belief states.
\newblock {\em Episteme}, 6:145--163, 2009.

\bibitem{Douven2010}
Igor Douven and Alexander Riegler.
\newblock Extending the {H}egselmann-{K}rause model i.
\newblock {\em The Logic Journal of the IGPL}, 18:323--335, 2010.

\bibitem{Eger2013}
Steffen Eger.
\newblock Opinion dynamics under opposition.
\newblock http://arxiv.org/pdf/1306.3134v2.pdf, 2013.

\bibitem{Einhorn1977}
Hillel~J. Einhorn, Robin~M. Hogarth, and Eric Klempner.
\newblock Quality of group judgment.
\newblock {\em Psychological Bulletin}, 84:158--172, 1977.

\bibitem{Fehrler2013}
Sebastian Fehrler and Michael Kosfeld.
\newblock Can you trust the good guys? {T}rust within and between groups with
  different missions.
\newblock Working Paper, 2013.

\bibitem{Festinger1957}
Leon Festinger.
\newblock {\em A theory of cognitive dissonance}.
\newblock Row, Peterson, Evanston, IL, 1957.

\bibitem{French1956}
John~R.P. French.
\newblock A formal theory of social power.
\newblock {\em Psychological Review}, 63:181--194, 1956.

\bibitem{Friedkin1990}
Noah~E. Friedkin and Eugene~C. Johnsen.
\newblock Social influence and opinions.
\newblock {\em Journal of Mathematical Sociology}, 15:193--205, 1990.

\bibitem{Friedkin1999}
Noah~E. Friedkin and Eugene~C. Johnsen.
\newblock Social influence networks and opinion change.
\newblock {\em Advances in group processes}, 16:1--29, 1999.

\bibitem{Gale2003}
Douglas Gale and Shachar Kariv.
\newblock Bayesian learning in social networks.
\newblock {\em Games and Economic Behavior}, 45:329--346, 2003.

\bibitem{Galton1907}
Francis Galton.
\newblock Vox populi.
\newblock {\em Nature}, 75:450--451, 1907.

\bibitem{Golub2010}
Benjamin Golub and Matthew~O. Jackson.
\newblock Na\"{i}ve learning in social networks and the wisdom of crowds.
\newblock {\em American Economic Journal: Microeconomics}, 2:112--149, 2010.

\bibitem{Golub2012}
Benjamin Golub and Matthew~O. Jackson.
\newblock How homophily affects the speed of learning and best-response
  dynamics.
\newblock {\em The Quarterly Journal of Economics}, 127:1287--1338, 2012.

\bibitem{Goyal2004}
Sajeev Goyal.
\newblock Learning in networks: a survey.
\newblock In G.~Demage and M.~Wooders, editors, {\em Group formation in
  economics; Networks, Clubs, and Coalitions}, chapter~4. Cambridge University
  Press, Cambridge U.K., 2004.

\bibitem{Groeber2013}
Patrick Groeber, Jan Lorenz, and Frank Schweitzer.
\newblock Dissonance minimization as a microfoundation of social influence in
  models of opinion formation.
\newblock {\em Journal of Mathematical Sociology}, 2013.

\bibitem{Harary1959}
Frank Harary.
\newblock A criterion for unanimity in {F}rench's theory of social power.
\newblock {\em Studies in social power}, 1959.

\bibitem{Hegselmann2002}
Rainer Hegselmann and Ulrich Krause.
\newblock Opinion dynamics and bounded confidence: models, analysis and
  simulation.
\newblock {\em J. Artificial Societies and Social Simulation}, 5(3), 2002.

\bibitem{Hegselmann2005}
Rainer Hegselmann and Ulrich Krause.
\newblock Opinion dynamics driven by various ways of averaging.
\newblock {\em Computational Economics}, 25(4):381--405, 2005.

\bibitem{Hegselmann2006}
Rainer Hegselmann and Ulrich Krause.
\newblock Truth and cognitive division of labour: First steps towards a
  computer aided social epistemology.
\newblock {\em Journal of Artificial Societies and Social Simulation}, 9(3):10,
  2006.

\bibitem{Jackson2002}
Matthew~O. Jackson and Alison Watts.
\newblock On the formation of interaction networks in social coordination
  games.
\newblock {\em Games and Economic Behavior}, 41:265--291, 2002.

\bibitem{Janis1972}
Irving~L. Janis.
\newblock {\em Victims of Groupthink}.
\newblock Houghton Mifflin, Boston, MA, 1972.

\bibitem{Jones1984}
Stephen~R.G. Jones.
\newblock {\em The economics of conformism}.
\newblock B. Blackwell, Oxford; New York, NY, 1984.

\bibitem{Kahnemann1984}
Daniel Kahnemann and Amos Tversky.
\newblock Choices, values, and frames.
\newblock {\em American Psychologist}, 39:341--350, 1984.

\bibitem{Kerr1996}
Nobert~L. Kerr, Robert~J. MacCoun, and Geoffrey~P. Kramer.
\newblock Bias in judgment: Comparing individuals and crowds.
\newblock {\em Psychological Review}, 103:687--719, 1996.

\bibitem{Kerr2004}
Norbert~L. Kerr and R.~Scott Tindale.
\newblock Group performance and decision making.
\newblock {\em Annual Review of Psychology}, 55:623--655, 2004.

\bibitem{Kitts2006}
James~A. Kitts.
\newblock Social influence and the emergence of norms amid ties of amity and
  enmity.
\newblock {\em Simulation Modelling Practice and Theory}, 14:407--422, 2006.

\bibitem{Krause2000}
Ulrich Krause.
\newblock A discrete nonlinear and non-autonomous model of consensus formation.
\newblock In S.~Elaydi, G.~Ladas, J.~Popenda, and R.~Rakowski, editors, {\em
  Communications in Difference Equations}, pages 227--236. Gordon and Breach
  Publ., Amsterdam, 2000.

\bibitem{Kunda1990}
Ziva Kunda.
\newblock The case for motivated reasoning.
\newblock {\em Psychological Bulletin}, 108:480--498, 1990.

\bibitem{Lehrer1983}
Keith Lehrer.
\newblock Rationality as weighted averaging.
\newblock {\em Synthese}, 57:283--295, 1983.

\bibitem{Lehrer1981}
Keith Lehrer and Carl Wagner.
\newblock Rational consensus in science and society.
\newblock {\em Reidel, Dordrecht}, 1981.

\bibitem{Lobel2000}
Joel Lobel.
\newblock Economists' models of learning.
\newblock {\em Journal of Economic Theory}, 94:241--261, 2000.

\bibitem{Lorenz2005}
Jan Lorenz.
\newblock A stabilization theorem for dynamics of continuous opinions.
\newblock {\em Physica A: Statistical Mechanics and its Applications},
  355(1):217--223, 2005.

\bibitem{Lorenz2006}
Jan Lorenz.
\newblock Continuous opinion dynamics of multidimensional allocation problems
  under bounded confidence. {M}ore dimensions lead to better chances for
  consensus.
\newblock {\em European Journal of Economic and Social Systems}, 19:213--227,
  2006.

\bibitem{Lorenz2007}
Jan Lorenz.
\newblock Continuous opinion dynamics under bounded confidence: A survey.
\newblock {\em International Journal of Modern Physics C}, 2007.

\bibitem{Lorenz2011}
Jan Lorenz, Heiko Rauhut, Frank Schweitzer, and Dirk Helbing.
\newblock How social influence can undermine the wisdom of crowd effect.
\newblock {\em Proceedings of the National Academy of Sciences},
  108(22):9020--9025, 2011.

\bibitem{Mackay1841}
Charles Mackay.
\newblock {\em The extraordinary and popular delusions and the madness of
  crowds}.
\newblock Wordsworth Editions Limited, Ware, UK, 4th edition, 1841.

\bibitem{Mannes2009}
Albert~E. Mannes.
\newblock Are we wise about the wisdom of crowds? {T}he use of group judgments
  in belief revision.
\newblock {\em Management Science}, 55:1267--1279, 2009.

\bibitem{McPherson2001}
Miller McPherson, Lynn Smith-Lovin, and James~M. Cook.
\newblock Birds of a feather: {H}omophily in social networks.
\newblock {\em Annual Review of Sociology}, 27:415--444, 2001.

\bibitem{Meyer2000}
Carl~D. Meyer.
\newblock {\em Matrix Analysis and Applied Linear Algebra}.
\newblock SIAM, Philadelphia, 2000.

\bibitem{Pan2010}
Zhengzheng Pan.
\newblock Trust, influence, and convergence of behavior in social networks.
\newblock {\em Mathematical Social Sciences}, 60:69--78, 2010.

\bibitem{Rosenberg2006}
D.~Rosenberg, E.~Solan, and N.~Vieille.
\newblock Informational externalities and convergence of behavior.
\newblock {\em Preprint}, 2006.

\bibitem{Surowieki2004}
James Surowiecki.
\newblock {\em The wisdom of crowds. {W}hy the many are smarter than the few
  and how collective wisdom shapes business, economies, societies and nations}.
\newblock Little, Brown, London, 2004.

\bibitem{Tversky1974}
Amos Tversky and Daniel Kahnemann.
\newblock Judgment under uncertainty: Heuristics and biases.
\newblock {\em Science}, 185:1124--1131, 1974.

\bibitem{Wagner1978}
Carl Wagner.
\newblock Consensus through respect: A model of rational group decision-making.
\newblock {\em Philosophical studies: An international Journal for Philosophy
  in the Analytic Tradition}, 34:335--349, 1978.

\bibitem{Wittenbaum1996}
Gwen~M. Wittenbaum and Garold Stasser.
\newblock Management of information in small groups.
\newblock In J.L. Nye and A.M. Brower, editors, {\em What's Social about Social
  Cognition}, pages 967--978. Sage, Thousand Oaks, CA, 1996.

\bibitem{Yaniv2004}
Ilan Yaniv.
\newblock The benefit of additional opinions.
\newblock {\em American Psychological Society}, 13:75--78, 2004.

\bibitem{Yaniv2000}
Ilan Yaniv and Eli Kleinberger.
\newblock Advice taking in decision making: Egocentric discounting and
  reputation formation.
\newblock {\em Organizational Behavior and Human Decsion Processes},
  83:260--281, 2000.

\bibitem{AcemogluSubmitted}
Ercan Yildiz, Daron Acemoglu, Asuman Ozdaglar, Amin Saberi, and Anna Scaglione.
\newblock Opinion dynamics with stubborn agents.
\newblock {\em Operations Research}, submitted.

\bibitem{Zafar2011}
Basit Zafar.
\newblock An experimental investigation of why individuals conform.
\newblock {\em European Economic Review}, 55:774--798, 2011.

\bibitem{Zhang2013}
Bo-Yu Zhang, Zhi-Gang Cao, Cheng-Zhong Qin, and Xiao-Guang Yang.
\newblock Fashion and homophily, 2013.
\newblock Available at SSRN: http://ssrn.com/abstract=2250898 or
  http://dx.doi.org/10.2139/ssrn.2250898.

\end{thebibliography}
\end{document}